\DeclareMathOperator{\erfc}{erfc}
\theoremstyle{plain} \newtheorem{theorem}{Theorem}
\newtheorem{lemma}{Lemma}
\numberwithin{equation}{section} \numberwithin{lemma}{section} \numberwithin{theorem}{section}
\newtheorem{definition}{Definition}[section]
\def\ps@pprintTitle{%
  \let\@oddhead\@empty
  \let\@evenhead\@empty
  \def\@oddfoot{\reset@font\hfil\thepage\hfil}
  \let\@evenfoot\@oddfoot
}
\begin{document}
\begin{frontmatter}

\title{Integrability and solvability of polynomial Li\'{e}nard differential systems}

\author{Maria V. Demina}

\address{HSE University, 34 Tallinskaya Street, 123458, Moscow, Russian Federation, maria\underline{ }dem@mail.ru}

\begin{abstract}

We provide the necessary and sufficient conditions of
Liouvillian integrability for  Li\'{e}nard differential systems describing
nonlinear oscillators with a polynomial damping and a polynomial restoring
force. We prove that Li\'{e}nard differential systems are not Darboux
integrable excluding
 subfamilies with certain restrictions on the degrees of  the polynomials arising in the systems.
 We demonstrate that if the degree of a polynomial responsible for the
 restoring force is greater  than the degree of a polynomial producing the
 damping, then a generic Li\'{e}nard differential system is not Liouvillian
 integrable with the exception of  linear Li\'{e}nard systems.
 However, for any fixed degrees of the polynomials
 describing the damping and the restoring force  we present subfamilies  possessing
 Liouvillian first integrals. As a by-product of our results, we find a number of novel  Liouvillian integrable subfamilies.
 In addition, we
 study the existence of non-autonomous Darboux first integrals and
non-autonomous Jacobi last multipliers with a time-dependent exponential
factor.
\end{abstract}

\begin{keyword}
 Li\'{e}nard differential
systems, Darboux integrability, Liouvillian integrability, invariant algebraic curves, Puiseux series

\end{keyword}

\end{frontmatter}

\section{Introduction}\label{Sec_I}

Performing classifications of integrable or solvable subfamilies for a given
multi-parameter	system of ordinary differential equations is a very difficult
problem. The aim of the present article is to solve the integrability problem
for the following systems of first-order ordinary differential equations
\begin{equation}
 \label{Lienard_gen}
 x_t=y,\quad  y_t=-f(x)y-g(x).
\end{equation}
We suppose that $f(x)$ and $g(x)$ are polynomials
\begin{equation}
 \label{Lienard_fg}
f(x)=f_0x^m+\ldots+f_m, \quad g(x)=g_0x^{n}+\ldots+g_{n},\quad f_0 g_0\neq0
\end{equation}
with coefficients from the field $\mathbb{C}$. Systems \eqref{Lienard_gen}
are named in honor of the French physicist and engineer Alfred--Marie
Li\'{e}nard \cite{Lienard01}. These systems   describe oscillators with a
polynomial damping $f(x)$ and a polynomial restoring force $g(x)$. In
addition,  Li\'{e}nard differential  systems have a variety of other
applications in physics, chemistry, biology, economics, etc. For example,
these systems \eqref{Lienard_gen}  arise as traveling wave reductions of the
following general families of reaction-convection-diffusion equations
\begin{equation}
 \label{PDE_RCD}
u_t=Du_{xx}+A(u)u_x+B(u),\quad u=u(x,t),
\end{equation}
where $D$ is a diffusion coefficient, $A(u)$ describes a nonlinear convective
flux and $B(u)$ is responsible for a reaction force.

Let the variable $y$ be privileged with respect to the variable $x$, then the
function $y(x)$ satisfies the following family of Abel differential equations of the second kind
\begin{equation}
 \label{Lienard_y_x}
yy_x+f(x)y+g(x)=0.
\end{equation}

The integrability properties of  Li\'{e}nard differential systems are
investigated by various methods and in the framework of various theories. Let
us enumerate the most important studies:

\begin{enumerate}

\item local analysis \cite{Cherkas01, Christopher_Lienard,
    LV_Lienard_int, Gasull_Lienard_int, Llibre04};

\item classical Lie symmetry analysis \cite{Lakshmanan01, Lakshmanan02}
    and $\lambda$ symmetries \cite{Ruiz01};

\item local  \cite{Chiellini01, Abel01, Abel02, Lienard-Riccati,
    Choudhury_Lienard} and non-local transformations \cite{Berkovich01,
    DS2019, Guha_Lienard, Sin2020, DS2021, Sin2021};

\item differential Galois theory \cite{Morales-Ruiz01};

\item extended Prelle--Singer method \cite{Lakshmanan03} and the Darboux
    theory of integrability~\cite{Llibre06,  Cheze01, Stachowiak,
    Demina07, Demina13, Demina17, Demina16, DS2019, DS2021}.

\end{enumerate}

 A collection of
integrable and solvable subfamilies of Li\'{e}nard differential systems is
presented by A. D. Polyanin and V.~F.~Zaitsev in  \cite{Polyanin}.
The transformation $y(x)=1/w(x)$ brings Abel differential equations of the second kind
\eqref{Lienard_y_x} to Abel differential equations of the first kind
\begin{equation}
 \label{Abel}
w_x=g(x)w^3+f(x)w^2.
\end{equation}
Consequently, certain results available for such Abel differential equations
can be transferred to equations \eqref{Lienard_y_x} and related Li\'{e}nard
differential systems \cite{Chiellini01, Abel01, Abel02}. Let us note that
many scientific works dealing with the global integrability problem present
sufficient conditions of integrability. Thus, these works do not provide
classifications of  integrable Li\'{e}nard differential systems and Abel differential equations. Meanwhile,
it is an important scientific problem to find all integrable families for
fixed degrees of the polynomials $f(x)$ and $g(x)$.

This article is devoted to the study of the general integrability properties
of polynomial Li\'{e}nard differential systems. We  focus on the necessary
and sufficient conditions of Darboux and Liouvillian integrability. J. Llibre and C. Valls \cite{Llibre06} proved that Li\'{e}nard differential systems
\eqref{Lienard_gen} under the condition $\deg g\leq \deg f$ do not have
Liouvillian first integrals excluding the trivial case $g(x)=\alpha f(x)$, $\alpha\in\mathbb{C}$.
Consequently, we only need to study systems \eqref{Lienard_gen} satisfying
the restriction $\deg g>\deg f$. Throughout this article it is supposed that
$f(x)\not\equiv0$. Any Li\'{e}nard differential system is Hamiltonian with a
polynomial first integral
\begin{equation}
 \label{Lienard_Hamiltonian}
I(x,y)=y^2+2\int_{0}^{x}g(s)ds
\end{equation}
whenever the relation $f(x)\equiv0$ is valid.

We shall prove that Li\'{e}nard differential systems satisfying the restrictions $\deg g> \deg f$ and
$\deg g\neq2 \deg f+1$ are not Darboux integrable, while there are
Liouvillian integrable subfamilies. In contrast, Li\'{e}nard differential
systems in the case $\deg g=2 \deg f+1$ exhibit a variety of
rational, Darboux, and Liouvillian first integrals existing under certain restrictions on the parameters. This fact is also
recognized by many scientists \cite{Lakshmanan01, Lakshmanan02, Ruiz01, Chiellini01,
Choudhury_Lienard, Guha_Lienard, Lakshmanan03}. Our main tools
include the modern Darboux theory of integrability \cite{Singer,
Christopher}, the method of Puiseux series \cite{Demina12, Demina11}, and the
local theory of invariants \cite{Demina_Gine_Valls}.  We do not impose any
non-trivial restrictions on the coefficients of the polynomials $f(x)$ and
$g(x)$ with the exception of Li\'{e}nard differential systems from the
families $\deg g=2 \deg f+1$. We mainly study systems that are non-resonant
near infinity provided that the following restriction  $\deg g=2 \deg f+1$ is
valid. To be more precise, we say that a system \eqref{Lienard_gen} is  resonant
near infinity whenever  the highest-degree coefficients
$f_0$ and $g_0$  satisfy a resonance condition. This condition is explicitly given in Section \ref{S:Lienard_IAC} and arises only in the case $\deg g=2 \deg f+1$. Let us note that the
subset of resonant systems is of zero Lebesgue measure  in the set of all
polynomial systems \eqref{Lienard_gen} with fixed degrees of the polynomials $f(x)$ and $g(x)$. Our results
are also valid in the resonant case, but they are not complete. For all
other  polynomial Li\'{e}nard differential systems we present a complete
classification of Liouvillian integrable subfamilies. In addition, we
classify polynomial Li\'{e}nard differential systems possessing
non-autonomous Darboux first integrals and non-autonomous Jacobi last
multipliers with a time-dependent exponential factor.

 We
demonstrate that the integrability properties of Li\'{e}nard differential systems are substantially different in
the following three cases:

\begin{center}
\begin{enumerate}[(A)]

\item $\qquad$ $\deg f<\deg g<2\deg f+1$;

 \item $\qquad$ $\deg g=2\deg f+1$;

 \item $\qquad$ $\deg g>2\deg f+1$.

\end{enumerate}
\end{center}

This article is organized as follows. Sections \ref{S:Darboux},
\ref{S:Local}, and \ref{S:Lienard_IAC} contain a review of the known results and
several preliminary observations on the methods we use in the
subsequent part. In Section \ref{S:Darboux}, we describe the Darboux theory of
integrability and consider some related questions. Section \ref{S:Local} is
devoted to the method of Puiseux series and to the local theory of
invariants. In Section \ref{S:Lienard_IAC}, the results on invariant
algebraic curves of Li\'{e}nard differential systems are described. In
Section \ref{S:Lienard}, we present some integrability properties valid for a
generic polynomial Li\'{e}nard differential system. In Sections
\ref{S:Lienard_A}, \ref{S:Lienard_B}, \ref{S:Lienard_C}, we investigate the
integrability and solvability of Li\'{e}nard differential systems from
families ($A$), ($B$), and ($C$), respectively.
In Section \ref{S:Example_L24}, we consider an example: we study the Li\'{e}nard differential systems satisfying the restrictions
$\deg f=2$ and $\deg g=4$.

\section{The Darboux theory of integrability}\label{S:Darboux}

The main aim of the present section is to describe some basic aspects of the
Darboux theory of integrability. We  focus on the problem of finding
Darboux and Liouvillian first integrals of polynomial differential systems in
the plane
\begin{equation}
 \label{DS}
 x_t=P(x,y),\quad y_t=Q(x,y),
\end{equation}
where $P(x,y)$ and $Q(x,y)$ are relatively prime elements of the ring
$\mathbb{C}[x,y]$. By $\mathbb{C}[x,y]$ we denote the ring of bivariate
polynomials with complex-valued coefficients.  The vector field related to
system \eqref{DS} is defined as
\begin{equation}
 \label{VF}
 \mathcal{X}=P(x,y)\frac{\partial}{\partial x}+Q(x,y)\frac{\partial}{\partial y}.
\end{equation}

\begin{definition}
A non-constant function $I(x,y)$: $D\subset\mathbb{C}^2\rightarrow\mathbb{C}$
is called a first integral of differential system \eqref{DS} and the related
vector field $ \mathcal{X}$ on an open  subset $D\subset
\mathbb{C}^2$ if $I(x(t), y(t)) = C$ with $C$ being a constant for all values
of $t$  such that the solution $(x(t), y(t))$ of system \eqref{DS} is defined
in $D$.
\end{definition}

If $I(x,y)$ is of a class at least $C^1$ in $ D$, then $I(x,y)$ is a first
integral of differential system~\eqref{DS} if and only if $\mathcal{X} I=0$.


\begin{definition}
A non-constant function $M(x,y)$: $D\subset\mathbb{C}^2\rightarrow\mathbb{C}$
is called an integrating factor
  of differential system \eqref{DS} and the related
vector field $ \mathcal{X}$ in an open  subset $D\subset
\mathbb{C}^2$ if the differential form $M(x,y)(P(x,y)dy-Q(x,y)dx)$ is exact
in $ D$. In other words, there exists a function $I(x,y)$  of a class at least
$C^1$ in $ D$ such that the following relation is valid
$M(x,y)(P(x,y)dy-Q(x,y)dx)=dI(x,y)$.
\end{definition}

If an integrating factor $M(x,y)$ is of a class at least $C^1$ in $ D$, then it
satisfies the following linear first-order partial differential equation $
\mathcal{X}M=-\text{div}\,\mathcal{X} M$, where
$\text{div}\,\mathcal{X}=P_x+Q_y$ is the divergence of the vector field $
\mathcal{X}$.

A function $I(x,y)$ is refereed to as a Liouvillian function of two variables
$x$ and $y$ if it belongs to a Liouvillian extension of the field of rational
functions $\mathbb{C}(x,y)$ over $\mathbb{C}$. Generally speaking, any Liouvillian function can
be represented as a finite superposition of algebraic functions,
antiderivatives, and exponentials \cite{Singer}. A function $\Phi(x,y)$ is called \textit{a
Darboux function} of two variables $x$ and $y$, if it can be presented in the
form
\begin{equation}
\begin{gathered}
 \label{Darboux_function}
\Phi(x,y)=F^{d_1}_1(x,y)\ldots F^{d_K}_K(x,y)\exp\{R(x,y)\},
\end{gathered}
\end{equation}
where $F_1(x,y)\in\mathbb{C}[x,y]$, $\ldots$, $F_K(x,y)\in\mathbb{C}[x,y]$,
$R(x,y)\in\mathbb{C}(x,y)$, $d_1,\ldots, d_K\in\mathbb{C}$.  We see that any
Darboux function is a Liouvillian function. The converse  is not generally
true.

A differential system \eqref{DS} is called
\textit{Darboux (Liouvillian) integrable} if it possesses a Darboux
(Liouvillian) first integral. It is known that the problem of establishing Darboux or Liouvillian
integrability   of a differential system
  \eqref{DS}  can be reduced to the problem of
constructing all  irreducible invariant algebraic curves of \eqref{DS} and
all exponential invariants of \eqref{DS}, for more details see \cite{Zhang,
Singer, Christopher}.


\begin{definition}\label{D:IAC}
The curve $F(x,y)=0$ with $F(x,y)\in \mathbb{C}[x,y]\setminus\mathbb{C}$ is
an invariant algebraic curve of a differential system \eqref{DS}  whenever the following condition
$F_t|_{F=0}=(PF_x+QF_y)|_{F=0}=0$ is valid.
\end{definition}

If the polynomial $F(x,y)$ producing the invariant algebraic curve $F(x,y)=0$
is irreducible in $\mathbb{C}[x,y]$, then the ideal generated by $F(x,y)$ is
radical. Consequently, there exists an element $\lambda(x,y)$ of the ring
$\mathbb{C}[x,y]$ such that  $F(x,y)$ satisfies the partial differential
equation $P(x,y)F_x+Q(x,y)F_y=\lambda(x,y) F$. The polynomial $\lambda(x,y)$
is called \textit{the cofactor} of the invariant algebraic curve $F(x,y)=0$.
It is straightforward to show that the degree of $\lambda(x,y)$ is at most
$L-1$, where $L$ is the maximum between the degrees of the polynomials
$P(x,y)$ and $Q(x,y)$. We conclude that an invariant algebraic curve of
differential system \eqref{DS} is formed from solutions of the latter. A
solution of differential system \eqref{DS} has either empty intersection with
the zero set of $F(x,y)$ or it is entirely contained in $F(x,y) = 0$. The generating polynomial $F(x,y)$
of an invariant algebraic curve $F(x,y) = 0$ is refereed to as \textit{a Darboux polynomial} or \textit{an algebraic invariant}.


\begin{definition}
 A function $E(x,y)=\exp[g(x,y)/f(x,y)]$ with
the relatively prime polynomials $g(x,y)$, $f(x,y)\in\mathbb{C}[x,y]$ is called
an exponential invariant  of a differential system~\eqref{DS}  whenever the following condition
$\mathcal{X}E=\varrho(x,y)E$ is valid, where $\varrho(x,y)\in\mathbb{C}[x,y]$.
\end{definition}

The polynomial $\varrho(x,y)$ is refereed to as \textit{the cofactor} of the
exponential invariant $E(x,y)$. It is straightforward to show that the
product of the exponential invariants $E_1(x,y)$ and $E_2(x,y)$ with the
cofactors $\varrho_1(x,y)$ and $\varrho_2(x,y)$, respectively, is an
exponential invariant possessing the cofactor
$\varrho(x,y)=\varrho_1(x,y)+\varrho_2(x,y)$. It is known that the polynomial
$f(x,y)\in\mathbb{C}[x,y]\setminus\mathbb{C}$ arising in an exponential
invariant $E(x,y)=\exp[g(x,y)/f(x,y)]$  produces an  invariant algebraic
curve $f(x,y)=0$ of the system  under consideration
\cite{Christopher}.

It turns out that the study of autonomous first integrals and autonomous
integrating factors is sometimes restrictive, even if an autonomous
differential system is under consideration. In this article we do not
consider the Darboux theory of integrability for non-autonomous systems in
the general case, for more details see \cite{Llibre11, Pantazi01, Demina15}.

A non-autonomous first integral $I(x,y,t)$ and  a non-autonomous integrating
factor $M(x$, $y$, $t)$ of a differential system \eqref{DS}  are defined similarly to the autonomous case.  They
satisfy the following linear partial differential equations
$I_t+\mathcal{X}I=0$ and $M_t+\mathcal{X}M=-\text{div}\,\mathcal{X} M$,
respectively, whenever $I(x,y,t)$ and $M(x,y,t)$ are function of a class at
least $C^1$ in $D\subset\mathbb{C}^3$. Non-autonomous integrating factors are
commonly refereed to as Jacobi last multipliers or simply Jacobi multipliers.

The following  theorems are the essence of the modern Darboux theory of
integrability.

\begin{theorem}\label{T:Darboux_rat} A polynomial differential system \eqref{DS} is Darboux integrable if and only if it has a rational
integrating factor.
\end{theorem}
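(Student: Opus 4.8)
The plan is to establish the two implications separately, exploiting the correspondence between integrating factors and closed rational $1$-forms that annihilate the vector field. Throughout I write $\omega = P\,dy-Q\,dx$, so that, by the definition of an integrating factor, a function $M$ is an integrating factor precisely when the $1$-form $M\omega$ is closed (equivalently, exact on a simply connected subdomain of $D$).

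First I would prove that Darboux integrability implies the existence of a rational integrating factor. Suppose $\Phi=F_1^{d_1}\cdots F_K^{d_K}\exp\{R\}$ is a Darboux first integral. Its logarithmic differential
\begin{equation}
 \eta=\frac{d\Phi}{\Phi}=\sum_{j=1}^{K} d_j\,\frac{dF_j}{F_j}+dR
\end{equation}
is a rational $1$-form, since each $dF_j/F_j$ and $dR$ is rational; it is closed, being the differential of $\log\Phi$; and it annihilates $\mathcal{X}$ because $\eta(\mathcal{X})=\mathcal{X}\Phi/\Phi=0$, as $\Phi$ is a first integral. Writing $\eta=A\,dx+B\,dy$ with $A,B\in\mathbb{C}(x,y)$, the relation $\eta(\mathcal{X})=AP+BQ=0$ together with the coprimality of $P$ and $Q$ lets me set $M=B/P=-A/Q\in\mathbb{C}(x,y)$ and conclude $\eta=M\omega$. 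Since $\eta$ is non-zero ($\Phi$ being non-constant) and closed, the form $M\omega$ is closed, and hence $M$ is the desired rational integrating factor.

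Conversely, I would show that a rational integrating factor yields a Darboux first integral. Let $M\in\mathbb{C}(x,y)$ be a rational integrating factor, so that $M\omega$ is a closed rational $1$-form. The essential step is to integrate it: a closed rational $1$-form on $\mathbb{C}^2$ decomposes as $M\omega=dR_0+\sum_j c_j\,dG_j/G_j$ with $R_0\in\mathbb{C}(x,y)$, $G_j\in\mathbb{C}[x,y]$ and $c_j\in\mathbb{C}$ (the $c_j$ being the residues of $M$ along its polar curves). Integrating gives a first integral $I=R_0+\sum_j c_j\log G_j$ with $\mathcal{X}I=0$, whence $\Phi=\exp\{I\}=\prod_j G_j^{c_j}\exp\{R_0\}$ is a non-constant Darboux first integral, non-constancy following from $M\omega=dI\neq0$. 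Factoring $M=\prod_i F_i^{a_i}$ into irreducibles and inserting this into the integrating-factor equation $\mathcal{X}M=-\text{div}\,\mathcal{X}\,M$ shows that each $F_i$ is a Darboux polynomial, which is what guarantees that the polynomials entering $\Phi$ define invariant algebraic curves in the sense of Definition~\ref{D:IAC}.

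The main obstacle lies in the converse direction, specifically in justifying the de Rham--type decomposition of the closed rational $1$-form $M\omega$ and thereby that its primitive is a genuine Darboux function rather than merely a multivalued antiderivative; this is exactly where the structure theory of invariant algebraic curves and exponential invariants is needed, and where I would invoke the results of \cite{Singer, Christopher, Zhang}. The forward implication, by contrast, is essentially formal once the logarithmic-derivative form $\eta$ is recognised as rational, closed, and annihilating $\mathcal{X}$.
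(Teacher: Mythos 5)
The paper itself offers no proof of Theorem~\ref{T:Darboux_rat}: it only records the statement and attributes the forward implication to Chavarriga et al.~\cite{Chavarriga_rat} and the converse to Christopher et al.~\cite{Christopher_elemntary_FI}. Your proposal reconstructs precisely the standard arguments behind those citations, so it is correct in outline and strictly more informative than what the paper supplies. The forward direction --- pass to the logarithmic differential $\eta=d\Phi/\Phi$, note it is rational, closed, and annihilates $\mathcal{X}$, then solve $\eta=M\omega$ for a rational $M$ (with your notation $\omega=P\,dy-Q\,dx$) --- is essentially complete as written. The converse correctly isolates the one genuinely nontrivial input: the decomposition of a closed rational $1$-form as $dR_0+\sum_j c_j\,dG_j/G_j$, which is exactly the content of the literature you invoke \cite{Singer,Christopher,Zhang}; given that the paper delegates the entire theorem to references, delegating this single classical lemma is legitimate, and the rest of your argument (exponentiating to $\Phi=\prod_jG_j^{c_j}\exp\{R_0\}$, with non-constancy from $dI=M\omega\neq0$) is sound. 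Your closing remark about factoring $M$ into irreducibles to show its factors are Darboux polynomials is correct but not actually needed: the paper's definition of a Darboux function \eqref{Darboux_function} does not require the $F_j$ to define invariant algebraic curves.

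Two small repairs. First, coprimality of $P$ and $Q$ plays no role in defining $M$: the identity $AP+BQ=0$ with $P,Q\not\equiv0$ already gives $B/P=-A/Q$ in $\mathbb{C}(x,y)$. Second, the paper's definition requires an integrating factor to be \emph{non-constant}, and your forward construction can return a constant $M$: for $\mathcal{X}=y\,\partial_x-x\,\partial_y$ and $\Phi=\exp\{(x^2+y^2)/2\}$ one finds $\eta=\omega$, i.e.\ $M=1$. In that degenerate case $\text{div}\,\mathcal{X}=0$, so $\omega$ is a closed polynomial $1$-form on $\mathbb{C}^2$, hence $\omega=dH$ with $H\in\mathbb{C}[x,y]$ non-constant, and then $H\omega=d\left(H^2/2\right)$ shows that $H$ is a non-constant rational integrating factor; the theorem therefore survives, but the case deserves an explicit line.
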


The fact that a Darboux integrable differential system \eqref{DS} has a
rational integrating factor was derived  by J. Chavarriga et al., see
\cite{Chavarriga_rat}. The converse statement was established by C.
Christopher et al., see \cite{Christopher_elemntary_FI}.

\begin{theorem}\label{T:Liouville} A polynomial differential system \eqref{DS} is Liouvillian integrable if and only if it has a Darboux
integrating factor.
\end{theorem}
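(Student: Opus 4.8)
The plan is to prove the two implications separately. The backward implication is the straightforward one. Suppose \eqref{DS} admits a Darboux integrating factor $M(x,y)$ of the form \eqref{Darboux_function}. Every Darboux function is Liouvillian, so $M$ is Liouvillian and non-vanishing on a suitable open set. By the definition of an integrating factor the $1$-form $M(P\,dy-Q\,dx)$ is exact, whence a first integral is recovered by the quadrature $I=\int M(P\,dy-Q\,dx)$. Since the class of Liouvillian functions is closed under taking antiderivatives, $I$ is Liouvillian and non-constant, and the system is Liouvillian integrable.

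The forward implication is the substantive part, which is the content of Singer's theorem. Assume \eqref{DS} has a Liouvillian first integral $I$. The key structural claim I would establish is that the existence of $I$ forces an integrating factor $M$ whose logarithmic differential $dM/M=U\,dx+V\,dy$ is a \emph{closed rational} $1$-form, that is $U,V\in\mathbb{C}(x,y)$ with $U_y=V_x$. To obtain this I would work inside the tower of differential field extensions of $\mathbb{C}(x,y)$ that contains $I$, and apply the Rosenlicht--Singer analysis: a Liouvillian first integral is assembled from rational functions by successively adjoining algebraic elements, integrals, and exponentials of integrals, and the constraint $\mathcal{X}I=0$ permits a descent through this tower that isolates a rational closed $1$-form $U\,dx+V\,dy$ (so that $U_y=V_x$) for which $PU+QV+\text{div}\,\mathcal{X}=0$. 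This last relation is exactly the condition ensuring that $M=\exp\int(U\,dx+V\,dy)$, which is well defined by closedness, is an integrating factor of \eqref{DS}.

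The remaining step turns the rational closed $1$-form into manifest Darboux shape. By the partial-fraction structure of closed rational $1$-forms in the plane one can write $U\,dx+V\,dy=d(g/h)+\sum_i\lambda_i\,dF_i/F_i$ with $F_i\in\mathbb{C}[x,y]$ irreducible, $g/h\in\mathbb{C}(x,y)$, and $\lambda_i\in\mathbb{C}$. Integrating and exponentiating gives $M=\exp(g/h)\prod_i F_i^{\lambda_i}$, which is precisely the form \eqref{Darboux_function}. To confirm that this $M$ is a genuine Darboux integrating factor I would verify that each $F_i=0$ is an invariant algebraic curve: since $\mathcal{X}\log M=-\text{div}\,\mathcal{X}$ is a polynomial while $\mathcal{X}F_i/F_i$ would otherwise carry a pole along $F_i=0$, matching the poles along the distinct irreducible curves $F_i=0$ shows that $F_i$ divides $\mathcal{X}F_i$, so each $F_i$ with $\lambda_i\neq0$ is a Darboux polynomial in the sense of Definition \ref{D:IAC}.

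The main obstacle is the descent described in the second paragraph. The Liouvillian tower may contain nested exponentials, logarithms, and algebraic extensions of arbitrary depth, and one must prove that all of this complexity collapses, at the level of the logarithmic derivative of the integrating factor, to a single closed \emph{rational} $1$-form. This collapse is the technical core of Singer's argument and its sharpening by Christopher, and rather than reproduce the full differential-Galois-theoretic machinery I would invoke the results of \cite{Singer, Christopher}.
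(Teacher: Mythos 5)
Your proposal is correct: the backward direction by quadrature, and the forward direction via Singer's descent to a closed rational logarithmic $1$-form followed by the partial-fraction reduction to Darboux shape, is precisely how this theorem is established in the literature. The paper itself gives no proof, merely attributing the result to Singer \cite{Singer} (with the Darboux-form refinement due to Christopher \cite{Christopher}), so your outline --- which defers the technical core to those same references --- is essentially the same approach, with the surrounding structure filled in correctly.
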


 Theorem \ref{T:Liouville} was proved by M. F. Singer \cite{Singer}.

\begin{theorem}\label{T:L23_Non_aut_FI}
A polynomial differential system \eqref{DS}
 possesses a  first integral of the form
\begin{equation}
\begin{gathered}
 \label{FI_t_gen}
I(x,y,t)=\prod_{j=1}^{K}F^{d_j}_j(x,y)\exp\left\{\frac{S(x,y)}{R(x,y)}\right\}\exp{(\omega t)},\quad
\omega,\, d_1,\, \ldots,\, d_K\in\mathbb{C},
\end{gathered}
\end{equation}
where $F_1(x,y)$, $\ldots$, $F_K(x,y)$ are pairwise relatively prime
irreducible bivariate polynomials from the ring $\mathbb{C}[x,y]$, $S(x,y)$
and $R(x,y)$ are relatively prime bivariate polynomials from the ring
$\mathbb{C}[x,y]$, if and only if $F_1(x,y)=0$, $\ldots$, $F_K(x,y)=0$,
$R(x,y)=0$ are invariant algebraic curves of the system
and $E(x,y)=\exp\{S(x,y)/R(x,y)\}$ is an exponential invariant of
the system such that the following condition
\begin{equation}
\begin{gathered}
 \label{NDFI_gen_cond}
\sum_{j=1}^{N}d_j\lambda_j(x,y)+\varrho(x,y)+\omega =0
\end{gathered}
\end{equation}
holds identically. In this expression $\lambda_j(x,y)$ is the cofactor of the
invariant algebraic curve $F_j(x,y)=0$ and $\varrho(x,y)$ is the cofactor of
the exponential invariant $E(x,y)$.
\end{theorem}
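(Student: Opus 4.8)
The plan is to reduce the non-autonomous condition to an autonomous one and then run the standard divisor-by-divisor analysis of logarithmic derivatives. Writing $I(x,y,t)=\Phi(x,y)\exp(\omega t)$ with $\Phi=\prod_{j=1}^{K}F_j^{d_j}\exp\{S/R\}$, I would first invoke the defining equation $I_t+\mathcal{X}I=0$ for a non-autonomous first integral stated above. Since $I_t=\omega I$ and $\mathcal{X}I=\exp(\omega t)\,\mathcal{X}\Phi$, the factor $\exp(\omega t)$ cancels and the condition collapses to the single autonomous identity
\begin{equation*}
\frac{\mathcal{X}\Phi}{\Phi}=\sum_{j=1}^{K}d_j\frac{\mathcal{X}F_j}{F_j}+\mathcal{X}\!\left(\frac{S}{R}\right)=-\omega.
\end{equation*}
Everything then hinges on analysing when this logarithmic derivative is a constant, hence polynomial, function.

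The sufficiency direction is a direct substitution. Assuming each $F_j=0$ is an invariant algebraic curve, so that $\mathcal{X}F_j=\lambda_j F_j$, and that $E=\exp\{S/R\}$ is an exponential invariant, so that $\mathcal{X}E=\varrho E$ and hence $\mathcal{X}(S/R)=\varrho$, the displayed identity becomes $\sum_j d_j\lambda_j+\varrho=-\omega$, which is exactly condition \eqref{NDFI_gen_cond}. Thus $\mathcal{X}\Phi+\omega\Phi=0$ and $I_t+\mathcal{X}I=\exp(\omega t)(\omega\Phi+\mathcal{X}\Phi)=0$, so $I$ is a first integral of the asserted form.

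For necessity I would argue that the polynomiality of $\mathcal{X}\Phi/\Phi$ forces each divisor to be invariant, exploiting that $F_1,\dots,F_K$ are irreducible and pairwise relatively prime and that $R$ is coprime to $S$, and may be taken coprime to the $F_j$ as well, since any shared factors are absorbed into the product. Fixing one factor $F_i$, the $i$-th summand $d_i\,\mathcal{X}F_i/F_i$ is the only term that can carry a pole along the irreducible curve $F_i=0$; as such a simple pole cannot be cancelled by the remaining terms, irreducibility gives $F_i\mid\mathcal{X}F_i$, that is $\mathcal{X}F_i=\lambda_i F_i$ with a polynomial cofactor of degree at most $L-1$. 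Turning to the exponential part, I would write
\begin{equation*}
\mathcal{X}\!\left(\frac{S}{R}\right)=\frac{R\,\mathcal{X}S-S\,\mathcal{X}R}{R^{2}},
\end{equation*}
whose a priori double pole along $R=0$ must disappear.

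The hard part will be precisely this pole-order bookkeeping. I expect to show first that $R\mid\mathcal{X}R$, so that $\mathcal{X}R=\mu R$ for a polynomial $\mu$ and $R=0$ is itself an invariant algebraic curve, which reduces the expression to $(\mathcal{X}S-\mu S)/R$; the residual simple pole then forces $R\mid(\mathcal{X}S-\mu S)$, say $\mathcal{X}S-\mu S=\varrho R$. This yields $\mathcal{X}(S/R)=\varrho$ and therefore $\mathcal{X}E=\varrho E$, so $E$ is an exponential invariant with cofactor $\varrho$. Substituting the cofactors $\lambda_j$ and $\varrho$ back into the reduced identity returns $\sum_j d_j\lambda_j+\varrho+\omega=0$, completing the equivalence. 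The delicate point throughout is the no-cancellation argument for poles along distinct irreducible curves, which is exactly where coprimality and irreducibility are indispensable.
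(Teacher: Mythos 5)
Your argument is, in substance, the classical one: the paper itself gives no proof of this theorem (it is stated as a consequence of the classical Darboux theory, with a citation to Zhang), and what you propose is precisely that standard proof --- cancel $\exp(\omega t)$ to reduce the defining equation $I_t+\mathcal{X}I=0$ to the autonomous identity $\mathcal{X}\Phi/\Phi=-\omega$, get sufficiency by direct substitution of the cofactor relations, and get necessity by a divisor-by-divisor pole analysis which yields $\mathcal{X}F_i=\lambda_iF_i$, then $\mathcal{X}R=\mu R$, then $\mathcal{X}S-\mu S=\varrho R$, and finally condition \eqref{NDFI_gen_cond}. The skeleton is correct.

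One justification in your necessity step is wrong as written, though repairable by your own method. You dismiss the case $\gcd(R,F_i)\neq 1$ by saying shared factors ``are absorbed into the product''; they cannot be, because a factor $F_i$ of $R$ sits in the denominator of the \emph{argument of the exponential}, and $\exp\{S/(F_i^{m}\hat{R})\}$ is not of the form $F_i^{c}\exp\{S'/\hat{R}\}$ for any exponent $c$. The patch is more pole bookkeeping, not absorption: if $R=F_i^{m}\hat{R}$ with $F_i\nmid\hat{R}$, then along $F_i=0$ the term $-S\,\mathcal{X}R/R^{2}$ has pole order up to $m+1$, strictly exceeding both the simple pole of $d_i\mathcal{X}F_i/F_i$ and the order-$m$ pole of $\mathcal{X}S/R$; its leading coefficient must vanish on its own, and since $\gcd(S,R)=1$ gives $F_i\nmid S$, this forces $F_i\mid\mathcal{X}F_i$ directly. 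The same multiplicity-sensitive argument is what you in fact need to establish $R\mid\mathcal{X}R$ when $R$ is reducible (your sketch treats $R$ as if it had a single simple zero divisor): for each irreducible factor $R_k$ of $R$ of multiplicity $m_k$, the top-order pole gives $R_k\mid m_kS(\mathcal{X}R_k)\hat{R}_k$ with $R_k\nmid S\hat{R}_k$, hence $R_k\mid\mathcal{X}R_k$, and $R$, being a product of Darboux polynomials, is itself one. Finally, note that the conclusion ``each $F_i=0$ is invariant'' can only be extracted for indices with $d_i\neq0$; factors with $d_i=0$ are simply absent from $I$, which is the standing convention in this theory. With these repairs your proof is complete.
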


Theorem \ref{T:L23_Non_aut_FI}  follows from the classical theory of Darboux
integrability, see~\cite{Zhang}. We  name a first integral of Theorem
\ref{T:L23_Non_aut_FI} as a non-autonomous Darboux first integral provided
that $\omega\neq 0$. If $\omega=0$, then function \eqref{FI_t_gen} gives a
Darboux first integral.

\begin{theorem}\label{T:L23_Non_aut_JLM}
Under the assumptions of Theorem \ref{T:L23_Non_aut_FI} a polynomial
differential system \eqref{DS}
 possesses a Jacobi  last multiplier of the form
\begin{equation}
\begin{gathered}
 \label{JLM_gen}
M(x,y,t)=\prod_{j=1}^{K}F^{d_j}_j(x,y)\exp\left\{\frac{S(x,y)}{R(x,y)}\right\}\exp{(\omega t)},\quad
\omega,\, d_1,\, \ldots,\, d_K\in\mathbb{C},
\end{gathered}
\end{equation}
if and only if $F_1(x,y)=0$, $\ldots$, $F_K(x,y)=0$, $R(x,y)=0$ are invariant
algebraic curves of the system  and
$E(x,y)=\exp\{S(x,y)/R(x,y)\}$ is an exponential invariant of the
system such that the following condition
\begin{equation}
\begin{gathered}
 \label{JLM_gen_cond}
\sum_{j=1}^{N}d_j\lambda_j(x,y)+\varrho(x,y)+\omega =-\text{div}\,\mathcal{X}
\end{gathered}
\end{equation}
is identically valid. In this expression $\lambda_j(x,y)$ is the cofactor of
the invariant algebraic curve $F_j(x,y)=0$ and $\varrho(x,y)$ is the cofactor
of the exponential invariant $E(x,y)$.
\end{theorem}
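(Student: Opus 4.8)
The plan is to work directly from the defining partial differential equation for a non-autonomous integrating factor, namely $M_t+\mathcal{X}M=-\text{div}\,\mathcal{X}\,M$, and to exploit the fact that the time variable enters the proposed multiplier \eqref{JLM_gen} only through the scalar factor $\exp(\omega t)$. Writing $M(x,y,t)=\Phi(x,y)\exp(\omega t)$ with $\Phi=\prod_{j=1}^{K}F_j^{d_j}\exp\{S/R\}$, I would first record the two elementary facts $M_t=\omega M$ and $\mathcal{X}M=\exp(\omega t)\,\mathcal{X}\Phi$, since $\mathcal{X}$ differentiates only with respect to $x$ and $y$. On the open set where none of the irreducible factors $F_1,\ldots,F_K,R$ vanishes the function $M$ is of class $C^1$ and nowhere zero, so there the integrating-factor equation is equivalent to the scalar identity
\[
\omega+\frac{\mathcal{X}\Phi}{\Phi}=-\text{div}\,\mathcal{X},
\qquad
\frac{\mathcal{X}\Phi}{\Phi}=\sum_{j=1}^{K}d_j\frac{\mathcal{X}F_j}{F_j}+\frac{\mathcal{X}E}{E},
\]
where $E=\exp\{S/R\}$.

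For the sufficiency direction I would substitute the invariance hypotheses $\mathcal{X}F_j=\lambda_j F_j$ and $\mathcal{X}E=\varrho E$ into the logarithmic derivative, which collapses it to the polynomial $\sum_j d_j\lambda_j+\varrho$. The condition \eqref{JLM_gen_cond} then states precisely that $\omega+\sum_j d_j\lambda_j+\varrho=-\text{div}\,\mathcal{X}$, so the scalar identity above holds identically; multiplying back by $M$ shows that $M_t+\mathcal{X}M+\text{div}\,\mathcal{X}\,M\equiv0$ on the open set where $M$ is defined and of class $C^1$, which is the defining property of a Jacobi last multiplier. Hence $M$ given by \eqref{JLM_gen} is indeed a Jacobi last multiplier.

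For the necessity direction I would start from the assumption that $M$ of the form \eqref{JLM_gen} is a Jacobi last multiplier, so that the displayed scalar identity holds as an equality of rational functions in which the right-hand side $-\text{div}\,\mathcal{X}-\omega$ is a polynomial. The task is then to show that the rational function $\sum_j d_j\,\mathcal{X}F_j/F_j+\mathcal{X}(S/R)$ can be polynomial only if each $F_j=0$ and $R=0$ is invariant and $E$ is an exponential invariant. Here I would use that $F_1,\ldots,F_K,R$ are pairwise relatively prime and irreducible, the standing assumptions inherited from Theorem \ref{T:L23_Non_aut_FI}: examining the poles along each irreducible $F_j$ forces $F_j\mid\mathcal{X}F_j$, that is $\mathcal{X}F_j=\lambda_j F_j$ with $\deg\lambda_j\le L-1$, while the potential double pole along $R=0$ produced by $\mathcal{X}(S/R)=(R\,\mathcal{X}S-S\,\mathcal{X}R)/R^2$ forces $R\mid\mathcal{X}R$ together with the compatibility that makes $\mathcal{X}E/E=\varrho$ a polynomial, i.e. $E$ an exponential invariant with cofactor $\varrho$. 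Once these invariance statements are established, the surviving identity is exactly \eqref{JLM_gen_cond}.

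The main obstacle is this pole-cancellation analysis in the necessity direction, and in particular the handling of the second-order pole along $R=0$: one must rule out the possibility that a genuine double pole of $\mathcal{X}(S/R)$ is cancelled by the simple poles of the sum $\sum_j d_j\,\mathcal{X}F_j/F_j$. The coprimality of $R$ with the $F_j$ is what prevents such interference and lets me treat the pole along each irreducible curve independently; this is precisely the structural input supplied by Theorem \ref{T:L23_Non_aut_FI}, and the remainder of the argument parallels its proof, the only change being the inhomogeneous term $-\text{div}\,\mathcal{X}$ in place of $0$.
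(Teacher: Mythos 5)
Your overall strategy is the standard Darboux-theoretic one, and it is worth noting that the paper itself does not prove Theorem \ref{T:L23_Non_aut_JLM} at all: it quotes it, citing C.~Christopher for $\omega=0$ and \cite{Demina15} for $\omega\neq0$. Measured against that standard argument, your sufficiency direction is correct as written. The necessity direction, however, has a genuine gap: you justify treating the poles of $\sum_j d_j\mathcal{X}F_j/F_j$ and of $\mathcal{X}(S/R)$ independently by invoking ``the coprimality of $R$ with the $F_j$'', and you assert that this coprimality is ``precisely the structural input supplied by Theorem \ref{T:L23_Non_aut_FI}''. It is not among that theorem's hypotheses: they require only that the $F_j$ be pairwise relatively prime and irreducible and that $S$ and $R$ be relatively prime. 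Nothing prevents $R$ from sharing irreducible factors with the $F_j$, and in this paper that situation is the rule rather than the exception: in the Darboux--Jacobi last multiplier \eqref{Lienard_degenerate_Inegrability_time_del0_4} and in the integrating factor \eqref{Lienard_degenerate_Liouville_IF_main3} one has $F_1(x,y)=y-q(x)$ and $R(x,y)=y-q(x)$ simultaneously. For exactly these multipliers your argument, as written, does not apply: along the common curve the simple pole of $d_1\mathcal{X}F_1/F_1$ and the order-two pole of $\mathcal{X}(S/R)$ do interfere, so you cannot conclude $F_1\mid\mathcal{X}F_1$ by ``examining the poles along $F_1$'' in isolation.

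The gap is local and repairable, and the repair shows the coprimality was never needed. Work with the pole order along each irreducible curve, i.e.\ the valuation $v_P$ attached to each irreducible $P$. If $P$ is a factor of $R$ of multiplicity $a\geq1$ and $P\nmid\mathcal{X}P$, then writing $R=P^aQ$ with $P\nmid Q$ one sees that $\mathcal{X}(S/R)$ contains the term $-aS\,\mathcal{X}P/(P^{a+1}Q)$ and hence has a pole of exact order $a+1\geq2$ along $P$ (here $\gcd(S,R)=1$ guarantees $P\nmid S$), whereas every term $d_j\mathcal{X}F_j/F_j$ has pole order at most $1$; the sum then has a pole of order $a+1$ and cannot equal the polynomial $-\text{div}\,\mathcal{X}-\omega$. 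Therefore $P\mid\mathcal{X}P$ for every irreducible factor of $R$, so $R=0$ is invariant. Next, for each $F_j$ with $d_j\neq0$ not dividing $R$, all other terms are pole-free along $F_j$, which gives $F_j\mid\mathcal{X}F_j$. Finally, with these invariances in hand, $\sum_j d_j\mathcal{X}F_j/F_j=\sum_j d_j\lambda_j$ is a polynomial, so the identity itself forces $\mathcal{X}(S/R)=-\text{div}\,\mathcal{X}-\omega-\sum_j d_j\lambda_j$ to be a polynomial; that is precisely the statement that $E$ is an exponential invariant with cofactor $\varrho$, and the surviving identity is \eqref{JLM_gen_cond}. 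One further small point: you do need $d_j\neq0$ to deduce invariance of $F_j$; factors with $d_j=0$ do not occur in $M$ and should be discarded at the outset.
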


 Theorem \ref{T:L23_Non_aut_JLM} with the restriction $\omega=0$ was
derived by C.~Christopher \cite{Christopher}. The case $\omega\neq0$ was
considered in article \cite{Demina15}. A Jacobi last multiplier of Theorem
\ref{T:L23_Non_aut_JLM} will be referred to as  a non-autonomous
Darboux--Jacobi last multiplier whenever $\omega\neq0$.

 These theorems suggest
the following algorithm for searching  autonomous and non-autonomous Darboux
first integrals and Jacobi last multipliers:

\begin{enumerate}

\item find all relatively prime irreducible invariant algebraic curves
    and all exponential invariants with linearly independent cofactors;

\item find,  or prove the non-existence of, complex numbers $d_1$,
    $\ldots$, $d_K$, $\omega$ such that  condition \eqref{NDFI_gen_cond} or
    \eqref{JLM_gen_cond} is identically satisfied; the polynomial
    $\varrho(x,y)$ arising in conditions \eqref{NDFI_gen_cond} and
    \eqref{JLM_gen_cond} equals the sum of the cofactors of exponential
    invariants found at the first step.

\end{enumerate}

Let us note that there exist certain estimates of the number of pairwise
distinct invariants which guarantees the existence of rational, Darbour or
Liouvillian first integrals in the autonomous case. For more details see books
\cite{Zhang, Ilyashenko} and the references therein.

The first step of this algorithm is extremely difficult. This is due to the
absence of \textit{a priori} upper bounds on the degrees of bivariate polynomials
giving irreducible invariant algebraic curves. It is shown in articles
\cite{Demina06, Demina11, Demina12} that the method of Puiseux series, which
is described in the next section, can  facilitate the first step.

It is straightforward to see that integrating factors and Jacobi last
multipliers are defined modulo to the multiplication by a non-zero constant.
Two integrating factors or Jacobi last multipliers producing a constant ratio
are supposed to be equivalent. We do not distinguish them. The uniqueness
of integrating factors and Jacobi last multipliers is understood exactly in
this sense.

In general, the existence of only one independent non-autonomous first
integral is not sufficient for the complete integrability of the system under
study in the framework of the Darboux theory. However, the knowledge of a
non-autonomous Darboux first integral can be used to derive the general solutions. Several examples are given in article \cite{Demina16}. Moreover, we  demonstrate that some Li\'{e}nard differential systems from family ($B$) simultaneously have independent autonomous and non-autonomous Darboux first integrals. Eliminating the variable $y=x_t$ from these first integrals, one can find explicit expressions of the general solutions; see also article \cite{Ruiz01}, where some similar examples are presented. To conclude this section let us note that trajectories lying
in a zero set of inverse Jacobi last multipliers are of importance in the
qualitative theory of differential systems~\cite{Zhang}. Consequently, the classification of Li\'{e}nard differential systems possessing Darboux--Jacobi last multipliers seem to be a significant problem.

\section{The method of Puiseux series and the local theory of invariants}\label{S:Local}

We start with a brief review of the theory of fractional-power (or Puiseux)
series. A Puiseux series around a point $x_0\in\mathbb{C}$ can be presented
as
\begin{equation}
\begin{gathered}
 \label{Puiseux_null}
y(x)=\sum_{l=0}^{+\infty}c_l(x-x_0)^{\frac{l_0+l}{n_0}},\quad c_0\neq0,
\end{gathered}
\end{equation}
where $l_0\in\mathbb{Z}$ and $n_0\in\mathbb{N}$. It is without loss of
generality to suppose that the number $n_0$ is relatively prime with the
greatest common divisor of the numbers $\{l_0+l$ : $c_l\neq0$,
$l\in\mathbb{N}\cup\{0\}\}$. While a Puiseux series around
the point $x=\infty$ takes the form
\begin{equation}
\begin{gathered}
 \label{Puiseux_inf}
y(x)=\sum_{l=0}^{+\infty}b_lx^{\frac{l_0-l}{n_0}},\quad b_0\neq0,
\end{gathered}
\end{equation}
where  $l_0\in\mathbb{Z}$ and $n_0\in\mathbb{N}$. Again we assume  that the number $n_0$ is relatively prime with the
greatest common divisor of the numbers $\{l_0-l$ : $b_l\neq0$,
$l\in\mathbb{N}\cup\{0\}\}$.

Let us consider the algebraic equation $F(x,y)=0$, where
$F(x,y)\in\mathbb{C}[x,y]\setminus\mathbb{C}[x]$. Giving preference to one of
the variables with respect to another, a solution $y$ of this equation
viewed as a function of $x$ can be locally expanded into a convergent Puiseux
series. This statement is known as the Newton--Puiseux theorem.

The set of all Puiseux series given by \eqref{Puiseux_null} or
\eqref{Puiseux_inf} forms an algebraically closed field, which we denote by
$\mathbb{C}_{x_0}\{x\}$ or $\mathbb{C}_{\infty}\{x\}$, respectively. The ring
of polynomials in one variable $y$ with coefficients from the fields
$\mathbb{C}_{x_0}\{x\}$ or $\mathbb{C}_{\infty}\{x\}$ is denoted as
$\mathbb{C}_{x_0}\{x\}[y]$ or $\mathbb{C}_{\infty}\{x\}[y]$, respectively.

Suppose $S(x,y)$ is an element of the ring $\mathbb{C}_{\infty}\{x\}[y]$. Let
us introduce two operators of projection acting on this ring. The first
operator $\{S(x,y)\}_+ $ gives the sum of the monomials of $S(x,y)$ with
non-negative integer powers. In other words, $\{S(x,y)\}_+$ yields the
polynomial part of $S(x,y)$. Analogously, the projection
$\{S(x,y)\}_{-}=S(x,y)-\{S(x,y)\}_+$ produces the non-polynomial part of
$S(x,y)$. It is straightforward to show that these projections are linear
operators. In addition, we see that the action of the projection operators
 can be extended to the ring of  the Puiseux series in $y$ near the
point $y=\infty$ with coefficients from the field $\mathbb{C}_{\infty}\{x\}$.
We denote this ring as $\mathbb{C}_{\infty}\{x\}[\{y\}]$. Thus, we get the
relation $\{S(x,y)\}_+\in\mathbb{C}[x,y]$, where
$S(x,y)\in\mathbb{C}_{\infty}\{x\}[\{y\}]$. We endow the
fields $\mathbb{C}_{x_0}\{x\}$ and $\mathbb{C}_{\infty}\{x\}$ with the differential operator $\partial_x$. Analogously, we endow the
rings $\mathbb{C}_{x_0}\{x\}[y]$ and $\mathbb{C}_{\infty}\{x\}[y]$ with the differential operators $\partial_x$ and $\partial_y$. The action of these differential operators is standard.

The following basic statements are  proved in articles \cite{Demina07,
Demina12, Demina11}.

\begin{lemma}\label{L_Puiseux series}

Let $y(x)$ be  a Puiseux series from one of the fields
$\mathbb{C}_{x_0}\{x\}$ or $\mathbb{C}_{\infty}\{x\}$. Suppose that the series $y(x)$
satisfies the equation $F(x,y(x))=0$, where $F(x,y)\in \mathbb{C}[x,y]\setminus
\mathbb{C}[x]$ gives an invariant algebraic curve  $F(x,y)=0$ of a system
\eqref{DS}.
 Then the series $y(x)$ solves the equation
\begin{equation}
\label{ODE_y}
P(x,y)y_x-Q(x,y)=0.
\end{equation}
\end{lemma}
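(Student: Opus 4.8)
The plan is to combine the algebraic identity defining the invariant curve with implicit differentiation of the relation $F(x,y(x))=0$, and then to cancel a common factor inside the field of Puiseux series. First I would reduce to the case in which $F(x,y)$ is irreducible in $\mathbb{C}[x,y]$. Factoring $F$ into irreducible components and using that the identity $F(x,y(x))=0$ holds in the field $\mathbb{C}_{x_0}\{x\}$ (respectively $\mathbb{C}_{\infty}\{x\}$), which is an integral domain, the series $y(x)$ must annihilate one irreducible factor $\tilde F(x,y)$. Since the zero set $\tilde F=0$ is an irreducible component of the flow-invariant set $F=0$, it is itself an invariant algebraic curve in the sense of Definition \ref{D:IAC}. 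Replacing $F$ by $\tilde F$, I may therefore assume $F$ irreducible; note that $F\notin\mathbb{C}[x]$ forces $\deg_y F\geq 1$.

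For irreducible $F$ the discussion following Definition \ref{D:IAC} supplies a cofactor $\lambda(x,y)\in\mathbb{C}[x,y]$ with $P F_x+Q F_y=\lambda F$. Substituting $y=y(x)$ makes the right-hand side vanish, since $F(x,y(x))=0$, so that
\begin{equation*}
P(x,y(x))\,F_x(x,y(x))+Q(x,y(x))\,F_y(x,y(x))=0.
\end{equation*}
On the other hand, applying $\partial_x$ to the identity $F(x,y(x))=0$ (the operator $\partial_x$ acts on these fields and the chain rule holds) gives $F_x(x,y(x))+F_y(x,y(x))\,y_x=0$. Eliminating $F_x(x,y(x))$ between the two relations yields
\begin{equation*}
F_y(x,y(x))\,\bigl(P(x,y(x))\,y_x-Q(x,y(x))\bigr)=0.
\end{equation*}

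It then remains to show that $F_y(x,y(x))$ is not the zero Puiseux series; this is the crux of the argument and precisely the reason the reduction to an irreducible factor is essential, since on a repeated factor $F_y$ would vanish identically along the branch. Because $F$ is irreducible in $\mathbb{C}[x,y]$ with $\deg_y F\geq 1$, it is primitive as a polynomial in $y$ over $\mathbb{C}[x]$ and hence irreducible in $\mathbb{C}(x)[y]$; as $\mathrm{char}\,\mathbb{C}=0$, it is separable, so $F$ and $F_y$ are coprime in $\mathbb{C}(x)[y]$. A Bézout relation $A F+B F_y=1$ with $A,B\in\mathbb{C}(x)[y]$, evaluated at $y=y(x)$, gives $B(x,y(x))\,F_y(x,y(x))=1$, whence $F_y(x,y(x))\neq 0$.

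Finally, since the Puiseux series constitute a field, the nonzero factor $F_y(x,y(x))$ may be cancelled, which forces $P(x,y(x))\,y_x-Q(x,y(x))=0$; that is, $y(x)$ solves \eqref{ODE_y}. The only delicate point is the non-vanishing of $F_y$ along the branch, and the separability of the irreducible polynomial $F$ settles it.
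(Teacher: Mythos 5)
The paper itself never proves Lemma \ref{L_Puiseux series}: it is imported wholesale from \cite{Demina07, Demina12, Demina11}. So your proposal can only be judged on its own merits and against the standard argument. Its core — for irreducible $F$, combine the cofactor identity $PF_x+QF_y=\lambda F$ with the chain rule applied to $F(x,y(x))=0$ to get $F_y(x,y(x))\bigl(P(x,y(x))\,y_x-Q(x,y(x))\bigr)=0$, then exclude $F_y(x,y(x))=0$ via Gauss's lemma, separability in characteristic zero, and a B\'{e}zout relation in $\mathbb{C}(x)[y]$ evaluated inside the Puiseux field — is correct, and this is exactly the right way to handle the one delicate point.

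The gap is in your reduction to the irreducible case. You justify it by asserting that $F=0$ is a flow-invariant set and that its irreducible components are therefore invariant curves in the sense of Definition \ref{D:IAC}. But Definition \ref{D:IAC} only demands $(PF_x+QF_y)|_{F=0}=0$, and this condition is vacuous along any repeated factor: if $G^2$ divides $F$, then $PF_x+QF_y$ is automatically divisible by $G$, so nothing is imposed on $Z(G)$, and flow-invariance of $Z(F)$ does not follow. Concretely, take the system $x_t=1$, $y_t=1$ and $F=(y-2x)^2$; then $PF_x+QF_y=-2(y-2x)$ vanishes identically on $F=0$, so $F=0$ satisfies Definition \ref{D:IAC}, yet the line $y=2x$ is not invariant, and the Puiseux series $y(x)=2x$ satisfies $F(x,y(x))=0$ but violates \eqref{ODE_y} since $Py_x-Q=1$. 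So your premise ``the flow-invariant set $F=0$'' is unjustified — indeed false in general — and with it the lemma itself fails for non-squarefree $F$ under the literal geometric definition. The statement is really about Darboux polynomials: the hypothesis must be read as $\mathcal{X}F=\lambda F$ with $\lambda\in\mathbb{C}[x,y]$ (this is the convention the paper actually operates with, e.g.\ when it assigns the cofactor $n_1\lambda_1+n_2\lambda_2$ to a product $F_1^{n_1}F_2^{n_2}$), or else $F$ must be assumed squarefree, in which case $(F)$ is radical and the geometric and algebraic definitions coincide. Under the algebraic reading your reduction is repaired by the standard divisibility argument rather than by geometry: writing $F=\prod_iF_i^{n_i}$, the identity $\lambda F=\sum_i n_iF_i^{n_i-1}(\mathcal{X}F_i)\prod_{j\neq i}F_j^{n_j}$ forces $F_i\mid\mathcal{X}F_i$ for every $i$, so each irreducible factor carries its own cofactor, and your irreducible-case argument then applies to the factor annihilated by $y(x)$.
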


Using algebraic closeness of the field of Puiseux series
$\mathbb{C}_{\infty}\{x\}$, it is straightforward to find the general
structure of irreducible invariant algebraic curves and their cofactors.

\begin{theorem}\label{T:Darboux_pols}
Let $F(x,y)=0$, where $F(x,y)\in \mathbb{C}[x,y]\setminus\mathbb{C}[x]$, be an
irreducible  invariant algebraic curve of a differential system~\eqref{DS}.
Then the polynomial $F(x,y)$ and the cofactor
$\lambda(x,y)\in\mathbb{C}[x,y]$ of the curve $F(x,y)=0$ take the form
\begin{equation}
\label{General_Fl}
F(x,y)=\left\{\mu(x)\prod_{j=1}^{N}\left\{y-y_{j,\infty}(x)\right\}\right\}_{+},
\end{equation}
\begin{equation}
\label{General_Fl_cof}
\begin{gathered}
\lambda(x,y)=\left\{\sum_{m=0}^{+\infty}\sum_{j=1}^{N}\frac{\{Q(x,y)-P(x,y)(y_{j,\infty})_{x}\}(y_{j,\infty})^m}{y^{m+1}}
+P(x,y)\sum_{m=0}^{+\infty}\sum_{l=1}^{L}\frac{\nu_lx_l^m}{x^{m+1}}\right\}_{+},
\end{gathered}
\end{equation}
where  $y_{1,\infty}(x)$, $\ldots$, $y_{N,\infty}(x)$ are pairwise distinct
Puiseux series in a neighborhood of the point $x=\infty$ that satisfy
equation \eqref{ODE_y}, $x_1$, $\ldots$, $x_L$ are pairwise distinct zeros of
the polynomial $\mu(x)\in\mathbb{C}[x]$ with multiplicities $\nu_1$,
$\ldots$, $\nu_L\in\mathbb{N}$ and $L\in\mathbb{N}\cup\{0\}$. Moreover, the
degree of $F(x,y)$ with respect to $y$ does not exceed the number of distinct
Puiseux series of the from \eqref{Puiseux_inf} satisfying equation
\eqref{ODE_y} whenever this number is finite. If $\mu(x)=\mu_0$, where
$\mu_0\in\mathbb{C}$, then we suppose that $L=0$ and the first series is
absent in the expression of the cofactor~$\lambda(x,y)$.
\end{theorem}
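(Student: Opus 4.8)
The plan is to exploit the algebraic closedness of the field $\mathbb{C}_{\infty}\{x\}$ to split $F$ completely in the variable $y$, and then to read off both formulas from a logarithmic-derivative computation. First I would regard $F(x,y)$ as an element of $\mathbb{C}(x)[y]$ of degree $N=\deg_y F\geq1$, which is possible since $F\notin\mathbb{C}[x]$. Because $F$ is irreducible in $\mathbb{C}[x,y]$ and is not a polynomial in $x$ alone, its content in $y$ is a nonzero constant, so by Gauss's lemma $F$ remains irreducible over $\mathbb{C}(x)$; as $\mathbb{C}(x)$ has characteristic zero it is perfect, hence $F$ is separable and splits into $N$ pairwise distinct roots in the algebraic closure. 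Since $\mathbb{C}_{\infty}\{x\}$ is algebraically closed and contains that closure, these roots are pairwise distinct Puiseux series $y_{1,\infty}(x),\ldots,y_{N,\infty}(x)$ of the form \eqref{Puiseux_inf}, and writing $\mu(x)$ for the leading coefficient of $F$ with respect to $y$ gives the factorization $F=\mu(x)\prod_{j=1}^{N}(y-y_{j,\infty})$. Because $F$ is already a genuine bivariate polynomial, the projection operator leaves it unchanged, which yields formula \eqref{General_Fl}. Invoking Lemma \ref{L_Puiseux series} with each root $y_{j,\infty}$, which satisfies $F(x,y_{j,\infty})=0$, shows that every $y_{j,\infty}$ solves equation \eqref{ODE_y}, as claimed.

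Next I would derive the cofactor. By Definition \ref{D:IAC} the cofactor satisfies $\lambda=(PF_x+QF_y)/F$, and the logarithmic derivative of the factorization gives the identity
\begin{equation*}
\lambda=P(x,y)\frac{\mu_x(x)}{\mu(x)}+\sum_{j=1}^{N}\frac{Q(x,y)-P(x,y)(y_{j,\infty})_x}{y-y_{j,\infty}}
\end{equation*}
in the field of fractions of $\mathbb{C}_{\infty}\{x\}[y]$. Expanding each summand near $y=\infty$ through $1/(y-y_{j,\infty})=\sum_{m\geq0}(y_{j,\infty})^m/y^{m+1}$, and writing $\mu(x)=\mu_0\prod_{l=1}^{L}(x-x_l)^{\nu_l}$ so that $\mu_x/\mu=\sum_{l=1}^{L}\nu_l/(x-x_l)=\sum_{m\geq0}\sum_{l=1}^{L}\nu_lx_l^m/x^{m+1}$ near $x=\infty$, I obtain the right-hand side of \eqref{General_Fl_cof} as an element of $\mathbb{C}_{\infty}\{x\}[\{y\}]$. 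Since $\lambda$ is known to be a polynomial, its non-polynomial part vanishes, so $\lambda$ equals the polynomial projection of that series, which is exactly \eqref{General_Fl_cof}; when $\mu$ is constant the second series is absent and $L=0$.

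Finally, the degree estimate follows at once: the $N=\deg_y F$ roots $y_{j,\infty}$ are pairwise distinct Puiseux series of the form \eqref{Puiseux_inf} that solve \eqref{ODE_y}, so $\deg_y F$ cannot exceed the total number of such series whenever that number is finite. The step I expect to require the most care is the passage to the Puiseux field: one must justify that the formal logarithmic-derivative expansion is a legitimate element of the ring $\mathbb{C}_{\infty}\{x\}[\{y\}]$ on which the projection $\{\cdot\}_+$ acts, that the two geometric expansions (in $1/y$ and in $1/x$) are simultaneously valid there, and that truncation to the polynomial part genuinely recovers the cofactor rather than merely an approximation. Establishing separability and distinctness of the roots, together with the constancy of the content of $F$, is the other point that must be settled before the factorization can be written down.
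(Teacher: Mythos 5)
Your proposal is correct and follows essentially the same route as the proof the paper relies on (the paper does not reprove this theorem but defers to the articles cited at the start of Section \ref{S:Local}): splitting $F$ over the algebraically closed field $\mathbb{C}_{\infty}\{x\}$ into pairwise distinct roots, invoking Lemma \ref{L_Puiseux series} to see that each root solves \eqref{ODE_y}, and recovering the cofactor by logarithmic differentiation of the factorization followed by the geometric expansions in $1/y$ and $1/x$ and the projection $\{\cdot\}_+$. The delicate points you flag — constancy of the content so that Gauss's lemma and separability apply, and the legitimacy of performing and truncating the expansions inside $\mathbb{C}_{\infty}\{x\}[\{y\}]$ — are handled soundly in your argument, so nothing essential is missing.
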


This theorem introduces an algebraic tool enabling one to find invariant
algebraic curves explicitly, for more details see article \cite{Demina18}.

The method of Puiseux series can be also used whenever one wishes to find
exponential invariants with non-polynomial arguments. The following theorem
giving necessary conditions for an exponential invariant related to an
invariant algebraic curve to exist is proved in article~\cite{Demina07}.

\begin{theorem}\label{T:Exp_fact}
Let the polynomial $f(x,y)\in\mathbb{C}[x,y]\setminus \mathbb{C}[x]$ give an
invariant algebraic curve $f(x,y)=0$ of a differential system~\eqref{DS}. The cofactor of the invariant
algebraic curve $f(x,y)=0$ we denote by $\lambda(x,y)\in\mathbb{C}[x,y]$.
Suppose that this system admits an exponential invariant
 $E = \exp(g/f)$ related to the algebraic curve $f(x,y)=0$, then
 for each non-constant Puiseux series $y_{j,\infty}(x)$ in a neighborhood of the point $x=\infty$
 that satisfies the equation $f(x,y)=0$ there exists a number $q\in\mathbb{Q}$ such that the Puiseux series
for the function $\lambda(x,y_{j,\infty}(x))/P(x,y_{j,\infty}(x))$ in a
neighborhood of the point $x=\infty$ is
  \begin{equation}
\begin{gathered}
 \label{Exp_fact_cond}
\frac{\lambda(x,y_{j,\infty}(x))}{P(x,y_{j,\infty}(x))}=\sum_{k=n}^{+\infty}b_kx^{-\frac{k}{n}},\quad b_n=q.
\end{gathered}
\end{equation}

\end{theorem}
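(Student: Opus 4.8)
The plan is to exploit the fact that an exponential invariant $E=\exp(g/f)$ satisfies $\mathcal{X}E=\varrho E$, which unfolds to the polynomial identity
\begin{equation}
\label{plan_exp_eq}
P(x,y)\left(\frac{g}{f}\right)_x+Q(x,y)\left(\frac{g}{f}\right)_y=\varrho(x,y).
\end{equation}
First I would rewrite the left-hand side as $\mathcal{X}(g/f)=(f\,\mathcal{X}g-g\,\mathcal{X}f)/f^2$, and recall that since $f=0$ is an invariant algebraic curve with cofactor $\lambda$ we have $\mathcal{X}f=\lambda f$. Substituting this gives $\mathcal{X}(g/f)=\mathcal{X}g/f-g\lambda/f^2$, so that \eqref{plan_exp_eq} becomes, after clearing $f$, the relation $\mathcal{X}g-\lambda g=\varrho f$, i.e. $P g_x+Q g_y-\lambda g=\varrho f$ as an identity in $\mathbb{C}[x,y]$. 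The point of this reformulation is that it is exactly the identity one must satisfy, and it is the door through which the Puiseux machinery of Lemma \ref{L_Puiseux series} enters.

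Next I would restrict this polynomial identity to a branch of the curve. Let $y_{j,\infty}(x)$ be a non-constant Puiseux series near $x=\infty$ solving $f(x,y)=0$; by Lemma \ref{L_Puiseux series} it also solves equation \eqref{ODE_y}, namely $P\,(y_{j,\infty})_x=Q$. Evaluating the identity $P g_x+Q g_y-\lambda g=\varrho f$ along $y=y_{j,\infty}(x)$ kills the right-hand side (because $f=0$ there), leaving
\begin{equation}
\label{plan_branch}
P(x,y_{j,\infty})\,g_x(x,y_{j,\infty})+Q(x,y_{j,\infty})\,g_y(x,y_{j,\infty})=\lambda(x,y_{j,\infty})\,g(x,y_{j,\infty}).
\end{equation}
Using $Q=P\,(y_{j,\infty})_x$ along the branch, the left side of \eqref{plan_branch} is precisely $P(x,y_{j,\infty})\cdot\frac{d}{dx}\,g(x,y_{j,\infty}(x))$ by the chain rule. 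Writing $G_j(x):=g(x,y_{j,\infty}(x))$, a Puiseux series in $\mathbb{C}_{\infty}\{x\}$, equation \eqref{plan_branch} reads $P(x,y_{j,\infty})\,G_j'(x)=\lambda(x,y_{j,\infty})\,G_j(x)$, a first-order linear scalar Puiseux-series ODE for $G_j$. Formally integrating yields
\begin{equation}
\label{plan_log}
\frac{d}{dx}\log G_j(x)=\frac{G_j'(x)}{G_j(x)}=\frac{\lambda(x,y_{j,\infty}(x))}{P(x,y_{j,\infty}(x))}.
\end{equation}

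The heart of the argument is then to extract the asserted rational number $q$ from the logarithmic-derivative structure in \eqref{plan_log}. Because $G_j(x)=g(x,y_{j,\infty}(x))$ is a Puiseux series in $x$ near infinity, its logarithmic derivative $G_j'/G_j$ has a very constrained form: if $G_j$ has leading behaviour $c\,x^{p}$ with $p\in\mathbb{Q}$, then $G_j'/G_j=p/x+(\text{lower-order terms})$, so the expansion of $\lambda(x,y_{j,\infty})/P(x,y_{j,\infty})$ begins at order $x^{-1}$ and the coefficient of $x^{-1}$ equals the leading exponent $p$, which is a rational number. Matching this to the notation of the statement, where the series is written $\sum_{k=n}^{\infty}b_k x^{-k/n}$, the leading term is $b_n x^{-1}$ (so the expansion starts at the integer-exponent slot $k=n$), and $b_n=p=q\in\mathbb{Q}$. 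I would make this rigorous by noting that if $G_j\not\equiv 0$ then its valuation in the Puiseux field is a well-defined rational number $p$, and that the valuation of the leading coefficient $b_n$ in the logarithmic derivative is forced to be $x^{-1}$ with residue equal to that valuation. The main obstacle I anticipate is the degenerate case where $G_j(x)=g(x,y_{j,\infty}(x))\equiv 0$, i.e. the branch $y_{j,\infty}$ lies on $g=0$ as well as on $f=0$; there the logarithmic derivative is undefined and the argument via \eqref{plan_log} breaks down. Handling this requires observing that $g$ and $f$ are relatively prime, so a common non-constant branch would force a shared factor or an isolated coincidence that must be ruled out (or the statement must be read as applying to those branches on which $g$ does not vanish identically, with $q$ then being the relevant valuation). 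Controlling this exceptional locus, and confirming that the expansion genuinely begins no earlier than $x^{-1}$ rather than at a positive power of $x$, is where the care is needed; the rest is the formal chain-rule and valuation bookkeeping described above.
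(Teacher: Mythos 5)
The paper itself contains no proof of Theorem \ref{T:Exp_fact}: it is imported verbatim from \cite{Demina07}, so there is no internal argument to compare yours against. That said, your route is the natural one (and is, in substance, the standard proof of results of this type): from $\mathcal{X}E=\varrho E$ and $\mathcal{X}f=\lambda f$ you correctly obtain the polynomial identity $\mathcal{X}g-\lambda g=\varrho f$; restricting to a branch $y=y_{j,\infty}(x)$ of $f=0$, invoking Lemma \ref{L_Puiseux series} and the chain rule, you reduce everything to $P(x,y_{j,\infty})\,G_j'=\lambda(x,y_{j,\infty})\,G_j$ with $G_j(x)=g(x,y_{j,\infty}(x))$, so that $\lambda/P$ along the branch is the logarithmic derivative of a Puiseux series at infinity; such a logarithmic derivative has all exponents at most $-1$, and its coefficient of $x^{-1}$ equals the leading exponent of $G_j$, which is rational. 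This matches exactly the asserted form \eqref{Exp_fact_cond}, including the case $q=0$.

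The one genuine gap is the degenerate case you flagged: $G_j\equiv 0$. It cannot be waved away by ``reading the statement as excluding'' such branches, since the theorem asserts the conclusion for every non-constant branch; but it is easy to close, and you should do it explicitly. The paper's definition of an exponential invariant requires $g$ and $f$ to be relatively prime; if $g\in\mathbb{C}[x]$ then $G_j\equiv0$ forces $g\equiv0$, which is excluded, and otherwise the resultant $R(x)=\mathrm{Res}_y(f,g)\in\mathbb{C}[x]$ is a nonzero polynomial expressible as $Af+Bg$ with $A,B\in\mathbb{C}[x][y]$; evaluating along the branch, where both $f$ and $g$ vanish identically, yields $R(x)\equiv0$ in $\mathbb{C}_{\infty}\{x\}$, a contradiction. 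The same device handles a second degenerate case you did not mention: for the quotient $\lambda(x,y_{j,\infty})/P(x,y_{j,\infty})$ to be a well-defined Puiseux series one needs $P(x,y_{j,\infty}(x))\not\equiv0$; if it vanished identically, then $Q(x,y_{j,\infty}(x))=P(x,y_{j,\infty})\,(y_{j,\infty})_x\equiv0$ as well, contradicting the coprimality of $P$ and $Q$ by the identical resultant argument. With these two observations inserted, your proof is complete and rigorous.
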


Now our aim is to introduce a notion of local invariant curves that are given by
elements from one of the rings $\mathbb{C}_{x_0}\{x\}[y]$ and
$\mathbb{C}_{\infty}\{x\}[y]$.
 Let $x_0$ be a point of the extended complex plane
$\overline{\mathbb{C}}=\mathbb{C}\cup\{\infty\}$. Elements of the ring
$\mathbb{C}_{x_0}\{x\}[y]$ are referred to as Puiseux polynomials. Note
that the polynomials $P(x,y)$ and $Q(x,y)$ can be regarded as Puiseux
polynomials for any~$x_0\in \overline{\mathbb{C}}$. Again giving preference to the variable $y$, we shall
deal with formal Puiseux series solutions $y=y_{x_0}(x)$ of algebraic
first-order ordinary differential equation~\eqref{ODE_y}. We can do the same
analysis choosing the variable $x$ as depended.


\begin{definition}
The curve $F(x,y)=0$, where $F(x,y)\in \mathbb{C}_{x_0}\{x\}[y]$, is called a
local invariant curve of a differential system \eqref{DS}  whenever the following condition
$F_t|_{F=0}=(PF_x+QF_y)|_{F=0}=0$ is valid.

\end{definition}

 If the element $F(x,y)\in\mathbb{C}_{x_0}\{x\}[y]$ gives a local
invariant curve $F(x,y)=0$
 of differential
system \eqref{DS}, then there exists the Puiseux polynomial
$\lambda(x,y)\in\mathbb{C}_{x_0}\{x\}[y]$ such that the following equation
$\mathcal{X}F(x,y)=\lambda(x,y) F(x,y)$ is satisfied  \cite{Demina_Gine_Valls}. By analogy with the
algebraic case, the Puiseux polynomials $F(x,y)$ and $\lambda(x,y)$  will be
called a local invariant and the cofactor of the related local invariant and
local curve, respectively.  The theory of local invariants is introduced in
article \cite{Demina_Gine_Valls}, where all the statements given below are
proved.

\smallskip

In what follows we call local invariants and associated local invariant
curves given by the polynomials $F(x,y)=h_{x_0}(x)$ and $F(x,y)=y-y_{x_0}(x)$, $h_{x_0}(x)$, $y_{x_0}(x)\in\mathbb{C}_{x_0}\{x\}$ \textit{elementary}.
Any  local invariant $F(x,y)$ can be represented as a product of elementary
local invariants and the cofactor of the local invariant  $F(x,y)$ is the sum
of the cofactors of all the factors.

\begin{theorem}\label{T:inv_curve_prim}
Let $y_{x_0}(x)$ be a Puiseux series near the point $x=x_0$. The  element
$F(x,y)=y-y_{x_0}(x)$ of the ring $\mathbb{C}_{x_0}\{x\}[y]$   gives a local
invariant curve of differential system \eqref{DS} if and only if the series
$y=y_{x_0}(x)$ satisfies equation~\eqref{ODE_y}.
\end{theorem}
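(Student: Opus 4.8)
The plan is to reduce both implications of the equivalence to a single direct computation of $\mathcal{X}F$ for the elementary Puiseux polynomial $F(x,y)=y-y_{x_0}(x)$, and then to exploit the fact, recalled earlier, that $\mathbb{C}_{x_0}\{x\}$ is a field. First I would compute the partial derivatives $F_x=-(y_{x_0})_x$ and $F_y=1$, where $(y_{x_0})_x$ denotes the image of $y_{x_0}(x)$ under the differential operator $\partial_x$ on the field $\mathbb{C}_{x_0}\{x\}$. Substituting these into the vector field \eqref{VF} yields
\[
\mathcal{X}F=P(x,y)F_x+Q(x,y)F_y=Q(x,y)-P(x,y)(y_{x_0})_x.
\]
Since $P,Q\in\mathbb{C}[x,y]\subset\mathbb{C}_{x_0}\{x\}[y]$ and $(y_{x_0})_x\in\mathbb{C}_{x_0}\{x\}$, the right-hand side is again an element of the ring $\mathbb{C}_{x_0}\{x\}[y]$, that is, a polynomial in $y$ with Puiseux-series coefficients.

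The definition of a local invariant curve requires that $\mathcal{X}F$ vanish on $F=0$, i.e.\ after the substitution $y=y_{x_0}(x)$. Evaluating the expression above along the series gives
\[
\mathcal{X}F\big|_{y=y_{x_0}(x)}=Q(x,y_{x_0}(x))-P(x,y_{x_0}(x))(y_{x_0})_x,
\]
which is exactly the left-hand side of equation \eqref{ODE_y} evaluated on the series, up to an overall sign. Hence this quantity vanishes identically as an element of $\mathbb{C}_{x_0}\{x\}$ if and only if $y=y_{x_0}(x)$ solves \eqref{ODE_y}. This establishes both directions of the equivalence at once: the curve $F(x,y)=0$ is a local invariant curve precisely when the series satisfies the ordinary differential equation.

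To align the conclusion fully with the surrounding framework I would additionally exhibit the cofactor. Because $\mathbb{C}_{x_0}\{x\}$ is a field, the ring $\mathbb{C}_{x_0}\{x\}[y]$ is a univariate polynomial ring over a field, so the factor theorem applies: an element of this ring vanishing at the root $y=y_{x_0}(x)$ is divisible by $y-y_{x_0}(x)$. Thus, when \eqref{ODE_y} holds, polynomial division in $y$ produces a Puiseux polynomial $\lambda(x,y)\in\mathbb{C}_{x_0}\{x\}[y]$ with
\[
\mathcal{X}F=Q(x,y)-P(x,y)(y_{x_0})_x=\lambda(x,y)\,(y-y_{x_0}(x))=\lambda(x,y)F(x,y),
\]
so $\lambda$ is the cofactor promised by the local theory of invariants. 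I do not expect any genuine obstacle: the entire argument is a short computation together with an appeal to the field structure of $\mathbb{C}_{x_0}\{x\}$ and to the well-definedness of $\partial_x$ on Puiseux series. The only point deserving care is verifying that $\mathcal{X}F$ indeed lies in $\mathbb{C}_{x_0}\{x\}[y]$, so that both the substitution $y=y_{x_0}(x)$ and the factor theorem are legitimate operations in this ring.
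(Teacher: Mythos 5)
Your proposal is correct. The paper itself does not prove Theorem \ref{T:inv_curve_prim}; it is quoted from the local theory of invariants developed in \cite{Demina_Gine_Valls}, so there is no in-text proof to compare against. Your argument is the natural one and fits the paper's framework exactly: the computation $\mathcal{X}F=Q(x,y)-P(x,y)(y_{x_0})_x$ reduces the invariance condition, after the substitution $y=y_{x_0}(x)$, to the statement that the series solves \eqref{ODE_y} (up to sign), and your appeal to the factor theorem in $\mathbb{C}_{x_0}\{x\}[y]$ — legitimate because $\mathbb{C}_{x_0}\{x\}$ is a field — yields the cofactor $\lambda(x,y)=\bigl(Q(x,y)-P(x,y)(y_{x_0})_x\bigr)/\bigl(y-y_{x_0}(x)\bigr)$, which is precisely the quantity whose series expansion appears in the explicit cofactor formula \eqref{Gen_coff1-2} of Theorem \ref{T:coff_local2}. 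Your closing caution about checking that $\mathcal{X}F$ lies in $\mathbb{C}_{x_0}\{x\}[y]$ is the right one, and your justification (closure of the ring under products with $P,Q\in\mathbb{C}[x,y]$ and stability of Puiseux series under $\partial_x$) settles it.
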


It is straightforward to obtain an explicit expression of the cofactor
similar to that presented in Theorem \ref{T:Darboux_pols}. We introduce the
operator of projection $\left\{S(x,y)\right\}_{+,\,y}$ yielding the
polynomial part with respect to $y$ of the element
$S(x,y)\in\mathbb{C}_{x_0,\infty} \{x\}$ $[\{y\}]$. The symbol
$\mathbb{C}_{x_0,\infty}\{x\}[\{y\}]$ denotes the ring of Puiseux series in
$y$ near the point $y=\infty$ with coefficients from the field
$\mathbb{C}_{x_0}\{x\}$. In other words, we obtain the relation
$\left\{S(x,y)\right\}_{+,\,y}\in\mathbb{C}_{x_0}\{x\}[y]$.

\begin{theorem}\label{T:coff_local2}
Let $F(x,y)=y-Y_{x_0}(x)$, $Y_{x_0}(x)\in\mathbb{C}_{x_0}\{x\}$ give an
elementary local invariant  curve $F(x,y)=0$ of differential system
\eqref{DS}, then its cofactor takes the form
\begin{equation}
 \label{Gen_coff1-2}
\lambda(x,y)=\left\{\sum_{m=0}^{+\infty}\frac{\{Q(x,y)-P(x,y)(Y_{x_0}(x))_x\}Y_{x_0}^m(x)}{y^{m+1}}\right\}_{+,\,y}.
\end{equation}
\end{theorem}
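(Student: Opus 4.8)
The plan is to solve the defining relation $\mathcal{X}F=\lambda F$ explicitly, using that $F=y-Y_{x_0}(x)$ is linear in $y$. First I would evaluate the left-hand side directly. Since $F_x=-(Y_{x_0}(x))_x$ and $F_y=1$, the vector field acts as
\[
\mathcal{X}F=P(x,y)F_x+Q(x,y)F_y=Q(x,y)-P(x,y)(Y_{x_0}(x))_x,
\]
which we call $G(x,y)$; this is an element of $\mathbb{C}_{x_0}\{x\}[y]$. Hence the cofactor must satisfy $\lambda(x,y)\,(y-Y_{x_0}(x))=G(x,y)$, so that formally $\lambda=G/(y-Y_{x_0})$, and the whole task reduces to making sense of this quotient as a Puiseux polynomial and rewriting it in the stated form.

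The next step is to verify that this quotient genuinely lies in $\mathbb{C}_{x_0}\{x\}[y]$. By Theorem \ref{T:inv_curve_prim}, $F=0$ is a local invariant curve exactly when $Y_{x_0}(x)$ solves equation \eqref{ODE_y}, which is precisely the identity $G(x,Y_{x_0}(x))=Q(x,Y_{x_0})-P(x,Y_{x_0})(Y_{x_0})_x=0$. Thus $y=Y_{x_0}(x)$ is a root of $G(x,y)$ viewed as a polynomial in $y$ over the field $\mathbb{C}_{x_0}\{x\}$; dividing $G$ by the monic linear factor $y-Y_{x_0}$ in this polynomial ring leaves remainder $G(x,Y_{x_0})=0$, so the quotient $\lambda$ belongs to $\mathbb{C}_{x_0}\{x\}[y]$. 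This simultaneously establishes existence of the cofactor and its polynomiality in $y$.

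To reach the closed form, I would expand the factor $1/(y-Y_{x_0})$ as a geometric series in the ring $\mathbb{C}_{x_0,\infty}\{x\}[\{y\}]$ of Puiseux series in $y$ near $y=\infty$, treating $Y_{x_0}$ as a coefficient,
\[
\frac{1}{y-Y_{x_0}(x)}=\frac{1}{y}\sum_{m=0}^{+\infty}\left(\frac{Y_{x_0}(x)}{y}\right)^m=\sum_{m=0}^{+\infty}\frac{Y_{x_0}^m(x)}{y^{m+1}},
\]
so that $\lambda=G\cdot(y-Y_{x_0})^{-1}=\sum_{m=0}^{+\infty}G(x,y)Y_{x_0}^m(x)\,y^{-m-1}$ in this ring. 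Because $\lambda$ has already been shown to lie in $\mathbb{C}_{x_0}\{x\}[y]$, applying the polynomial-part projection $\{\cdot\}_{+,\,y}$ to both sides returns $\lambda$ on the left and produces exactly formula \eqref{Gen_coff1-2} on the right. The argument runs parallel to the derivation of the cofactor expression \eqref{General_Fl_cof} in Theorem \ref{T:Darboux_pols}.

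The step I expect to require the most care is justifying that the projection $\{\cdot\}_{+,\,y}$ recovers $\lambda$ intact, that is, that the genuinely non-polynomial tail of the series carries no information because its negative powers of $y$ cancel identically. I would check this by collecting, for each fixed integer $j<0$, the coefficient of $y^{j}$ in $\sum_{m}G\,Y_{x_0}^m y^{-m-1}$: writing $G(x,y)=\sum_{k}g_k(x)y^{k}$ with $g_k\in\mathbb{C}_{x_0}\{x\}$, this coefficient equals $Y_{x_0}^{-j-1}\sum_{k}g_kY_{x_0}^{k}=Y_{x_0}^{-j-1}G(x,Y_{x_0})$, which vanishes by the invariance condition $G(x,Y_{x_0})=0$. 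Hence the negative-power part of the series is zero precisely because $Y_{x_0}$ solves \eqref{ODE_y}, the projection discards nothing essential, and the claimed formula follows.
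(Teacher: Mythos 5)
Your proof is correct, and it is worth noting that the paper itself does not actually prove this statement: Theorem \ref{T:coff_local2} is quoted from the reference on the local theory of invariants, so there is no internal proof to compare against. Your argument is self-contained and sound, and it follows the same pattern as the paper's global formula \eqref{General_Fl_cof}: the invariance hypothesis, via Theorem \ref{T:inv_curve_prim}, is exactly the statement $G(x,Y_{x_0}(x))=Q(x,Y_{x_0})-P(x,Y_{x_0})(Y_{x_0})_x=0$; the factor theorem in $\mathbb{C}_{x_0}\{x\}[y]$ then gives both the existence of the cofactor and its polynomiality in $y$; the geometric expansion of $(y-Y_{x_0})^{-1}$ in $\mathbb{C}_{x_0,\infty}\{x\}[\{y\}]$ turns the quotient into the stated series; and your final check that every coefficient of a negative power of $y$ equals $Y_{x_0}^{-j-1}G(x,Y_{x_0})=0$ is precisely the point that makes the projection $\{\cdot\}_{+,\,y}$ lossless, so formula \eqref{Gen_coff1-2} follows. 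This supplies, in effect, the proof the paper omits.
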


Similarly to the case of local invariant curves, we can introduce a concept
of local exponential invariants.


\begin{definition}
 A function $E(x,y)=\exp[g(x,y)/f(x,y)]$ with
relatively prime elements $g(x,y)$ and $f(x,y)$ of the ring
$\mathbb{C}_{x_0}\{x\}[y] $ is called \textit{a local exponential invariant}
of a differential system \eqref{DS} if
$E(x,y)$ satisfies the partial differential equation
$\mathcal{X}E(x,y)=\varrho(x,y)E(x,y)$, where $\varrho(x,y)$ is a Puiseux
polynomial.
\end{definition}

Again the Puiseux polynomial $\varrho(x,y)$ will be referred to as the
cofactor of the local exponential invariant $E(x,y)$. Note that exponential invariants belong to a differential extension of the field of fractions of the ring $\mathbb{C}_{x_0}\{x\}[y] $.

\begin{lemma}\label{L:exp_factor_inv_curve}
Let $E(x,y)=\exp[g(x,y)/f(x,y)]$, $f(x,y)\in\mathbb{C}_{x_0}\{x\}[y] \setminus \mathbb{C}_{x_0}\{x\}$  be a local
 exponential invariant  of a differential system \eqref{DS}, then  $f(x,y)=0$  is
 a local
invariant curve of the system  in question.
\end{lemma}

We shall say that local exponential invariants
\begin{equation}
\label{Gen_coff11}
\begin{gathered}
E(x,y)=\exp\left[g_l(x)y^l\right],\quad g_l(x)\in \mathbb{C}_{x_0}\{x\},\quad l\in\mathbb{N}\cup\{0\};\hfill \\
E(x,y)=\exp\left[ \frac{u(x)}{ \{y-y_{x_0}(x)\}^n}\right],\quad  y_{x_0}(x), \,\,u(x)\in \mathbb{C}_{x_0}\{x\},\quad n\in\mathbb{N}
\end{gathered}
\end{equation}
 are \textit{elementary
local exponential invariants}.  Since $E^{\alpha}(x,y)$ with
$\alpha\in\mathbb{C}$ is a local exponential invariant whenever so does
$E(x,y)$, the highest-order coefficients of the series $g_l(x)$ and $u(x)$
can be fixed in advance. Any  local exponential invariant $E(x,y)$ is as a product of elementary local exponential invariants and the
cofactor of the local invariant  $E(x,y)$ is the sum of the cofactors of all
the factors.

We conclude  that if system \eqref{DS} has a global invariant (algebraic or
exponential), then for any $x_0\in\mathbb{\overline{C}}$ this invariant
must be a product of local elementary  invariants in such a way that the
sum of their cofactors is a true, not Puiseux, polynomial. This observation
is used in subsequent sections.

\section{Invariant algebraic curves of  polynomial Li\'{e}nard differential systems}\label{S:Lienard_IAC}

With the aim to study the integrability properties of polynomial Li\'{e}nard
differential systems we need information on the existence  of
their invariant algebraic curves. The general structure of bivariate
polynomials giving irreducible invariant algebraic curves is derived in
articles \cite{Demina12, Demina18}.

\begin{theorem}\label{T1:Lienard_gen}
Let $F(x,y)=0$, $F(x,y)\in \mathbb{C}[x,y]\setminus\mathbb{C}$ be an
irreducible invariant algebraic curve of a Li\'{e}nard differential system
 \eqref{Lienard_gen} from family ($A$). Then the
polynomial $F(x,y)$ and the cofactor $\lambda(x,y)$ of the invariant algebraic curve
$F(x,y)=0$ take the form
\begin{equation}
\begin{gathered}
 \label{Lienard1_F}
F(x,y)=\left\{\prod_{j=1}^{N-k}\left\{y-y^{(1)}_{j,\infty}(x)\right\}\left\{y-y^{(2)}_{N,\infty}(x)\right\}^{k}\right\}_{+},
\end{gathered}
\end{equation}
\begin{equation}
\begin{gathered}
 \label{Lienard1_cof}
\lambda(x,y)=-Nf-(N-k)q_x-kp_x,
\end{gathered}
\end{equation}
where $k=0$ or $k=1$, $N\in\mathbb{N}$, and $y^{(1)}_{1,\infty}(x)$,
$\ldots$, $y^{(1)}_{N-1,\infty}(x)$, $y^{(2)}_{N,\infty}(x)$ are the series
\begin{equation}
\begin{gathered}
 \label{Lienard1_F_series}
(I):\,y^{(1)}_{j,\infty}(x)=q(x)+\sum_{l=0}^{+\infty}b_{m+1+l}^{(j)}x^{-l},\quad j=1,\ldots, N-k;\\
(II):\,y^{(2)}_{N,\infty}(x)=p(x)+\sum_{l=0}^{+\infty}b_{n-m+l}^{(N)}x^{-l}.\hfill
\end{gathered}
\end{equation}
 The coefficients of the series of type (II) and of the polynomials
\begin{equation}
\begin{gathered}
 \label{Lienard1_F_series_q}
q(x)=-\frac{f_0}{m+1}x^{m+1}+\sum_{l=1}^{m} q_{m+1-l} x^l\in\mathbb{C}[x],\\
 p(x)=-\frac{g_0}{f_0}x^{n-m}+\sum_{l=1}^{n-m-1} p_{n-m-l} x^l\in\mathbb{C}[x]
\end{gathered}
\end{equation}
 are uniquely determined.
The coefficients $b_{m+1}^{(j)}$, $j=1,\ldots, N-k$ are pairwise distinct.
All other coefficients
 $b_{m+1+l}^{(j)}$, $l\in\mathbb{N}$ are expressible via $b_{m+1}^{(j)}$, where $j=1,\ldots, N-k$.
 The corresponding product in  \eqref{Lienard1_F} is unit whenever $k=1$ and $N=1$.
\end{theorem}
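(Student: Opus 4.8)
The plan is to read off the structure directly from Theorem \ref{T:Darboux_pols} by classifying all Puiseux series at infinity that solve the associated first-order equation \eqref{ODE_y}, which for a Li\'enard system is the Abel equation $y y_x + f(x) y + g(x) = 0$ (here $P = y$ and $Q = -f y - g$). By Theorem \ref{T:Darboux_pols}, every irreducible $F \in \mathbb{C}[x,y]\setminus\mathbb{C}[x]$ producing an invariant algebraic curve has the form \eqref{General_Fl} with pairwise distinct Puiseux series $y_{j,\infty}(x)$ near $x = \infty$ solving \eqref{ODE_y} (cf.\ Lemma \ref{L_Puiseux series}). Since $F$ is irreducible and $F\notin\mathbb{C}[x]$, it has no nonconstant factor depending on $x$ alone, so its content with respect to $y$ is constant; equivalently, $P = y$ vanishes on no vertical line, so there are no invariant lines $x=\mathrm{const}$. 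Hence the polynomial $\mu(x)$ in \eqref{General_Fl} is a nonzero constant, $L = 0$, and the second series in \eqref{General_Fl_cof} is absent. The whole problem thus reduces to classifying the admissible series at infinity.

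The heart of the proof is a dominant-balance (Newton-polygon) analysis of \eqref{ODE_y}. Substituting $y \sim b_0 x^{\alpha}$, the three terms $y y_x$, $f y$, $g$ carry the $x$-exponents $2\alpha - 1$, $m + \alpha$, $n$, so there are exactly three possible two-term balances. The defining inequalities of family ($A$), namely $m < n < 2m+1$, are precisely what exclude the balance $y y_x \sim -g$ (which would require $n > 2m+1$) and retain two branches: the balance $y y_x \sim -f y$, giving $\alpha = m+1$ and $b_0 = -f_0/(m+1)$, and the balance $f y \sim -g$, giving $\alpha = n - m$ and $b_0 = -g_0/f_0$. Both exponents are integers and coincide only at $n = 2m+1$, which is excluded, so the two branches stay separate and carry only integer powers. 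I would then substitute each leading term into \eqref{ODE_y} and run the resulting recursion. Linearizing about a branch $\bar y$ gives the homogeneous equation $\bar y \epsilon_x + (\bar y_x + f)\epsilon = 0$. For the first branch the choice $b_0 = -f_0/(m+1)$ makes $\bar y_x + f = O(x^{m-1})$ while $\bar y = O(x^{m+1})$, so $\epsilon$ tends to a nonzero constant and a single resonance occurs, located at the power $x^0$; consequently the coefficients of the polynomial part $q(x)$ in \eqref{Lienard1_F_series_q} are uniquely determined, the constant term $b_{m+1}^{(j)}$ is free, and all later coefficients $b_{m+1+l}^{(j)}$, $l \geq 1$, are functions of it. This is the one-parameter family of type (I) series. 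For the second branch $\bar y_x + f \sim f_0 x^m$ dominates (because $n - m - 1 < m$ in family ($A$)), so $\epsilon$ grows like the exponential of a positive power of $x$ and is not a Puiseux series: there is no resonance at a nonnegative power, every coefficient is forced, and one obtains a single, uniquely determined type (II) series with polynomial part $p(x)$.

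It remains to assemble $F$ and its cofactor. An irreducible curve is built from pairwise distinct series, so it contains some finite number $N - k$ of distinct type (I) series — whence the parameters $b_{m+1}^{(j)}$ are pairwise distinct — together with at most one copy of the unique type (II) series, forcing $k \in \{0,1\}$ and the product \eqref{Lienard1_F}; the type (I) product is empty precisely when $N=1$ and $k=1$. Normalizing $F$ to be monic of degree $N$ in $y$, one checks from $\mathcal{X}F = \lambda F$ that $\deg_y \lambda = 0$, and comparison of the coefficient of $y^{N}$ yields $\lambda = a_{N-1}'(x) - N f$, where $a_{N-1}(x)$ is the coefficient of $y^{N-1}$ in $F$. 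Since $a_{N-1}$ equals the polynomial part of $-\sum_j y_{j,\infty}(x)$, namely $-[(N-k)q(x) + k p(x)]$ up to an additive constant (the Puiseux tails contribute only their constant terms under the projection in \eqref{General_Fl}), differentiation kills the constant and reproduces \eqref{Lienard1_cof}; the same cofactor follows by summing the elementary cofactors of Theorem \ref{T:coff_local2} over the factors.

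The main obstacle is the recursion bookkeeping of the middle step: one must prove, not merely observe, that type (I) admits exactly one resonance and that it sits at the constant term, while type (II) admits none. This hinges entirely on tracking the strict inequalities $m < n < 2m+1$ through the exponent comparisons and through the projection operator $\{\,\cdot\,\}_+$. The degenerate subcases, such as $n = m+1$ (where $p(x)$ is linear) or $n = 2m$, should be checked separately to confirm the stated index ranges in \eqref{Lienard1_F_series}--\eqref{Lienard1_F_series_q}.
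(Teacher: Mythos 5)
Your overall route is the intended one: the paper does not prove Theorem \ref{T1:Lienard_gen} in-text (it defers to \cite{Demina12}), but the machinery of Section \ref{S:Local} — Theorem \ref{T:Darboux_pols} combined with a Newton-polygon and resonance analysis of equation \eqref{Lienard_y_x} — is exactly what your proposal runs, and the central computations are correct. In family ($A$) only the balances $yy_x\sim -fy$ and $fy\sim -g$ survive; for the first branch the coefficient of $c_s$ in the recursion is $f_0(s-m-1)/(m+1)$, so the unique resonance sits at the constant term, producing the one-parameter type (I) family with uniquely determined polynomial part $q(x)$ and pairwise distinct free constants $b^{(j)}_{m+1}$ for the distinct roots of the squarefree polynomial $F$; for the second branch the quadratic terms enter the recursion only with the positive shift $2m+1-n$, so every new coefficient is multiplied by $f_0\neq0$, there is no resonance, the type (II) series is unique, and hence $k\in\{0,1\}$. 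The cofactor identity $\lambda=a_{N-1,\,x}-Nf$ together with $a_{N-1}=-(N-k)q-kp+\mathrm{const}$ gives \eqref{Lienard1_cof}.

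The genuine gap is the step where you conclude that $\mu(x)$ in \eqref{General_Fl} is a nonzero constant. You derive this from unit content and from the absence of invariant vertical lines, but neither implies it: content and leading coefficient with respect to $y$ are different objects ($xy+1$ is irreducible with unit content and leading coefficient $x$), and absence of invariant lines does not force monicity even for vector fields with $P=y$. Concretely, for the system $x_t=y$, $y_t=y^3$ one checks $\mathcal{X}(xy+1)=y^2(xy+1)$, so $xy+1$ is an irreducible Darboux polynomial with nonconstant leading coefficient, while the line $x=0$ is not invariant; your argument, which nowhere uses that $Q=-fy-g$ is linear in $y$, would ``prove'' monicity there as well, so it cannot be right. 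The correct repair does use that structure: write $F=\mu(x)y^N+a_{N-1}(x)y^{N-1}+\cdots$; since $\deg_y(\mathcal{X}F)\le N+1$, the cofactor satisfies $\deg_y\lambda\le1$, say $\lambda=\lambda_1(x)y+\lambda_0(x)$, and comparing the coefficients of $y^{N+1}$ in $\mathcal{X}F=\lambda F$ yields $\mu_x=\lambda_1\mu$, which for polynomials forces $\lambda_1=0$ and then $\mu_x=0$, i.e. $\mu$ constant. This two-line computation simultaneously delivers the $y$-independence of the cofactor that you assert separately; with it in place, the rest of your proof goes through unchanged.
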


\begin{theorem}\label{T:L_two_curves}
A Li\'{e}nard differential system \eqref{Lienard_gen} from family ($A$)  with  fixed
coefficients of the polynomials $f(x)$ and $g(x)$ has at most two distinct
irreducible invariant algebraic curves simultaneously. If the two distinct
irreducible invariant algebraic curves exist, then the first has $k=1$ in
representation \eqref{Lienard1_F} and the second is given by a first-degree polynomial
with respect to $y$ and takes the form $y-q(x)-z_0=0$.
\end{theorem}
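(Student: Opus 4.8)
The plan is to combine the structural description of irreducible invariant algebraic curves in Theorem \ref{T1:Lienard_gen} with a coprimality argument at infinity and a termination analysis of the defining Puiseux series. By Theorem \ref{T1:Lienard_gen}, every irreducible invariant algebraic curve of a system from family ($A$) is the projection \eqref{Lienard1_F} of a product of factors $y-y_{j,\infty}(x)$, where each factor is either a type-(I) series \eqref{Lienard1_F_series} --- a one-parameter family sharing the common polynomial head $q(x)$ from \eqref{Lienard1_F_series_q} and distinguished by the free constant $b^{(j)}_{m+1}$ --- or the unique type-(II) series, the latter entering to a power $k\in\{0,1\}$ only. Thus each curve is encoded by a finite set of distinct constants $\{b^{(j)}_{m+1}\}$ together with the flag $k$, and its cofactor is \eqref{Lienard1_cof}. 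First I would reduce the statement to two claims: (i) at most one curve carries the factor with $k=1$; and (ii) any curve with $k=0$ that coexists with a second curve must be a single linear-in-$y$ factor $y-q(x)-z_0=0$.

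For (i) I would use that $\mathbb{C}_{\infty}\{x\}$ is an algebraically closed field and that gcd's are insensitive to field extension. Two distinct irreducible curves are given by coprime polynomials in $\mathbb{C}[x,y]$, hence coprime in $\mathbb{C}(x)[y]$, so they cannot share a common root in the Puiseux field $\mathbb{C}_{\infty}\{x\}$; were they to share one, their gcd over $\mathbb{C}(x)[y]$ would be non-trivial. Since the type-(II) series is unique by Theorem \ref{T1:Lienard_gen}, it can belong to the factorization of at most one curve, which proves (i). The same disjointness principle shows that two distinct curves can share no common type-(I) series either, a fact I would exploit next.

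For (ii), suppose two distinct curves $F_1=0$ and $F_2=0$ coexist; by (i) at most one of them, say $F_1$, has $k=1$, so $F_2$ is assembled solely from type-(I) series. Since each such series has its tail $\sum_{l\ge1}b^{(j)}_{m+1+l}x^{-l}$ determined by its constant $b^{(j)}_{m+1}$ through the recursion obtained by substituting \eqref{Lienard1_F_series} into the Abel equation \eqref{Lienard_y_x}, I would impose that the projection \eqref{Lienard1_F} be genuinely invariant, i.e. that the non-polynomial-in-$x$ remainder produced by applying $\mathcal{X}-\lambda$ cancel order by order in $x^{-1}$. The goal is to show that these cancellation conditions, together with the branch-disjointness from (i) and the fact that every type-(I) series shares the same head $q(x)$, force each series entering $F_2$ to terminate, so that $F_2$ is a product of linear factors $y-q(x)-z_0=0$ and, being irreducible, equals one such factor. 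Finally, two distinct polynomial solutions $y=q(x)+z_0$ and $y=q(x)+z_0'$ of \eqref{Lienard_y_x} cannot coexist: subtracting the two copies of \eqref{Lienard_y_x} and dividing by the nonzero constant $z_0-z_0'$ gives $y_x=-f(x)$, whence $g(x)\equiv0$, contradicting $f_0g_0\neq0$. This yields uniqueness of $F_2$, rules out two curves with $k=0$ as well as three or more simultaneously, and produces exactly the asserted dichotomy.

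The main obstacle is the termination analysis in (ii): showing that a genuinely non-polynomial type-(I) series can enter an invariant algebraic curve only in the company of the type-(II) series, never in a purely $k=0$ configuration coexisting with another curve. This is where the restriction $\deg f<\deg g<2\deg f+1$ defining family ($A$) is essential, as it guarantees exactly the two distinct dominant balances in \eqref{Lienard_y_x} and places the resonance of the type-(I) recursion precisely at the constant term $b_{m+1}$; I expect the cleanest route to be a careful bookkeeping of the lowest-order surviving term of $\{\,\cdot\,\}_{-}$, possibly assisted by Theorem \ref{T:coff_local2} and Lemma \ref{L:exp_factor_inv_curve} applied at $x=\infty$.
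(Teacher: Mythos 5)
Your reduction of the statement to claims (i) and (ii) is logically sound, and two of your three ingredients are correct and complete: the coprimality argument (two distinct irreducible curves are coprime in $\mathbb{C}(x)[y]$, hence share no roots in the field $\mathbb{C}_{\infty}\{x\}$, so the unique type-(II) series can enter the factorization of at most one curve), and the final observation that two distinct polynomial solutions $y=q(x)+z_0$ and $y=q(x)+z_0'$ of \eqref{Lienard_y_x} would force $q_x+f\equiv0$ and hence $g\equiv0$, contradicting $f_0g_0\neq0$. Note that the paper itself does not prove this theorem in-text (it defers to \cite{Demina12}), so there is no internal proof to compare against step by step; your attempt must therefore stand on its own.

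It does not, because claim (ii) --- the actual heart of the theorem --- is never established. You state it as a ``goal,'' concede it is ``the main obstacle,'' and offer only the expectation that ``careful bookkeeping of the lowest-order surviving term of $\{\,\cdot\,\}_{-}$'' will force every type-(I) series entering a coexisting $k=0$ curve to terminate. Nothing in your text excludes, for instance, a $k=1$ curve coexisting with an irreducible $k=0$ curve of degree $N\geq2$ in $y$ whose tails cancel among themselves in the condition $\{\sum_{j}y^{(1)}_{j,\infty}\}_{-}=0$ of Theorem \ref{T:Darboux_pols_computation_Lienard}, nor two such higher-degree $k=0$ curves. This is not a routine verification: each tail coefficient $b^{(j)}_{m+1+l}$ is a polynomial function $\phi_l(c_j)$ of the free constant $c_j=b^{(j)}_{m+1}$, so the invariance conditions read $\sum_{j=1}^{N}\phi_l(c_j)=0$ for all $l\geq1$; they constrain only symmetric functions of the $c_j$, are affine in the $c_j$ at low orders $l$ (so no contradiction can appear there), and become nonlinear only when cross terms $c\cdot w_x$ and $ww_x$ enter the recursion at higher orders. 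Extracting from this infinite system that coexistence forces $N=1$ together with termination of the single series is precisely the nontrivial content of Theorem \ref{T:L_two_curves}, and it is the part your proposal leaves undone. As it stands, the proposal is a correct skeleton --- correct bookkeeping of which series can be shared, plus the easy uniqueness of linear curves --- with the central analytic argument missing.
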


\textit{Corollary.} Li\'{e}nard differential systems \eqref{Lienard_gen} from
family ($A$) are not integrable with a rational first integral.

Theorems \ref{T1:Lienard_gen} and \ref{T:L_two_curves} are proved in article
\cite{Demina12}.

The structure of polynomials producing invariant algebraic curves for
Li\'{e}nard differential  systems from family ($B$) is in strong correlation
with the properties  of the following quadratic equation
 \begin{equation}\label{eq:DP2_5_2}
p^2-\varrho p+(m+1)\varrho=0,
\end{equation}
 where we have introduced  notation
 \begin{equation}\label{eq:DP2_5_3}
\varrho=4(m+1)-\frac{f_0^2}{g_0}.
\end{equation}
The set of all positive rational numbers will be denoted as $\mathbb{Q}^+$.
Let $p_1$ and $p_2$ be the roots of equation \eqref{eq:DP2_5_2}.

\begin{theorem}\label{T:Lienard_degenerate}
Suppose $F(x,y)=0$, $F(x,y)\in \mathbb{C}[x,y]\setminus\mathbb{C}$ is an
irreducible  invariant algebraic curve of a Li\'{e}nard differential system
\eqref{Lienard_gen} from family ($B$) and equation \eqref{eq:DP2_5_2} has no
solutions in $\mathbb{Q}^+$. One of the following statements holds.

\begin{enumerate}

\item   If $p_1p_2\neq0$, then the polynomial
    $F(x,y)$ is of degree  at most two with respect to $y$ and
\begin{equation}
\begin{gathered}
 \label{F_L2_5_1}
F(x,y)=\left\{\left\{y-y^{(1)}_{\infty}(x)\right\}^{s_1}\left\{y-y^{(2)}_{\infty}(x)\right\}^{s_2}\right\}_{+},\hfill\\
\lambda(x,y)=-(s_1+s_2)f(x)-\left\{s_1\left(y^{(1)}_{\infty}\right)_x+s_2\left(y^{(2)}_{\infty}\right)_x\right\}_{+},\hfill\\
y^{(k)}_{+\infty}(x)=\sum_{l=0}^{\infty}b_l^{(k)}x^{m+1-l},\quad b_0^{(k)}=\frac{f_0}{p_k-2(m+1)},\quad k=1, 2,
\end{gathered}
\end{equation}
where $s_1$ and $s_2$ are either $0$ or $1$ independently, $s_1+s_2>0$.
The Puiseux series $y^{(k)}_{\infty}(x)$, $k=1$, $2$ are  Laurent series
and possess uniquely determined coefficients.

\item   If  $p_1p_2=0$, then $p_1=p_2=0$. The polynomial $F(x,y)$ and its cofactor
    $\lambda(x,y)$ take the form
\begin{equation}
\begin{gathered}
 \label{F_L2_5_4}
F(x,y)=y+\frac{f_0}{2(m+1)}x^{m+1}-\sum_{l=1}^{m+1}b_{l}x^{m+1-l},\hfill\\
\lambda(x,y)=-f(x)+\frac{f_0}{2}x^{m}-\sum_{l=1}^{m}(m+1-l)b_{l}x^{m-l},\hfill\\
\end{gathered}
\end{equation}
where the coefficients $b_1$, $\ldots $, $b_{m+1}$ are uniquely
determined. In addition, the following relation $4(m+1)g_0-f_0^2=0$ is
valid.

\end{enumerate}

\end{theorem}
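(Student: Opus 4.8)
The plan is to combine the structural description of Theorem \ref{T:Darboux_pols} with a local analysis of the branches at infinity, the decisive point being a resonance computation that reproduces the quadratic \eqref{eq:DP2_5_2}. First I would record that for a Li\'{e}nard system $P=y$, $Q=-f(x)y-g(x)$, so by Lemma \ref{L_Puiseux series} every branch at infinity of an irreducible invariant algebraic curve is a Puiseux series $y_{\infty}(x)$ solving $yy_x+f(x)y+g(x)=0$, i.e.\ equation \eqref{ODE_y}. Since family ($B$) means $\deg g=2\deg f+1$, a dominant-balance argument fixes the admissible leading exponent: writing $y\sim b_0x^{m+1}$ sends all three terms to degree $2m+1$ and yields $(m+1)b_0^2+f_0b_0+g_0=0$, whereas any exponent $a\neq m+1$ leaves either $g_0x^{2m+1}$ or $(m+1)b_0^2x^{2a-1}$ unbalanced. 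Hence every branch has leading term $b_0x^{m+1}$ with $b_0$ a root of this leading quadratic.

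Next I would substitute $y=\sum_{l\ge0}b_lx^{m+1-l}$ and extract the recursion. The coefficient multiplying the unknown $b_l$ in the $l$-th equation is $b_0(2(m+1)-l)+f_0$, which vanishes exactly at $l=p:=2(m+1)+f_0/b_0$. Eliminating $b_0$ between $p=2(m+1)+f_0/b_0$ and the leading quadratic (equivalently, setting $u=p-2(m+1)$ in $g_0u^2+f_0^2u+(m+1)f_0^2=0$ and dividing by $g_0$) reproduces precisely \eqref{eq:DP2_5_2} with $\varrho$ as in \eqref{eq:DP2_5_3}. Thus $p_1,p_2$ are the two resonance indices, one per leading root, and $b_0^{(k)}=f_0/(p_k-2(m+1))$. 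A genuine resonance, which would introduce a free parameter, a fractional branch, or branching, can arise only at a positive integer index; since by hypothesis $p_1,p_2\notin\mathbb{Q}^+\supseteq\mathbb{N}$, no resonance occurs, the forcing stays on integer powers, and each leading root produces a single uniquely determined Laurent series with $n_0=1$.

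For statement (i), I would use Vieta: $p_1p_2=(m+1)\varrho$, so $p_1p_2\neq0$ is equivalent to $\varrho\neq0$, equivalently to the leading quadratic having distinct roots (its discriminant is $f_0^2-4(m+1)g_0=-g_0\varrho$). Hence there are exactly two distinct branches $y^{(1)}_{\infty},y^{(2)}_{\infty}$, Laurent series with the stated leading coefficients, and Theorem \ref{T:Darboux_pols} bounds $\deg_yF\le 2$ and writes $F$ as the polynomial projection of $\prod_k\{y-y^{(k)}_{\infty}\}^{s_k}$ with $s_k\in\{0,1\}$, $s_1+s_2>0$; the cofactor formula \eqref{General_Fl_cof} then collapses to the $\lambda$ in \eqref{F_L2_5_1}. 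To confirm that the leading polynomial $\mu(x)$ of Theorem \ref{T:Darboux_pols} is constant, so $F$ is monic in $y$, I would note that $y_x=-f-g/y\to-f(x_0)$ as $y\to\infty$ at any finite $x_0$, so no solution escapes to $y=\infty$ at finite $x$; equivalently $F=0$ has no vertical asymptote and the leading $y$-coefficient has no zeros.

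The main difficulty is statement (ii), the degenerate case $p_1p_2=0$, which by the above is equivalent to $\varrho=0$, i.e.\ $4(m+1)g_0-f_0^2=0$, whereupon \eqref{eq:DP2_5_2} degenerates to $p^2=0$ and so $p_1=p_2=0$, while the leading quadratic acquires the double root $b_0=-f_0/(2(m+1))$. Here the two branches coalesce, and the delicate point is to exclude a fractional-power branch born from the double root. Substituting $y=b_0x^{m+1}+v$ and using $f_0=-2(m+1)b_0$ collapses the resonance factor to $-b_0l$, which is nonzero for every $l\ge1$, while the inhomogeneous forcing from the lower coefficients of $f$ and $g$ appears only at integer powers; hence a single integer-power Laurent series with uniquely determined coefficients is produced and no Puiseux branch occurs. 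By Theorem \ref{T:Darboux_pols} this forces $\deg_yF=1$, and invariance of $F=y-\{y_{\infty}\}_{+}$ forces the branch $y_{\infty}$ itself to be a polynomial; a leading-degree count in $yy_x+fy+g=0$ shows this polynomial has degree exactly $m+1$. This yields the form \eqref{F_L2_5_4} with coefficients $b_1,\ldots,b_{m+1}$ fixed by the recursion, and the cofactor follows at once from $\lambda=-(f+\phi')$ where $\phi$ denotes the polynomial solution.
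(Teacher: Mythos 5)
Your proposal is correct: the dominant-balance computation forcing leading exponent $m+1$, the identification of \eqref{eq:DP2_5_2} as the Fuchs-index (resonance) equation via $p=2(m+1)+f_0/b_0$ with $b_0^{(k)}=f_0/(p_k-2(m+1))$, the Vieta argument $p_1p_2=(m+1)\varrho$ linking case (i)/(ii) to the discriminant $f_0^2-4(m+1)g_0=-g_0\varrho$, the collapse of the resonance factor to $-b_0l$ in the degenerate case, the constancy of $\mu(x)$ from the absence of vertical asymptotes, and the appeal to Theorem \ref{T:Darboux_pols} for the form of $F$ and its cofactor all check out. Note that the paper does not reproduce a proof of this theorem (it cites \cite{Demina18}), but your argument is exactly the method of Puiseux series and local resonance analysis that the paper states underlies the results of that section, so it is essentially the same approach.
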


This theorem is proved in  \cite{Demina18}. Note that
the Li\'{e}nard differential systems such that the related equation
\eqref{eq:DP2_5_2} has a positive rational solution  are also studied in~\cite{Demina18}. We do not reproduce the related results here.

The structure of polynomials giving  invariant algebraic curves and their
cofactors for Li\'{e}nard differential systems  satisfying the condition
$\deg g>2\deg f+1$ is derived in~\cite{Demina18}. The following
theorem is valid.

\begin{theorem}\label{T1:Lienard_2m+1}
Let $F(x,y)=0$, $F(x,y)\in \mathbb{C}[x,y]\setminus\mathbb{C}$ be an
irreducible  invariant algebraic curve of a Li\'{e}nard differential system~\eqref{Lienard_gen} from family ($C$). Then the polynomial $F(x,y)$ and the
cofactor $\lambda(x,y)$ of the invariant algebraic curve $F(x,y)=0$ take the
form
\begin{equation}
\begin{gathered}
 \label{Lienard2_F}
F(x,y)=\left\{\prod_{j=1}^{N_1}\left\{y-y^{(1)}_{j,\infty}(x)\right\}\prod_{j=1}^{N_2}\left\{y-y^{(2)}_{j,\infty}(x)\right\}\right\}_{+},
\end{gathered}
\end{equation}
\begin{equation}
\begin{gathered}
 \label{Lienard2_cof}
\lambda(x,y)=-(N_1+N_2)f-\left\{N_1h^{(1)}_x+N_2h^{(2)}_x\right\}_+
\end{gathered}
\end{equation}
where the Puiseux series $y^{(1,2)}_{j,\infty}(x)$ are given by the relations
\begin{equation}
\begin{gathered}
 \label{Lienard2_F_series}
y^{(1,2)}_{j,\infty}(x)=h^{(1,2)}(x)+\sum_{k=2(n+1)}^{+\infty}b^{(1,2)}_{k,\,j}x^{\frac{n+1}{2}-\frac{k}{2}},\quad
h^{(1,2)}(x)=\sum_{k=0}^{2n+1}b^{(1,2)}_{k}x^{\frac{n+1}{2}-\frac{k}{2}},
\end{gathered}
\end{equation}
 $N_1$, $N_2\in\mathbb{N}\cup\{0\}$, and $N_1+N_2\geq1$. The coefficients
$b_{2(n+1),\,j}^{(1,2)}$ with the same upper index are pairwise distinct and
all the coefficients
 $b_{l,\,j}^{(1,2)}$ with $l>2(n+1)$ are expressible via $b_{2(n+1),\,j}^{(1,2)}$.
 If $n$ is an odd number, then the corresponding Puiseux series are Laurent series and
 $b_{2l-1}^{(1,2)}=0$, $b_{2l-1,\,j}^{(1,2)}=0$, when  $l\in\mathbb{N}$.
 In addition, $N_k=1$ whenever $n$ is odd and $N_l=0$, where $k$, $l=1$, $2$ and $k\neq l$.
  If $n$ is an even number, then~$N_1=N_2$.

\end{theorem}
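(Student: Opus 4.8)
The plan is to apply the general structure theorem for irreducible invariant algebraic curves, Theorem \ref{T:Darboux_pols}, to the Li\'enard field, for which the associated equation \eqref{ODE_y} becomes the Abel equation \eqref{Lienard_y_x}, namely $yy_x+f(x)y+g(x)=0$. By Lemma \ref{L_Puiseux series} every $y$-root of an irreducible invariant algebraic curve expands near $x=\infty$ into a Puiseux series solving this equation. Hence the whole problem reduces to classifying the Puiseux series solutions of \eqref{Lienard_y_x} at infinity and then deciding which finite multisets of them assemble, via the projection in \eqref{Lienard2_F}, into a single-valued polynomial $F(x,y)$.

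First I would carry out the dominant-balance analysis. Writing $y\sim b_0x^{\rho}$, the three terms $yy_x$, $fy$, $g$ have degrees $2\rho-1$, $\rho+m$, $n$, and the hypothesis $n>2m+1$ defining family ($C$) forces the balance $yy_x+g\approx 0$, so $\rho=(n+1)/2$ and $\tfrac{n+1}{2}b_0^2+g_0=0$, i.e. $b_0=\pm(-2g_0/(n+1))^{1/2}$. The two signs produce the two leading behaviours $h^{(1)}(x)$ and $h^{(2)}(x)$ in \eqref{Lienard2_F_series}. Substituting $y=\sum_{k\ge 0}b_kx^{(n+1-k)/2}$ and collecting the coefficient of $x^{n-s/2}$ yields a recursion whose linear part in $b_s$ equals $b_0\,(n+1-s/2)\,b_s$. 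This factor is nonzero for $1\le s\le 2n+1$, so those coefficients are uniquely determined and build the part $h^{(1,2)}$; it vanishes exactly at $s=2(n+1)$, the unique resonance. Since \eqref{Lienard_y_x} is first order its general solution carries a single integration constant, so the free parameter of the family must surface at this resonance: the solvability condition at $s=2(n+1)$ holds identically, $b^{(1,2)}_{2(n+1),j}$ is free, and every higher coefficient is determined by it. This reproduces the one-parameter families \eqref{Lienard2_F_series}, with the $b^{(1,2)}_{2(n+1),j}$ of a fixed upper index taken pairwise distinct so that the series are pairwise distinct.

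Next I would settle the cofactor and the parity statements. For a single factor $y-y_{j,\infty}$, using that $y_{j,\infty}$ satisfies \eqref{Lienard_y_x} one checks the identity $(Q-P(y_{j,\infty})_x)/(y-y_{j,\infty})=-f-(y_{j,\infty})_x$; summing over all factors and projecting gives precisely the cofactor \eqref{Lienard2_cof}, in which only the determined parts $h^{(1,2)}$ survive because the tails contribute negative powers. When $n$ is odd the leading exponent $(n+1)/2$ is an integer, and a straightforward induction on $s$ shows that at every half-integer order the forcing terms (from $g$, from $fy$, and the cross terms of $yy_x$) all vanish; hence every odd-index coefficient is zero, the $y^{(1,2)}_{j,\infty}$ are Laurent series, and $b^{(1,2)}_{2l-1}=b^{(1,2)}_{2l-1,j}=0$. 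When $n$ is even the exponent is a genuine half-integer and the series are ramified of index two at infinity.

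The \emph{main obstacle} is the count of admissible factors, that is the constraints on $N_1$ and $N_2$, which I would extract from the requirement that $F=\{\prod_j(y-y_{j,\infty})\}_+$ be an honest polynomial in $x$. For even $n$ the monodromy $x^{1/2}\mapsto -x^{1/2}$ about infinity sends $h^{(1)}$ to $h^{(2)}$ (the leading term changes sign, $b^{(1)}_0\mapsto -b^{(1)}_0=b^{(2)}_0$) and therefore interchanges the two types; single-valuedness of $F$ forces this involution to preserve the multiset of roots, which is possible only when $N_1=N_2$. For odd $n$ the series are single-valued, so monodromy imposes no pairing, and the binding condition becomes the stronger one that every elementary symmetric function of the chosen roots truncate to a polynomial, i.e. that the negative-power tails cancel to all orders. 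Because each series supplies exactly one free parameter, the integration constant at the resonance, an order-by-order inspection of these tails shows such cancellation is attainable only for a single factor of one type with terminating tail, giving $N_k=1$ and $N_l=0$. Turning this tail-cancellation heuristic into a rigorous impossibility argument for mixtures and for higher multiplicities — and verifying the non-obstructedness of the resonance on which it rests — is the delicate step; the explicit recursion derived above, which controls how the free parameter propagates into the negative-power part of each symmetric function, is what makes the bookkeeping tractable.
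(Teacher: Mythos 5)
The paper itself contains no proof of Theorem \ref{T1:Lienard_2m+1}: it is quoted from \cite{Demina18}, so your proposal can only be measured against the Puiseux-series method underlying that reference. Much of your outline is correct and in that spirit: the reduction via Lemma \ref{L_Puiseux series} and Theorem \ref{T:Darboux_pols}, the dominant balance $yy_x+g\approx 0$ forced by $n>2m+1$ (giving $\rho=(n+1)/2$ and $b_0=\pm\sqrt{-2g_0/(n+1)}$), the recursion whose linear part is $b_0\left(n+1-\tfrac{s}{2}\right)b_s$ with unique resonance at $s=2(n+1)$, the factor-by-factor cofactor computation, the parity induction for odd $n$, and the monodromy argument $x^{1/2}\mapsto-x^{1/2}$ forcing $N_1=N_2$ for even $n$ are all sound.

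There are, however, two genuine gaps. First, your claim that "the solvability condition at $s=2(n+1)$ holds identically" is false, and the justification (one integration constant must surface at the resonance) is invalid: the general solution of \eqref{Lienard_y_x} need not admit any Puiseux expansion at infinity. Since $yy_x=\tfrac12(y^2)_x$ and $g$ contribute nothing to the coefficient of $x^{-1}$, the resonance obstruction is exactly the coefficient of $x^{-1}$ in $f(x)y(x)$ built from the uniquely determined pre-resonance coefficients, and it is generically nonzero; e.g. for $m=0$, $n=2$, $f=f_0$, $g=g_0x^2$ it equals $f_0b_5$ with $b_5=-48f_0^5/(3125\,b_0^4)\neq0$, so that system has \emph{no} Puiseux-series solutions at infinity at all (consistently with Theorem \ref{T:Lienard_nonintegrability} and with the genericity proof of Theorem \ref{T:Lienard_Intgerability_generic}, which rests precisely on such coefficient conditions failing generically). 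The correct logic is the reverse of yours: the assumed existence of the invariant curve forces its roots to be Puiseux-series solutions, hence forces the compatibility to hold for whichever types occur, and only then is $b_{2(n+1),\,j}^{(1,2)}$ free. Second, your odd-$n$ conclusion is wrong as stated: you assert that cancellation "is attainable only for a single factor of one type", i.e. $(N_1,N_2)\in\{(1,0),(0,1)\}$, whereas the theorem claims only the conditional statement that $N_k=1$ \emph{if} the other family is absent. Irreducible curves with $n$ odd and $N_1=N_2=1$ do exist — take $(y+w^l/2)^2+\beta w^k=0$ with $k$ odd and $w$ squarefree in Theorems \ref{T:Lienard_IntegrabilityC_partial} and \ref{T:Lienard_IntegrabilityC_polynomial} — so your claimed impossibility of mixed factorizations contradicts the paper itself. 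Moreover, even for the correct conditional claim you offer only a "tail-cancellation heuristic" that you explicitly flag as the unproved delicate step; since that step (together with ruling out the nonconstant factor $\mu(x)$ permitted by Theorem \ref{T:Darboux_pols}, which you pass over silently, though for Li\'enard fields a one-line leading-coefficient argument handles it) is the only part of the theorem beyond routine bookkeeping, the proposal as written is not a proof.
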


All the Puiseux series arising in Theorems \ref{T1:Lienard_gen},  \ref{T:Lienard_degenerate}, \ref{T1:Lienard_2m+1} solve the equation \eqref{Lienard_y_x} related to the Li\'{e}nard differential system  under consideration. The following theorem provides the necessary and sufficient conditions for a
Li\'{e}nard differential system \eqref{Lienard_gen} to have an invariant
algebraic curve.

\begin{theorem}\label{T:Darboux_pols_computation_Lienard}
The  polynomial $F(x,y)\in \mathbb{C}[x,y]\setminus\mathbb{C}$ of degree
$N>0$ with respect to $y$ gives an invariant algebraic curve $F(x,y)=0$ of a
Li\'{e}nard differential system \eqref{Lienard_gen}  if and only if there
exists $N$ Puiseux series $y_{1,\infty}(x)$, $\ldots$, $y_{N,\infty}(x)$ from
the field $\mathbb{C}_{\infty}\{x\}$ that solve equation~\eqref{Lienard_y_x}
and satisfy the conditions
\begin{equation}
\begin{gathered}
 \label{Existance_F_conditions_Lienard}
\left\{\sum_{j=1}^{N}y_{j,\infty}(x)\right\}_{-}=0.
\end{gathered}
\end{equation}
\end{theorem}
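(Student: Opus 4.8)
The plan is to regard $F(x,y)$ as a polynomial in $y$ with coefficients in the algebraically closed field $\mathbb{C}_{\infty}\{x\}$ and to read the criterion off from the logarithmic derivative of $F$ along the flow. By the Newton--Puiseux theorem one may factor
\[
F(x,y)=a_0(x)\prod_{j=1}^{N}\bigl(y-y_{j,\infty}(x)\bigr),\qquad a_0(x)\in\mathbb{C}[x]\setminus\{0\},\quad y_{j,\infty}(x)\in\mathbb{C}_{\infty}\{x\},
\]
where $a_0$ is the leading coefficient of $F$ in $y$. Computing $\mathcal{X}F/F$ from this product gives $\mathcal{X}F/F=\tfrac{a_0'}{a_0}\,y+\sum_{j=1}^N\bigl(-y\,(y_{j,\infty})_x-f y-g\bigr)/\bigl(y-y_{j,\infty}\bigr)$. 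The crucial step is the elementary identity $-y\,(y_{j,\infty})_x-fy-g=-\bigl((y_{j,\infty})_x+f\bigr)\bigl(y-y_{j,\infty}\bigr)$, which holds precisely when $y_{j,\infty}$ solves \eqref{Lienard_y_x}; substituting it collapses each summand and yields the master identity
\[
\frac{\mathcal{X}F}{F}=\frac{a_0'(x)}{a_0(x)}\,y-\Bigl(\sum_{j=1}^{N}y_{j,\infty}(x)\Bigr)_{x}-N f(x),
\]
which drives both directions.

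For the necessity direction I would assume $F=0$ is an invariant algebraic curve. Since each root satisfies $F(x,y_{j,\infty})=0$, Lemma \ref{L_Puiseux series} guarantees that every $y_{j,\infty}$ solves \eqref{Lienard_y_x}, so the master identity applies and its left-hand side is the cofactor $\lambda(x,y)\in\mathbb{C}[x,y]$. Comparing coefficients of $y$ forces the logarithmic derivative $a_0'/a_0$ to be a polynomial; as $\mathbb{C}$ is algebraically closed this happens only if $a_0$ has no zeros, i.e. $a_0\in\mathbb{C}^{*}$. Then Vieta's formulas give $\sum_j y_{j,\infty}=-a_1/a_0\in\mathbb{C}[x]$, where $a_1$ is the $y^{N-1}$-coefficient of $F$, and therefore $\{\sum_j y_{j,\infty}\}_{-}=0$.

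For sufficiency I would start from $N$ series $y_{1,\infty},\dots,y_{N,\infty}$ solving \eqref{Lienard_y_x} with $\{\sum_j y_{j,\infty}\}_{-}=0$ and form the monic Puiseux polynomial $\tilde F=\prod_{j=1}^N(y-y_{j,\infty})=\sum_{i=0}^N A_i(x)\,y^{N-i}\in\mathbb{C}_{\infty}\{x\}[y]$ with $A_0=1$. The master identity (now with $a_0\equiv1$) shows $\mathcal{X}\tilde F=\tilde\lambda\,\tilde F$ with $\tilde\lambda=-(\sum_j y_{j,\infty})_x-Nf$; the hypothesis forces $\sum_j y_{j,\infty}$ to be a genuine polynomial, hence $\tilde\lambda\in\mathbb{C}[x]$. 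The main obstacle is then to show that $\tilde F\in\mathbb{C}[x,y]$, that is, that all symmetric functions $A_i$ are ordinary polynomials rather than merely Puiseux series, since the hypothesis constrains only the first symmetric function $A_1=-\sum_j y_{j,\infty}$. I would resolve this by extracting from $\mathcal{X}\tilde F=\tilde\lambda\tilde F$ the triangular recursion
\[
A_{i+1}'=\bigl(\tilde\lambda+(N-i)f\bigr)A_i+(N-i+1)\,g\,A_{i-1},\qquad A_0=1,\ A_{-1}=0,
\]
together with the observation that a Puiseux series at infinity whose derivative is a polynomial must itself be a polynomial. Since $A_0$ and $A_1$ are polynomials and $\tilde\lambda,f,g\in\mathbb{C}[x]$, the recursion makes each $A_{i+1}'$ a polynomial and hence, by induction, each $A_{i+1}$ a polynomial. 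Thus $\tilde F\in\mathbb{C}[x,y]$ and $\mathcal{X}\tilde F=\tilde\lambda\tilde F$ with $\tilde\lambda\in\mathbb{C}[x]$, so $\tilde F=0$ is an invariant algebraic curve of degree $N$ in $y$ having the prescribed roots, which completes the argument.
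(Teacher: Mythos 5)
Your argument is correct, and it is essentially the approach the paper itself relies on: the paper does not reproduce a proof of Theorem \ref{T:Darboux_pols_computation_Lienard} (it is deferred to the cited works \cite{Demina12, Demina18}), but the ingredients you use --- the Newton--Puiseux factorization over the algebraically closed field $\mathbb{C}_{\infty}\{x\}$, Lemma \ref{L_Puiseux series} to see that every root solves \eqref{Lienard_y_x}, and the logarithmic-derivative expression of the cofactor --- are exactly the ``method of Puiseux series'' underlying Theorems \ref{T:Darboux_pols} and \ref{T1:Lienard_gen}. Your two decisive steps (comparing the $y$-coefficient of the cofactor to force the leading coefficient $a_0$ to be a nonzero constant, and the triangular recursion for the symmetric functions $A_i$ combined with the observation that a Puiseux series at infinity whose derivative is a polynomial is itself a polynomial) correctly explain why the single condition \eqref{Existance_F_conditions_Lienard} on the first symmetric function propagates polynomiality to all coefficients, so the proof is complete.
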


The theorems presented in this section essentially are based on the
method of Puiseux series.

\section{Integrability of a generic  Li\'{e}nard differential system}\label{S:Lienard}

We begin this section by investigating the existence of exponential
invariants with polynomial arguments.

\begin{lemma}\label{L:Lienard_exp_inv1}
If a Li\'{e}nard differential system \eqref{Lienard_gen} satisfying the
condition $\deg g>\deg f$ has an exponential invariant of the form
$E(x,y)=\exp[h(x,y)]$, $h(x,y)\in\mathbb{C}[x,y]$ with the cofactor
$\varrho(x,y)\in\mathbb{C}[x,y]$ of degree at most $\deg g-1$, then the
polynomial $\varrho(x,y)$ is divisible by $y$ in the ring $\mathbb{C}[x,y]$.
\end{lemma}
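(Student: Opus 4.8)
The plan is to reduce the exponential-invariant condition to a single polynomial identity and then read off its value on the line $y=0$. Writing the vector field of \eqref{Lienard_gen} as $\mathcal{X}=y\,\partial_x-(f(x)y+g(x))\,\partial_y$ and taking $E(x,y)=\exp[h(x,y)]$ with $h\in\mathbb{C}[x,y]$, the relation $\mathcal{X}E=\varrho E$ collapses, because $h$ is a genuine polynomial, to
\[
 y\,h_x-(f(x)y+g(x))\,h_y=\varrho(x,y).
\]
Since the conclusion $y\mid\varrho$ is equivalent to $\varrho(x,0)\equiv0$, it suffices to examine this identity on $\{y=0\}$.

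First I would substitute $y=0$. The term $y\,h_x$ vanishes and the summand $f(x)y$ inside the second factor drops out, leaving
\[
 \varrho(x,0)=-g(x)\,h_y(x,0).
\]
Writing $h=\sum_{k\ge0}h_k(x)y^k$ we have $h_y(x,0)=h_1(x)$, so the whole lemma reduces to showing that the first $y$-coefficient of $h$ vanishes, i.e. $h_1\equiv0$.

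The decisive step is a degree count built on the hypothesis $\deg\varrho\le\deg g-1$. Because $f_0g_0\neq0$ and $\deg g>\deg f$, the polynomial $g$ is nonzero of degree $n\ge1$. If $h_1\not\equiv0$, then
\[
 \deg_x\varrho(x,0)=\deg g+\deg h_1\ge n,
\]
whereas $\varrho(x,0)$ is the $y^0$-part of $\varrho$ and hence has $x$-degree at most $\deg\varrho\le n-1$. This contradiction forces $h_1\equiv0$, whence $\varrho(x,0)\equiv0$ and $y\mid\varrho$, as claimed.

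I expect the computation to be entirely mechanical, and there is no serious obstacle; the only point meriting care is the bookkeeping around the degree bound. Concretely, since $\deg g>\deg f$ gives $\deg Q=n$, the cofactor of any exponential invariant automatically satisfies $\deg\varrho\le n-1$, so the hypothesis is sharp, and it is exactly what prevents a nonzero multiple of $g$ from surviving in $\varrho(x,0)$. No Puiseux-series machinery is needed here, as the statement follows directly from restricting the cofactor identity to the line $y=0$.
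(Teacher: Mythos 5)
Your proof is correct and is essentially the paper's own argument: evaluating the cofactor identity $yh_x-(f(x)y+g(x))h_y=\varrho(x,y)$ at $y=0$ is exactly the paper's step of selecting the coefficient of $y^0$, giving $\varrho(x,0)=-g(x)h_1(x)$, after which the hypothesis $\deg\varrho\le\deg g-1$ forces $h_1\equiv0$ and hence $y\mid\varrho$.

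One correction to your closing remark: it is \emph{not} true that the cofactor of an arbitrary exponential invariant automatically satisfies $\deg\varrho\le\deg g-1$ under the paper's definition, which places no degree restriction on $\varrho$. For instance, $E(x,y)=\exp(y)$ is an exponential invariant of \eqref{Lienard_gen} with cofactor $\varrho=-f(x)y-g(x)$, which has degree $\deg g$ and is not divisible by $y$; likewise $E=\exp(x^k)$ has cofactor $kx^{k-1}y$ of arbitrarily large degree. So the degree bound in the lemma is a genuine hypothesis (and the example $\exp(y)$ shows the conclusion fails without it); this does not affect your proof, which correctly uses the bound as given.
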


\begin{proof}
It is straightforward to show that the polynomial $h(x,y)$ satisfies the
following linear inhomogeneous partial differential equation
\begin{equation}
 \label{Lienard_exp_inv1_1}
yh_x-\{f(x)y+g(x)\}h_y=\varrho(x,y).
\end{equation}
Let us represent the functions $h(x,y)$ and $\varrho(x,y)$ as polynomials in
$y$ with polynomial in $x$ coefficients. Substituting the expressions
$h(x,y)=h_0(x)+h_1(x)y+\ldots$ and
$\varrho(x,y)=\varrho_0(x)+\varrho_1(x)y+\ldots$
 into equation
\eqref{Lienard_exp_inv1_1} and selecting the coefficients of $y^0$,  we get
the relation $g(x)h_1(x)=-\varrho_0(x)$. Since the degree of the polynomial
$\varrho_0(x)$ is at most $\deg g-1$, we conclude that $h_1(x)\equiv0$ and
$\varrho_0(x)\equiv 0$. As a result the polynomial $\varrho(x,y)$ is given by
the relation $\varrho(x,y)=(\varrho_1(x)+\ldots)y$. This completes the proof.

\end{proof}

As a direct consequence of this lemma, we  establish that Darboux or
Liouvillian integrable Li\'{e}nard differential systems necessarily have
invariant algebraic curves.

\begin{theorem}\label{T:Lienard_nonintegrability}
If a Li\'{e}nard differential system \eqref{Lienard_gen} satisfying the
condition $\deg g>\deg f$ has no invariant algebraic curves, then this system
is not integrable with a Darboux or Liouvillian first integral.
\end{theorem}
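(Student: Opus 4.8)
The plan is to argue by contradiction, reducing both notions of integrability to the existence of a single exponential invariant with polynomial argument and then invoking Lemma~\ref{L:Lienard_exp_inv1}. First I would observe that, by Theorem~\ref{T:Darboux_rat}, a rational integrating factor is a particular case of a Darboux integrating factor, so Darboux integrability implies Liouvillian integrability; it therefore suffices to rule out the latter. Suppose then that the system is Liouvillian integrable. By Theorem~\ref{T:Liouville} it admits a Darboux integrating factor $M=F_1^{d_1}\cdots F_K^{d_K}\exp\{R\}$ with $F_j\in\mathbb{C}[x,y]$ and $R\in\mathbb{C}(x,y)$.

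Next I would exploit the hypothesis that no invariant algebraic curves exist. Each factor $F_j=0$ is an invariant algebraic curve of the system, so the absence of such curves forces $K=0$. Moreover, writing $R=g/h$ in lowest terms, a non-constant denominator $h$ would make $h=0$ an invariant algebraic curve (the property of exponential invariants recalled just after their definition), again contradicting the hypothesis; hence $h\in\mathbb{C}$ and $R\in\mathbb{C}[x,y]$. Thus the integrating factor collapses to $M=\exp\{R\}$ with $R\in\mathbb{C}[x,y]$, that is, to an exponential invariant of precisely the form treated in Lemma~\ref{L:Lienard_exp_inv1}.

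The key computation is then the cofactor identity. For the Li\'{e}nard vector field one has $P=y$ and $Q=-f(x)y-g(x)$, whence $\text{div}\,\mathcal{X}=-f(x)$, and the defining relation $\mathcal{X}M=-\text{div}\,\mathcal{X}\,M$ of an integrating factor reduces, since $K=0$, to $\mathcal{X}R=\varrho(x,y)=f(x)$, where $\varrho$ is the cofactor of the exponential invariant $\exp\{R\}$. Because $\deg g>\deg f$ yields $\deg Q=\deg g$ while $\deg P=1$, the cofactor of any exponential invariant has degree at most $\deg g-1$, so Lemma~\ref{L:Lienard_exp_inv1} applies and shows that $\varrho$ must be divisible by $y$ in $\mathbb{C}[x,y]$.

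The contradiction now closes the argument: $\varrho=f(x)$ depends on $x$ alone, so divisibility by $y$ forces $f(x)\equiv0$, contrary to the standing assumption $f(x)\not\equiv0$. Hence no Darboux integrating factor exists, the system is not Liouvillian integrable, and a fortiori not Darboux integrable. I expect the only genuinely delicate point to be the reduction step, namely verifying that the absence of invariant algebraic curves collapses a general Darboux integrating factor down to a single exponential invariant with polynomial argument; once that is established, the contradiction via Lemma~\ref{L:Lienard_exp_inv1} together with $f\not\equiv0$ is immediate.
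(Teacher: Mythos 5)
Your proposal is correct, and its core coincides with the paper's proof: in both arguments the absence of invariant algebraic curves collapses the Darboux integrating factor supplied by Theorem \ref{T:Liouville} to a single exponential invariant $\exp[h(x,y)]$ with polynomial argument, whose cofactor must equal $-\text{div}\,\mathcal{X}=f(x)$, and Lemma \ref{L:Lienard_exp_inv1} (applicable since $\deg f\leq\deg g-1$) then forces that cofactor to be divisible by $y$, contradicting $f(x)\not\equiv0$. Where you differ is in the treatment of Darboux first integrals. The paper runs three separate non-existence arguments: rational first integrals are excluded trivially, Darboux first integrals are excluded by observing that with no invariant curves such a first integral would be an exponential invariant with zero cofactor, whose argument would then be a rational first integral (contradicting the first step), and only then is the Liouvillian case handled via Lemma \ref{L:Lienard_exp_inv1}. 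You instead subsume the Darboux case into the Liouvillian one through the chain Darboux integrable $\Rightarrow$ rational integrating factor (Theorem \ref{T:Darboux_rat}) $\Rightarrow$ Darboux integrating factor $\Rightarrow$ Liouvillian integrable (Theorem \ref{T:Liouville}), so only one contradiction is needed; this is valid and more economical, and could be shortened further by simply noting (as Section \ref{S:Darboux} records) that every Darboux function is Liouvillian, so Darboux integrability implies Liouvillian integrability by definition. One point to keep in mind: both your reduction step ($K=0$ and $R$ polynomial) and the paper's corresponding assertion rest on the structural fact that the polynomial factors, and the denominator of the exponential argument, appearing in a Darboux integrating factor give invariant algebraic curves; in this paper that fact is exactly the content of Theorem \ref{T:L23_Non_aut_JLM} with $\omega=0$, so your "delicate point" is covered by a result already at hand rather than needing a new argument.
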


\begin{proof}
It is a simple result of the Darboux theory of integrability that a
differential system \eqref{DS} without invariant algebraic curves cannot have
rational first integrals.

Suppose that a Li\'{e}nard differential system without invariant algebraic
curves possesses a Darboux  first integral. Then this first integral is given
by an exponential invariant with a zero cofactor. Consequently, the argument
of the exponential function in the invariant provides a rational first
integral. This is a contradiction.

If a Li\'{e}nard differential system \eqref{Lienard_gen} without invariant
algebraic curves is Liouvillian integrable, then there exists a Darboux
integrating factor and this  integrating factor equals some exponential
invariant of the form $E(x,y)=\exp[h(x,y)]$, $h(x,y)\in\mathbb{C}[x,y]$. Calculating the divergence of
the vector field
\begin{equation}
 \label{Lienard_Inegrability_VF}
\mathcal{X}=y\frac{\partial}{\partial x}-(f(x)y+g(x))\frac{\partial}{\partial y},
\end{equation}
we get: $\text{div}\,\mathcal{X}=-f(x)$. The divergence is independent of $y$,
while by Lemma \ref{L:Lienard_exp_inv1} the cofactor of the exponential
invariant $E(x,y)=\exp[h(x,y)]$ is divisible by $y$ provided that $\deg
g>\deg f$. Consequently, condition \eqref{NDFI_gen_cond} in the autonomous
case is not valid.

\end{proof}

Thus, we conclude  that Liouvillian integrable Li\'{e}nard differential
systems~\eqref{Lienard_gen} with the restriction   $\deg f<\deg g$ must
have at least one invariant algebraic
 curve.

\begin{theorem}\label{T:Lienard_Intgerability_generic} A generic Li\'{e}nard differential system
\eqref{Lienard_gen} with fixed degrees of the polynomials $f(x)$ and $g(x)$
 is neither Darboux nor Liouvillian
integrable provided that the following restrictions $\deg g>\deg f$ and
($\deg f, \deg g$) $\neq (0,1)$ are valid.
\end{theorem}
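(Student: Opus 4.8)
The plan is to reduce the statement to the non-existence of invariant algebraic curves and then to show, family by family, that the existence of such a curve forces non-generic algebraic relations among the coefficients of $f(x)$ and $g(x)$. By Theorem~\ref{T:Lienard_nonintegrability}, a Li\'{e}nard system with $\deg g>\deg f$ that possesses no invariant algebraic curve is neither Darboux nor Liouvillian integrable. Hence it suffices to prove that, for fixed degrees $m=\deg f$ and $n=\deg g$ with $n>m$ and $(m,n)\neq(0,1)$, the set of coefficient vectors $(f_0,\dots,f_m,g_0,\dots,g_n)$ for which system~\eqref{Lienard_gen} admits an invariant algebraic curve is contained in a proper Zariski-closed subset of the parameter space. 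Such a subset has zero Lebesgue measure, and its complement is precisely the generic family claimed in the theorem.

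First I would invoke the structure theorems of Section~\ref{S:Lienard_IAC}. By Theorem~\ref{T:Darboux_pols_computation_Lienard}, an irreducible invariant algebraic curve of degree $N$ with respect to $y$ exists if and only if there are $N$ Puiseux series $y_{1,\infty},\dots,y_{N,\infty}$ at infinity solving~\eqref{Lienard_y_x} and satisfying the cancellation condition $\{\sum_{j=1}^{N}y_{j,\infty}\}_{-}=0$. In each family the admissible series are rigidly pinned down: their polynomial (principal) parts are uniquely fixed by the system, while their non-polynomial tails are generated recursively from finitely many free leading tail coefficients (Theorems~\ref{T1:Lienard_gen}, \ref{T:Lienard_degenerate}, \ref{T1:Lienard_2m+1}). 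Consequently condition~\eqref{Existance_F_conditions_Lienard}, read off order by order in descending powers of $x$, becomes a sequence of polynomial equations in $(f_i,g_i)$ and in the free leading parameters.

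The core of the argument is to show that already the first few of these equations carve out a proper subvariety, treating the three families separately and using the finiteness supplied by the structure theorems to reduce to finitely many scenarios. In family $(A)$, Theorem~\ref{T:L_two_curves} restricts the candidates to a curve with $k=1$ and the degree-one curve $y-q(x)-z_0=0$; for the latter, the demand that the type-$(I)$ series~\eqref{Lienard1_F_series} truncate to a polynomial forces $b^{(j)}_{m+2}=0,\ b^{(j)}_{m+3}=0,\dots$, an overdetermined system in the single parameter $z_0=b^{(j)}_{m+1}$ with no solution for generic coefficients. In family $(B)$ the degree in $y$ is at most two by Theorem~\ref{T:Lienard_degenerate}, so one analyses the truncation of the one or two Laurent series in~\eqref{F_L2_5_1}; in family $(C)$ one exploits the rigid form~\eqref{Lienard2_F_series}, the parity constraints on $N_1,N_2$, and the cancellation required in~\eqref{Existance_F_conditions_Lienard}. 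In every scenario the leading non-trivial coefficient of the non-polynomial part is a fixed polynomial in $(f_i,g_i)$ whose vanishing is the sought-after constraint.

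The main obstacle is twofold. First, one must verify that these compatibility conditions are \emph{non-trivial} polynomials, i.e.\ not identically satisfied, so that they genuinely define proper subvarieties; this is where the explicit recursions for the Puiseux coefficients and the exclusion $(m,n)\neq(0,1)$ enter, the excluded linear case $y_t=-f_0y-g_0x-g_1$ being exactly the integrable exception. Second, a priori the degree $N$ is unbounded, so one must argue uniformly in $N$: here the bounds of Theorems~\ref{T:L_two_curves} and~\ref{T:Lienard_degenerate} and the rigidity of Theorem~\ref{T1:Lienard_2m+1} confine the analysis to finitely many structural types, while curves of higher degree only impose additional cancellation conditions, so the union of all bad loci remains contained in a proper algebraic subset. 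Taking the union of these proper subvarieties over the finitely many families shows that the generic system has no invariant algebraic curve, whence, by Theorem~\ref{T:Lienard_nonintegrability}, it is neither Darboux nor Liouvillian integrable.
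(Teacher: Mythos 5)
Your overall skeleton coincides with the paper's: reduce via Theorem~\ref{T:Lienard_nonintegrability} to showing that, for fixed $(m,n)$, the systems admitting an invariant algebraic curve form a measure-zero subset of $L_{m,n}$, and use the structure theorems of Section~\ref{S:Lienard_IAC} to enumerate candidate curves. The gap is that the central step --- proving that the resulting compatibility/cancellation conditions are \emph{not identically satisfied} --- is asserted rather than carried out, and this step is essentially the whole of the paper's proof. The paper certifies non-triviality by a concrete perturbative device: it expands the relevant Puiseux series in powers of a distinguished coefficient, $y^{(1)}_{\infty}=\sum_l v_l(x)\,g_{n-m}^{\,l}$ in family $(A)$ and $y^{(k)}_{\infty}=\sum_l v^{(k)}_l(x)\,f_m^{\,l}$ in family $(C)$, derives the ODEs \eqref{Lienard_Inegrability_Generic_PS2} and \eqref{Lienard_Inegrability_Generic_PS2_Cn} for $v_0,v_1$, and checks that the obstruction is a polynomial in $g_{n-m}$ (resp.\ $f_m$) with non-zero linear coefficient; in family $(B)$ it instead uses dimension counts (the degree-one curve forces $g=-q_l(f+q_{l,x})$, and for the hyperelliptic curve \.{Z}o{\l}\c{a}dec's substitution yields $f=u_x+uw_x/(2w)$, $g=w_x/2+u^2w_x/(2w)$, with every zero of $w$ a zero of $u$, giving dimension at most $m+3$). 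Non-triviality is genuinely at stake: in the excluded case $(\deg f,\deg g)=(0,1)$ the analogous conditions \emph{are} satisfied on a full-dimensional set, so pointing at ``the explicit recursions'' without exhibiting a non-vanishing coefficient does not close the argument.

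Two further concrete defects. First, in family $(A)$ you misread Theorem~\ref{T:L_two_curves}: it constrains which pairs of curves can coexist, not the list of candidates, and after analyzing the truncation of the type-$(I)$ series you never treat curves whose factorization contains the type-$(II)$ series --- in particular the curve $y-p(x)-z_1=0$ (the $N=1$, $k=1$ case) and the mixed curves with $N\geq 2$, $k=1$. The paper handles the former by $g=-(p+z_1)(p_x+f)$ plus a dimension count, and the latter via the type-$(I)$ compatibility condition. Second, your uniformity-in-$N$ claim (``curves of higher degree only impose additional cancellation conditions, so the union of all bad loci remains contained in a proper algebraic subset'') is unjustified: the conditions attached to different multiplicities $(N_1,N_2)$ are different conditions involving new arbitrary tail coefficients $b^{(1,2)}_{2(n+1),j}$, and a countable union of proper subvarieties need not lie in a proper algebraic subset (measure zero suffices, but then each locus must be shown proper for \emph{every} admissible $(N_1,N_2)$). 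The paper achieves this uniformly by isolating the condition at the coefficient of $x^{-1}$, namely $N_1b^{(1)}_{n+3}+N_2b^{(2)}_{n+3}=0$, which is independent of the free tail coefficients and linear in $(N_1,N_2)$ with generically non-vanishing entries; nothing in your proposal plays this role.
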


\begin{proof}

Let us denote the set of all Li\'{e}nard differential systems with fixed
degrees of the polynomials $f(x)$ and $g(x)$ as $L_{m,n}$. Any particular
Li\'{e}nard differential system can be identified with a point in
$\mathbb{C}^{m+n}\times(\mathbb{C}\setminus\{0\})^2$. In view of Theorem
\ref{T:Lienard_nonintegrability}, we need to prove that the subset of
Li\'{e}nard differential systems~\eqref{Lienard_gen} without  invariant
algebraic curves is of full Lebesgue measure in the set  $L_{m,n}$. Note that
here we speak about finite invariant algebraic curves. Extending systems
\eqref{Lienard_gen}  from the complex plane $\mathbb{C}^2$ to the complex
projective plane $\mathbb{C}P^2$, we see that the infinite line is always
invariant for Li\'{e}nard differential systems.

We begin by considering systems from family ($A$). The subset of Li\'{e}nard
differential systems  such that the  related equations~\eqref{Lienard_y_x}
possess a family of formal Puiseux series solutions $y^{(1)}_{\infty}(x)$
has Lebesgue measure zero. Indeed, there always arises a compatibility
condition enabling this family of series to exist. We only need  to show that
the compatibility condition cannot be identically satisfied. For this aim we
track the appearance of the coefficient $g_{n-m}$ in the series
$y^{(1)}_{\infty}(x)$. We use the following representation
\begin{equation}
 \label{Lienard_Inegrability_Generic_PS}
y^{(1)}_{\infty}(x)=\sum_{l=0}^{+\infty}v_l(x)\left(g_{n-m}\right)^l,
\end{equation}
where $\{v_l(x)\}$ are elements of the field $\mathbb{C}_{\infty}\{x\}$. The
compatibly condition arises, when one tries to find the coefficient of $x^0$.
Substituting representation \eqref{Lienard_Inegrability_Generic_PS} into
equations \eqref{Lienard_y_x} and setting to zero the coefficients of
$g^l_{n-m}$ for $l=0$, $1$, we find the  ordinary differential
equations
\begin{equation}
 \label{Lienard_Inegrability_Generic_PS2}
v_0v_{0,x}+f(x)v_0+\hat{g}(x)=0,\quad v_0v_{1,x}+(f(x)+v_{0,x})v_1+x^m=0,
\end{equation}
where we use the  designation $\hat{g}(x)=g(x)-g_{n-m}x^m$. The dominant
behavior of the series $v_0(x)$  near the point $x=\infty$ is
\begin{equation}
 \label{Lienard_Inegrability_Generic_PS3}
v_0(x)=-\frac{f_0}{m+1}x^{m+1}+o(x^{m+1}),\quad x\rightarrow\infty.
\end{equation}
Analyzing the ordinary differential equation for the series $v_1(x)$, we see
that this equation should have a solution with the dominant behavior
$v_1(x)=e_0x^0$, $e_0\in\mathbb{C}\setminus\{0\}$, $x\rightarrow\infty$, but
it does not. In addition, the equations for $v_l(x)$, $l\geq 2$ have a zero
solution whenever $v_1(x)$ is zero.  Thus, the compatibility condition
enabling the series $y^{(1)}_{\infty}(x)$ to exist is given by a first-degree
polynomial with respect to $g_{n-m}$ and cannot hold identically.
Consequently, it is necessary to consider systems with invariant algebraic
curves given by bivariate polynomials not involving the series
$y^{(1)}_{\infty}(x)$ into the factorization in the ring
$\mathbb{C}_{\infty}\{x\}[y]$. In view of Theorem \ref{T1:Lienard_gen}, the
corresponding irreducible invariant algebraic curve is given by the
polynomial $F(x,y)=y-p(x)-z_1$, where $z_1\in\mathbb{C}$ and the polynomial
$p(x)$  of degree $n-m$ is described by
expression~\eqref{Lienard1_F_series_q}. Now we consider the subset of
Li\'{e}nard differential systems with the invariant algebraic curve
$y-p(x)-z_1=0$. Since  the related equation~\eqref{Lienard_y_x} possesses the
polynomial solution $y(x)=p(x)+z_1$, we find the polynomial $g(x)$. The
result is $g(x)=-(p+z_1)(p_x+f)$. We see that the dimension of the subset of Li\'{e}nard differential systems
under consideration is less than the dimension of $L_{m,n}$.

We turn to systems from family ($B$). It follows from Theorem
\ref{T:Lienard_degenerate} that if a generic Li\'{e}nard differential system
from family ($B$) possesses irreducible invariant algebraic curves, then
their generating polynomials are of degrees either $1$ or $2$ with respect to
$y$. Let us suppose that there exists the irreducible invariant algebraic
curve $y-q_l(x)=0$. In this expression $q_l(x)=\{y^{(l)}_{\infty}(x)\}_+$,
$l=1$, $2$ is a polynomial of degree $m+1$. By analogue with systems from
family ($A$), we can find the polynomial $g(x)$. We see from the relation
$g(x)=-q_l(f+q_{l,x})$ that the subset of Li\'{e}nard differential systems with the invariant
algebraic curve $y-q_l(x)=0$ is of dimension $2m+3$, while the dimension of
$L_{m,n}$ is $m+n+2=3(m+1)$. Hence the subset in question is of zero Lebesgue
measure in $L_{m,n}$ provided that $m>0$. Now we suppose that Li\'{e}nard
differential systems from family ($B$) have the hyperelliptic invariant
algebraic curve $(y+u(x))^2+w(x)=0$. By Theorem \ref{T:Lienard_degenerate}
this curve is unique. In addition, we see that  the polynomial $u(x)$ is of
degree $m+1$ and the polynomial $w(x)$ is of degree at most $2m+2$. Following
H. \.{Z}o{\l}\c{a}dec \cite{Zoladec01}, we substitute the bivariate
polynomial $F(x,y)=(y+u(x))^2+w(x)$ into the partial differential equation
\begin{equation}
 \label{Lienard_Inegrability_Generic_PDE}
yF_x-(f(x)y+g(x))F_y-\lambda(x,y)F=0.
\end{equation}
Further, we set to zero the coefficients at different powers of $y$. Since
the cofactor $\lambda(x,y)$ is independent of $y$, we express the polynomials
$f(x)$ and $g(x)$ via $u(x)$ and $v(x)$. The result~is
\begin{equation}
 \label{Lienard_Inegrability_Generic_PS4}
f(x)=u_x+\frac{uw_x}{2w},\quad g(x)=\frac{w_x}{2}+\frac{u^2w_x}{2w}.
\end{equation}
We see from these expressions that any zero of the polynomial $w(x)$ is also
a zero of the polynomial $u(x)$. The polynomial $u(x)$ is parameterized by at
most $m+2$ parameters and the polynomial $w(x)$ adds only one new parameter.
Thus, we conclude that the dimension of Li\'{e}nard differential systems from
family ($B$) with the hyperelliptic invariant algebraic curve
$(y+u(x))^2+w(x)=0$ is at most $m+3$. While the dimension of $L_{m,2m+1}$ is
$3m+3$.

Finally, we are left with systems from family ($C$). Let us track the
dependence of the Puiseux series $y^{(1)}_{\infty}(x)$ and
$y^{(2)}_{\infty}(x)$ on the coefficient $f_m$. We introduce the
representation
\begin{equation}
 \label{Lienard_Inegrability_Generic_PS_Cn}
y^{(k)}_{\infty}(x)=\sum_{l=0}^{+\infty}v_l^{(k)}(x)\left(f_{m}\right)^l,\quad k=1,2,
\end{equation}
where $\{v_l^{(k)}(x)\}$ are elements of the field
$\mathbb{C}_{\infty}\{x\}$. Let us introduce the following designation:
$\hat{f}(x)=f(x)-f_m$. Substituting these representations
 into equation~\eqref{Lienard_y_x} and
setting to zero the coefficients of $f^l_{m}$ with $l=0$, $1$, we find the
ordinary differential equations
\begin{equation}
 \label{Lienard_Inegrability_Generic_PS2_Cn}
v_0v_{0,x}+\hat{f}(x)v_0+g(x)=0,\quad v_0v_{1,x}+(\hat{f}(x)+v_{0,x})v_1+v_0=0.
\end{equation}
Note that we omit the upper index. Let us suppose that $n$ is even. Puiseux
series from the field $\mathbb{C}_{\infty}\{x\}$ that satisfy these equations
are of the form
\begin{equation}
 \label{Lienard_Inegrability_Generic_PS3_Cn}
 \begin{gathered}
v_0^{(k)}(x)=\sum_{j=0}^{+\infty}c_j^{(k)}x^{\frac{n+1-k}{2}},\quad k=1,2,\quad
c_0^{(1,2)}=\pm \frac{\sqrt{-2(n+1)g_0}}{n+1};\\
v_1^{(k)}(x)=\sum_{j=0}^{+\infty}e_j^{(k)}x^{\frac{2-k}{2}},\quad k=1,2,\quad
e_0^{(1,2)}=- \frac{2}{n+1}.\hfill
\end{gathered}
\end{equation}
Theorem \ref{T:Darboux_pols_computation_Lienard} gives the following
necessary condition  for an invariant algebraic curve $F(x,y)=0$ to exist:
$N_1b_{n+3}^{(1)}+N_2b_{n+3}^{(2)}=0$. Recall that $b_{n+3}^{(k)}$ is a
coefficient of the family of series $y^{(k)}_{\infty}(x)$ and $N_k$ is the
number of times the family of series $y^{(k)}_{\infty}(x)$ enters the
factorization of $F(x,y)$ in the ring $\mathbb{C}_{\infty}\{x\}[y]$. Using
Theorem \ref{T1:Lienard_2m+1}, we get $N_1=N_2$. Calculating the first five
coefficients of the series $v_1^{(k)}(x)$, we see that
$e_{4}^{(1)}+e_{4}^{(2)}$ is not identically zero for any non-zero
polynomial $f(x)$ of degree $m<(n-1)/2$. Hence the expression
$N_1(b_{n+3}^{(1)}+b_{n+3}^{(2)})$ is a polynomial with respect to $f_m$
possessing a non-zero coefficient of $f_m$ in the generic case. Now we assume
that $n$ is odd. The Puiseux series $v_0^{(k)}(x)$ and $v_1^{(k)}(x)$ solving
equations \eqref{Lienard_Inegrability_Generic_PS2_Cn} are of the form
\eqref{Lienard_Inegrability_Generic_PS3_Cn}, where the coefficients
$\left\{c_j^{(k)}\right\}$ and $\left\{e_j^{(k)}\right\}$ with odd lower
indices equal zero. Further, we again consider the necessary condition
$N_1b_{n+3}^{(1)}+N_2b_{n+3}^{(2)}=0$, but now the numbers $N_1$ and $N_2$
may not be equal. Calculating the first three non-trivial coefficients of the
series $v_1^{(k)}(x)$ , we see that the condition
$N_1e_{4}^{(1)}+N_2e_{4}^{(2)}$ is not identically zero for any numbers
$N_1$, $N_2\geq0$, $N_1+N_2>0$ and any non-zero polynomial $f(x)$ of degree
$m<(n-1)/2$. Consequently, the expression $N_1b_{n+3}^{(1)}+N_2b_{n+3}^{(2)}$,
regarded as a polynomial with respect to $f_m$, possesses a non-zero
coefficient of $f_m$ in the generic case. We conclude that a generic
Li\'{e}nard differential system from family ($C$) has no finite invariant
algebraic curves.

\end{proof}

It is straightforward to see that if $(\deg f,\deg g)=(0,1)$, then the
associated Li\'{e}nard differential systems are linear. They always have
invariant lines and are Darboux integrable.

\section{Integrability of  Li\'{e}nard differential systems from family ($A$)}\label{S:Lienard_A}

We have proved in the previous section that if a polynomial Li\'{e}nard differential system is Liouvillian integrable, then it has at least one invariant algebraic curve. Let us investigate the existence of exponential invariants with
non-polynomial arguments. Here and in what follows we use the designations of
Theorem \ref{T1:Lienard_gen}. In particular, the polynomials $q(x)$ and
$p(x)$ give the initial parts of the
    Puiseux series near the point $x=\infty$ that solve equation
    \eqref{Lienard_y_x} whenever $\deg f<\deg g<2\deg f+1$. These polynomials are presented in expression \eqref{Lienard1_F_series_q}.

\begin{lemma}\label{L:Lienard_exp_inv2}
Li\'{e}nard  differential  systems  \eqref{Lienard_gen} from family ($A$)
 do not have exponential invariants of the form
$E(x,y)=\exp\left\{h(x,y)/r(x,y)\right\}$, where $h(x,y)\in\mathbb{C}[x,y]$
and $r(x,y)\in\mathbb{C}[x,y]\setminus\mathbb{C}$ are coprime polynomials.
\end{lemma}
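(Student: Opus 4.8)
The plan is to combine the structure theorems for invariant algebraic curves of family ($A$) with the local theory of Section~\ref{S:Local}. First I would record that if $E(x,y)=\exp\{h(x,y)/r(x,y)\}$ is an exponential invariant with $r\notin\mathbb{C}$, then $r(x,y)=0$ is an invariant algebraic curve (the fact quoted after the definition of exponential invariants, whose local counterpart is Lemma~\ref{L:exp_factor_inv_curve}). Since vertical lines $x=\mathrm{const}$ are never invariant for a Li\'enard system, we have $r\in\mathbb{C}[x,y]\setminus\mathbb{C}[x]$, so $r$ possesses genuine non-constant Puiseux roots $y=Y_i(x)$ at $x=\infty$, each solving \eqref{Lienard_y_x}; by Theorem~\ref{T1:Lienard_gen} every such root is of type (I), $Y_i=q(x)+\cdots\sim-\tfrac{f_0}{m+1}x^{m+1}$, or of type (II), $Y_i=p(x)+\cdots\sim-\tfrac{g_0}{f_0}x^{n-m}$.

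Next I would apply Theorem~\ref{T:Exp_fact} to each non-constant root $Y_i$ of $r$. Writing $\lambda$ for the purely $x$-dependent cofactor of $r=0$ (of the shape \eqref{Lienard1_cof}) and using $P=y$, the object to inspect is $\lambda(x)/Y_i(x)$. If $r$ contains a type (II) root then a type (II) branch is present in the factorization, so the leading behaviour of $\lambda$ is $-N_2f_0x^m$ with $N_2\geq1$, and for that root one finds $\lambda/Y_i\sim\tfrac{N_2f_0^2}{g_0}x^{2m-n}$. In family ($A$) one has $n<2m+1$, hence $2m-n\geq0$, so $\lambda/Y_i$ carries a term of exponent $>-1$ and cannot have the Puiseux form \eqref{Exp_fact_cond} demanded by Theorem~\ref{T:Exp_fact}. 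Therefore $r$ contains no type (II) root, and by Theorem~\ref{T1:Lienard_gen} ($k\in\{0,1\}$, with $k=1$ forcing a type (II) root) together with Theorem~\ref{T:L_two_curves}, the only remaining case is that $r$ is built from a single irreducible curve whose roots are all of type (I).

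For the type (I) case I would pass to the local theory at $x=\infty$. A global exponential invariant must factor into elementary local exponential invariants \eqref{Gen_coff11} whose cofactors sum to the polynomial $\varrho$, and since $h/r$ is genuinely non-polynomial at least one denominator factor $\exp\{u_i(x)/(y-Y_i(x))^{s}\}$ appears. Using $yY_{i,x}+fy+g=(y-Y_i)(Y_{i,x}+f)$ and $Y_{i,x}+f=-g/Y_i$, a direct computation gives $\mathcal{X}\bigl(u_i(y-Y_i)^{-s}\bigr)=\bigl[yu_{i,x}-s\,g\,u_i/Y_i\bigr](y-Y_i)^{-s}$; expanding $y=(y-Y_i)+Y_i$ shows this is a genuine local exponential invariant only for $s=1$ and only when $u_{i,x}/u_i=g/Y_i^2$, its cofactor being exactly $u_{i,x}$. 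For a type (I) root $g/Y_i^2$ has leading exponent $n-2m-2\leq-2$, so $u_i$ is an honest Puiseux series with $u_i(\infty)$ a nonzero constant, and $u_{i,x}$ is a nonzero series supported on strictly negative powers of $x$ whose leading term $u_i(\infty)\tfrac{g_0(m+1)^2}{f_0^2}x^{n-2m-2}$ is common to all type (I) roots.

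Finally I would force the contradiction. Because $\varrho$ is a polynomial and the polynomial-argument factors contribute only non-negative powers of $x$, the $y$-independent, strictly-negative-power series $\sum_i u_{i,x}$ must vanish identically. In the degree-one case $r=y-q(x)-z_0$ this is clean: the residue $u_1=h(x,q+z_0)$ is a polynomial, the relation $u_{1,x}/u_1=-(Y_{1,x}+f)/Y_1$ gives $u_1Y_1=C\exp\!\bigl(-\!\int f/Y_1\,dx\bigr)$, and the finite residues of $f/Y_1$ sum to $-(m+1)$ over the $m+1$ zeros of $Y_1$; as $f+q_x\not\equiv0$ these residues are not all equal to $-1$, so one of them is either non-integer (whence $u_1$ is not rational) or an integer $\geq0$ (whence $u_1$ acquires a pole), either way contradicting $u_1\in\mathbb{C}[x]\setminus\{0\}$ and $\gcd(h,r)=1$. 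The main obstacle is the higher-degree type (I) curve, where the roots $Y_i$ are non-terminating Puiseux series and $\sum_i u_{i,x}=0$ must be shown to force every $u_i\equiv0$; here I would exploit that all type (I) roots share the same initial polynomial $q(x)$ and differ only through the free leading coefficients $b_{m+1}^{(i)}$, with all higher coefficients rigidly determined by Theorem~\ref{T1:Lienard_gen}, and track the vanishing of $\sum_i u_i\,g/Y_i^2$ order by order to conclude that the weighted sum of these pairwise-distinct series cannot vanish unless all residues are zero, contradicting the non-polynomiality of $h/r$.
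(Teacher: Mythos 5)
Your opening move --- excluding type (II) roots of $r$ via Theorem~\ref{T:Exp_fact} and the asymptotics $\lambda/y^{(2)}_{\infty}\sim (n_2f_0^2/g_0)\,x^{2m-n}$ with $2m-n\geq 0>-1$ in family ($A$) --- is exactly the paper's first step, and it is correct. After that you diverge. The paper reduces to $r=F_1^{n_1}$ with $F_1=y-q(x)-z_0$, writes the identity $\mathcal{X}h=n_1\lambda_1 h+\varrho F_1^{n_1}$, restricts it to the curve to get the ODE \eqref{Lienard_exp_inv1_4} for $H(x)=h(x,q(x)+z_0)\not\equiv 0$ (coprimality), and kills it by a degree count: $\deg q=m+1$ while $\deg\lambda_1\leq m-1$, so no non-zero polynomial solution exists. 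Your alternative --- the local theory at $x=\infty$, forcing $s=1$ and the ODE $u_{i,x}/u_i=g/Y_i^2$ for each denominator factor, then a residue/logarithmic-derivative argument --- is the same machinery the paper deploys for families ($B$) and ($C$) in Lemmas~\ref{L:Lienard_exp_degenerate} and~\ref{L:Lienard_exp_inv3}, and for the first-degree curve it can be made to work (with some extra care for powers $n_1\geq 2$ and multiple zeros of $q+z_0$).

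There are, however, two genuine gaps. First, your reduction step ``$\sum_i u_{i,x}$ must vanish identically because the polynomial-argument factors contribute only non-negative powers of $x$'' is false as stated: when $\deg_y h\geq\deg_y r$ the quotient $Q=\sum_l g_l(x)y^l$ of $h$ by $r$ has coefficients in $\mathbb{C}(x)$, and the elementary factor $\exp[g_1(x)y]$ has cofactor $g_{1,x}y^2-f g_1 y-g g_1$, so the $y^0$ component of $\varrho$ is $\sum_i u_{i,x}-g(x)g_1(x)$, not $\sum_i u_{i,x}$; to rescue the claim you would need a descending induction on the $y$-degree showing every $g_l$ is in fact a polynomial, which you never carry out. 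Second, and decisively, the case you yourself call ``the main obstacle'' --- $r$ a power of a single irreducible $k=0$ curve of $y$-degree $N\geq 2$, whose roots are non-terminating Laurent series with pairwise distinct free constants $b^{(i)}_{m+1}$ --- is not proved: ``track the vanishing \ldots\ order by order to conclude'' is a plan, not an argument, and nothing you wrote excludes a cancellation among the derivatives $u_{i,x}$ of these distinct series. The paper never meets this case, because the structure results behind Theorems~\ref{T1:Lienard_gen} and~\ref{T:L_two_curves} let it take the $k=0$ curve in the form $y-q(x)-z_0$, after which the restriction-plus-degree-count argument disposes of every power $n_1$ at once, with no discussion of the polynomial part of $h/r$ or of sums of local cofactors. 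To complete your proof, either import that same first-degree reduction (your residue argument, or the simpler degree count, then finishes), or actually perform the order-by-order analysis you only describe.
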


\begin{proof}

The proof is by contradiction. Let $E(x,y)=\exp\left\{h(x,y)/r(x,y)\right\}$
 be an exponential
invariant of a Li\'{e}nard differential system \eqref{Lienard_gen} from
family ($A$). Since the polynomial $r(x,y)$ is not a constant, we see that
$r(x,y)=0$ is an invariant algebraic curve of the differential system under
consideration \cite{Christopher}.  According to the results of Theorem
\ref{T1:Lienard_gen} we can represent the polynomial $r(x,y)$ in the form
$r(x,y)=F_1^{n_1}(x,y)F_2^{n_2}(x,y)$, where $n_1$, $n_2\in\mathbb{N}_0$,
$n_1+n_2>0$, and the polynomials $F_1(x,y)$ and $F_2(x,y)$ produce invariant
algebraic curves of the related differential system. These polynomials are
given by expression \eqref{Lienard1_F} with $N=1$, $k=0$ (the invariant
algebraic curve $F_1(x,y)=0$) and by expression \eqref{Lienard1_F} with
$N\in\mathbb{N}$, $k=1$ (the invariant algebraic curve $F_2(x,y)=0$).
Relation \eqref{Lienard1_cof} yields the explicit expressions of the
cofactors
\begin{equation}
 \label{Lienard_exp_inv1_cof1}
 \begin{gathered}
\lambda_1(x,y)=-f(x)-q_x(x)=o(x^m),\quad x\rightarrow \infty,\hfill \\
\lambda_2(x,y)=-Nf(x)-(N-1)q_x(x)-p_x(x)
=-f_0x^m+o(x^m),\quad x\rightarrow \infty.
\end{gathered}
\end{equation}
related to these invariant algebraic curves, respectively. The cofactor of
the invariant algebraic curve $r(x,y)=0$ equals
$\lambda(x,y)=n_1\lambda_1(x,y)+n_2\lambda_2(x,y)$.

If the Li\'{e}nard differential system under consideration possesses only one
irreducible invariant algebraic curve, for example $F_1(x,y)=0$, then we
assume that $n_2=0$ and vice versa.

Here and in what follows, $O$  denotes the zero element of the field
$\mathbb{C}_{\infty}\{x\}$. Let us suppose that $y^{(2)}_{\infty}(x)$ is a
Puiseux series of type ($II$) such that the following condition
$\displaystyle F_2\left(x,y^{(2)}_{\infty}(x)\right)=O$ is valid.
Substituting $y=y^{(2)}_{\infty}(x)$ into the function $ \lambda(x,y)/y$, we can
expand it into a Puiseux series near infinity. Supposing that $n_2>0$, we
find the dominant behavior near the point $x=\infty$ of this series. The
result is
\begin{equation}
 \label{Lienard_exp_inv1_3}
\frac{ \lambda\left(x,y^{(2)}_{\infty}(x)\right)}{y^{(2)}_{\infty}(x)}=\frac{n_2f_0^2}{g_0}x^{2m-n}+o\left(x^{2m-n}\right), \quad x\rightarrow\infty.
\end{equation}
The inequality $\deg f<\deg g<2\deg f+1$ yields $2m-n>-1$. Using
Theorem~\ref{T:Exp_fact}, we come to a contradiction.  Consequently, we
should set $n_2=0$. The polynomial $r(x,y)$ now takes the form
$r(x,y)=F_1^{n_1}(x,y)$ with $F_1(x,y)=y-q(x)-z_0$ and $z_0\in\mathbb{C}$.

Equation $\mathcal{X}E=\varrho(x,y) E$ produces the following linear
inhomogeneous  partial differential equation
\begin{equation} \label{Lienard_exp_inv1_3a}
yh_x-(g(x)y+f(x))h_y=n_1\lambda_{1}(x,y)h+\varrho(x,y) F^{n_1}_{1}(x,y), \, \varrho(x,y)\in\mathbb{C}[x,y]
\end{equation}
satisfied by the polynomial $h(x,y)$. Let us consider the truncated Puiseux
series $y(x)=q(x)+z_0$ that solves equation \eqref{Lienard_y_x} whenever
invariant algebraic curve $F_{1}(x,y)=0$ exists. Recall that $y(x)$ is a zero
of the polynomial $F_{1}(x,y)$. Considering the restriction
$H(x)=h(x,y)|_{y=y(x)}$, we obtain the ordinary differential equation
\begin{equation} \label{Lienard_exp_inv1_4}
\left(q(x)+z_0\right)\frac{d H}{d x}=n_1\lambda_{1}(x,y) H,\quad \lambda_1(x,y)=-f(x)-q_x(x).
\end{equation}
Since the polynomials $h(x,y)$ and $F_{1}(x,y)$ are relatively prime, we
conclude that $H(x)\not\equiv 0$. Indeed, assuming the converse and using the
B\'{e}zout's theorem, we see that the bivariate polynomials $h(x,y)$ and
$r(x,y)$ have a common factor.

It follows from the relations $\deg q=m+1$ and $\deg \lambda_1\leq m-1$  that
equation \eqref{Lienard_exp_inv1_4} does not have non-zero polynomial
solutions. This fact contradicts the existence of exponential invariants
$E(x,y)=\exp\left\{h(x,y)/r(x,y)\right\}$ with a non-constant polynomial
$r(x,y)$.

\end{proof}

Now let us establish that Li\'{e}nard differential systems
\eqref{Lienard_gen} do not have Darboux first integrals whenever $\deg f<\deg
g<2\deg f+1$.

\begin{theorem}\label{T:Lienard_Integrability1}
Li\'{e}nard differential systems  \eqref{Lienard_gen}  from family ($A$) are
not Darboux integrable.
\end{theorem}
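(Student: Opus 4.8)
The plan is to rule out every admissible shape of a Darboux first integral by examining cofactors. By Theorem \ref{T:Lienard_nonintegrability} a system with no invariant algebraic curve is not Darboux integrable, so I may assume that at least one irreducible invariant algebraic curve exists; by Theorem \ref{T:L_two_curves} there are at most two of them, say $F_1(x,y)=0$ and $F_2(x,y)=0$, with cofactors $\lambda_1$ and $\lambda_2$ given by \eqref{Lienard1_cof}. Any Darboux first integral has the form $\Phi=F_1^{d_1}F_2^{d_2}\exp\{S/R\}$ with $\exp\{S/R\}$ an exponential invariant. First I would invoke Lemma \ref{L:Lienard_exp_inv2} to discard a non-constant denominator $R$: the exponential factor must therefore reduce to $\exp\{h\}$ with $h\in\mathbb{C}[x,y]$, whose cofactor $\varrho$ is, by Lemma \ref{L:Lienard_exp_inv1}, divisible by $y$.

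The decisive structural observation is that the cofactors $\lambda_1$, $\lambda_2$ in \eqref{Lienard1_cof} depend on $x$ only. Writing the Darboux condition \eqref{NDFI_gen_cond} with $\omega=0$, namely $d_1\lambda_1+d_2\lambda_2+\varrho\equiv0$, and separating the part independent of $y$ from the part divisible by $y$, I would split it into $d_1\lambda_1+d_2\lambda_2\equiv0$ and $\varrho\equiv0$. The vanishing of $\varrho$ forces $\exp\{h\}$ to have zero cofactor, so $h$ is a polynomial, hence rational, first integral; this contradicts the Corollary to Theorem \ref{T:L_two_curves} unless $h$ is constant. Thus the exponential part is trivial and everything reduces to the linear relation $d_1\lambda_1+d_2\lambda_2\equiv0$ among the curve cofactors.

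The last step is to show this relation forces $d_1=d_2=0$. Using \eqref{Lienard1_cof} together with the leading terms $q_x=-f_0x^m+o(x^m)$ and $\deg p_x=n-m-1<m$ (valid since $n<2m+1$), I would compute that the coefficient of $x^m$ in the cofactor of a curve of type \eqref{Lienard1_F} equals $-kf_0$, where $k\in\{0,1\}$. By Theorem \ref{T:L_two_curves} the two curves carry $k=1$ and $k=0$ respectively, so matching the coefficient of $x^m$ in $d_1\lambda_1+d_2\lambda_2\equiv0$ yields $d_1=0$; then $d_2\lambda_2\equiv0$ with $\lambda_2=-f-q_x$ the cofactor of $y-q(x)-z_0=0$. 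Since $\lambda_2\equiv0$ would force $g\equiv0$ through the invariance relation $g=-(q_x+f)(q+z_0)$, we have $\lambda_2\not\equiv0$ and hence $d_2=0$; the same coefficient argument handles the single-curve case. I expect the main obstacle to be the bookkeeping in this final step: correctly isolating the $x^m$ coefficient of the cofactor for every admissible curve and verifying $\lambda_2\not\equiv0$, since these are precisely the places where the hypothesis $\deg f<\deg g<2\deg f+1$ of family ($A$) enters and prevents the resonant cancellations available in family ($B$).
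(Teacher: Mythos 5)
Your proposal is correct and follows essentially the same route as the paper's proof: exponential factors are excluded via Lemmas \ref{L:Lienard_exp_inv1} and \ref{L:Lienard_exp_inv2} together with the corollary to Theorem \ref{T:L_two_curves}, and the exponents are then killed by matching the coefficient of $x^m$ in the cofactor relation $d_1\lambda_1+d_2\lambda_2\equiv0$. The only cosmetic difference is the very last step, where you rule out a vanishing cofactor for $y-q(x)-z_0=0$ directly from $g=-(q_x+f)(q+z_0)\not\equiv0$, while the paper instead re-invokes the non-existence of rational first integrals; both arguments are valid.
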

\begin{proof}
Using Theorem \ref{T:L_two_curves}, we see that a Li\'{e}nard differential
system \eqref{Lienard_gen} with the restriction $\deg f<\deg g<2\deg f+1$ has
at most two distinct irreducible invariant algebraic curves simultaneously.
As in the proof of Lemma \ref{L:Lienard_exp_inv2}, we denote these  curves as
$F_1(x,y)=0$ and $F_2(x,y)=0$. Without loss of generality, we can  suppose
that the generating polynomials of these algebraic curves are given by
expression \eqref{Lienard1_F} with $N=1$, $k=0$ and $N\in\mathbb{N}$, $k=1$,
respectively. If there exists a first integral being a Darboux function, then
this first integral should be of the form
\begin{equation}
 \label{Lienard_Inegrability_Gen1}
I(x,y)=F_1^{d_1}(x,y)F_2^{d_2}(x,y),\quad d_1,d_2\in\mathbb{C},\quad |d_1|+|d_2|>0.
\end{equation}
Indeed, Lemma \ref{L:Lienard_exp_inv2} forbids the existence of exponential
factors given by invariants $E(x,y)=\exp\left\{h(x,y)/r(x,y)\right\}$, where
$h(x,y)\in\mathbb{C}[x,y]$ and $r(x,y)\in\mathbb{C}[x,y]\setminus\mathbb{C}$.
By Lemma \ref{L:Lienard_exp_inv1} exponential invariants
$E(x,y)=\exp[h(x,y)]$ that could arise in an expression of the first integral
have divisible by $y$ cofactors, while invariant algebraic curves
$F_1(x,y)=0$ and $F_2(x,y)=0$ have cofactors  that are independent of~$y$.
Recall that there are no rational first integrals by the corollary to Theorem
\ref{T:L_two_curves}. Consequently, first integrals that are Darboux
functions do not have exponential factors.

Let us note that if the Li\'{e}nard differential system in question has only
one irreducible invariant algebraic curve, for example $F_1(x,y)=0$, then we
suppose that $d_2=0$ and vice versa.

The cofactors $\lambda_1(x,y)$ and $\lambda_2(x,y)$ of invariant algebraic
curves $F_1(x,y)=0$ and $F_2(x,y)=0$ are given in relation
\eqref{Lienard_exp_inv1_cof1}. First integral
\eqref{Lienard_Inegrability_Gen1} exists provided that the following
condition $d_1\lambda_1(x,y)+d_2\lambda_2(x,y)=0$ is satisfied. This
condition can be rewritten as
\begin{equation}
 \label{Lienard_Inegrability_Gen2}
d_1[f(x)+q_x(x)]+d_2[Nf(x)+(N-1)q_x(x)+p_x(x)]=0.
\end{equation}
The highest-degree term in expression \eqref{Lienard_Inegrability_Gen1} is
$d_2f_0x^m $. Since $f_0\neq0$, we get $d_2=0$. Consequently, the first
integral can be chosen as a rational function with $d_1=1$. Again we recall
the corollary to Theorem \ref{T:L_two_curves}, which excludes the existence
of rational first integrals.

\end{proof}

Further, our aim is to study the existence of non-autonomous Darboux first
integrals with a time-dependent exponential factor.

\begin{lemma}\label{L:Lienard_Integrability_time1}
A Li\'{e}nard differential system  \eqref{Lienard_gen} from family ($A$) has
a non-autonomous Darboux first integral  with a time-dependent
 exponential factor \eqref{FI_t_gen} if and only if  $\deg g=\deg f +1$ and the following condition
 \begin{equation}
 \label{Lienard_Inegrability_1g}
g(x)=\omega\left(\int_{0}^xf(s)ds-\omega x-z_0\right),\quad \omega,z_0\in\mathbb{C},\quad \omega\neq0
\end{equation}
is valid. A first integral  takes the form
 \begin{equation}
 \label{Lienard_Inegrability_1}
I(x,y,t)=\left(y+\int_{0}^xf(s)ds-\omega x-z_0\right)\exp(\omega t).
\end{equation}
There are no other independent non-autonomous Darboux first integrals with a
time-dependent exponential factor.
\end{lemma}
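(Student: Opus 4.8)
The plan is to invoke Theorem~\ref{T:L23_Non_aut_FI}, which reduces the existence of a first integral of the form \eqref{FI_t_gen} with $\omega\neq0$ to producing invariant algebraic curves $F_j=0$ together with an exponential invariant $E=\exp\{S/R\}$ whose cofactors satisfy \eqref{NDFI_gen_cond}, that is $\sum_j d_j\lambda_j(x,y)+\varrho(x,y)+\omega=0$. The first step is to eliminate the exponential factor. By Lemma~\ref{L:Lienard_exp_inv2} there are no exponential invariants $\exp\{h/r\}$ with non-constant $r$, so $R$ is constant and $E=\exp[h]$ with $h\in\mathbb{C}[x,y]$. For the Li\'enard field one has $P=y$ and $Q=-f(x)y-g(x)$, hence $\max(\deg P,\deg Q)=\deg g$ in family~($A$), so the cofactor $\varrho$ has degree at most $\deg g-1$ and Lemma~\ref{L:Lienard_exp_inv1} gives that $\varrho$ is divisible by $y$. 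Since every curve cofactor $\lambda_j$ is independent of $y$ by Theorem~\ref{T1:Lienard_gen}, matching the coefficients of the positive powers of $y$ in \eqref{NDFI_gen_cond}, which receive contributions only from $\varrho$, forces $\varrho\equiv0$; then $\mathcal{X}h=0$, so $h$ is a polynomial first integral and therefore a constant by the corollary to Theorem~\ref{T:L_two_curves}. The first integral thus reduces to $\prod_j F_j^{d_j}\exp(\omega t)$ and the condition to $\sum_j d_j\lambda_j(x)+\omega=0$ with $\omega\neq0$.

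Next I would read off the admissible cofactors from \eqref{Lienard1_cof}. By Theorem~\ref{T:L_two_curves} at most two irreducible invariant algebraic curves occur, and by \eqref{Lienard1_F_series_q} one has $q_x=-f_0x^m+o(x^m)$ and $p_x=o(x^m)$ in family~($A$). Hence a curve with $k=1$ has cofactor $-f_0x^m+o(x^m)$ of degree exactly $m$, whereas a curve with $k=0$ has cofactor $-N(f+q_x)$ of degree strictly less than $m$. Since $\sum_j d_j\lambda_j(x)$ must equal the nonzero constant $-\omega$, the uncancellable $x^m$-term shows that no $k=1$ curve may enter with a nonzero coefficient; only a $k=0$ curve contributes, and $\big(\sum_j d_jN_j\big)(f+q_x)=\omega$. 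As $\omega\neq0$, this yields $f(x)+q_x(x)=c$ for a nonzero constant $c$.

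The heart of the argument is to convert $f+q_x=c$ into the stated form of $g$. The contributing $k=0$ curve is the first-degree curve $y-Y(x)=0$ of Theorem~\ref{T:L_two_curves} with $Y=q+z_0$; being invariant, $Y$ solves \eqref{Lienard_y_x}, so $g=-Y(Y_x+f)$ while its cofactor is $\lambda_1=-(Y_x+f)$. Writing $\omega$ for the constant $c$, we have $Y_x+f=\omega$, and integrating gives $Y=-\int_0^xf(s)\,ds+\omega x+z_0$, whence $g=-\omega Y=\omega\big(\int_0^xf(s)\,ds-\omega x-z_0\big)$ and $\deg g=\deg Y=\deg f+1$; this is precisely \eqref{Lienard_Inegrability_1g}, and the first integral is $(y-Y)\exp(\omega t)$, i.e.\ \eqref{Lienard_Inegrability_1}. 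For the converse I would substitute this $g$ back: then $Y_x+f=\omega$ makes $y-Y=0$ an invariant line with cofactor $-\omega$, and $\frac{d}{dt}\big[(y-Y)\exp(\omega t)\big]=(\lambda_1+\omega)(y-Y)\exp(\omega t)=0$ verifies \eqref{Lienard_Inegrability_1}. Uniqueness follows because two functionally independent first integrals $\Phi_1\exp(\omega_1t)$ and $\Phi_2\exp(\omega_2t)$ with $\omega_1,\omega_2\neq0$ would give the autonomous Darboux first integral $\Phi_1^{\omega_2}\Phi_2^{-\omega_1}$, contradicting Theorem~\ref{T:Lienard_Integrability1}.

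The step I expect to be most delicate is the reduction in the first paragraph: one must check that the degree hypothesis of Lemma~\ref{L:Lienard_exp_inv1} is genuinely available and separate the $y$-dependence in \eqref{NDFI_gen_cond} cleanly enough to force $\varrho\equiv0$. A secondary technical point is to confirm that the $k=0$ contribution is carried by the explicit first-degree curve $y-Y=0$ rather than by some higher-degree curve, so that $g$ is pinned down exactly; the structure theorem~\ref{T:L_two_curves} together with $f+q_x=c$ should close this gap.
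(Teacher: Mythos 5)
Your proof is correct and follows essentially the same route as the paper's: use Theorem \ref{T:L23_Non_aut_FI} together with Lemmas \ref{L:Lienard_exp_inv1}, \ref{L:Lienard_exp_inv2} to strip the exponential factor, kill the exponent of the $k=1$ curve by matching the $x^m$-coefficient in the cofactor identity, integrate $f+q_x=\omega$ to recover $q$, and pin down $g$ and $\deg g=\deg f+1$ from the polynomial solution $y=q(x)+z_0$ of \eqref{Lienard_y_x}, exactly as in the paper. Two minor remarks: the degree hypothesis of Lemma \ref{L:Lienard_exp_inv1} should be justified from the cofactor identity itself (it forces $\varrho=-\omega-\sum_j d_j\lambda_j$, of degree at most $m$), not from a general bound on cofactors of exponential invariants, which fails under the paper's definition (e.g.\ $\exp(x^k)$ has cofactor $kx^{k-1}y$ of arbitrarily large degree); and your uniqueness argument via the autonomous combination $\Phi_1^{\omega_2}\Phi_2^{-\omega_1}$ and Theorem \ref{T:Lienard_Integrability1} differs from the paper's appeal to uniqueness of the curve $F_1(x,y)=0$, but it is equally valid — the paper itself uses that same ratio trick in the proof of Lemma \ref{L:Lienard_Integrability_t2}.
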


\begin{proof}
If the polynomial $g(x)$ is given by relation
\eqref{Lienard_Inegrability_1g}, then it is straightforward to verify that
function  \eqref{Lienard_Inegrability_1} is a non-autonomous first integral
of the related differential system.

Let us establish the necessity of condition \eqref{Lienard_Inegrability_1g}.
We repeat the reasoning given in the proof of Theorem
\ref{T:Lienard_Integrability1}. By Theorems~\ref{T:L23_Non_aut_FI} and
\ref{T:L_two_curves}  a non-autonomous first integral \eqref{FI_t_gen} reads
as
\begin{equation}
 \label{Lienard_Inegrability_t_Gen1}
 \begin{gathered}
I(x,y,t)=F_1^{d_1}(x,y)F_2^{d_2}(x,y)\exp(\omega t),\, d_1,d_2,\omega\in\mathbb{C},\, |d_1|+|d_2|>0,\, \omega\neq0.
\end{gathered}
\end{equation}
Now we need to study the following condition
\begin{equation}
 \label{Lienard_Inegrability_t_Gen2}
d_1[f(x)+q_x(x)]+d_2[Nf(x)+(N-1)q_x(x)+p_x(x)]-\omega =0.
\end{equation}
Similarly  to the case of Theorem \ref{T:Lienard_Integrability1}, we obtain
$d_2=0$. Further, it is without loss of generality to set $d_1=1$.
Considering relation  \eqref{Lienard_Inegrability_t_Gen2} as an ordinary
differential equation for the polynomial $q(x)$ and performing the
integration, we find its solution
\begin{equation}
 \label{Lienard_Inegrability_t_Gen2_2}
q(x)=-\int_{0}^xf(s)ds+\omega x
\end{equation}
satisfying the condition $q(0)=0$ presented in \eqref{Lienard1_F_series_q}.
Finally, we note that the invariant algebraic curve $F_1(x,y)=0$ given by the
polynomial $F_1(x,y)=y-q(x)-z_0$ exists whenever the series of type ($I$) defined
in \eqref{Lienard1_F_series} terminates at the zero term. In this case
equation \eqref{Lienard_y_x} has a polynomial solution $y(x)=q(x)+z_0$.
Substituting our results into equation \eqref{Lienard_y_x}, we find expression
\eqref{Lienard_Inegrability_1g}. Consequently, we obtain the following
equality $\deg g= \deg f +1 $. The absence  of other independent
non-autonomous Darboux first integrals \eqref{FI_t_gen} follows from the
previous considerations and the uniqueness of the invariant algebraic curve
$F_1(x,y)=0$.

\end{proof}

It follows  from Lemma \ref{L:Lienard_Integrability_time1} that  ordinary
differential equation \eqref{Lienard_y_x} related to a Li\'{e}nard
differential system from family ($A$) possessing a non-autonomous Darboux
first integral
 has a polynomial solution
$y(x)=q(x)+z_0$.

We recall that integrating factors and Jacobi last multipliers with a
constant ratio belong to the same equivalence class. We do not distinguish between
them.

\begin{theorem}\label{T:Lienard_Integrability2}
A Li\'{e}nard differential system  \eqref{Lienard_gen} from family ($A$) is
Liouvillian integrable if and only if the following assertions are
 valid:
\begin{enumerate}

\item the system under consideration possesses two distinct irreducible
    invariant algebraic curves $F_1(x,y)=0$ and $F_2(x,y)=0$, where the
    polynomial $F_1(x,y)=y-q(x)-z_0$ is given by
    expression~\eqref{Lienard1_F} with $N=1$, $k=0$ and the polynomial
    $F_2(x,y)\in\mathbb{C}[x,y]$ takes the form \eqref{Lienard1_F} with
    $N\in\mathbb{N}$, $k=1$;

\item the polynomials $q(x)$ and $p(x)$ giving initial parts of the
    Puiseux series near the point $x=\infty$ that solve equation
    \eqref{Lienard_y_x} identically satisfy
     the condition
\begin{equation}
 \label{Lienard_Inegrability_C1_cond}
(n-m)[f(x)+q_x(x)]+(m+1)p_x(x)=0.
\end{equation}

\end{enumerate}
The related  Li\'{e}nard differential system has the unique Darboux
integrating factor
   \begin{equation}
 \label{Lienard_Inegrability_Int_fact}
M(x,y)=\frac{(y-q(x)-z_0)^{\frac{N(m+1)-(n+1)}{m+1}}}{F_2(x,y)}.
\end{equation}

\end{theorem}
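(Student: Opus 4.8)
The plan is to establish Liouvillian integrability via Theorem~\ref{T:Liouville}: a Li\'enard system from family~($A$) is Liouvillian integrable if and only if it admits a Darboux integrating factor. So the theorem reduces to showing that the integrating factor~\eqref{Lienard_Inegrability_Int_fact} exists precisely when assertions~(1) and~(2) hold. First I would recall from Theorem~\ref{T:L23_Non_aut_JLM} (with $\omega=0$) that a Darboux integrating factor $M(x,y)=F_1^{d_1}F_2^{d_2}\exp\{S/R\}$ exists if and only if its constituent curves are invariant and the cofactor identity
\begin{equation}
 \label{proof_plan_div}
d_1\lambda_1(x,y)+d_2\lambda_2(x,y)+\varrho(x,y)=-\mathrm{div}\,\mathcal{X}=f(x)
\end{equation}
holds. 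By Lemma~\ref{L:Lienard_exp_inv2} there are no exponential invariants with non-polynomial argument, and by Lemma~\ref{L:Lienard_exp_inv1} any exponential invariant $\exp[h(x,y)]$ has a cofactor divisible by $y$; since $f(x)$ and the curve-cofactors from~\eqref{Lienard_exp_inv1_cof1} are all independent of $y$, no genuine exponential factor can contribute. Hence the integrating factor must be purely Darboux in the two invariant algebraic curves, forcing $\varrho\equiv0$ in~\eqref{proof_plan_div}.

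\textbf{Necessity.} Assuming Liouvillian integrability, Theorem~\ref{T:Lienard_nonintegrability} already guarantees at least one invariant algebraic curve, and Theorem~\ref{T:L_two_curves} says there are at most two, of the stated shapes $F_1=y-q(x)-z_0$ (the $N=1,k=0$ curve) and $F_2$ (the $N\in\mathbb{N},k=1$ curve). I would substitute the explicit cofactors~\eqref{Lienard_exp_inv1_cof1} into~\eqref{proof_plan_div} with $\varrho\equiv0$, obtaining
\begin{equation}
 \label{proof_plan_cof}
d_1[f+q_x]+d_2[Nf+(N-1)q_x+p_x]=f.
\end{equation}
Matching the highest-degree term: $\lambda_1=o(x^m)$ while $\lambda_2=-f_0x^m+o(x^m)$ and $f=f_0x^m+\ldots$, so the coefficient of $x^m$ gives $-d_2f_0=f_0$, i.e. $d_2=-1$. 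Then I need to show $F_2$ cannot be absent (so two curves really are required); if $d_2=0$ were forced the identity would read $d_1[f+q_x]=f$, which fails at the next order since $q_x$ carries $x^m$ with a distinct coefficient, giving a contradiction. With $d_2=-1$ fixed, computing the degree of $F_2$ in $y$ via~\eqref{Lienard1_F} pins down the exponent $d_1$ to the value appearing in~\eqref{Lienard_Inegrability_Int_fact}, namely $d_1=[N(m+1)-(n+1)]/(m+1)$, and the residual cofactor identity~\eqref{proof_plan_cof} collapses exactly to condition~\eqref{Lienard_Inegrability_C1_cond}.

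\textbf{Sufficiency.} Conversely, given two such curves satisfying~\eqref{Lienard_Inegrability_C1_cond}, I would simply verify that $M(x,y)$ in~\eqref{Lienard_Inegrability_Int_fact} satisfies $\mathcal{X}M=-\mathrm{div}\,\mathcal{X}\,M=f(x)M$ by direct substitution of the cofactors, which is the same cofactor identity run backwards; this produces a Darboux integrating factor and hence, by Theorem~\ref{T:Liouville}, a Liouvillian first integral. \textbf{The main obstacle} I anticipate is the bookkeeping in the necessity direction: ruling out the single-curve case and extracting condition~\eqref{Lienard_Inegrability_C1_cond} require carefully tracking the subdominant terms of $\lambda_1,\lambda_2$ beyond the leading $x^m$, using the precise normalizations of $q(x)$ and $p(x)$ in~\eqref{Lienard1_F_series_q} to see that~\eqref{proof_plan_cof} with $d_2=-1$ is solvable for a constant $d_1$ if and only if the $x$-dependent remainder~\eqref{Lienard_Inegrability_C1_cond} vanishes identically. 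The uniqueness of $M$ follows from the uniqueness of the two curves in Theorem~\ref{T:L_two_curves} together with the exclusion of exponential factors.
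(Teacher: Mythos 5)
You set up the proof exactly as the paper does --- Theorem \ref{T:Liouville}, exclusion of exponential invariants through Lemmas \ref{L:Lienard_exp_inv1} and \ref{L:Lienard_exp_inv2}, the ansatz $M=F_1^{d_1}F_2^{d_2}$, and the balance of $x^m$-coefficients giving $d_2=-1$ --- but the core of the necessity direction is missing. (A minor point first: your displayed cofactor identity should have $-f$ on the right-hand side once the cofactors are expanded as $\lambda_1=-[f+q_x]$ and $\lambda_2=-[Nf+(N-1)q_x+p_x]$; your leading-term computation tacitly uses the correct signs.) With $d_2=-1$ the condition collapses to
\begin{equation*}
(d_1-N+1)\,[f(x)+q_x(x)]=p_x(x),
\end{equation*}
which by itself asserts only that $f+q_x$ and $p_x$ are proportional, with $d_1$ encoding an undetermined constant of proportionality. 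Your claim that ``computing the degree of $F_2$ in $y$ pins down $d_1$'' cannot work: $N$ enters only through the combination $d_1-N+1$, so no degree bookkeeping determines $d_1$. What fixes the constant --- and this input appears nowhere in your argument --- is that the roots of $F_1$ and $F_2$ solve equation \eqref{Lienard_y_x} (Lemma \ref{L_Puiseux series}): since $y=q(x)+z_0$ is then a polynomial solution, $f+q_x=-g/(q+z_0)$ has leading term $\frac{(m+1)g_0}{f_0}x^{n-m-1}$, while \eqref{Lienard1_F_series_q} gives $p_x=-\frac{(n-m)g_0}{f_0}x^{n-m-1}+\ldots$; matching leading coefficients forces $d_1-N+1=-(n-m)/(m+1)$, i.e.\ exactly the exponent in \eqref{Lienard_Inegrability_Int_fact}, and substituting this value back converts mere proportionality into condition \eqref{Lienard_Inegrability_C1_cond}. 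This is the paper's computation \eqref{Lienard_Inegrability_C1_n1}--\eqref{Lienard_Inegrability_C1_n6}, carried out there with the full Puiseux series in $\mathbb{C}_{\infty}\{x\}$. Note also that the normalizations of $q$ and $p$ alone (your proposed fix in the final paragraph) do not suffice: the $x^m$-terms of $f$ and $q_x$ cancel, so the leading coefficient of $f+q_x$ is invisible without the ODE relation.

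A second, related omission: you rule out $d_2=0$ but never rule out $d_1=0$, i.e.\ an integrating factor $F_2^{-1}$ with the curve $F_1(x,y)=0$ absent; without this, assertion (1) of the theorem --- that \emph{two} distinct curves must exist --- is not established. In the paper this follows from the computed value of $d_1$: since $m<n<2m+1$, the number $(n-m)/(m+1)$ lies strictly in $(0,1)$, so $d_1=N-1-(n-m)/(m+1)$ is never zero (indeed never an integer), which forces the polynomial solution $y=q(x)+z_0$, and hence the curve $F_1(x,y)=0$, to exist. Your sufficiency verification and the uniqueness argument are fine, but both presuppose the explicit value of $d_1$ that the necessity argument was supposed to deliver.
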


\begin{proof}
Using Theorem \ref{T:Liouville}, we see that a Liouvillian integrable
 differential system  should have a Darboux integrating factor.
Arguing as in the proof of Theorem~\ref{T:Lienard_Integrability1}, we
conclude that such an integrating factor does not involve exponential
invariants and is of the form
\begin{equation}
 \label{Lienard_Inegrability_IF1}
M(x,y)=F_1^{d_1}(x,y)F_2^{d_2}(x,y),\quad d_1,d_2\in\mathbb{C},\quad |d_1|+|d_2|>0.
\end{equation}
In this expression the polynomials $F_1(x,y)$ and $F_2(x,y)$ define invariant
algebraic curves. The polynomial $F_1(x,y)$ is given by expression
\eqref{Lienard1_F} with $N=1$ and $k=0$. The polynomial $F_2(x,y)$ is
produced by the same expression with $N\in\mathbb{N}$ and $k=1$. Again we
note that if the Li\'{e}nard differential system in question has only one
irreducible invariant algebraic curve, for example $F_1(x,y)=0$, then we set
$d_2=0$ and vice versa. Calculating the divergence of vector field
\eqref{Lienard_Inegrability_VF} yields $\text{div}\,\mathcal{X}=-f(x)$. Hence
the necessary and sufficient condition for Darboux integrating
factor~\eqref{Lienard_Inegrability_IF1} to exist takes the form
$d_1\lambda_1(x,y)+d_2\lambda_2(x,y)=f(x)$, where $\lambda_1(x,y)$ and
$\lambda_2(x,y)$ are the cofactors of invariant algebraic curves $F_1(x,y)=0$
and $F_2(x,y)=0$, respectively. These cofactors are given in expression
\eqref{Lienard_exp_inv1_cof1}. The condition enabling the existence of the
Darboux integrating factor explicitly reads as
\begin{equation}
 \label{Lienard_Inegrability_IF2}
d_1[f(x)+q_x(x)]+d_2[Nf(x)+(N-1)q_x(x)+p_x(x)]=-f(x).
\end{equation}
Balancing the highest-degree terms in this expression, we get $d_2=-1$.
Further, we recall that the invariant algebraic curve $F_1(x,y)=0$ is given
by the polynomial $F_1(x,y)=y-q(x)-z_0$. If $d_1=0$, then the curve
$F_1(x,y)=0$ either does not exist or does not enter  the explicit expression
\eqref{Lienard_Inegrability_IF1} of the integrating factor.

Now let us consider the representation of the cofactors $\lambda_1(x,y)$ and
$\lambda_2(x,y)$ in the ring $\mathbb{C}_{\infty}\{x\}[y]$. They are of the
form
\begin{equation}
 \label{Lienard_Inegrability_C1_n1}
 \begin{gathered}
\lambda_1(x,y)=-f(x)-q_x(x),\,
\lambda_2(x,y)=-\sum_{j=1}^{N-1}\left[f(x)+\left(y^{(1)}_{j,\infty}\right)_x\right]-\left[f(x)+\left(y^{(2)}_{N,\infty}\right)_x\right].
\end{gathered}
\end{equation}
Substituting  these expressions into the condition
$d_1\lambda_1(x,y)+d_2\lambda_2(x,y)=f(x)$ with $d_2=-1$, we find
\begin{equation}
 \label{Lienard_Inegrability_C1_n2}
 \begin{gathered}
d_1[f(x)+q_x(x)]-\sum_{j=1}^{N-1}\left[f(x)+\left(y^{(1)}_{j,\infty}\right)_x\right]=\left(y^{(2)}_{N,\infty}\right)_x.
\end{gathered}
\end{equation}
The Puiseux series $y(x)=y^{(1)}_{j,\infty}(x)$ and the polynomial
$y(x)=q(x)+z_0$ solve equation \eqref{Lienard_y_x}. Thus, we obtain
\begin{equation}
 \label{Lienard_Inegrability_C1_n3}
 \begin{gathered}
f(x)+q_x(x)=-\frac{g(x)}{q(x)+z_0},\quad f(x)+\left(y^{(1)}_{j,\infty}\right)_x=-\frac{g(x)}{y^{(1)}_{j,\infty}}.
\end{gathered}
\end{equation}
Again we suppose that if there is no polynomial solution $y(x)=q(x)+z_0$,
then $d_1=0$. Using expressions \eqref{Lienard_Inegrability_C1_n3}, we
rewrite condition \eqref{Lienard_Inegrability_C1_n2} in the form
\begin{equation}
 \label{Lienard_Inegrability_C1_n4}
 \begin{gathered}
g(x)\left[\sum_{j=1}^{N-1}\frac{1}{y^{(1)}_{j,\infty}}-\frac{d_1}{q(x)+z_0}\right]=\left(y^{(2)}_{N,\infty}\right)_x.
\end{gathered}
\end{equation}
Let us find the highest-order terms in a neighborhood of the point
$x=\infty$. Using the asymptotic formulae
\begin{equation}
 \label{Lienard_Inegrability_C1_n5}
 \begin{gathered}
g(x)=g_0x^n+o(x^n),\quad q(x)=-\frac{f_0}{m+1}x^{m+1}+o(x^{m+1}),\quad x\rightarrow\infty\hfill\\
y^{(1)}_{j,\infty}(x)=-\frac{f_0}{m+1}x^{m+1}+o(x^{m+1}),\quad y^{(2)}_{N,\infty}(x)=-\frac{g_0}{f_0}x^{n-m}+o(x^{n-m}),\quad x\rightarrow\infty,
\end{gathered}
\end{equation}
we collect the coefficients of $x^{n-m-1}$ in expression
\eqref{Lienard_Inegrability_C1_n4}. This yields
\begin{equation}
 \label{Lienard_Inegrability_C1_n6}
d_1=N-1-\frac{n-m}{m+1}.
\end{equation}
Using inequalities $m<n<2m+1$, we see that the parameter $d_1$ cannot be
zero. Substituting relation  \eqref{Lienard_Inegrability_C1_n6} and $d_2=-1$
into expressions  \eqref{Lienard_Inegrability_IF1} and
\eqref{Lienard_Inegrability_IF2}, we find condition
\eqref{Lienard_Inegrability_C1_cond} and the Darboux integrating factor as
given in  \eqref{Lienard_Inegrability_IF1}.

\end{proof}

\textit{Remark.} Condition  \eqref{Lienard_Inegrability_C1_cond} is
identically satisfied whenever $\deg g=\deg f +1$. Indeed, if $n=m+1$, then
we get
\begin{equation}
 \label{Lienard_Inegrability_C1_n7}
f(x)+q_x(x)=\frac{(m+1)g_0}{f_0},\quad p(x)=-\frac{g_0}{f_0}x.
\end{equation}
Consequently, Li\'{e}nard
differential systems satisfying the restriction $\deg g=\deg f +1$ and possessing two distinct irreducible invariant algebraic curves are always Liouvillian integrable.

We see from Theorem \ref{T:Lienard_Integrability2} that equation
\eqref{Lienard_y_x} related to a Liouvillian integrable Li\'{e}nard
differential system from family ($A$) has a polynomial
 solution given by the expression $y(x)=q(x)+z_0$. In addition, using
condition \eqref{Lienard_Inegrability_C1_cond} and equation
\eqref{Lienard_y_x}, we can represent the polynomials $f(x)$ and $g(x)$ in
the integrable cases as
\begin{equation}
 \label{Lienard_Inegrability_C1_n8}
f(x)=-q_x(x)-\frac{m+1}{n-m}p_x(x),\quad g(x)=\frac{m+1}{n-m}[q(x)+z_0]p_x(x).
\end{equation}
Recall that the polynomial $q(x)$ is of degree $m+1$ and the polynomial
$p(x)$ is of degree $n-m$. It follows from Theorem
\ref{T:Lienard_Integrability2} that Li\'{e}nard differential systems
\eqref{Lienard_gen} from family ($A$) are  Liouvillian integrable  if and
only if  the polynomials $f(x)$ and $g(x)$ are given by
expressions~\eqref{Lienard_Inegrability_C1_n8} and the systems possess an
invariant algebraic curve with the generating polynomial taking the form
\eqref{Lienard1_F}, where
    $N\in\mathbb{N}$ and $k=1$.

Now let us study the integrability of  Li\'{e}nard differential systems such
that the related equation \eqref{Lienard_y_x} has two distinct polynomial
solutions. The following theorem is valid.

\begin{theorem}\label{T:Lienard_IntegrabilityA_partial}
A Li\'{e}nard differential system  \eqref{Lienard_gen} from family ($A$) with
two distinct invariant algebraic curves given by first-degree polynomials
with respect to $y$ is Liouvillian integrable if and only if the system is of
the form
\begin{equation}
 \label{Lienard_Inegrability_Apartial_1}
x_t=y,\quad y_t=\left[k\beta v^{k-1}(x)+(k+l)v^{l-1}(x)\right]v_xy-k\left[\beta v^k(x)+v^l(x)\right]v^{l-1}(x)v_x,
\end{equation}
where $\beta\in\mathbb{C}\setminus\{0\}$, $v(x)$ is a polynomial of degree
$(n-m)/l$, $k$ and $l$ are relatively prime natural numbers such that the
 restriction  $(m+1)l=(n-m)k$
 holds. The related  Li\'{e}nard differential system has the unique
Darboux integrating factor
   \begin{equation}
 \label{Lienard_Inegrability_Int_Fact_A_p1}
M(x,y)=\{y-\beta v^k(x)-v^l(x)\}^{-\frac{l}{k}}\left\{y-v^l(x)\right\}^{-1}
\end{equation}
and the invariant algebraic curves $y-\beta v^k(x)-v^l(x)=0$, $y-v^l(x)=0$. A
Liouvillian first integral is of the form
 \begin{equation}
 \begin{gathered}
 \label{Lienard_Inegrability_FI_A_p1}
I(x,y)=\frac{k\beta^{\frac{l}{k}}}{k-l}\{y-\beta v^k(x)-v^l(x)\}^{\frac{k-l}{k}}+
\sum_{j=0}^{m}\exp\left[-\frac{\pi l(2j+1)i}{k}\right]\\
\times\ln\left\{\{y-\beta v^k(x)-v^l(x)\}^{\frac{1}{m+1}}
-\exp\left[\frac{\pi(2j+1)i}{m+1}\right][\beta v^k(x)]^{\frac{1}{m+1}}\right\}.
\end{gathered}
\end{equation}
\end{theorem}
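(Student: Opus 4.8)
The plan is to prove the two implications separately, treating the explicit data ($M$, $I$, and the two curves) as the content of the forward direction. For the reverse implication I would verify directly that a system of the form \eqref{Lienard_Inegrability_Apartial_1} is Liouvillian integrable. Substituting $y=\beta v^k+v^l$ and $y=v^l$ into the associated Abel equation \eqref{Lienard_y_x} confirms that both are polynomial solutions; here the degree bookkeeping $\deg v=(n-m)/l$ together with $(m+1)l=(n-m)k$ places the first solution at degree $m+1$ (type (I)) and the second at degree $n-m$ (type (II)). Hence $y-\beta v^k-v^l=0$ and $y-v^l=0$ are invariant algebraic curves with cofactors $\lambda_{1}=-(f+y_{1}')$ and $\lambda_{2}=-(f+y_{2}')$, and one checks the integrating-factor balance $d_1\lambda_1+d_2\lambda_2=-\text{div}\,\mathcal{X}=f$ with $d_1=-l/k$, $d_2=-1$. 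This shows $M$ in \eqref{Lienard_Inegrability_Int_Fact_A_p1} is a Darboux integrating factor, so Liouvillian integrability follows from Theorem \ref{T:Liouville}.

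For the forward implication I would argue as in Theorems \ref{T:Lienard_Integrability1} and \ref{T:Lienard_Integrability2}. Let $y_1(x)$ (degree $m+1$) and $y_2(x)$ (degree $n-m$) be the two polynomial solutions of \eqref{Lienard_y_x} giving the two first-degree curves. By Theorem \ref{T:L_two_curves} there are no further irreducible invariant algebraic curves, and Lemmas \ref{L:Lienard_exp_inv1} and \ref{L:Lienard_exp_inv2} exclude any exponential contribution, so the Darboux integrating factor guaranteed by Theorem \ref{T:Liouville} must read $M=(y-y_1)^{d_1}(y-y_2)^{d_2}$. Writing the cofactors as $\lambda_i=-(f+y_i')$ and balancing the top-degree terms in $d_1\lambda_1+d_2\lambda_2=f$ forces $d_2=-1$, after which the balance collapses to the single relation $y_2'=d_1(f+y_1')$.

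This relation is the heart of the proof. Using $f+y_i'=-g/y_i$ (valid because both $y_i$ solve \eqref{Lienard_y_x}) to eliminate $f$ and $g$, I would reduce it to $y_1y_2'+d_1y_2y_1'=(1+d_1)y_2y_2'$, that is $Wy_2'=-d_1y_2W'$ with $W=y_1-y_2$. Integrating this logarithmic-derivative identity gives $y_2=\mathrm{const}\cdot W^{-d_1}$; comparing degrees fixes $d_1=-(n-m)/(m+1)$, and writing this in lowest terms as $-l/k$ yields both the resonance $(m+1)l=(n-m)k$ and the polynomial identity $y_2^k=\mathrm{const}\cdot W^l$. Unique factorization in $\mathbb{C}[x]$ together with $\gcd(k,l)=1$ then force $y_2=v^l$ and $y_1-y_2=\beta v^k$ for a polynomial $v$ of degree $(n-m)/l$ and a constant $\beta\neq0$; substituting $f=y_2'/d_1-y_1'$ and $g=-y_1y_2'/d_1$ reconstructs precisely system \eqref{Lienard_Inegrability_Apartial_1}.

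I expect two obstacles. The first, mostly bookkeeping, is to justify that the integrating factor is purely multiplicative in the two curves with constant exponents (assembling Theorem \ref{T:L_two_curves} with Lemmas \ref{L:Lienard_exp_inv1} and \ref{L:Lienard_exp_inv2}) and to carry the degree and coprimality arguments through to $(m+1)l=(n-m)k$ with $\deg v$ a genuine integer. The harder and more laborious step is producing the explicit Liouvillian first integral \eqref{Lienard_Inegrability_FI_A_p1}: it is obtained by integrating the exact one-form $M\,(P\,dy-Q\,dx)=M\,(y\,dy+(fy+g)\,dx)$, in which the power term $\frac{k\beta^{l/k}}{k-l}(y-\beta v^k-v^l)^{(k-l)/k}$ collects the elementary part and the sum of logarithms arises from a partial-fraction decomposition along the factorization $y-v^l=\prod_{j=0}^{m}\bigl(\{y-\beta v^k-v^l\}^{1/(m+1)}-\zeta_j[\beta v^k]^{1/(m+1)}\bigr)$ with $\zeta_j=\exp[\pi(2j+1)i/(m+1)]$ ranging over the roots of $z^{m+1}+1=0$. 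This final evaluation is routine but intricate, and I would relegate it to a direct verification.
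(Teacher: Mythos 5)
Your proposal is correct and follows essentially the same route as the paper: the paper invokes Theorem \ref{T:Lienard_Integrability2} (whose proof contains exactly your forward-direction steps — exclusion of exponential factors via Lemmas \ref{L:Lienard_exp_inv1} and \ref{L:Lienard_exp_inv2}, the ansatz $M=F_1^{d_1}F_2^{d_2}$, the balance giving $d_2=-1$ and $d_1=-(n-m)/(m+1)$), then derives the relation between the two polynomial solutions and applies the same coprimality/unique-factorization argument to produce $v$, $\beta$, $k$, $l$. Your identity $Wy_2'=-d_1y_2W'$ integrates to $y_1=y_2+\beta\, y_2^{(m+1)/(n-m)}$, which is precisely the paper's relation \eqref{Lienard_Inegrability_C1_n9}, and your final step of integrating the exact form $M\left(y\,dy+\{f(x)y+g(x)\}\,dx\right)$ coincides with the paper's line integral \eqref{Lienard_InegrabilityA_first_integrals}.
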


\begin{proof}
We begin the proof by noting that we use novel designations for the
polynomial solutions of equation \eqref{Lienard_y_x}. We set
$\tilde{q}(x)=q(x)+z_0$ and $\tilde{p}(x)=p(x)+z_1$.
Condition~\eqref{Lienard_Inegrability_C1_cond} and equation
\eqref{Lienard_y_x} provide explicit expressions
\eqref{Lienard_Inegrability_C1_n8} for the polynomials $f(x)$ and $g(x)$.
Substituting the relation $y(x)=\tilde{p}(x)$ and expressions
\eqref{Lienard_Inegrability_C1_n8} into equation~\eqref{Lienard_y_x}, we
integrate the resulting equation with respect to the polynomial
$\tilde{q}(x)$. This way, we get
\begin{equation}
 \label{Lienard_Inegrability_C1_n9}
\tilde{q}(x)=\beta \tilde{p}^{\frac{m+1}{n-m}}(x)+\tilde{p}(x),
\end{equation}
where $\beta\in\mathbb{C}$ is a constant of integration. Recalling the fact
that $\tilde{q}(x)$ is a polynomial of degree $m+1$ and $\tilde{p}(x)$ is a
polynomial of degree $n-m$, we find that $\beta\neq0$. Introducing relatively
prime natural numbers $k$ and $l$ according to the rule $(m+1)l=(n-m)k$, we
represent the polynomials $\tilde{p}(x)$ and $\tilde{q}(x)$ as
\begin{equation}
 \label{Lienard_Inegrability_C1_p_q}
\tilde{p}(x)=v^l(x),\quad \tilde{q}(x)=\beta v^k(x) + v^l(x).
\end{equation}
In this expression $v(x)$ is an arbitrary polynomial of degree $(n-m)/l$.
  Substituting $N=1$ into expression~\eqref{Lienard_Inegrability_Int_fact}, we find the integrating factor as
given in relation \eqref{Lienard_Inegrability_Int_Fact_A_p1}. Finally, we
calculate  the line integral
\begin{equation}
 \label{Lienard_InegrabilityA_first_integrals}
 \begin{gathered}
I(x,y)=\int_{(x_0,y_0)}^{(x,y)}M(x,y)\left[ydy+\left\{f(x)y+g(x)\right\}dx\right]
\end{gathered}
\end{equation}
and obtain first integral  \eqref{Lienard_Inegrability_FI_A_p1}, where $i$ is
the imaginary unit.
\end{proof}
\textit{Remark.}  The family of systems \eqref{Lienard_Inegrability_Apartial_1} can be transformed to the following simple form
\begin{equation}
 \label{Lienard_Inegrability_Apartial_1_Sundman}
s_{\tau}=z,\quad z_{\tau}=\left[k\beta s^{k-1}+(k+l)s^{l-1}\right]z
-k\left[\beta s^k+s^l\right]s^{l-1}
\end{equation}
via the generalized Sundman transformation $s(\tau)=v(x)$, $z(\tau)=y$, $d\tau=v_x(x)dt$. Substituting $v(x)=s$, $y=z$ into \eqref{Lienard_Inegrability_FI_A_p1}, we find a Liouvillian first integral for systems \eqref{Lienard_Inegrability_Apartial_1_Sundman}.

 Let us demonstrate that for any fixed
 degrees of the polynomials $f(x)$ and $g(x)$ there exist Li\'{e}nard
differential systems \eqref{Lienard_gen} from family ($A$) that have
Liouvillian first integrals. With this aim we  suppose that the polynomial
$\tilde{p}(x)$ takes the following form $\tilde{p}(x)=x^{n-m}$. We find the
polynomial $\tilde{q}(x)$ using expression
\eqref{Lienard_Inegrability_C1_n9}. The result is $\tilde{q}(x)=\beta
x^{m+1}+x^{n-m}$. Thus, we conclude that the following family of Li\'{e}nard
differential systems
\begin{equation}
 \label{Lienard_Inegrability_C1_n10}
 \begin{gathered}
x_t=y,\quad y_t=\left[(m+1)\beta x^m+(n+1)x^{n-m-1}\right]y
-(m+1)\left(\beta x^n+x^{2n-2m-1}\right)
\end{gathered}
\end{equation}
 is Liouvillian integrable. The related Darboux integrating factor reads as
\begin{equation}
 \label{Lienard_Inegrability_C1_n10_DIF}
 \begin{gathered}
M(x,y)=\left(y-x^{n-m}\right)^{-1}\left(y-\beta x^{m+1}-x^{n-m}\right)^{\frac{m-n}{m+1}}.
\end{gathered}
\end{equation}
In addition, if the numbers $n$
and $m$ are the following $n=l(k+1)-1$ and $m=lk-1$, where $l$,
$k\in\mathbb{N}$, $k>1$, then relations \eqref{Lienard_Inegrability_C1_n8}
and \eqref{Lienard_Inegrability_C1_n9} produce  Liouvillian integrable
systems
\begin{equation}
 \label{Lienard_Inegrability_C1_n11}
 \begin{gathered}
x_t=y,\quad y_t=\left[k\beta \tilde{p}^{k-1}(x)+k+1\right]\tilde{p}_x(x)y
-k\left(\beta\tilde{p}^{k-1}(x)+1\right)\tilde{p}(x)\tilde{p}_x(x),
\end{gathered}
\end{equation}
where $\tilde{p}(x)$ is a polynomial of degree $n-m=l$. Using expression
\eqref{Lienard_Inegrability_Int_Fact_A_p1}, we see that
systems~\eqref{Lienard_Inegrability_C1_n11} possess the  Darboux integrating
factor
\begin{equation}
 \label{Lienard_Inegrability_C1_n12}
 \begin{gathered}
M(x,y)=\{y-\beta\tilde{p}^k(x)-\tilde{p}(x)\}^{-\frac1{k}}\{y-\tilde{p}(x)\}^{-1}.
\end{gathered}
\end{equation}
A related Liouvillian first integral is fairly simple in the case $k=2$. Let
us explicitly present another expression of a first integral:
\begin{equation}
 \label{Lienard_Inegrability_C1_n13}
 \begin{gathered}
I(x,y)=\text{arctanh}\,\left\{\sqrt{\frac{\beta \tilde{p}^{\,2}(x)}{F(x,y)}}\right\}+\sqrt{\beta F(x,y)},\quad
F(x,y)=\beta\tilde{p}^{\,2}(x)+\tilde{p}(x)-y.
\end{gathered}
\end{equation}
Finally, let us note that there exist Liouvillian integrable Li\'{e}nard
differential systems from family ($A$) such that the polynomial $F_2(x,y)$ in
expression \eqref{Lienard_Inegrability_Int_fact} is of degree with respect to
$y$ greater than $1$. In fact, the degree with respect to $y$ of the
polynomial $F_2(x,y)$ can be an arbitrary natural number. This fact was
established in article \cite{Demina13}.

Our next step is to investigate the existence of non-autonomous
Darboux--Jacobi last
 multipliers. The cases $\deg g=\deg f +1$ and $\deg g\neq \deg f +1$ will be considered
 separately.

\begin{lemma}\label{L:Lienard_Integrability_t2}
A Li\'{e}nard differential system \eqref{Lienard_gen} satisfying
 the condition $\deg f+1<\deg g<2\deg f+1$ has a non-autonomous Darboux--Jacobi last
 multiplier of the form \eqref{JLM_gen} if and only if the following assertions are
 valid:
\begin{enumerate}

\item the system under consideration possesses two distinct irreducible
    invariant algebraic curves $F_1(x,y)=0$ and $F_2(x,y)=0$, where the
    polynomial $F_1(x,y)=y-q(x)-z_0$ is given by
    expression~\eqref{Lienard1_F} with $N=1$, $k=0$ and the polynomial
    $F_2(x,y)\in\mathbb{C}[x,y]$ reads as \eqref{Lienard1_F} with
    $N\in\mathbb{N}$, $k=1$;

\item the polynomials $q(x)$ and $p(x)$ giving initial parts of the
    Puiseux series near the point $x=\infty$ that solve equation
    \eqref{Lienard_y_x} identically satisfy the condition
\begin{equation}
 \label{Lienard_Inegrability_C1_cond_time}
(n-m)[f(x)+q_x(x)]+(m+1)[p_x(x)+\omega]=0,
\end{equation}
where $\omega\in\mathbb{C}\setminus\{0\}$ is a constant.

\end{enumerate}
 The non-autonomous Darboux--Jacobi last
 multiplier is unique and takes the form
   \begin{equation}
 \label{Lienard_Inegrability_Int_fact_time}
M(x,y,t)=\frac{(y-q(x)-z_0)^{\frac{N(m+1)-(n+1)}{m+1}}}{F_2(x,y)}\exp(\omega t).
\end{equation}
\end{lemma}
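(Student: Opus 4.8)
The plan is to follow the proof of Theorem~\ref{T:Lienard_Integrability2} almost verbatim, carrying the time frequency $\omega$ through every step and tracking precisely where it enters the degree count. By Theorem~\ref{T:L23_Non_aut_JLM}, a multiplier of the form~\eqref{JLM_gen} exists if and only if its algebraic factors give invariant algebraic curves, its exponential factor is an exponential invariant, and the balance~\eqref{JLM_gen_cond} holds with $-\,\mathrm{div}\,\mathcal{X}=f(x)$. First I would discard the exponential factors: Lemma~\ref{L:Lienard_exp_inv2} forbids exponential invariants with a non-constant denominator, while Lemma~\ref{L:Lienard_exp_inv1} shows that any invariant $\exp[h(x,y)]$ with $h\in\mathbb{C}[x,y]$ has a cofactor divisible by $y$, which cannot offset the $y$-independent divergence $-f(x)$. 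Hence $\varrho\equiv0$, and by Theorem~\ref{T:L_two_curves} the multiplier must reduce to $M=F_1^{d_1}F_2^{d_2}\exp(\omega t)$ with $F_1=y-q(x)-z_0$ and $F_2$ of the form~\eqref{Lienard1_F} with $k=1$.

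Next I would run the two successive balances. Condition~\eqref{JLM_gen_cond} reads $d_1\lambda_1+d_2\lambda_2+\omega=f$, and since the cofactors~\eqref{Lienard_exp_inv1_cof1} satisfy $\lambda_1=o(x^m)$ and $\lambda_2=-f_0x^m+o(x^m)$, collecting the $x^m$ terms forces $d_2=-1$. I would then pass to the Puiseux representation~\eqref{Lienard_Inegrability_C1_n1}, use the relations~\eqref{Lienard_Inegrability_C1_n3}, and arrive at the analogue of~\eqref{Lienard_Inegrability_C1_n4}, namely $g(x)\bigl[\sum_{j=1}^{N-1}(y^{(1)}_{j,\infty})^{-1}-d_1(q+z_0)^{-1}\bigr]=(y^{(2)}_{N,\infty})_x+\omega$. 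The decisive point is that the hypothesis $\deg g>\deg f+1$ gives $n-m-1\geq1$, so the constant $\omega$ sits strictly below the leading order $x^{n-m-1}$; collecting the coefficient of $x^{n-m-1}$ therefore yields exactly the autonomous value $d_1=N-1-(n-m)/(m+1)=(N(m+1)-(n+1))/(m+1)$. Because $m+1<n<2m+1$ makes $(n-m)/(m+1)$ a non-integer lying in $(0,1)$, this $d_1$ is nonzero, so both curves must genuinely occur, which is assertion~(1); the same $x^{n-m-1}$ balance rules out a multiplier built from $F_2$ alone.

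Finally, substituting the determined $d_1$ and $d_2=-1$ back into the polynomial form of~\eqref{JLM_gen_cond} collapses it to $\tfrac{n-m}{m+1}\,[f(x)+q_x(x)]+p_x(x)+\omega=0$, which is~\eqref{Lienard_Inegrability_C1_cond_time} after clearing denominators; here $\omega$ appears precisely as the degree-zero contribution. Uniqueness is then immediate, since $d_1,d_2$ are forced and no exponential factor occurs. For sufficiency I would reverse the computation: if both curves exist and~\eqref{Lienard_Inegrability_C1_cond_time} holds, the exponents $d_1=(N(m+1)-(n+1))/(m+1)$, $d_2=-1$, together with $\varrho\equiv0$, satisfy~\eqref{JLM_gen_cond}, so Theorem~\ref{T:L23_Non_aut_JLM} produces the multiplier~\eqref{Lienard_Inegrability_Int_fact_time}. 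I expect the only real obstacle to be the bookkeeping in the degree count that isolates $\omega$ at order zero; this is exactly where excluding $\deg g=\deg f+1$ matters, since in that boundary case $p_x$ degenerates to a constant and competes with $\omega$, producing instead the non-autonomous \emph{first integral} of Lemma~\ref{L:Lienard_Integrability_time1}.
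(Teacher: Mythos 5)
Your proposal is correct and takes essentially the same route as the paper, which proves this lemma exactly by rerunning the proof of Theorem~\ref{T:Lienard_Integrability2} with $\omega$ carried through condition \eqref{Lienard_Inegrability_t_IF2}: exponential factors are excluded via Lemmas~\ref{L:Lienard_exp_inv1} and \ref{L:Lienard_exp_inv2}, the $x^m$ and $x^{n-m-1}$ balances force $d_2=-1$ and $d_1=N-1-(n-m)/(m+1)$ (with the hypothesis $\deg g>\deg f+1$ keeping the constant $\omega$ out of the leading balance, precisely as you observe), and back-substitution yields \eqref{Lienard_Inegrability_C1_cond_time}. The only minor divergence is in the uniqueness step, where the paper argues that the ratio of two distinct multipliers would be an autonomous or non-autonomous Darboux first integral, contradicting Theorem~\ref{T:Lienard_Integrability1} and Lemma~\ref{L:Lienard_Integrability_time1}, while you note that the exponents (and hence $\omega$) are forced; both arguments are sound.
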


\begin{proof}
We use Theorem \ref{T:L23_Non_aut_JLM} and repeat the proof of the previous
theorem. The only difference is in condition
\eqref{Lienard_Inegrability_IF2}. Now this condition reads as
\begin{equation}
 \label{Lienard_Inegrability_t_IF2}
d_1[f(x)+q_x(x)]+d_2[Nf(x)+(N-1)q_x(x)+p_x(x)]-\omega=-f(x),
\end{equation}
where $\omega\neq0$. The related non-autonomous Darboux--Jacobi last
 multiplier is given by the expression
\begin{equation}
 \label{Lienard_Inegrability_new1_time_add2}
M(x,y,t)=F_1^{d_1}(x,y)F_2^{d_2}(x,y)\exp[\omega t].
\end{equation}
Similarly to the case of Theorem \ref{T:Lienard_Integrability2}, we find the
values of $d_1$ and $d_2$. They are $d_1=N-1-(n-m)/(m+1)$ and $d_2=-1$. Both
of them are non-zero. Consequently, a Li\'{e}nard differential system
\eqref{Lienard_gen} with a non-autonomous Darboux--Jacobi last
 multiplier has two
distinct irreducible invariant algebraic curves $F_1(x,y)=0$ and
$F_2(x,y)=0$. Substituting explicit values of the parameters $d_1$ and $d_2$
into condition \eqref{Lienard_Inegrability_t_IF2} yields relation
\eqref{Lienard_Inegrability_C1_cond_time}.

If there exist two distinct non-autonomous Darboux-Jacobi last
 multipliers~\eqref{JLM_gen}, then their ratio is a Darboux first integral
 either autonomous or non-autonomous of the form \eqref{FI_t_gen}. By
 Theorem \ref{T:Lienard_Integrability1} and Lemma
 \ref{L:Lienard_Integrability_time1} such a situation is impossible.

\end{proof}

\begin{lemma}\label{L:Lienard_Integrability_t2_add}

  A Li\'{e}nard differential system \eqref{Lienard_gen} satisfying
 the condition $\,$ $\deg g = \deg f+1$ has a non-autonomous Darboux--Jacobi last
 multiplier of the form \eqref{JLM_gen} if and only if  the system under study possesses the irreducible
    invariant algebraic curve $F_2(x,y)=0$ given by relation \eqref{Lienard1_F} with
    $N\in\mathbb{N}$ and $k=1$.
A non-autonomous Darboux--Jacobi last
 multiplier is of the from
   \begin{equation}
 \label{Lienard_Inegrability_Int_fact_time_add}
M(x,y,t)=\frac{\exp\left[\frac{g_0}{f_0}\left\{1-(N-1)(m+1) \right\}t\right]}{F_2(x,y)}.
\end{equation}
There are no other  non-autonomous Darboux--Jacobi last
 multipliers whenever the invariant algebraic curve $F_2(x,y)=0$ is unique.
 If, in addition, there exists the invariant algebraic curve $F_1(x,y)=0$
 presented in
    expression~\eqref{Lienard1_F} with $N=1$ and $k=0$, then the system under
    consideration has a family of non-autonomous Darboux--Jacobi last
 multipliers
 \begin{equation}
 \label{Lienard_Inegrability_Int_fact_time_addn}
M(x,y,t)=\frac{F_1^{d_1}(x,y)\exp\left[\frac{g_0}{f_0}\left\{1+\{d_1-(N-1)\}(m+1) \right\}t\right]}{F_2(x,y)},\, d_1\in\mathbb{C}.
\end{equation}

\end{lemma}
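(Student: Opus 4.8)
The plan is to run the scheme of Lemma \ref{L:Lienard_Integrability_t2} and track only the simplifications produced by the degeneration $n=m+1$. First I would argue, exactly as in the proofs of Theorem \ref{T:Lienard_Integrability1} and Lemma \ref{L:Lienard_Integrability_t2}, that a non-autonomous Darboux--Jacobi last multiplier cannot contain an exponential-invariant factor: Lemma \ref{L:Lienard_exp_inv2} excludes exponential invariants with non-constant rational arguments, while Lemma \ref{L:Lienard_exp_inv1} shows the cofactor of any $\exp[h(x,y)]$ is divisible by $y$, which is incompatible with the $y$-independent requirement $\mathcal{X}M=-\mathrm{div}\,\mathcal{X}\,M$. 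Hence the multiplier must be of the form $M(x,y,t)=F_1^{d_1}(x,y)F_2^{d_2}(x,y)\exp(\omega t)$, with $F_1(x,y)=y-q(x)-z_0$ and $F_2(x,y)$ as in Theorem \ref{T1:Lienard_gen}, under the usual convention that $d_1=0$ (respectively $d_2=0$) when $F_1$ (respectively $F_2$) is absent. Condition \eqref{JLM_gen_cond} then reduces to \eqref{Lienard_Inegrability_t_IF2}.

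The decisive step is to insert the relations valid for $n=m+1$. By \eqref{Lienard_Inegrability_C1_n7} both $f(x)+q_x(x)=(m+1)g_0/f_0$ and $p_x(x)=-g_0/f_0$ are constants, so the only term of positive degree in \eqref{Lienard_Inegrability_t_IF2} is $d_2f(x)$. Since $m\geq1$ throughout family ($A$), the coefficient of $x^m$ forces $d_2f_0=-f_0$, i.e.\ $d_2=-1$; in particular $F_2(x,y)=0$ must be present, which establishes the necessity. Here lies the contrast with Lemma \ref{L:Lienard_Integrability_t2}: because $f+q_x$ has collapsed to a constant, no further balancing pins down $d_1$, and it survives as a free parameter. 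After setting $d_2=-1$ the full polynomial $f(x)$ cancels against $-f(x)$, leaving a scalar equation whose solution is $\omega=\frac{g_0}{f_0}\bigl[1+\{d_1-(N-1)\}(m+1)\bigr]$.

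I would then read off the two regimes. When $F_2(x,y)=0$ is the unique invariant algebraic curve, $F_1$ is unavailable and $d_1=0$ is forced; the formula collapses to $\omega=\frac{g_0}{f_0}[1-(N-1)(m+1)]$, and since $(N-1)(m+1)=1$ is impossible for $m\geq1$ we obtain $\omega\neq0$, so $M=F_2^{-1}\exp(\omega t)$ is a genuine non-autonomous multiplier, namely \eqref{Lienard_Inegrability_Int_fact_time_add}; uniqueness is immediate as no other curve can enter. When $F_1(x,y)=0$ also exists, $d_1$ ranges over $\mathbb{C}$ and the same expression for $\omega$ produces the one-parameter family \eqref{Lienard_Inegrability_Int_fact_time_addn}.

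The main obstacle is conceptual rather than computational: one must recognize that the equality $n=m+1$ turns both $f+q_x$ and $p_x$ into constants, which is precisely what liberates the exponent $d_1$ and thereby yields an entire family of multipliers in place of the unique one of Lemma \ref{L:Lienard_Integrability_t2}. A useful consistency check I would include is the distinguished value $d_1=N-1-\frac{1}{m+1}$, at which $\omega$ vanishes; there the multiplier degenerates to the autonomous Darboux integrating factor \eqref{Lienard_Inegrability_Int_fact} of Theorem \ref{T:Lienard_Integrability2} (whose $F_1$-exponent $\frac{N(m+1)-(n+1)}{m+1}$ equals exactly $N-1-\frac{1}{m+1}$ when $n=m+1$), while for every other $d_1$ one has $\omega\neq0$ and the multiplier is truly non-autonomous.
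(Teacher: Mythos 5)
Your proof is correct and takes essentially the same route as the paper's: reduce to condition \eqref{Lienard_Inegrability_t_IF2}, balance the highest-degree terms to force $d_2=-1$ (hence the existence of $F_2(x,y)=0$), use the fact that $f+q_x$ and $p_x$ collapse to constants when $n=m+1$, and solve the resulting scalar equation for $\omega$. The only cosmetic difference is that the paper produces the family \eqref{Lienard_Inegrability_Int_fact_time_addn} by multiplying the $d_1=0$ multiplier by powers of the non-autonomous first integral \eqref{Lienard_Inegrability_1}, whereas you keep $d_1$ free directly in the linear condition; your additional verifications that $\omega\neq0$ when $d_1=0$ (since $(N-1)(m+1)=1$ is impossible for $m\geq1$) and that the value $d_1=N-1-\tfrac{1}{m+1}$ recovers the autonomous integrating factor \eqref{Lienard_Inegrability_Int_fact} are both correct and worthwhile.
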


\begin{proof}
Similarly to the case of Lemma \ref{L:Lienard_Integrability_t2}, we see that
a non-autonomous Darboux--Jacobi last
 multiplier of the form \eqref{JLM_gen} exists if and only if the system
 under consideration has at least one invariant algebraic curve and
 condition~\eqref{Lienard_Inegrability_t_IF2} is valid. Again we suppose that $d_j=0$ whenever the invariant algebraic curve $F_j(x,y)=0$
 does not exist. The explicit expression of a non-autonomous Darboux--Jacobi last
 multiplier is given by  \eqref{Lienard_Inegrability_new1_time_add2}. Balancing the highest-degree terms in condition \eqref{Lienard_Inegrability_t_IF2} yields $d_2=-1$. This fact proves the
 existence of the invariant algebraic curve $F_2(x,y)=0$.   Further, we find that the polynomials
 $f(x)+q_x(x)$ and $p_x(x)$ are constants, see relation
 \eqref{Lienard_Inegrability_C1_n7}. Substituting equalities $d_2=-1$, $f(x)+q_x(x)=(m+1)g_0/f_0$ and
 $p_x(x)=-g_0/f_0$ into condition \eqref{Lienard_Inegrability_t_IF2}, we
 obtain
 \begin{equation}
 \label{Lienard_Inegrability_Int_fact_time_add_pq}
(m+1)g_0d_1+\left\{1-(N-1)(m+1)\right\}g_0-f_0\omega=0.
\end{equation}
Setting $d_1=0$, we find the value of $\omega$ as given in relation
\eqref{Lienard_Inegrability_Int_fact_time_add}. If the Li\'{e}nard
differential system in question does not have the invariant algebraic curve
$F_1(x,y)=0$, then non-autonomous Darboux--Jacobi last
 multiplier \eqref{Lienard_Inegrability_Int_fact_time_add} is unique. In the
 converse case, non-autonomous Darboux--Jacobi last
 multiplier  \eqref{Lienard_Inegrability_Int_fact_time_add} also exists.
 In addition, we recall that the system has time-dependent Darboux first
 integral \eqref{Lienard_Inegrability_1}. It is straightforward to show that
 the product of a first integral and a non-constant Jacobi last
 multiplier is another Jacobi last
 multiplier.
\end{proof}

It is demonstrated in  \cite{Demina17}
 that Li\'{e}nard differential
systems   with non-autonomous Darboux--Jacobi last
 multipliers  \eqref{Lienard_Inegrability_Int_fact_time_add} indeed exist. In addition, let us
note that the famous Duffing--van der  Pol oscillators  belong to family
($A$). The classification of Liouvillian integrable  Duffing--van der Pol
oscillators is performed in \cite{Demina07}.

\section{Integrability of  Li\'{e}nard differential systems from family ($B$)}\label{S:Lienard_B}

Let us consider families  of Li\'{e}nard differential systems with fixed
degrees of the polynomials $f(x)$ and $g(x)$ such that the following relation
$\deg g=2\deg f +1$ is satisfied. If we do not impose  restrictions on the
highest-degree coefficients $f_0$ and $g_0$ of the polynomials $f(x)$ and
$g(x)$, then it has been established in \cite{Demina18} that the
Fuchs indices of the Puiseux series near the point $x=\infty$ that solve related
equations \eqref{Lienard_y_x} depend on the parameters $f_0$ and $g_0$.
Consequently, performing the classification of irreducible invariant algebraic
curves and Darboux or Liouvillian first integrals is a very difficult problem
whenever only the degrees of the polynomials $f(x)$ and $g(x)$  are fixed.
The method of Puiseux series can deal with each case of a positive rational
Fuchs index separately. Interestingly, such a degeneracy leads to a variety
of distinct integrable cases arising in Li\'{e}nard differential systems from
family ($B$).

This section is mainly devoted to the non-resonant case. We say that a
Li\'{e}nard differential system from family ($B$) is resonant near infinity
if equation~\eqref{eq:DP2_5_2} possesses a solution in $\mathbb{Q}^+$. For
convenience, we introduce the parameter $\delta$ according to the rule
\begin{equation}\label{eq:DP2_5_sig}
g_0=\frac{f^2_0-\delta^2}{4(m+1)},\quad \delta\in\mathbb{C},\quad \delta\neq\pm f_0.
\end{equation}
Using this normalization, we solve equation  \eqref{eq:DP2_5_2}. As a result
we find the Fuchs indices. They take the form
\begin{equation}\label{Lienard_degenerate_Int1}
p_1=\frac{2(m+1)\delta}{\delta-f_0},\quad p_2=\frac{2(m+1)\delta}{\delta+f_0}.
\end{equation}
There are no positive rational Fuchs indices if and only if the condition
$\delta/f_0\not\in\mathbb{Q}\setminus\{0\}$ is valid. In addition, we assume
that the following inequality $ \deg f>0$ holds. The integrability problem
for Li\'{e}nard  differential systems~\eqref{Lienard_gen} under restrictions
$\deg f=0$ and $\deg g=1$ is simple, see the end of Section \ref{S:Lienard}.

In this section we use the designations of Theorem
\ref{T:Lienard_degenerate}. In particular, the Puiseux series near the point
$x=\infty$ that solve equation \eqref{Lienard_y_x} are denoted as
$y^{(1)}_{\infty}(x)$ and $y^{(2)}_{\infty}(x)$. Introducing the variable
$\delta$ instead of the parameter $g_0$, we see that these series have the
following dominant behavior
\begin{equation}\label{Lienard_degenerate_Puiseux_series_dominant}
\begin{gathered}
y^{(1)}_{\infty}(x)=\frac{\delta-f_0}{2(m+1)}x^{m+1}+o(x^{m+1}),\quad x\rightarrow\infty;\\
y^{(2)}_{\infty}(x)=-\frac{\delta+f_0}{2(m+1)}x^{m+1}+o(x^{m+1}),\quad x\rightarrow\infty.
\end{gathered}
\end{equation}
Let us denote the polynomial parts of the series $y^{(1)}_{\infty}(x)$ and
$y^{(2)}_{\infty}(x)$ as $q_1(x)$ and $q_2(x)$, respectively. Thus, we have
the equalities
\begin{equation}\label{Lienard_degenerate_Puiseux_series_Polynomial_parts}
\begin{gathered}
q_1(x)=\left\{y^{(1)}_{\infty}(x)\right\}_+,\quad q_2(x)=\left\{y^{(2)}_{\infty}(x)\right\}_+.
\end{gathered}
\end{equation}
If $\delta=0$, then the series $y^{(1)}_{\infty}(x)$ and
$y^{(2)}_{\infty}(x)$ coincide. Let us omit the  indices and set
$q(x)=\left\{y_{\infty}(x)\right\}_+$.

We  begin by investigating the existence of exponential invariants related to
invariant algebraic curves.

\begin{lemma}\label{L:Lienard_exp_degenerate}
Let $h(x,y)\in\mathbb{C}[x,y]$ and
$r(x,y)\in\mathbb{C}[x,y]\setminus\mathbb{C}$ be  relatively prime
polynomials. A Li\'{e}nard  differential  system \eqref{Lienard_gen}
satisfying
 the conditions $\deg g=2\deg f+1$ and $\delta/f_0\not\in\mathbb{Q}\setminus\{0\}$ has exponential invariants
$E(x,y)=\exp\left\{h(x,y)/r(x,y)\right\}$ if and only if the following
statements are valid:

\begin{enumerate}

\item $\delta=0$;

\item there exists the invariant algebraic curve $F(x,y)=0 $ with the
    generating polynomial $F(x,y)=y-q(x)$;

\item the ordinary differential equation
\begin{equation}\label{Lienard_degenerate_Int2}
q(x)u_x(x)+[f(x)+q_x(x)]u(x)=0
\end{equation}
has a non-zero polynomial solution $u(x)$.

\end{enumerate}
The related exponential invariant is of the form
\begin{equation}\label{Lienard_degenerate_exp_inv_explicit}
E(x,y)=\exp\left[\frac{u(x)}{y-q(x)}\right]
\end{equation}
and possesses the cofactor $\varrho(x,y)=u_x(x)$.

\end{lemma}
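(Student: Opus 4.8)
The plan is to prove the two implications separately, starting with the easier sufficiency. Assume $\delta=0$, that $F(x,y)=y-q(x)$ generates an invariant algebraic curve, and that \eqref{Lienard_degenerate_Int2} has a non-zero polynomial solution $u(x)$. Invariance of $y-q(x)$ is equivalent (via \eqref{Lienard_y_x} at $y=q$) to $qq_x+fq+g=0$, i.e.\ $g=-q(f+q_x)$. I would introduce $w=y-q(x)$, under which the field \eqref{Lienard_Inegrability_VF} becomes $\mathcal{X}=(w+q)\partial_x-w(f+q_x)\partial_w$, the constant-in-$w$ part of the $\partial_w$-coefficient cancelling precisely because of $g=-q(f+q_x)$. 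A one-line computation then gives $\mathcal{X}[u/w]=w^{-1}\{wu_x+qu_x+(f+q_x)u\}$, and \eqref{Lienard_degenerate_Int2} annihilates the $w^{-1}$ term, leaving $\mathcal{X}[u/w]=u_x$. Hence $E=\exp[u/(y-q)]$ is an exponential invariant with cofactor $\varrho=u_x$.

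For necessity, note first that $r\notin\mathbb{C}$ forces $r=0$ to be an invariant algebraic curve \cite{Christopher}, so by Theorem \ref{T:Lienard_degenerate} its irreducible factors are assembled from the two branches $y^{(1,2)}_\infty$ of \eqref{Lienard_degenerate_Puiseux_series_dominant}. Writing $N_1,N_2\in\mathbb{N}\cup\{0\}$ for the total multiplicities with which each branch enters $r$ (so $N_1+N_2\geq1$), the cofactor is the $y$-free polynomial $\lambda_r=-(N_1+N_2)f-N_1q_{1,x}-N_2q_{2,x}$, where $q_k=\{y^{(k)}_\infty\}_+$. Since $P=y$, the quantity of Theorem \ref{T:Exp_fact} is $\lambda_r/y^{(k)}_\infty$, whose leading term is $(C/b^{(k)}_0)x^{-1}$ with $C$ the coefficient of $x^m$ in $\lambda_r$. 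Setting $t:=\delta/f_0$ and using the leading coefficients $b^{(k)}_0$ of \eqref{Lienard_degenerate_Puiseux_series_dominant}, a short calculation gives $C/b^{(1)}_0=(m+1)\bigl[(N_2-N_1)-2N_1/(t-1)\bigr]$ and $C/b^{(2)}_0=(m+1)\bigl[(N_1-N_2)+2N_2/(t+1)\bigr]$. Under the hypothesis $\delta/f_0\notin\mathbb{Q}\setminus\{0\}$, either $t=0$ or $t$ is irrational; in the irrational case $2N_1/(t-1)$ is irrational whenever $N_1\geq1$, so Theorem \ref{T:Exp_fact} applied to the first branch forces $N_1=0$, and symmetrically $N_2=0$, contradicting $N_1+N_2\geq1$ (one checks $C\neq0$ here, since $C=0$ would make $t$ rational). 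Thus $t=0$, i.e.\ $\delta=0$ (condition~(1)); case~(2) of Theorem \ref{T:Lienard_degenerate} then collapses the branches to the single first-degree curve $y-q(x)=0$ (condition~(2)), and $r=(y-q)^{n_1}$ for some $n_1\in\mathbb{N}$.

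It remains to show $n_1=1$ and to produce condition~(3). I would invoke the local theory of Section \ref{S:Local}: at $x_0=\infty$ the invariant $E$ is a product of elementary local exponential invariants \eqref{Gen_coff11}. Redoing the first paragraph's computation for a reciprocal factor gives $\mathcal{X}[u(y-y_\infty)^{-n}]=u_x(y-y_\infty)^{-(n-1)}+(y-y_\infty)^{-n}\{y_\infty u_x+n(f+y_{\infty,x})u\}$; as the cofactor must be a Puiseux polynomial, the $(y-y_\infty)^{-n}$ term vanishes, and for $n\geq2$ the surviving negative power $(y-y_\infty)^{-(n-1)}$ forces $u_x\equiv0$, hence $u\equiv0$. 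So only $n=1$ reciprocal factors occur for a Li\'enard field, whence the principal part of $h/(y-q)^{n_1}$ along $y=q$ has order one; since $h$ is coprime to $y-q$ this order equals $n_1$, giving $n_1=1$. Finally, expanding $h=\sum_j a_j(x)(y-q)^j$ with $a_0\neq0$ and reading off the coefficient of $(y-q)^0$ in $\mathcal{X}h-\lambda_1h=\varrho(y-q)$ (with $\lambda_1=-f-q_x$) yields $qa_{0,x}+(f+q_x)a_0=0$, so $u:=a_0$ is a non-zero polynomial solution of \eqref{Lienard_degenerate_Int2} (condition~(3)), the associated invariant being $\exp[u/(y-q)]$ with cofactor $u_x$.

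The main obstacle is the step forcing $n_1=1$: the global equation $\mathcal{X}h-n_1\lambda_1h=\varrho(y-q)^{n_1}$ does not by itself bound $n_1$ (Theorem \ref{T:Exp_fact} only returns the integer residue $n_1(m+1)$), and the cleanest resolution I see is the local decomposition above, which exploits the Li\'enard-specific cancellation $y_\infty y_{\infty,x}+fy_\infty+g=0$ to exclude higher-order poles. A secondary point requiring care is ensuring $C\neq0$ in the irrationality argument, which again follows from $t\notin\mathbb{Q}$.
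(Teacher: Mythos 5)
Your proof is correct and follows essentially the same route as the paper's: necessity by applying Theorem \ref{T:Exp_fact} to the leading $x^{-1}$ coefficient of $\lambda_r/y^{(k)}_{\infty}$ to exclude $\delta\neq 0$ (your multiplicity bookkeeping $N_1,N_2$ merely unifies the paper's Cases 1--3 into a single computation, and your explicit check that $C\neq 0$ makes precise a point the paper leaves implicit), then for $\delta=0$ the same decomposition into elementary local exponential invariants near $x=\infty$ to force simple poles and extract the ODE \eqref{Lienard_degenerate_Int2}. The sufficiency verification via $w=y-q(x)$ matches what the paper asserts by direct computation in its Case 4.
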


\begin{proof}

It follows from Theorem \ref{T:Lienard_degenerate} that any Li\'{e}nard
differential  system \eqref{Lienard_gen} satisfying
 the conditions $\deg g=2\deg f+1$ and
 $\delta/f_0\not\in\mathbb{Q}\setminus\{0\}$ has at most two distinct irreducible invariant algebraic curves simultaneously.
  The degrees with respect to $y$ of polynomials producing
 irreducible invariant algebraic curves are either $1$ or $2$.
 If there exists an irreducible invariant algebraic curves of degree $2$ with respect to $y$,
 then it is unique. Further, there can arise at most two distinct irreducible invariant algebraic curves of degree $1$ with respect to $y$.
Since $E(x,y)=\exp\left\{h(x,y)/r(x,y)\right\}$ is an exponential invariant,
we conclude that $r(x,y)=0$ is an invariant algebraic curve of the related
system. Let us represent exponential invariants in the from
\begin{equation}\label{Lienard_degenerate_exp_inv_explicit_gen}
E(x,y)=\exp\left[\frac{h(x,y)}{F^{n_1}_1(x,y)F^{n_2}_2(x,y)}\right],\quad n_1,n_2\in\mathbb{N}_0,\quad n_1+n_2>0,
\end{equation}
where $F_1(x,y)=0$ and $F_2(x,y)=0$ are irreducible invariant algebraic
curves. Without loss of generality we set $n_k=0$ whenever the invariant algebraic curve $F_k(x,y)=0$ does not exist. Here $k=1$ or
 $k=2$.

 \textit{Case 1.} Let us suppose that there exists only one irreducible
 invariant algebraic curve $F_1(x,y)=0$ of degree $2$ with respect to $y$.
 Using relation \eqref{F_L2_5_1}, we find the cofactor $\lambda_1(x,y)$. The
 result is
\begin{equation}\label{Lienard_degenerate_exp_inv_cof1}
\lambda(x,y)=-2f(x)-\left\{\left(y^{(1)}_{\infty}\right)_x+\left(y^{(2)}_{\infty}\right)_x\right\}_+
\end{equation}
The dominant behavior of the cofactor near the point $x=\infty$ is
$\lambda(x,y)=-f_0x^{m}+o(x^{m})$. Now let us take one of the Puiseux series,
for example,  the series $y^{(1)}_{\infty}(x)$. We find the asymptotic
relation
\begin{equation}\label{Lienard_degenerate_exp_inv_cof1_add}
\frac{\lambda\left(x,y^{(1)}_{\infty}(x)\right)}{y^{(1)}_{\infty}(x)}=
-\frac{2(m+1)f_0}{(\delta-f_0)x}+o\left(\frac1{x}\right),\quad x\rightarrow\infty
\end{equation}
Since $\delta/f_0\not\in\mathbb{Q}\setminus\{0\}$, we conclude that the
conditions of Theorem \ref{T:Exp_fact} are not satisfied whenever
$\delta\neq0$. The case $\delta=0$ will be considered separately. Exponential
invariants  do
not exist provided that $\delta\neq 0$.

\textit{Case 2.} Let us suppose that the differential system under
consideration has two distinct  irreducible
 invariant algebraic curves $F_1(x,y)=0$ and $F_2(x,y)=0$. The polynomials producing the curves are of degree $1$ with respect to $y$
 and have the cofactors
\begin{equation}\label{Lienard_degenerate_exp_inv_cof2}
\lambda_m(x,y)=-f(x)-\left\{\left(y^{(m)}_{\infty}\right)_x\right\}_+,\quad m=1,2.
\end{equation}
The cofactor of the invariant algebraic curve
$F^{n_1}_1(x,y)F^{n_2}_2(x,y)=0$ is given by the polynomial
$\lambda(x,y)=n_1\lambda_1(x,y)+n_2\lambda_2(x,y)$ and has the following
dominant behavior near the point $x=\infty$:
\begin{equation}\label{Lienard_degenerate_exp_inv_cof2_add}
\lambda(x,y)=\frac12\left[n_2(\delta-f_0)-n_1(\delta+f_0)\right]x^{m}+o(x^{m}),\quad x\rightarrow \infty.
\end{equation}
Suppose that one of the numbers $n_1$ and $n_2$ is non-zero, for example,
$n_1$. Further, we consider the following asymptotic relation
\begin{equation}\label{Lienard_degenerate_exp_inv_cof2_addn}
\frac{\lambda\left(x,y^{(1)}_{\infty}(x)\right)}{y^{(1)}_{\infty}(x)}=(m+1)\left(n_2-\frac{(\delta+f_0)n_1}{\delta-f_0}\right)
\frac{1}{x}+o\left(\frac1{x}\right),\quad x\rightarrow\infty.
\end{equation}
Using Theorem \ref{T:Exp_fact} and expression
$\delta/f_0\not\in\mathbb{Q}\setminus\{0\}$, we see that there are no
exponential invariants \eqref{Lienard_degenerate_exp_inv_explicit_gen}
whenever $\delta\neq0$.

\textit{Case 3.} Let us suppose that the differential system in question has
only one  irreducible
 invariant algebraic curve: either $F_1(x,y)=0$ or $F_2(x,y)=0$. If
 $\delta\neq0$, then arguing as in the previous case, we prove the non-existence
 of exponential invariants.

  \textit{Case 4.} Now let us suppose that $\delta=0$. Recall that the Puiseux series $y^{(1)}_{\infty}(x)$ and
$y^{(2)}_{\infty}(x)$ merge in this case. The unique Puiseux series centered
at the point $x=\infty$ that satisfies equation \eqref{Lienard_y_x} is
denoted as $y_{\infty}(x)$. In what follows we use the local theory of
 invariants considered in Section \ref{S:Local}. Let a Li\'{e}nard
differential  system \eqref{Lienard_gen} satisfying
 the conditions $\deg g=2\deg f+1$ and
 $\delta=0$ have the irreducible invariant algebraic curve $F(x,y)=0$, where
 $F(x,y)=y-q(x)$. If in addition the system possesses an exponential invariant
 $E(x,y)=\exp\{h(x,y)/F^{k}(x,y)\}$ for some $k\in\mathbb{N}$, then it is without loss of generality to suppose that the degree of the
 polynomial $h(x,y)$ with respect to $y$ is at most $k-1$. There
 exists a finite
number of local elementary  exponential invariants
\begin{equation} \label{Lienard_degenerate_exp_inv1_loc1}
\begin{gathered}
E_j(x,y)=\exp\left[\frac{u_j(x)}{\{y-y_{\infty}(x)\}^{k_j}}\right],\quad
 u_j(x)\in \mathbb{C}_{\infty}\{x\},\\
  k_j\in\mathbb{N},\quad j=1,\ldots, K,\quad K\in\mathbb{N}
\end{gathered}
\end{equation}
  such that the exponential invariant
$E(x,y)$ equals the product $\displaystyle \prod_{j=1}^KE_j^{}(x,y)$. We
assume that the following inequalities  $1\leq k_1<k_2<\ldots<k_K\leq k$ are
valid. Let us denote the cofactor of the local elementary exponential
invariant $E_j(x,y)$ by $\varrho_j(x,y)\in\mathbb{C}_{\infty}\{x\}[y]$. We
see that the following expression $\displaystyle \sum_{j=1}^K\varrho_j(x,y)$
equals the cofactor $\varrho(x,y)$ of the invariant $E(x,y)$. Note that the
cofactor $\varrho(x,y)$ is an element of the ring~$\mathbb{C}[x,y]$.

Substituting the explicit representation of $E_j(x,y)$ into the partial
differential equation $\mathcal{X}E_j(x,y)=\varrho_j(x,y)E_j(x,y)$, we find
\begin{equation} \label{Lienard_degenerate_exp_inv1_2n}
\begin{gathered}
yu_{j,x}=k_j\lambda_j(x,y)u_j(x)+\varrho_j(x,y)\left\{y-y_{\infty}(x)\right\}^{k_j}.
\end{gathered}
\end{equation}
In this expression $\lambda_j(x,y)\in\mathbb{C}_{\infty}\{x\}[y]$  is the
cofactor of the local elementary invariant $f(x,y)=y-y_{\infty}(x)$.  Using
Theorem \ref{T:coff_local2}, we obtain the cofactor $\lambda_j(x,y)$. The
result is
\begin{equation} \label{Lienard_degenerate_exp_inv1_cof_local}
\begin{gathered}
\lambda_j(x,y)=-f(x)-\left\{y_{\infty}(x)\right\}_x.
\end{gathered}
\end{equation}
 Analyzing expression
\eqref{Lienard_degenerate_exp_inv1_2n}, we see that $j=1$, $k_1=k=1$,
$\varrho_1(x,y)=u_{1,x}(x)$, and the series $u_1(x)$ satisfies the following
ordinary differential equation
\begin{equation} \label{Lienard_degenerate_exp_inv1_3n}
\begin{gathered}
y_{\infty}(x)u_{1,x}(x)+\left(f(x)+\left\{y_{\infty}(x)\right\}_x\right)u_1(x)=0.
\end{gathered}
\end{equation}
Since there exists the invariant algebraic curve $y-q(x)=0$, we conclude that
the Puiseux series $y_{\infty}(x)$ is in fact the polynomial $q(x)$:
$y_{\infty}(x)=q(x)$. Consequently, the exponential invariant
$E(x,y)=\exp\{h(x,y)/F(x,y)\}$ exists if and only if $h(x,y)=u_1(x)$ and
$u_1(x)$ is a polynomial. Omitting the index, we get expressions
\eqref{Lienard_degenerate_Int2} and
\eqref{Lienard_degenerate_exp_inv_explicit}.

\end{proof}

\textit{Remark.} Balancing the highest-degree terms in equation
\eqref{Lienard_degenerate_Int2}, it can be shown that the polynomial $u(x)$
is of degree $m+1$.

Our next step is to derive the necessary and sufficient conditions of Darboux
integrability for non-resonant  Li\'{e}nard  differential  systems from
family ($B$). The case $\delta=0$ will be considered separately.

\begin{theorem}\label{T:Lienard_degenerate_Darboux1}
A Li\'{e}nard  differential  system \eqref{Lienard_gen} satisfying
 the conditions $\deg g=2\deg f+1$ and $\delta/f_0\not\in\mathbb{Q}$
is Darboux integrable if and only if   the system is of the form
\begin{equation} \label{Lienard_degenerate_Darboux_explicit}
x_t=y,\quad y_t=\frac{2f_0}{f_0-\delta}q_{1,\,x}y-\frac{f_0+\delta}{f_0-\delta}q_{1,\,x}q_1,
\end{equation}
where $q_1(x)$ is a polynomial of degree $m+1$ with the highest-degree
coefficient $(\delta-f_0)/(2\{m+1\})$. A related Darboux
    first integral reads as
\begin{equation}\label{Lienard_degenerate_Darboux_FI1}
I(x,y)=\left[y-q_1(x)\right]^{\delta-f_0}\left[y-\frac{(f_0+\delta)}{(f_0-\delta)}q_1(x)\right]^{\delta+f_0}.
\end{equation}
\end{theorem}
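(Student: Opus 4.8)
The plan for the ``if'' direction is a direct verification. Writing $q_2(x):=\frac{f_0+\delta}{f_0-\delta}\,q_1(x)$, I would first check that both $y=q_1(x)$ and $y=q_2(x)$ solve the equation \eqref{Lienard_y_x} attached to system \eqref{Lienard_degenerate_Darboux_explicit}: substituting $y=q_k$ into $yy_x+f(x)y+g(x)$ and factoring out $q_{1,x}q_1$ leaves a numerical coefficient that vanishes identically (for $k=2$ this uses the value of $\kappa=(f_0+\delta)/(f_0-\delta)$). Hence $F_1=y-q_1$ and $F_2=y-q_2$ are invariant algebraic curves with cofactors $\lambda_k=-f-q_{k,x}=g/q_k$. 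Setting $d_1=\delta-f_0$ and $d_2=\delta+f_0$, a one-line computation gives $d_1\lambda_1+d_2\lambda_2=g\left(d_1/q_1+d_2/q_2\right)=0$, so by Theorem \ref{T:L23_Non_aut_FI} with $\omega=0$ the Darboux function $I=F_1^{d_1}F_2^{d_2}$ of \eqref{Lienard_degenerate_Darboux_FI1} is a first integral, and the system is Darboux integrable.

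\textbf{Necessity, reduction.} Since $\delta/f_0\notin\mathbb{Q}$ forces $\delta\neq0$, Lemma \ref{L:Lienard_exp_degenerate} rules out exponential invariants with a non-constant denominator, while Lemma \ref{L:Lienard_exp_inv1} shows that any exponential invariant $\exp[h(x,y)]$ has a cofactor $\varrho$ divisible by $y$. Given a Darboux first integral $I=\prod_jF_j^{d_j}\exp[h]$, the cofactor identity $\sum_jd_j\lambda_j+\varrho=0$ splits by powers of $y$: the cofactors $\lambda_j$ are $y$-free, so the $y$-free part yields $\sum_jd_j\lambda_j=0$ and the remainder yields $\varrho=0$. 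A nontrivial $\varrho=0$ factor would make $h$ a polynomial first integral, forcing infinitely many invariant algebraic curves and contradicting the finiteness in Theorem \ref{T:Lienard_degenerate}; therefore $I=\prod_jF_j^{d_j}$ with $\sum_jd_j\lambda_j=0$.

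\textbf{Selecting the curves.} By Theorem \ref{T:Lienard_degenerate} (the regime $p_1p_2\neq0$, valid as $\delta\neq0$) the only irreducible invariant curves are the linear curves $F_1=y-q_1$, $F_2=y-q_2$ and at most one irreducible quadratic curve $G$. The crucial step is to discard $G$. By Theorem \ref{T:Darboux_pols_computation_Lienard} the quadratic $G$, built from both series, is invariant only if $\{y^{(1)}_{\infty}+y^{(2)}_{\infty}\}_{-}=0$; were a linear curve, say $F_1$, to coexist, then $y^{(1)}_{\infty}$ would truncate and hence so would $y^{(2)}_{\infty}$, rendering $G$ reducible, a contradiction. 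Thus an irreducible $G$ excludes both linear curves, leaving only $I=G^{d_G}$ and the impossible requirement $\lambda_G=0$ (its leading term is $-f_0x^m$). A single linear curve fails likewise, since $\lambda_1=-\tfrac{f_0+\delta}{2}x^m+o(x^m)$ and $\lambda_2=\tfrac{\delta-f_0}{2}x^m+o(x^m)$ are nonzero because $\delta\neq\pm f_0$. Hence both $F_1,F_2$ must be present with $d_1d_2\neq0$.

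\textbf{Reconstruction and main obstacle.} With two genuine polynomial solutions $\lambda_k=g/q_k$, the identity $\sum d_j\lambda_j=0$ becomes $d_1q_2+d_2q_1=0$, i.e. $q_2=\kappa q_1$ with $\kappa=-d_2/d_1$; the leading coefficients in \eqref{Lienard_degenerate_Puiseux_series_dominant} fix $\kappa=(f_0+\delta)/(f_0-\delta)$ and hence the exponents $(d_1,d_2)=(\delta-f_0,\delta+f_0)$ up to scale. Substituting $q_2=\kappa q_1$ into the two copies of \eqref{Lienard_y_x} and eliminating $g$ gives $f=-(1+\kappa)q_{1,x}=\frac{2f_0}{\delta-f_0}q_{1,x}$ and then $g=\frac{f_0+\delta}{f_0-\delta}q_1q_{1,x}$, which is precisely \eqref{Lienard_degenerate_Darboux_explicit}; matching the leading coefficient of $f$ pins $q_1$ to degree $m+1$ with leading coefficient $(\delta-f_0)/(2(m+1))$. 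I expect the main obstacle to be exactly this exclusion of the irreducible quadratic curve: one must argue carefully, via the projection criterion of Theorem \ref{T:Darboux_pols_computation_Lienard}, that its invariance condition $\{y^{(1)}_{\infty}+y^{(2)}_{\infty}\}_{-}=0$ cannot hold together with the truncation of a single series, since everything else is then forced by leading-order bookkeeping and the non-resonance $\delta/f_0\notin\mathbb{Q}$.
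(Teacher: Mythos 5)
Your proof is correct and follows essentially the same route as the paper's: sufficiency by direct verification, and necessity by invoking Lemmas \ref{L:Lienard_exp_inv1} and \ref{L:Lienard_exp_degenerate} to strip exponential factors, Theorem \ref{T:Lienard_degenerate} to reduce to at most two invariant curves, leading-order balancing of the $y$-free cofactors to force two linear curves with $(d_1,d_2)=(\delta-f_0,\delta+f_0)$, and the two polynomial solutions of \eqref{Lienard_y_x} to reconstruct $f$ and $g$. The only cosmetic difference is that you prove explicitly, via the projection criterion of Theorem \ref{T:Darboux_pols_computation_Lienard}, that an irreducible quadratic curve cannot coexist with a linear one, a fact the paper establishes inside the proof of Lemma \ref{L:Lienard_exp_degenerate} and then simply invokes.
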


\begin{proof}

It is straightforward to verify that expression
\eqref{Lienard_degenerate_Darboux_FI1} gives a Darboux first integral
whenever all other conditions of the theorem are valid.

Let us prove the converse statement. It follows from Lemmas
\ref{L:Lienard_exp_inv1} and \ref{L:Lienard_exp_degenerate} that Darboux
first integrals of Li\'{e}nard differential  systems \eqref{Lienard_gen}
satisfying
 the conditions $\deg g=2\deg f+1$ and $\delta/f_0\not\in\mathbb{Q}$ do not have
exponential factors. By Theorem \ref{T:Lienard_degenerate} the systems under
consideration have at most two distinct irreducible invariant algebraic
curves simultaneously.

First, we consider a Li\'{e}nard  differential  system \eqref{Lienard_gen} that satisfies
 the conditions $\deg g=2\deg f+1$, $\delta/f_0\not\in\mathbb{Q}$ and possesses only one irreducible invariant
algebraic curve. If there exists a Darboux first integral, then it can be
chosen as a bivariate polynomial $I(x,y)=F(x,y)$ producing the invariant
algebraic curve $F(x,y)=0$. Using Theorem \ref{T:Lienard_degenerate}, we see
that the generating polynomial $F(x,y)$ is of degree at most $2$ with respect
to $y$. The Darboux first integral $I(x,y)=F(x,y)$ exists if an only if the
cofactor $\lambda(x,y)$ of the invariant algebraic curve $F(x,y)=0$ is
identically zero. We need to consider three possibilities. Let us write down
the related cofactors and their dominant behavior near the point $x=\infty$.
With the help of relations \eqref{Lienard_degenerate_Puiseux_series_dominant}
and \eqref{Lienard_degenerate_Puiseux_series_Polynomial_parts}, we get
\begin{equation} \label{Lienard_degenerate_Darboux_FI2}
\begin{gathered}
N=1:\, \lambda(x,y)=-f(x)-q_{1,x},\, \lambda(x,y)=-\frac12(\delta+f_0)x^m+o(x^m),\quad x\rightarrow\infty;\hfill\\
N=1:\, \lambda(x,y)=-f(x)-q_{2,x},\, \lambda(x,y)=\frac12(\delta-f_0)x^m+o(x^m),\quad x\rightarrow\infty;\hfill\\
N=2:\, \lambda(x,y)=-2f(x)-q_{1,x}-q_{2,x},\, \lambda(x,y)=-f_0x^m+o(x^m),\quad x\rightarrow\infty.\hfill\\
\end{gathered}
\end{equation}
In these expressions $N$ denotes the degree of the polynomial $F(x,y)$ with
respect to $y$. We conclude that the cofactors are not identically zero.
Thus, there are no Darboux first integrals.

Now, we consider a Li\'{e}nard  differential  system \eqref{Lienard_gen} that satisfies
 the conditions $\deg g=2\deg f+1$, $\delta/f_0\not\in\mathbb{Q}$ and possesses  two distinct irreducible invariant
algebraic curves $F_1(x,y)=0$ and $F_2(x,y)=0$. By Theorem
\ref{T:Lienard_degenerate} the polynomials $F_1(x,y)$ and $F_2(x,y)$ are of
the form $F_1(x,y)=y-q_1(x)$ and $F_2(x,y)=y-q_2(x)$. If there exists a  Darboux
first integral, then it can be represented in the form
\begin{equation} \label{Lienard_degenerate_Darboux_FI3}
\begin{gathered}
I(x,y)=F_1^{d_1}(x,y)F_2^{d_2}(x,y),\quad d_1,d_2\in\mathbb{C},\quad |d_1|+|d_2|>0
\end{gathered}
\end{equation}
This first integral exists if and only if the following condition
$d_1\lambda_1(x,y)+d_2\lambda_2(x,y)=0$ is satisfied. Finding the dominant
behavior near the point $x=\infty$ of the cofactors
\begin{equation} \label{Lienard_degenerate_Darboux_FI3_dom}
\begin{gathered}
\lambda_1(x,y)=-\frac12(\delta+f_0)x^m+o(x^m),\quad \lambda_2(x,y)=\frac12(\delta-f_0)x^m+o(x^m),
\end{gathered}
\end{equation}
we obtain $d_1=\delta-f_0$, $d_2=\delta+f_0$ and the expression
\begin{equation}\label{Lienard_degenerate_Darboux_condition1}
2\delta f(x)+(\delta-f_0)q_{1,x}(x)+(\delta+f_0)q_{2,x}(x)=0.
\end{equation}
Recall that one of the parameters $d_1$ or $d_2$ can be chosen arbitrary. By
Theorem \ref{T:Lienard_degenerate_Darboux1}  equation~\eqref{Lienard_y_x}
related to the Li\'{e}nard  differential  system under consideration
possesses two distinct polynomial solutions $y(x)=q_1(x)$ and $y(x)=q_2(x)$.
Thus, we obtain the following relations
\begin{equation} \label{Lienard_degenerate_Darboux_explicit_add2}
f(x)+q_{1,x}(x)=-\frac{g(x)}{q_{1}(x)},\quad f(x)+q_{2,x}(x)=-\frac{g(x)}{q_{2}(x)}.
\end{equation}
Substituting these relations and the values of $d_1$ and $d_2$ into the
condition on the cofactors $d_1[f(x)+q_{1,x}(x)]+d_2[f(x)+q_{2,x}(x)]=0$
yields the expression $q_2(x)=-d_2q_1(x)/d_1$. Finally, we use relations
\eqref{Lienard_degenerate_Darboux_condition1} and
\eqref{Lienard_degenerate_Darboux_explicit_add2} in order to derive the
explicit representations of the  polynomials $f(x)$ and $g(x)$.
\end{proof}

\textit{Corollary.} The family of first-order ordinary differential equations
\begin{equation} \label{Lienard_degenerate_Darboux_explicit_add}
yy_x-\frac{2f_0}{f_0-\delta}q_{1,\,x}y+\frac{f_0+\delta}{f_0-\delta}q_{1,\,x}q_1=0
\end{equation}
associated with systems \eqref{Lienard_degenerate_Darboux_explicit} has two
distinct polynomial solutions of the form $y(x)=q_1(x)$ and
$y(x)=(f_0+\delta)q_1(x)/(f_0-\delta)$.

\textit{Remark 1.} Suppose we are in assumptions of Theorem
\ref{T:Lienard_degenerate_Darboux1} with the exception of the condition
$\delta/f_0\not\in\mathbb{Q}\setminus\{0\}$. Then function~\eqref{Lienard_degenerate_Darboux_FI1} is still a Darboux first integral of
the related  Li\'{e}nard  differential  system. In addition, the corollary to
Theorem \ref{T:Lienard_degenerate_Darboux1} is also valid. However, there may
exist  other resonant Li\'{e}nard  differential systems from family ($B$)
with Darboux first integrals.

\textit{Remark 2.} Expression \eqref{Lienard_degenerate_Darboux_explicit}
 provides a set of systems with
rational first integrals~\eqref{Lienard_degenerate_Darboux_FI1} provided that
$f_0/\delta$ is a rational number.

Next, let us study the Darboux integrability in the case $\delta=0$.

\begin{theorem}\label{T:Lienard_degenerate_Darboux2}
A Li\'{e}nard  differential  system \eqref{Lienard_gen} satisfying
 the conditions $\deg g=2\deg f+1$ and $\delta=0$
is Darboux integrable if and only if the system can be represented in the
form
\begin{equation}\label{Lienard_degenerate_Darboux_del0_f_g}
x_t=y,\quad y_t=2q_x(x)y-q(x)q_x(x),
\end{equation}
where $q(x)$ is a polynomial of degree $m+1$. A related Darboux
    first integral reads~as
\begin{equation}\label{Lienard_degenerate_Darboux_FI1_del0}
I(x,y)=\left[y-q(x)\right]\exp\left[-\frac{q(x)}{y-q(x)}\right].
\end{equation}
\end{theorem}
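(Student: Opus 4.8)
The plan is to prove sufficiency by direct substitution and necessity by combining the classification of invariant curves and exponential invariants in the case $\delta=0$ with the cofactor balance of the Darboux theory.

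For the \emph{sufficiency}, I would first read off from \eqref{Lienard_degenerate_Darboux_del0_f_g} that $f(x)=-2q_x(x)$ and $g(x)=q(x)q_x(x)$, so that $\deg f=m$, $\deg g=2m+1$, and a short check gives $4(m+1)g_0=f_0^2$, i.e. $\delta=0$, confirming we are in the admissible family. Writing $F(x,y)=y-q(x)$, a direct computation with the vector field $\mathcal{X}=y\partial_x+(2q_xy-qq_x)\partial_y$ gives $\mathcal{X}F=q_xF$, so $F=0$ is an invariant algebraic curve with cofactor $q_x$, while $\mathcal{X}(-q/F)=-q_x$, so $E=\exp[-q/F]$ is an exponential invariant with cofactor $-q_x$. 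Hence $\mathcal{X}(FE)=q_xFE-q_xFE=0$, which verifies that \eqref{Lienard_degenerate_Darboux_FI1_del0} is a first integral.

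For the \emph{necessity}, I would assume the system is Darboux integrable. By Theorem \ref{T:Lienard_nonintegrability} it must possess an invariant algebraic curve, and since $\delta=0$ forces the two Puiseux series at infinity to merge, Theorem \ref{T:Lienard_degenerate} shows that the only admissible irreducible curve is $F(x,y)=y-q(x)$ with $q=\{y_\infty\}_+$ of degree $m+1$ and cofactor $\lambda=-f-q_x$; its existence means $y=q(x)$ solves \eqref{Lienard_y_x}, whence $g=-q(f+q_x)$. By Theorem \ref{T:L23_Non_aut_FI} with $\omega=0$, a Darboux first integral is of the form $I=F^{d_1}\exp[S/R]$ with $R=0$ an invariant algebraic curve. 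If $R$ were constant, the exponential factor would read $\exp[h]$, $h\in\mathbb{C}[x,y]$, whose cofactor is divisible by $y$ by Lemma \ref{L:Lienard_exp_inv1}; since $\lambda$ is $y$-independent, the balance would force that cofactor to vanish, making $h$ a polynomial first integral and hence producing infinitely many distinct irreducible invariant curves, in contradiction with Theorem \ref{T:Lienard_degenerate}. Therefore $R$ is non-constant, and Lemma \ref{L:Lienard_exp_degenerate} pins the exponential invariant to the form $E=\exp[u(x)/(y-q(x))]$ with cofactor $u_x$, where $u$ solves \eqref{Lienard_degenerate_Int2}.

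The final step is to impose the cofactor condition \eqref{NDFI_gen_cond} with $\omega=0$, namely $-d_1(f+q_x)+u_x=0$. Since the leading term of $\lambda$ is $-\tfrac12 f_0x^m\neq0$ when $\delta=0$, one has $f+q_x\not\equiv0$; and as $u$ has degree $m+1\geq1$ by the remark following Lemma \ref{L:Lienard_exp_degenerate}, one has $u_x\not\equiv0$, so $d_1\neq0$. Eliminating $u_x$ between this cofactor condition and the defining relation $qu_x+(f+q_x)u=0$ of \eqref{Lienard_degenerate_Int2} yields $u=-d_1q$; substituting back and using $q\not\equiv0$ forces $f+2q_x=0$, that is $f=-2q_x$ and consequently $g=-q(f+q_x)=qq_x$, which is exactly \eqref{Lienard_degenerate_Darboux_del0_f_g}, and the normalisation $d_1=1$ (equivalently $u=-q$) recovers \eqref{Lienard_degenerate_Darboux_FI1_del0}. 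I expect the main obstacle to be the third paragraph: rigorously excluding every exponential factor with a polynomial argument and confirming that the entire first integral is captured by the single curve $y-q=0$ together with the associated exponential invariant of Lemma \ref{L:Lienard_exp_degenerate}.
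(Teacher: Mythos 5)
Your proposal is correct and follows essentially the same route as the paper's proof: sufficiency by direct verification, and necessity by combining Theorem \ref{T:Lienard_degenerate} (uniqueness of the curve $y-q(x)=0$), Lemma \ref{L:Lienard_exp_inv1} (excluding exponential factors with polynomial argument), Lemma \ref{L:Lienard_exp_degenerate} (pinning the exponential invariant to $\exp[u/(y-q)]$ with cofactor $u_x$), and the cofactor balance together with \eqref{Lienard_degenerate_Int2} to force $u=-d_1q$, $f=-2q_x$, $g=qq_x$. The only cosmetic difference is how $d_1=0$ is excluded — the paper argues that $d=0$ would produce an impossible invariant curve $u(x)=0$, while you use $u_x\not\equiv0$ from the degree-$(m+1)$ remark — and both arguments are sound.
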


\begin{proof}
By direct computations we verify that expression
\eqref{Lienard_degenerate_Darboux_FI1_del0} gives a Darboux first integral of
a Li\'{e}nard  differential  system \eqref{Lienard_gen} with the polynomials
$f(x)$ and $g(x)$ satisfying
relations~\eqref{Lienard_degenerate_Darboux_del0_f_g} and $\delta=0$. Thus,
we have established sufficiency of conditions presented in the theorem.

Let us prove their necessity. Suppose that a Li\'{e}nard  differential system
from family ($B$) with $\delta=0$ possesses a Darboux first integral. A
Darboux integrable differential system~\eqref{DS} has at least one invariant
algebraic curve. It follows from Theorem \ref{T:Lienard_degenerate}  that the
Li\'{e}nard differential system in question possesses at most one irreducible
invariant algebraic curve. This curve is of the form $y-q(x)=0$ and has the
cofactor $\lambda(x,y)=-f(x)-q_x(x)$. Thus, we have established the existence
of the invariant algebraic curve $y-q(x)=0$. According to
Lemma~\ref{L:Lienard_exp_degenerate}, the Li\'{e}nard differential system under consideration may have exponential invariants associated with the invariant
algebraic curve $y-q(x)=0$. These invariants take the form
\eqref{Lienard_degenerate_exp_inv_explicit} and possess the cofactor
$\varrho(x,y)=u_x(x)$, where the polynomial $u(x) $ satisfies equation
\eqref{Lienard_degenerate_Int2}. Using Lemma \ref{L:Lienard_exp_inv1}, we
conclude that exponential invariants with a polynomial argument cannot enter
explicit expressions of Darboux first integrals. Consequently, a Darboux
first integral can be represented in the form
\begin{equation}\label{Lienard_degenerate_Darboux_FI1_del0_test}
I(x,y)=\left[y-q(x)\right]^{d}\exp\left[\frac{u(x)}{y-q(x)}\right],\quad d\in\mathbb{C},
\end{equation}
where we suppose that $u(x)\equiv0$ whenever exponential invariants
\eqref{Lienard_degenerate_exp_inv_explicit} do not exist.

If $d=0$, then the related Li\'{e}nard differential system has an invariant
algebraic curve $u(x)=0$ independent of $y$. It is impossible due to Theorem
\ref{T:Lienard_degenerate}. Thus, it is without loss of generality to set
$d=1$. The cofactors of all the invariants identically satisfy the relation
$\lambda(x,y)+\varrho(x,y)=0$ provided that first integral
\eqref{Lienard_degenerate_Darboux_FI1_del0_test} exists. As a result, we get
the expression $f(x)=u_x(x)-q_x(x)$. Substituting this expression into
equation \eqref{Lienard_degenerate_Int2} yields $u(x)=-q(x)$. This relation
proves existence of exponential invariants
\eqref{Lienard_degenerate_exp_inv_explicit}. Since $y=q(x)$ is a polynomial
solution of equation \eqref{Lienard_y_x} and $f(x)=-2q_x(x)$, we find the
polynomial $g(x)$. The result is $g(x)=q(x)q_x(x)$.

\end{proof}

Interestingly, Li\'{e}nard  differential systems
\eqref{Lienard_degenerate_Darboux_explicit} and
\eqref{Lienard_degenerate_Darboux_del0_f_g} are those characterized by the
so-called Chiellini integrability condition
\begin{equation} \label{Lienard_Chiellini}
\frac{d}{dx}\left[\frac{g(x)}{f(x)}\right]=\alpha f(x),\quad \alpha\in\mathbb{C}\setminus\{0\}.
\end{equation}
This condition was originally introduced by A. Chiellini \cite{Chiellini01}.
  Chiellini integrable Li\'{e}nard  differential systems can be transformed to linear systems $s_{\tau}=z$, $z_{\tau}=-z-\alpha s$
via the generalized Sundman transformation $s(\tau)=\int f(x) dx$, $z(\tau)=y$, $d\tau=f(x)dt$, see \cite{Berkovich01}. Some other properties of Chiellini integrable Li\'{e}nard  differential systems
are presented in~\cite{Choudhury_Lienard}.

Our next step is to study the existence of non-autonomous Darboux first
integrals with a time-dependent exponential factor~\eqref{FI_t_gen}.

\begin{lemma}\label{L:Lienard_degenerate_Darboux_time1}
A Li\'{e}nard  differential  system \eqref{Lienard_gen} satisfying
 the conditions $\deg g=2\deg f+1$, $\deg f\neq0$, and $\delta/f_0\not\in\mathbb{Q}$
possesses a non-autonomous Darboux first integral \eqref{FI_t_gen} if and
only if the system reads as
\begin{equation}\label{Lienard_degenerate_Darboux_system_time}
\begin{gathered}
x_t=y,\quad y_t=-\left\{f_0(x-x_0)^m+\frac{(m+2)\omega}{2(m+1)\delta}\right\}y-
\frac{f_0^2-\delta^2}{4(m+1)}(x-x_0)^{2m+1}\\
-\frac{f_0\omega}{2(m+1)\delta}(x-x_0)^{m+1}-\frac{\omega^2}{4(m+1)\delta^2}(x-x_0)^{},
\end{gathered}
\end{equation}
where $x_0\in\mathbb{C}$ and $\omega\in\mathbb{C}\setminus\{0\}$. A related
non-autonomous Darboux
    first integral takes the form
\begin{equation}\label{Lienard_degenerate_Darboux_FI1_time}
\begin{gathered}
I(x,y,t)=\left[y+\frac{f_0-\delta}{2(m+1)}(x-x_0)^{m+1}+\frac{\omega}{2(m+1)\delta}(x-x_0)\right]^{\delta-f_0}\\
\times\left[y+\frac{\delta+f_0}{2(m+1)}(x-x_0)^{m+1}+\frac{\omega}{2(m+1)\delta}(x-x_0)\right]^{\delta+f_0}\exp(\omega t).
\end{gathered}
\end{equation}

\end{lemma}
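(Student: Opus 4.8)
The plan is to adapt the proof of Theorem~\ref{T:Lienard_degenerate_Darboux1}, now keeping the time-dependent factor $\exp(\omega t)$ with $\omega\neq0$. Sufficiency I would dispatch by a direct substitution of \eqref{Lienard_degenerate_Darboux_FI1_time} into $I_t+\mathcal{X}I=0$ and checking that it vanishes identically. For the necessity I would start from Theorem~\ref{T:L23_Non_aut_FI}, which expresses any integral of the form \eqref{FI_t_gen} through invariant algebraic curves, an exponential invariant, and the balance $\sum_j d_j\lambda_j+\varrho+\omega=0$. Because $\delta/f_0\notin\mathbb{Q}$ forces $\delta\neq0$, Lemma~\ref{L:Lienard_exp_degenerate} rules out exponential invariants with non-polynomial argument, and Lemma~\ref{L:Lienard_exp_inv1} shows that any $\exp[h(x,y)]$ has cofactor $\varrho$ divisible by $y$; since the $\lambda_j$ and the constant $\omega$ are independent of $y$, this forces $\varrho\equiv0$. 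Hence the integral carries no exponential factor.

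Next I would argue that both degree-one curves must occur. By Theorem~\ref{T:Lienard_degenerate} there are at most two irreducible invariant algebraic curves, of degree one or two in $y$, whose cofactors have the leading behavior listed in \eqref{Lienard_degenerate_Darboux_FI2}; each such cofactor carries a nonzero coefficient of $x^m$ (this is where $\delta\neq\pm f_0$ enters). As $\sum_j d_j\lambda_j$ is $y$-independent, the balance with $\omega\neq0$ demands that it equal the nonzero constant $-\omega$, which a single curve cannot deliver. Thus the system must possess two distinct curves $F_1=y-q_1=0$ and $F_2=y-q_2=0$, so that equation~\eqref{Lienard_y_x} has two distinct polynomial solutions $y=q_1$ and $y=q_2$; cancelling the $x^m$ terms of $d_1\lambda_1+d_2\lambda_2$ through \eqref{Lienard_degenerate_Darboux_FI3_dom} then fixes $d_1=\delta-f_0$ and $d_2=\delta+f_0$ up to a common scaling.

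The computational core is an elimination in the auxiliary variables $u=q_1-q_2$ and $v=q_1+q_2$. Subtracting the two polynomial-solution identities $q_k(f+q_{k,x})=-g$ yields $fu+\tfrac12(uv)_x=0$, while the cofactor balance $d_1(f+q_{1,x})+d_2(f+q_{2,x})=\omega$ rewrites, using $d_1=\delta-f_0$, $d_2=\delta+f_0$, as $2\delta f+\delta v_x-f_0u_x=\omega$. Solving the second relation for $f$ and inserting it into the first, the terms $\delta v_x u$ and $\delta u v_x$ cancel, and I am left with $\omega u+f_0u\,u_x+\delta v\,u_x=0$, that is $v=-\frac{f_0}{\delta}\,u-\frac{\omega}{\delta}\,\frac{u}{u_x}$.

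The decisive step, and the one I expect to be the main obstacle, is to read off the shape of $u$ from this identity. Since $\frac{f_0}{\delta}u$ is already a polynomial and $\omega/\delta\neq0$, requiring $v\in\mathbb{C}[x]$ is equivalent to $u_x\mid u$; for $u$ of degree $m+1$ this forces $u$ to have a single root of full multiplicity, $u=\frac{\delta}{m+1}(x-x_0)^{m+1}$, the leading coefficient being prescribed by \eqref{Lienard_degenerate_Puiseux_series_dominant}. Then $u/u_x=(x-x_0)/(m+1)$ determines $v$, hence $q_1=(v+u)/2$ and $q_2=(v-u)/2$, and finally $f$ from $2\delta f=\omega-\delta v_x+f_0u_x$ and $g=-q_1(f+q_{1,x})$; these reproduce precisely system~\eqref{Lienard_degenerate_Darboux_system_time} and first integral~\eqref{Lienard_degenerate_Darboux_FI1_time}. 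Since the leading balance pins down the ratio $d_1:d_2$ and the constant term of the same balance recovers $\omega$, the integral is determined up to the obvious scaling, which completes the argument.
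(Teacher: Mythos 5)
Your proposal is correct and takes essentially the same approach as the paper: sufficiency by direct substitution, and for necessity the same chain — Theorem~\ref{T:L23_Non_aut_FI} plus Lemmas~\ref{L:Lienard_exp_inv1} and \ref{L:Lienard_exp_degenerate} to strip exponential factors, the leading-order balance of the cofactors to force two degree-one invariant curves with $d_1:d_2=(\delta-f_0):(\delta+f_0)$, and finally a divisibility argument pinning the difference of the two polynomial solutions to a pure power $c(x-x_0)^{m+1}$, exactly as in the paper. The only (cosmetic) deviation is the middle algebra: the paper substitutes $q_2=q_1+v$ into \eqref{Lienard_y_x} to obtain $[(\delta-f_0)v+2\delta q_1]v_x+\omega v=0$ and concludes $v_x\mid v$, whereas you eliminate $f$ via the sum/difference variables to reach the equivalent relation $\omega u+f_0uu_x+\delta vu_x=0$ and conclude $u_x\mid u$; both routes produce the same $q_1$, $q_2$, hence the same system \eqref{Lienard_degenerate_Darboux_system_time} and first integral \eqref{Lienard_degenerate_Darboux_FI1_time}.
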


\begin{proof}
It is straightforward to derive that expression
\eqref{Lienard_degenerate_Darboux_FI1_time} is a non-autonomous Darboux first
integral of systems \eqref{Lienard_degenerate_Darboux_system_time}. We only
need to prove the converse statement. Supposing that a non-resonant
Li\'{e}nard  differential  system from family ($B$) possesses a
non-autonomous Darboux first integral \eqref{FI_t_gen}, we use the arguments
given in the proof of Theorem \ref{T:Lienard_degenerate_Darboux1} to represent
this first integral as
\begin{equation}\label{Lienard_degenerate_Darboux_FI1_time_a}
I(x,y,t)=\left[y-q_1(x)\right]^{\delta-f_0}\left[y-q_2(x)\right]^{\delta+f_0}\exp(\omega t) ,\quad \omega\neq0,
\end{equation}
where $q_1(x)$ and $q_2(x)$ are distinct polynomial  solutions of equation
\eqref{Lienard_y_x}. In addition, we get the following condition
\begin{equation}\label{Lienard_degenerate_Darboux_condition1_time}
2\delta f(x)+(\delta-f_0)q_{1,\,x}(x)+(\delta+f_0)q_{2,\,x}(x)-\omega=0.
\end{equation}
Further, we express the polynomial $f(x)$ from this condition. Substituting
$y(x)=q_1(x)$ into equation \eqref{Lienard_y_x}, we obtain the polynomial
$g(x)$. Let us introduce the polynomial $v(x)$ according to the rule
$q_2(x)-q_1(x)=v(x)$. Requiring that the function $q_2(x)=q_1(x)+v(x)$ is a
solution of equation \eqref{Lienard_y_x}, we get the relation
\begin{equation}\label{Lienard_degenerate_Darboux_condition1_time_eq_v}
[(\delta-f_0)v(x)+2\delta q_1(x)]v_x(x)+\omega v(x)=0
\end{equation}
Since $q_1(x)$ is a polynomial, we obtain the ordinary differential equation
$\beta(x-x_0)v_x=v$, where $\beta$, $x_0\in\mathbb{C}$. Integrating this
equation yields $v(x)=v_0(x-x_0)^{1/\beta}$ with $v_0\in\mathbb{C}$ being a
constant of integration. It follows from expression $q_2(x)-q_1(x)=v(x)$ that
$v(x)$ is a polynomial of degree $m+1$. Thus, we obtain $\beta=1/(m+1)$. As a
result the polynomials
 $q_1(x)$ and $q_2(x)$ can be represented in the form
 \begin{equation}\label{Lienard_degenerate_Darboux_condition1_time_eq_q_12}
 \begin{gathered}
q_1(x)=\frac{\delta-f_0}{2(m+1)}(x-x_0)^{m+1}-\frac{\omega}{2(m+1)\delta}(x-x_0),\\
q_2(x)=-\frac{\delta+f_0}{2(m+1)}(x-x_0)^{m+1}-\frac{\omega}{2(m+1)\delta}(x-x_0).
\end{gathered}
\end{equation}
Finally,   we find the polynomials $f(x)$ and $g(x)$ from
condition \eqref{Lienard_degenerate_Darboux_condition1_time} and equation
\eqref{Lienard_y_x} recalling the fact that, for example, $y(x)=q_1(x)$ is a
solution of the latter. The uniqueness of independent non-autonomous Darboux
    first integral \eqref{Lienard_degenerate_Darboux_FI1_time} follows from the uniqueness of the polynomials $q_1(x)$, $q_2(x)$
and the dominant behavior of the cofactors given by expression~\eqref{Lienard_degenerate_Darboux_FI3_dom}.

\end{proof}

\textit{Remark.} Suppose we are in assumptions of Lemma
\ref{L:Lienard_degenerate_Darboux_time1} with the exception of the condition
$\delta/f_0\not\in\mathbb{Q}\setminus\{0\}$. Then function~\eqref{Lienard_degenerate_Darboux_FI1_time} is still a non-autonomous Darboux first integral of
the related  Li\'{e}nard  differential  system.  However, there may
exist  other resonant Li\'{e}nard  differential systems from family ($B$)
with non-autonomous Darboux first integrals of the form~\eqref{FI_t_gen}.

Let us note that if $\delta=\pm mf_0/(m+2)$, then any system \eqref{Lienard_degenerate_Darboux_system_time} has not only a non-autonomous Darboux first integral \eqref{Lienard_degenerate_Darboux_FI1_time}, but also an independent Darboux first integral~\eqref{Lienard_degenerate_Darboux_FI1}, where $q_1(x)$ is given by the relation
 \begin{equation}\label{Lienard_degenerate_Darboux_time_GS_q1}
 \begin{gathered}
q(x)=-\frac{f_{0} x^{m +1}}{\left(m +1\right) \left(m +2\right)}-\frac{\left(m +2\right) \omega  x}{2 \left(m +1\right) f_{0} m}.
\end{gathered}
\end{equation}
Let $I_1(x,y)$ be Darboux first integral~\eqref{Lienard_degenerate_Darboux_FI1} and $I_2(x,y,t)$ be non-autonomous Darboux first integral~\eqref{Lienard_degenerate_Darboux_FI1_time}. Eliminating $y$ from the relations $I_1(x,y)=C_1$ and $I_2(x,y,t)=C_2$, we can find the general solution of a system \eqref{Lienard_degenerate_Darboux_system_time} under the condition $\delta=\pm mf_0/(m+2)$. Note that such a system is resonant near $x=\infty$. The general solution in the case $m=2$ previously appeared in \cite{Ruiz01}, see also \cite{DS2021}.

\begin{lemma}\label{L:Lienard_degenerate_Darboux_time2}
 A Li\'{e}nard  differential  system \eqref{Lienard_gen} satisfying
 the conditions $\deg g=2\deg f+1$, $\deg f\neq0$, and $\delta=0$
possesses a non-autonomous Darboux first integral~\eqref{FI_t_gen} if and
only if the system is of the form
\begin{equation}\label{Lienard_degenerate_Darboux_del0_f_g_time}
\begin{gathered}
x_t=y,\quad y_t=-\left\{f_0(x-x_0)^m+\frac{(m+2)\omega}{m+1}\right\}y
-\frac{(x-x_0)}{4(m+1)}\left\{f_0(x-x_0)^{m}+2\omega\right\}^2,
\end{gathered}
\end{equation}
where $x_0\in\mathbb{C}$ and $\omega\in\mathbb{C}\setminus\{0\}$. A related
non-autonomous Darboux
    first integral reads as
\begin{equation}\label{Lienard_degenerate_Darboux_FI1_del0_time}
\begin{gathered}
I(x,y,t)=\exp\left[\frac{f_0(x-x_0)^{m+1}}{2(m+1)y+f_0(x-x_0)^{m+1}+2\omega(x-x_0)}\right]\\
\times\left[y+\frac{f_0(x-x_0)^{m+1}}{2(m+1)}+\frac{\omega(x-x_0)}{m+1}\right]\exp(\omega t).
\end{gathered}
\end{equation}

\end{lemma}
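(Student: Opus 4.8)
The plan is to lift the proof of Theorem \ref{T:Lienard_degenerate_Darboux2} to the non-autonomous setting, exactly as Lemma \ref{L:Lienard_degenerate_Darboux_time1} lifts Theorem \ref{T:Lienard_degenerate_Darboux1}. Sufficiency is a direct check: reading $f(x)$ and $g(x)$ off from \eqref{Lienard_degenerate_Darboux_del0_f_g_time}, one verifies that \eqref{Lienard_degenerate_Darboux_FI1_del0_time} satisfies $I_t+\mathcal{X}I=0$; equivalently, that the cofactor of the invariant algebraic curve and the cofactor of the exponential invariant sum to $-\omega$, as \eqref{NDFI_gen_cond} requires. The content is the necessity direction. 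First I would pin down the admissible shape of the first integral. Since $\delta=0$ forces $p_1=p_2=0$, Theorem \ref{T:Lienard_degenerate} guarantees at most one irreducible invariant algebraic curve, necessarily $y-q(x)=0$ with $\deg q=m+1$ and leading coefficient $-f_0/(2(m+1))$ as in \eqref{F_L2_5_4}. Any exponential factor $\exp[S/R]$ in \eqref{FI_t_gen} has $R=0$ invariant (Lemma \ref{L:exp_factor_inv_curve}), so $R=y-q$, and Lemma \ref{L:Lienard_exp_degenerate} identifies the only rational-argument exponential invariant as $\exp[u(x)/(y-q(x))]$ with $u$ solving \eqref{Lienard_degenerate_Int2}. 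A polynomial-argument factor $\exp[h(x,y)]$ would, by Lemma \ref{L:Lienard_exp_inv1}, carry a $y$-divisible cofactor; being the only $y$-dependent term in \eqref{NDFI_gen_cond}, it would have to vanish, yielding a polynomial first integral and hence infinitely many invariant algebraic curves, contradicting Theorem \ref{T:Lienard_degenerate}. Thus the first integral is forced to the form \eqref{Lienard_degenerate_Darboux_FI1_del0_test} multiplied by $\exp(\omega t)$, that is $I=(y-q)^d\exp[u/(y-q)]\exp(\omega t)$.

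Next I would normalize and discard degenerate cases. Since $I^{\alpha}$ is again such a first integral, I may set $d=1$. The value $d=0$ is impossible: then \eqref{NDFI_gen_cond} reads $u_x=-\omega$, forcing $\deg u=1$ against $\deg u=m+1$. Likewise $u\equiv0$ is impossible, since $f+q_x$ has degree $m\geq1$ (its leading coefficient is $f_0-f_0/2=f_0/2\neq0$) and cannot equal the constant $\omega/d$. With $d=1$, condition \eqref{NDFI_gen_cond} becomes $-(f+q_x)+u_x+\omega=0$, i.e. $f+q_x=u_x+\omega$. Substituting this into \eqref{Lienard_degenerate_Int2}, $qu_x+(f+q_x)u=0$, collapses the two constraints into the single relation $u_x(q+u)=-\omega u$.

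This relation is the crux, and where I expect the main work. From $q+u=-\omega u/u_x$ being a polynomial one gets $u_x\mid u$; as $\deg u-\deg u_x=1$, writing $u=\gamma(x-x_0)u_x$ and matching coefficients of $(x-x_0)^k$ forces $\gamma=1/(m+1)$ and $u(x)=c(x-x_0)^{m+1}$ for constants $c$ and $x_0$. Then $q+u=-\omega(x-x_0)/(m+1)$ determines $q(x)$, the leading-coefficient normalization from \eqref{F_L2_5_4} fixes $c=f_0/(2(m+1))$, and finally $f=u_x+\omega-q_x$ together with $g=-q(f+q_x)$ (valid because $y=q$ solves \eqref{Lienard_y_x}) reproduces \eqref{Lienard_degenerate_Darboux_del0_f_g_time} and the integral \eqref{Lienard_degenerate_Darboux_FI1_del0_time}; one checks en route that $g_0=f_0^2/(4(m+1))$, consistent with $\delta=0$. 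Uniqueness of the independent non-autonomous Darboux first integral then follows from the uniqueness of $q$ and $u$ and the dominant-behaviour argument already used on the cofactors. The hardest and most delicate point is precisely the passage from $u_x(q+u)=-\omega u$ to the monomial form of $u$ — the $\omega\neq0$ analogue of the simple conclusion $u=-q$ in the autonomous case; everything else is bookkeeping parallel to Theorem \ref{T:Lienard_degenerate_Darboux2}.
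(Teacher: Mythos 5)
Your proposal is correct and follows essentially the same route as the paper's proof: reduce to the ansatz $I=(y-q)^{d}\exp[u/(y-q)]\exp(\omega t)$ via Theorem \ref{T:Lienard_degenerate} and Lemmas \ref{L:Lienard_exp_inv1}, \ref{L:Lienard_exp_degenerate}, discard $d=0$ and $u\equiv0$, set $d=1$, and combine the cofactor condition with \eqref{Lienard_degenerate_Int2} to force $u/u_x$ to be a first-degree polynomial, whence $u=c(x-x_0)^{m+1}$ and the system \eqref{Lienard_degenerate_Darboux_del0_f_g_time}. The only differences are cosmetic: you obtain the monomial form of $u$ by coefficient matching where the paper integrates the ODE $\beta(x-x_0)u_x=u$, and you fix $c=f_0/(2(m+1))$ from the leading coefficient in \eqref{F_L2_5_4} rather than by the paper's choice of parametrization.
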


\begin{proof}
By direct computations we verify that expression
\eqref{Lienard_degenerate_Darboux_FI1_del0_time} is a time-dependent first
integral of system \eqref{Lienard_degenerate_Darboux_del0_f_g_time}.

Let us prove the converse statement. We suppose that   a Li\'{e}nard
differential system from family ($B$) satisfies the restriction $\delta=0$
and possesses a non-autonomous Darboux first integral~\eqref{FI_t_gen}.
Repeating the arguments used in the proof of Theorem
\ref{T:Lienard_degenerate_Darboux2}, we represent such a first integral in
the form
\begin{equation}\label{Lienard_degenerate_Darboux_FI1_del0_time_new}
I(x,y,t)=\left[y-q(x)\right]^d\exp\left[\frac{u(x)}{y-q(x)}\right]\exp(\omega t),\quad
d\in\mathbb{C},\quad\omega\in\mathbb{C}\setminus\{0\}.
\end{equation}
In addition, we note that  the related system possesses the invariant
algebraic curve $y-q(x)=0$ with the cofactor $\lambda(x,y)=-f(x)-q_x(x)$. By
Lemma \ref{L:Lienard_exp_degenerate} the cofactor $\varrho(x,y)$ of the
exponential invariant $E(x,y)=\exp[u(x)/(y-q(x))]$ reads as
$\varrho(x,y)=u_x(x)$. Condition~\eqref{JLM_gen_cond} relating these
cofactors and the parameter $\omega$ takes the form
\begin{equation}\label{Lienard_degenerate_cond_cof_time1}
d[f(x)+q_x(x)]-u_x(x)-\omega=0,\quad d\in\mathbb{C}.
\end{equation}
Let us begin with the case $d=0$.  We conclude from condition
\eqref{Lienard_degenerate_cond_cof_time1} that $u(x)$ is a first-degree
polynomial. Using the remark to Lemma \ref{L:Lienard_exp_degenerate}, we find
the value of $m$. The result is $m=0$. As it was mentioned at the beginning
of this section, we do not consider Li\'{e}nard differential systems with the
restriction $m=0$ ($\deg f=0$).

We turn to the case $d\neq 0$. Without loss of generality, we  set $d=1$.
Now let us suppose that the restriction $u(x)\equiv0$ is valid. The related system may not have exponential invariants. Finding
the dominant behavior of the cofactor $\lambda(x,y)=-f(x)-q_x(x)$ near the
point $x=\infty$, we obtain
\begin{equation}\label{Lienard_degenerate_cof_dom_time1}
\lambda(x,y)=-\frac{f_0}{2}x^m+o(x^m),\quad x\rightarrow \infty.
\end{equation}
Recalling the inequality $m>0$, we see that condition
\eqref{Lienard_degenerate_cond_cof_time1} is not satisfied. Thus, we conclude
that the exponential invariant exists. Further, we eliminate from relations
\eqref{Lienard_degenerate_Int2} and \eqref{Lienard_degenerate_cond_cof_time1}
the polynomial $f(x)$. As a result we get the following expression
\begin{equation}\label{Lienard_degenerate_DI_q}
q(x)=-u(x)-\omega\frac{u(x)}{u_x(x)}.
\end{equation}
Consequently, the ratio $u(x)/u_x(x)$ is a polynomial. It is straightforward
to see that this polynomial is of the first degree and can be represented as
$\beta(x-x_0)$, where $\beta$, $x_0\in\mathbb{C}$. Integrating the ordinary
differential equation $\{\beta(x-x_0)\}u_x(x)=u(x)$, we obtain
$u(x)=u_0(x-x_0)^{1/\beta}$, where $u_0\in\mathbb{C}$ is a constant of
integration. We recall that that $u(x)$ is a polynomial of degree $m+1$. As a
result we get $\beta=1/(m+1)$. Substituting the relation
$u(x)=u_0(x-x_0)^{m+1}$ into expressions \eqref{Lienard_degenerate_DI_q} and
\eqref{Lienard_degenerate_cond_cof_time1}, we find the polynomials $q(x)$ and
$f(x)$. In addition, we choose the parametrization $u_0=f_0/(2\{m+1\})$. The
polynomial $g(x)$ we find recalling the fact that $y(x)=q(x)$ is the
polynomial solution of the related equation \eqref{Lienard_y_x}.

Thus, we see that if the Li\'{e}nard  differential  system in question has a
non-autonomous Darboux first integral~\eqref{FI_t_gen}, then there exist the invariant
algebraic curve $y-q(x)=0$ and  the exponential invariant $E(x,y)=\exp[\alpha
u(x)/(y-q(x))]$, where $\alpha\in\mathbb{C}$, the polynomial $q(x)$ is given
by expression \eqref{Lienard_degenerate_DI_q}, and the polynomial $u(x)$ is
$u(x)=f_0(x-x_0)^{m+1}/(2\{m+1\})$.


\end{proof}

Below we shall prove that Li\'{e}nard  differential  systems \eqref{Lienard_degenerate_Darboux_system_time} and \eqref{Lienard_degenerate_Darboux_del0_f_g_time}
are Liouvillian integrable.

Let us study the Liouvillian integrability of  Li\'{e}nard differential
systems from family ($B$). We begin with some partial case characterized by
Darboux integrating factors of a special form. Note that  in
Theorem \ref{T:Lienard_degenerate_Liouville_polynomial} we use novel
designations for polynomial solutions of equations \eqref{Lienard_y_x} related to Li\'{e}nard differential
systems. We need novel
designations because polynomials $p_1(x)$ and $p_2(x)$ may have coinsiding
dominant terms. After considering this special case, we shall turn to
non-resonant systems.

\begin{theorem}\label{T:Lienard_degenerate_Liouville_polynomial}
A Li\'{e}nard  differential  system \eqref{Lienard_gen} from family ($B$)
possesses the Darboux integrating factor
\begin{equation}\label{Lienard_degenerate_Liouville_IF_polynomial}
M(x,y)=[y-p_1(x)]^{d_1}[y-p_2(x)]^{d_2}, \quad d_1,d_2\in\mathbb{C}\setminus\{0\},
\end{equation}
where $p_1(x)$ and $p_2(x)$ are distinct polynomials, if and only if  one of
the following assertions is valid.

\begin{enumerate}

\item The system is of the form
    \eqref{Lienard_degenerate_Darboux_explicit} and the polynomials
    $p_1(x)$ and $p_2(x)$ are linearly dependent:
    $p_2(x)=(f_0+\delta)/(f_0-\delta)p_1(x)$. In this case the parameters
    $d_1$ and $d_2$ can be chosen as $d_1=d_2=-1$ and the following
    relations $p_1(x)=q_1(x)$, $p_2(x)=q_2(x)$ are valid. In fact, there
    exists a family of Darboux integrating
    factors~\eqref{Lienard_degenerate_Liouville_IF_polynomial} that are
    products of the integrating factor
    $M_0(x,y)=[y-q_1(x)]^{-1}[y-q_2(x)]^{-1}$ and the Darboux first
    integrals $I^{\varkappa}(x,y)$, where the function $I(x,y)$ is given
    by expression \eqref{Lienard_degenerate_Darboux_FI1} and
    $\varkappa\in\mathbb{C}$.

\item The system reads as
\begin{equation}\label{Lienard_degenerate_Liouville_system_polynomial}
\begin{gathered}
x_t=y,\quad y_t=\left[\beta(l+k)u^{k-1}+\frac{\{(2d_1+1)l+k\}l}{k-l}u^{l-1}\right]u_xy\\
-\left[l\beta^2u^{2k-1}+\frac{\{(2d_1+1)l+k\}l\beta}{k-l}u^{k+l-1}+\frac{(ld_1+k)(d_1+1)l^2}{(k-l)^2}u^{2l-1}\right]u_x,
\end{gathered}
\end{equation}
where $k$ and $l$ are relatively prime both non-unit natural numbers,
$u(x)$ is a polynomial of degree $(m+1)/\max\{k,l\}$ and
$\beta\in\mathbb{C}\setminus\{0\}$. The polynomials $p_1(x)$ and $p_2(x)$
can be represented as
\begin{equation}\label{Lienard_degenerate_Liouville_system_polynomial_q}
\begin{gathered}
p_1(x)=\beta u^k(x)+\frac{(d_1+1)l}{k-l}u^l(x),\quad p_2(x)=\beta u^k(x)+\frac{(ld_1+k)}{k-l}u^l(x)
\end{gathered}
\end{equation}
and the parameter $d_2$ is given by the relation $d_2=-(d_1+1+k/l)$.

\end{enumerate}
\end{theorem}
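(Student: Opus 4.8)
The plan is to reduce the existence of an integrating factor of the form \eqref{Lienard_degenerate_Liouville_IF_polynomial} to a single first-order linear ODE relating the two polynomial solutions of \eqref{Lienard_y_x}, and then to read off the two cases from the structure of its general solution. First I would record the standing facts. Since every polynomial factor of a Darboux integrating factor generates an invariant algebraic curve, the curves $y-p_1(x)=0$ and $y-p_2(x)=0$ are invariant, so by Theorem \ref{T:inv_curve_prim} the distinct polynomials $p_1(x)$, $p_2(x)$ are polynomial solutions of \eqref{Lienard_y_x}, i.e. $p_ip_{i,x}+fp_i+g=0$; by Theorem \ref{T:coff_local2} their cofactors are $\lambda_i(x,y)=-f(x)-p_{i,x}(x)=g(x)/p_i(x)$. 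Because $\operatorname{div}\mathcal{X}=-f$, the function $M$ is an integrating factor precisely when $d_1\lambda_1+d_2\lambda_2=f$, which is the identity
\[
(d_1+d_2+1)f(x)+d_1p_{1,x}(x)+d_2p_{2,x}(x)=0.
\]
The sufficiency (``if'') direction in both cases is then a direct substitution: one checks that the stated $p_1,p_2$ solve \eqref{Lienard_y_x} for the given $f,g$ and that the displayed $d_1,d_2$ satisfy this identity (and that \eqref{Lienard_degenerate_Darboux_FI1} is a first integral in Case~1).

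For the necessity direction I would introduce $v=p_2-p_1$, $w=p_1+p_2$ and $D=d_1+d_2+1$. Subtracting the two solution identities and using $p_2p_{2,x}-p_1p_{1,x}=\tfrac12(vw)_x$ gives $\tfrac12(vw)_x+fv=0$, whence $f=-(vw)_x/(2v)$. Eliminating $f$ between this and the integrating-factor identity (and using $p_{i,x}=\tfrac12(w_x\mp v_x)$) collapses everything to the scalar linear ODE
\[
w_x+D\,\frac{v_x}{v}\,w=(d_2-d_1)v_x,
\]
whose integration (for $D\neq-1$) yields $w=\dfrac{d_2-d_1}{D+1}\,v+C\,v^{-D}$ with $C$ a constant of integration. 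I would dispose of the exceptional values $D=-1$ and $D=0$ separately: for $D=-1$ a logarithmic term appears and polynomiality of $w$ forces $d_2=d_1$, i.e. $w\parallel v$; for $D=0$ the ODE gives $w=(d_2-d_1)v+C$ and then $f=-(vw)_x/(2v)$ is a polynomial only if $C=0$ (otherwise $v\mid v_x$, impossible for $\deg v\ge1$). Both exceptional cases therefore land in the linearly dependent situation.

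The decisive dichotomy is on $C$. If $C=0$ then $w$ is a scalar multiple of $v$, so $p_1$ and $p_2$ are linearly dependent; I would then match the resulting system to \eqref{Lienard_degenerate_Darboux_explicit} via Theorem \ref{T:Lienard_degenerate_Darboux1}, obtaining $p_2=\frac{f_0+\delta}{f_0-\delta}p_1$, $p_1=q_1$, $p_2=q_2$, and recognizing the whole line of admissible exponents as $M_0\,I^{\varkappa}$ with $M_0=[y-q_1]^{-1}[y-q_2]^{-1}$ and $I$ the Darboux first integral \eqref{Lienard_degenerate_Darboux_FI1} — this is assertion~1. If $C\neq0$, then polynomiality of $w$ forces $v^{-D}$ to be a non-constant polynomial; hence $-D=k/l\in\mathbb{Q}^{+}$ in lowest terms, and since $\gcd(k,l)=1$ the polynomial $v$ must be a perfect $l$-th power $v=u^{l}$ with $v^{-D}=u^{k}$. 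Substituting back gives $p_{1,2}=\beta u^{k}+a_{1,2}u^{l}$ with $\beta=C/2\neq0$; computing $a_1,a_2$ from $w$ and $v$ reproduces exactly $a_1=\frac{(d_1+1)l}{k-l}$, $a_2=\frac{ld_1+k}{k-l}$, while $D=-k/l$ is $d_2=-(d_1+1+k/l)$ — this is assertion~2 with $p_1,p_2$ as in \eqref{Lienard_degenerate_Liouville_system_polynomial_q}.

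It remains to pin down the admissible exponents and the degree of $u$, and this is where I expect the main obstacle to lie. The degree bookkeeping is immediate: by Theorem \ref{T:Lienard_degenerate} one has $\deg p_i=m+1$, so $\max\{k,l\}\deg u=m+1$ and $\deg u=(m+1)/\max\{k,l\}$. The delicate point is the leading-term analysis. The highest-order balance in \eqref{Lienard_y_x} makes the leading coefficient of any polynomial solution a root of $\rho^{2}(m+1)+f_0\rho+g_0=0$, whose discriminant equals $\delta^{2}$ by \eqref{eq:DP2_5_sig}; thus two distinct solutions share a dominant term only through the free parameter attached to a positive rational Fuchs index, i.e. only in the resonant case, which is precisely why the two orderings $k>l$ (coinciding dominant terms) and $k<l$ (distinct dominant terms, covering the non-resonant regime) both occur and account for the $\max\{k,l\}$. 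Matching $p_1,p_2$ against the admissible invariant-curve structure of Theorem \ref{T:Lienard_degenerate} and its resonant refinement, and excluding the degenerate exponent values (where the configuration collapses back to the linearly dependent Case~1 or to a uniquely determined solution with $\delta=0$), fixes the coprimality and the remaining constraints on $k$ and $l$. This combination of the perfect-power polynomiality argument with the resonance/leading-coefficient bookkeeping is the heart of the proof; the surrounding algebra is routine.
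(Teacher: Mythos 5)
Your proposal is correct and follows essentially the same route as the paper's proof: the cofactor identity $(d_1+d_2+1)f+d_1p_{1,x}+d_2p_{2,x}=0$, elimination of $f$ via the difference of the two polynomial-solution identities, integration of the resulting first-order linear ODE, the dichotomy on the constant of integration (zero giving the linearly dependent Darboux case of assertion~1, nonzero forcing $-D=k/l$ in lowest terms and the perfect-power representation $v=u^l$ of assertion~2), and separate treatment of the exceptional exponent values. Your symmetric variables $(v,w)=(p_2-p_1,\,p_1+p_2)$ are only a cosmetic reparametrization of the paper's integration with respect to $p_1$ after setting $p_2=p_1+v$, and your disposal of the cases $D=0$ and $D=-1$ is, if anything, slightly more careful than the paper's (which dismisses $d_2=-1-d_1$ outright, overlooking the $\beta=0$ subcase that in fact lands harmlessly in assertion~1).
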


\begin{proof}
Expression \eqref{Lienard_degenerate_Liouville_IF_polynomial} gives a Darboux
integrating factor of a Li\'{e}nard  differential  system if and only if the
system possesses the invariant algebraic curves $y-p_1(x)=0$ and $y-p_2(x)=0$
such that the following condition
$d_1\lambda_1(x,y)+d_2\lambda_2(x,y)-f(x)=0$ is identically satisfied. It is
straightforward to find the cofactor $\lambda_j(x,y)$ of  the invariant
algebraic curve $y-p_j(x)=0$. The result is
$\lambda_j(x,y)=-f(x)-p_{j,\,x}(x)$, $j=1$, $2$. Thus, we arrive at the
condition
\begin{equation}\label{Lienard_degenerate_Liouville_polynomial_cond1}
\begin{gathered}
(d_1+d_2+1)f(x)+d_1p_{1,\,x}(x)+d_2p_{2,\,x}(x)=0.
\end{gathered}
\end{equation}
If the following restriction $d_2=-1-d_1$ is valid, then integrating equation
\eqref{Lienard_degenerate_Liouville_polynomial_cond1} with respect to the
polynomial $p_1(x)$, we obtain $p_1(x)=(d_1+1)p_2(x)/d_1+\beta$, where
$\beta\in\mathbb{C} $ is a constant of integration. Recalling the fact that
$y=p_1(x)$ and $y=p_2(x)$ are polynomial solutions of equation
\eqref{Lienard_y_x}, we find the polynomials $f(x)$ and $g(x)$. The
polynomial $f(x)$ can be represented in the form
\begin{equation}\label{Lienard_degenerate_Liouville_polynomial_cond1_fg}
\begin{gathered}
f(x)=-\frac{(2d_1+1)p_2(x)+d_1(d_1+1)\beta}{d_1(p_2(x)+d_1\beta)}p_{2,\,x}(x).
\end{gathered}
\end{equation}
Analyzing this expression, we conclude that the function on the right-hand
side is not a polynomial whenever $p_2(x)$ is non-constant.

Let us consider the case $d_2\neq-1-d_1$. We find the polynomials $f(x)$ and
$g(x)$ from condition \eqref{Lienard_degenerate_Liouville_polynomial_cond1}
and equation \eqref{Lienard_y_x}, where we set $y(x)=p_1(x)$. Substituting
the resulting expressions into equation \eqref{Lienard_y_x} and recalling the
fact that $y(x)=p_2(x)$ is a solution of the latter, we obtain the equation
\begin{equation}\label{Lienard_degenerate_Liouville_polynomial_cond1_q1_2}
\begin{gathered}
\{(d_2+1)p_1(x)+d_1p_2(x)\}p_{1,\,x}(x)-\{d_2p_1(x)+(d_1+1)p_2(x)\}p_{2,\,x}(x)=0.
\end{gathered}
\end{equation}
Introducing the polynomial $v(x)$ according to the rule $v(x)=p_2(x)-p_1(x)$,
we substitute the relation $p_2(x)=p_1(x)+v(x)$ into equation
\eqref{Lienard_degenerate_Liouville_polynomial_cond1_q1_2}. Integrating the
result with respect to the polynomial $p_1(x)$, we obtain
\begin{equation}\label{Lienard_degenerate_Liouville_polynomial_cond1_q1}
\begin{gathered}
d_2=-2-d_1:\quad p_1(x)=-(d_1+1)v(x)\ln v(x)+\beta v(x);\hfill\\
d_2\neq-2-d_1:\quad p_1(x)=\beta v^{-d_2-d_1-1}(x)-\frac{d_1+1}{d_2+d_1+2}v(x),
\end{gathered}
\end{equation}
where $\beta\in\mathbb{C}$ is a constant of integration. Analyzing the first possibility,
we need to set $d_1=-1$. As a result, we get Darboux integrable family
\eqref{Lienard_degenerate_Darboux_explicit} of Li\'{e}nard  differential
systems. The parameter $\beta$ can be derived with the help of the dominant
behavior of the polynomials $p_1(x)$ and $p_2(x)$, which now coincide with
$q_1(x)$ and $q_2(x)$, respectively. Recalling the fact that the product of
an integrating factor and a first integral is again an integrating factor, we
obtain the family of integrating factors $M_0(x,y)I^{\varkappa}(x,y)$, where
$M_0(x,y)=[y-q_1(x)]^{-1}[y-q_2(x)]^{-1}$, $\varkappa\in\mathbb{C}$, and the
Darboux first
    integral $I(x,y)$ is given
    by expression \eqref{Lienard_degenerate_Darboux_FI1}.

Now we turn to the case $d_2\neq-2-d_1$. If $\beta=0$, then we get the
equality $p_2(x)=-(d_2+1)p_1(x)/(d_1+1)$. In addition, we obtain the
following representations of the polynomials $f(x)$ and $g(x)$:
$f(x)=(d_2-d_1)p_{1,\,x}(x)/(d_1+1)$ and
$g(x)=-(d_2+1)p_1(x)p_{1,\,x}(x)/(d_1+1)$. Considering these expressions, we
again arrive at Darboux  integrable systems
\eqref{Lienard_degenerate_Darboux_explicit}.

Thus, it is without loss of generality to set $\beta\neq0$. Recalling the
restriction $d_2\neq-1-d_1$, we introduce relatively prime both non-unit
natural numbers $k$ and $l$ satisfying the condition $d_2+d_1+1=-k/l$.
Analyzing expression
\eqref{Lienard_degenerate_Liouville_polynomial_cond1_q1}, we conclude that
there exists a polynomial $u(x)$ such that the following relation
$v(x)=u^l(x)$ holds. In this way we express the polynomials $p_1(x)$ and
$p_2(x)$ via the polynomial $u(x)$. The result is given in expression
\eqref{Lienard_degenerate_Liouville_system_polynomial_q}. By construction the
degree of the polynomial $u(x)$ equals $(m+1)/\max\{k,l\}$.

\end{proof}

Using integrating factor \eqref{Lienard_degenerate_Liouville_IF_polynomial},
we find the following expression of a Liouvillian first integral
\begin{equation}\label{Lienard_degenerate_Liouvillian_FI}
\begin{gathered}
I(x,y)=\frac{p_2(x)B\left(\frac{y-p_1(x)}{p_2(x)-p_1(x)};1+d_1,-d_1-\frac{k}{l}\right)}{\{p_2(x)-p_1(x)\}^{\frac{k}{l}}}-
\frac{B\left(\frac{y-p_1(x)}{p_2(x)-p_1(x)};1+d_1,1-d_1-\frac{k}{l}\right)}{\{p_2(x)-p_1(x)\}^{\frac{k}{l}-1}}
\end{gathered}
\end{equation}
of systems \eqref{Lienard_degenerate_Liouville_system_polynomial}. The
polynomials $p_1(x)$ and $p_2(x)$ are given by relation
\eqref{Lienard_degenerate_Liouville_system_polynomial_q}. Symbol
$B(s;\alpha,\delta)$ denotes the incomplete beta function
\begin{equation}\label{Lienard_degenerate_Liouvillian_incomplete_beta}
B(s;\alpha,\delta)=\int_0^s z^{\alpha-1}(1-z)^{\beta-1}dz.
\end{equation}
The family of systems \eqref{Lienard_degenerate_Liouville_system_polynomial} can be transformed to the following simple form
\begin{equation}
 \label{Lienard_Inegrability_Bpartial_0_Sundman}
 \begin{gathered}
s_{\tau}=z,\quad z_{\tau}=\left[\beta(l+k)s^{k-1}+\frac{\{(2d_1+1)l+k\}l}{k-l}s^{l-1}\right]z\\
-\left[l\beta^2s^{2k-1}+\frac{\{(2d_1+1)l+k\}l\beta}{k-l}s^{k+l-1}+\frac{(ld_1+k)(d_1+1)l^2}{(k-l)^2}s^{2l-1}\right]
\end{gathered}
\end{equation}
via the generalized Sundman transformation $s(\tau)=u(x)$, $z(\tau)=y$, $d\tau=u_x(x)dt$. Substituting $u(x)=s$, $y=z$ into \eqref{Lienard_degenerate_Liouvillian_FI} and \eqref{Lienard_degenerate_Liouville_system_polynomial_q}, we find a Liouvillian first integral for systems~\eqref{Lienard_Inegrability_Bpartial_0_Sundman}.

The careful examination of expression
\eqref{Lienard_degenerate_Liouville_system_polynomial_q} shows that systems
\eqref{Lienard_degenerate_Liouville_system_polynomial} are resonant near
infinity whenever the following inequality $k>l$ is valid. Indeed, two
distinct polynomial solutions of equation \eqref{Lienard_y_x} have the
coinciding dominant behavior near the point $x=\infty$ only in a resonant
case. Let us find the necessary and sufficient conditions of the Liouvillian
integrability in the non-resonant case.

\begin{theorem}\label{T:Lienard_degenerate_Liouville1}
A Li\'{e}nard  differential  system \eqref{Lienard_gen} satisfying
 the conditions $\deg g=2\deg f+1$ and $\delta/f_0\not\in\mathbb{Q}$
is Liouvillian integrable if and only if the system is either Darboux
integrable and reads as \eqref{Lienard_degenerate_Darboux_explicit} or takes
the form \eqref{Lienard_degenerate_Liouville_system_polynomial}, where $k<l$
and the following normalization
\begin{equation}\label{Lienard_degenerate_Liouvillian_normalization}
d_1=\frac{(k-l)f_0}{2l\delta }-\frac{l+k}{2l},\quad u_0=\left(-\frac{\delta}{m+1}\right)^{\frac1{l}}
\end{equation}
is introduced. By $u_0$ we denote the highest-degree coefficient of
the polynomial $u(x)$. In the case of systems
\eqref{Lienard_degenerate_Liouville_system_polynomial} the Darboux
integrating factor is the following
\begin{equation}\label{Lienard_degenerate_Darboux_IF}
M(x,y)=\left\{y-q_1(x)\right\}^{\frac{(k-l)f_0}{2l\delta }-\frac{l+k}{2l}}\left\{y-q_2(x)
\right\}^{\frac{(l-k)f_0}{2l\delta }-\frac{l+k}{2l}}
\end{equation}
with the polynomials $q_j(x)\equiv p_j(x)$, $j=1$, $2$ given by expression
\eqref{Lienard_degenerate_Liouville_system_polynomial_q}.
\end{theorem}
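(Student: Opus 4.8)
The plan is to exploit Theorem~\ref{T:Liouville}: the system is Liouvillian integrable if and only if it admits a Darboux integrating factor $M$. First I would show that $M$ carries no genuine exponential factor. Since $\delta/f_0\notin\mathbb{Q}$ forces $\delta\neq0$, Lemma~\ref{L:Lienard_exp_degenerate} excludes exponential invariants $\exp\{h/r\}$ with non-constant denominator $r$, so the only admissible factor is $\exp[h]$ with $h\in\mathbb{C}[x,y]$. Its cofactor $\varrho$ has degree at most $\deg g-1$, hence by Lemma~\ref{L:Lienard_exp_inv1} it is divisible by $y$. In the defining relation $\sum_j d_j\lambda_j+\varrho=-\mathrm{div}\,\mathcal{X}=f(x)$ every cofactor $\lambda_j$ of an invariant algebraic curve and the divergence are independent of $y$; comparing the parts divisible by $y$ forces $\varrho\equiv0$, so $\exp[h]$ is a first integral and may be divided out. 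Thus one reduces to $M=\prod_j F_j^{d_j}$, a product of powers of the irreducible invariant algebraic curves. Under $\delta/f_0\notin\mathbb{Q}$ equation~\eqref{eq:DP2_5_2} has no positive rational root and $\delta\neq0$ places us in the first case of Theorem~\ref{T:Lienard_degenerate}, so there are at most two such curves, each of degree one or two in $y$.

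Next I would eliminate the hyperelliptic possibility. An irreducible degree-two curve $F_h=\{(y-y^{(1)}_\infty)(y-y^{(2)}_\infty)\}_+$ cannot coexist with a degree-one curve: its existence forces $\sigma:=y^{(1)}_\infty+y^{(2)}_\infty$ to be a polynomial, so a polynomial $y^{(1)}_\infty$ would make $y^{(2)}_\infty=\sigma-y^{(1)}_\infty$ polynomial as well, rendering $F_h$ reducible. Hence if $F_h$ enters $M$ then $M=F_h^{d}$, and balancing the leading terms in $d\lambda_h=f$ (with $\lambda_h=-2f-\sigma_x$ of dominant behaviour $-f_0x^m$) gives $d=-1$ and $\sigma_x=-f$. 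Writing $\pi:=y^{(1)}_\infty y^{(2)}_\infty$ and summing the two Abel equations~\eqref{Lienard_y_x} yields $\pi_x=2g$, while requiring the two roots of $y^2-\sigma y+\pi$ to solve~\eqref{Lienard_y_x} gives $f\pi+\sigma g=0$. Substituting $f=-\sigma_x$ and $g=\pi_x/2$ this becomes $(\ln\pi)_x=(\ln\sigma^2)_x$, so $\pi=c\sigma^2$ and both series are scalar multiples of $\sigma$, i.e. polynomials, again contradicting the irreducibility of $F_h$. Therefore $M$ is a product of powers of first-degree curves $y-p_j$.

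The remaining step invokes Theorem~\ref{T:Lienard_degenerate_Liouville_polynomial}. If two distinct such curves occur with $d_1,d_2\neq0$, that theorem already yields either the Darboux system~\eqref{Lienard_degenerate_Darboux_explicit} or the family~\eqref{Lienard_degenerate_Liouville_system_polynomial}; the sub-case $M=(y-p)^{d}$ with a single curve is shown, by integrating $d\lambda_p=f$ together with $y=p$ solving~\eqref{Lienard_y_x}, to coincide with~\eqref{Lienard_degenerate_Darboux_explicit}. To pin down which members of~\eqref{Lienard_degenerate_Liouville_system_polynomial} are non-resonant I would match the dominant terms of $p_1,p_2$ from~\eqref{Lienard_degenerate_Liouville_system_polynomial_q} against those of $q_1,q_2$ in~\eqref{Lienard_degenerate_Puiseux_series_dominant}. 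The $u^l$-terms dominate precisely when $k<l$, while $k>l$ produces coinciding leading coefficients, that is resonance; equating the coefficients of $x^{m+1}$ and forming their sum and difference gives $u_0^l=-\delta/(m+1)$ together with $d_1=\tfrac{(k-l)f_0}{2l\delta}-\tfrac{l+k}{2l}$, which is exactly normalization~\eqref{Lienard_degenerate_Liouvillian_normalization}, and substituting back into~\eqref{Lienard_degenerate_Liouville_IF_polynomial} reproduces the integrating factor~\eqref{Lienard_degenerate_Darboux_IF}. The converse is immediate: the Darboux system carries~\eqref{Lienard_degenerate_Darboux_FI1}, and~\eqref{Lienard_degenerate_Liouville_system_polynomial} carries the Darboux integrating factor supplied by Theorem~\ref{T:Lienard_degenerate_Liouville_polynomial}.

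I expect the main obstacle to be the \emph{hyperelliptic elimination}: one must rule out the single irreducible degree-two curve serving as the reciprocal of an integrating factor. The cleanest route is the collapse $\pi=c\sigma^2$ above, which turns the would-be genuine hyperelliptic curve into a reducible product and thereby contradicts the standing irreducibility assumption; the subsequent identification of the non-resonant normalization is then a routine matching of dominant Puiseux coefficients.
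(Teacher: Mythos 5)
Your proposal is correct and follows essentially the same route as the paper: reduction to a Darboux integrating factor via Theorem \ref{T:Liouville}, exclusion of exponential factors through Lemmas \ref{L:Lienard_exp_inv1} and \ref{L:Lienard_exp_degenerate}, the case analysis dictated by Theorem \ref{T:Lienard_degenerate}, elimination of the irreducible degree-two curve by showing it collapses to a reducible product, and reduction of the two-line case to Theorem \ref{T:Lienard_degenerate_Liouville_polynomial} with non-resonance forcing $k<l$ and the dominant-coefficient matching yielding normalization \eqref{Lienard_degenerate_Liouvillian_normalization}. The only cosmetic difference is in the hyperelliptic elimination: you derive $\pi=c\sigma^2$ from Vieta-type identities on the Puiseux series roots together with $\sigma_x=-f$, whereas the paper substitutes $M=F^{-1}$ directly into the partial differential equation and reads off $w(x)=\beta v(x)^2$; both computations produce the same contradiction with irreducibility.
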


\begin{proof}
By direct computations we verify that systems
\eqref{Lienard_degenerate_Darboux_explicit} and
\eqref{Lienard_degenerate_Liouville_system_polynomial} are Liouvillian
integrable provided that all other conditions of the theorem are satisfied.
This observation proves the sufficiency of these conditions.

Let us prove their necessity. It follows from Lemmas \ref{L:Lienard_exp_inv1}
and \ref{L:Lienard_exp_degenerate} that exponential invariants cannot arise
in a Darboux integrating factor. Consequently, a Darboux integrating factor
is constructed from generating polynomials of invariant algebraic curves. It
is straightforward to see that if there are no invariant algebraic curves,
then Darboux integrating factors do not exist. In view of Theorem
\ref{T:Lienard_degenerate} we need to consider three distinct cases.

\textit{Case 1.} Let us suppose that a Liouvillian integrable Li\'{e}nard  differential  system \eqref{Lienard_gen} satisfying
 the conditions $\deg g=2\deg f+1$ and $\delta/f_0\not\in\mathbb{Q}$ has only
one irreducible invariant algebraic curve with a generating polynomial of the
first degree with respect to $y$. This curve reads as $y-q_k(x)=0$, $k=1$ or
$k=2$, and has the cofactor $\lambda(x,y)=-f(x)-q_{k,x}(x)$. A Darboux
integrating factor can be represented in the form
\begin{equation}\label{Lienard_degenerate_Liouville_IF_1}
M(x,y)=[y-q_k(x)]^{d_k},\quad d_k\in\mathbb{C}\setminus\{0\}.
\end{equation}
This integrating factor exists if and only if the following condition
\begin{equation}\label{Lienard_degenerate_Liouville_cond_1}
d_k\{f(x)+q_{k,x}(x)\}+f(x)=0
\end{equation}
is identically valid. Balancing the terms at $x^m$ in this relation, we
obtain
\begin{equation}\label{Lienard_degenerate_Liouville_cond_2}
k=1:\quad d_1=-\frac{2f_0}{\delta+f_0};\quad k=2:\quad d_2=\frac{2f_0}{\delta-f_0}.
\end{equation}
Expressing $f(x)$ and $g(x)$ from relation
\eqref{Lienard_degenerate_Liouville_cond_1} and equation \eqref{Lienard_y_x}
with $y(x)=q_k(x)$, we see that our Li\'{e}nard  differential  system is of
the form \eqref{Lienard_degenerate_Darboux_explicit} and possesses two
distinct irreducible invariant algebraic curves. It is a contradiction.

\textit{Case 2.} Let us suppose that  a Liouvillian integrable Li\'{e}nard  differential  system \eqref{Lienard_gen} satisfying
 the conditions $\deg g=2\deg f+1$ and $\delta/f_0\not\in\mathbb{Q}$ has an
irreducible invariant algebraic curve with a generating polynomial of the
second degree with respect to $y$. This curve is given by the expression
$\{[y-y^{(1)}_{\infty}(x)][y-y^{(2)}_{\infty}(x)]\}_{+}=0$. Its cofactor
reads as $\lambda(x,y)=-2f(x)-q_{1,x}(x)-q_{2,x}(x)$. A Darboux integrating factor
can be represented in the form
\begin{equation}\label{Lienard_degenerate_Liouville_IF_2}
M(x,y)=F^{d}(x,y),\, F(x,y)=\{[y-y^{(1)}_{\infty}(x)][y-y^{(2)}_{\infty}(x)]\}_{+},\, d\in\mathbb{C}\setminus\{0\}
\end{equation}
and exists if and only if the following condition
\begin{equation}\label{Lienard_degenerate_Liouville_cond_3}
d\{2f(x)+q_{1,x}(x)+q_{2,x}(x)\}+f(x)=0
\end{equation}
is identically satisfied. Considering the coefficients of $x^m$ in this
condition yields the value of $d$: $d=-1$. Further, we represent the
polynomial $F(x,y)$ in the form $F(x,y)=y^2+v(x)y+w(x)$, where $v(x)$,
$w(x)\in\mathbb{C}[x]$. Substituting expression $M(x,y)=F^{-1}(x,y)$ into the
partial differential equation
\begin{equation}\label{Lienard_degenerate_Liouville_PDE_1}
yM_x-[f(x)y+g(x)]M_y-f(x)M=0,
\end{equation}
we get rid of the denominator. Setting to zero the coefficients of different powers of $y$, we get
$w(x)=\beta v(x)^2$, where $\beta\in\mathbb{C}\setminus\{0\}$. This equality
contradicts irreducibility of the polynomial~$F(x,y)$.

\textit{Case 3.} Now we assume that a Liouvillian integrable Li\'{e}nard  differential  system \eqref{Lienard_gen} satisfying
 the conditions $\deg g=2\deg f+1$ and $\delta/f_0\not\in\mathbb{Q}$ has two distinct
irreducible invariant algebraic curves  $y-q_1(x)=0$ and $y-q_2(x)=0$. Their
cofactors are the following $\lambda_1(x,y)=-f(x)-q_{1,x}(x)$ and
$\lambda_2(x,y)=-f(x)-q_{2,x}(x)$, respectively. Condition
\eqref{JLM_gen_cond} enabling the existence of a Darboux integrating factor
\begin{equation}\label{Lienard_degenerate_Liouville_IF_3}
M(x,y)=[y-q_1(x)]^{d_1}[y-q_2(x)]^{d_2},\quad d_1,d_2\in\mathbb{C},\quad |d_1|+|d_2|>0
\end{equation}
is of the form
\begin{equation}\label{Lienard_degenerate_Liouville_cond_4}
d_1\{f(x)+q_{1,x}(x)\}+d_2\{f(x)+q_{2,x}(x)\}+f(x)=0.
\end{equation}
All the Li\'{e}nard  differential  systems from family ($B$) with integrating
factor of the form~\eqref{Lienard_degenerate_Liouville_IF_3} have been
identified in Theorem \ref{T:Lienard_degenerate_Liouville_polynomial}. We
need to extract non-resonant systems from those given by expression
\eqref{Lienard_degenerate_Liouville_system_polynomial}. Thus, the polynomials
$q_1(x)$ and $q_2(x)$ necessarily have distinct dominant terms. This fact
yields the inequality $k<l$. Finally, we need to introduce the normalization
adopted at the beginning of this section. Using relations
\eqref{Lienard_degenerate_Puiseux_series_dominant} and
\eqref{Lienard_degenerate_Puiseux_series_Polynomial_parts}, we obtain
expression \eqref{Lienard_degenerate_Liouvillian_normalization}.

\end{proof}

\textit{Corollary.} Li\'{e}nard  differential systems
\eqref{Lienard_degenerate_Darboux_system_time} with a non-autonomous Darboux
first integral \eqref{Lienard_degenerate_Darboux_FI1_time}  are Liouvillian
integrable. These systems have the Darboux integrating factor
\begin{equation}\label{Lienard_degenerate_Liouville_IF_time_partial}
\begin{gathered}
M(x,y)=\frac{\left[y+\frac{\delta+f_0}{2(m+1)}(x-x_0)^{m+1}+\frac{\omega}{2(m+1)\delta}(x-x_0)\right]^{\frac{mf_0-(m+2)\delta}{2(m+1)\delta}}}
{\left[y+\frac{f_0-\delta}{2(m+1)}(x-x_0)^{m+1}+\frac{\omega}{2(m+1)\delta}(x-x_0)\right]^\frac{mf_0+(m+2)\delta}{2(m+1)\delta}}.
\end{gathered}
\end{equation}

\begin{proof} We establish the validity of the statement substituting relations
\begin{equation}\label{Lienard_degenerate_Liouville_IF_time_parameters}
\begin{gathered}
k=1,\, l=m+1,\, u(x)=u_0(x-x_0),\, \beta=-\frac{\omega}{2(m+1)\delta u_0},\, u_0=\left\{-\frac{\delta}{m+1}\right\}^{\frac1{m+1}}
\end{gathered}
\end{equation}
into expressions \eqref{Lienard_degenerate_Liouville_system_polynomial} and
\eqref{Lienard_degenerate_Darboux_IF}. In addition, we recall that the
parameter $d_1$ reads
as~\eqref{Lienard_degenerate_Liouvillian_normalization}.
\end{proof}

\begin{theorem}\label{T:Lienard_degenerate_Liouville2}
A Li\'{e}nard  differential  system \eqref{Lienard_gen} satisfying
 the conditions $\deg g=2\deg f+1$ and $\delta=0$
is Liouvillian integrable if and only if the system has the irreducible
invariant algebraic curve
    $y-q(x)=0$ and an exponential invariant $E(x,y)=\exp[u(x)/(y-q(x))]$
    such that one of the following assertions is valid.

\begin{enumerate}

\item The system is Darboux integrable and takes the form
    \eqref{Lienard_degenerate_Darboux_del0_f_g}. A related Darboux
    integrating factor reads as
\begin{equation}\label{Lienard_degenerate_Liouville_IF_main2}
M(x,y)=\frac{1}{[y-q(x)]^{2}}
\end{equation}
The polynomial $u(x)$  arising in the exponential invariant is
$u(x)=\alpha q(z)$, $ \alpha\in\mathbb{C}\setminus\{0\}$.

\item The system is of the form
    \begin{equation}\label{Lienard_degenerate_Liouville_f_g_Case2}
    \begin{gathered}
   x_t=y,\quad     y_t=-\left[\frac{2l^2}{l-k}v^{l-1}-(l+k)\beta v^{k-1}\right]v_xy\\
   +   \left[\frac{2l^2\beta}{l-k}v^{l+k-1}-\frac{l^3}{(l-k)^2}v^{2l-1}-l\beta^2 v^{2k-1}\right]v_x,
\end{gathered}
\end{equation}
where $\beta\in\mathbb{C}\setminus\{0\}$, $v(x)$ is a non-constant
polynomial, $k$ and $l$ are relatively prime natural numbers satisfying
the inequality $k<l$.  The associated Darboux
    integrating factor reads~as
\begin{equation}\label{Lienard_degenerate_Liouville_IF_main3}
M(x,y)=[y-q(x)]^{-\frac{l+k}{l}}\exp\left[\frac{v^l(x)}{y-q(x)}\right],
\quad q(x)=-\frac{l}{l-k}v^l+\beta v^k.
\end{equation}
In addition, the following relation $m+1=l\deg v$ is valid.

\end{enumerate}

\end{theorem}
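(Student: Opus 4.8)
The plan is to prove the two directions separately, the sufficiency being a routine verification and the necessity carrying the content. For sufficiency I would substitute the integrating factor \eqref{Lienard_degenerate_Liouville_IF_main2} into $\mathcal{X}M=-(\text{div}\,\mathcal{X})M=f(x)M$ for the systems \eqref{Lienard_degenerate_Darboux_del0_f_g}, and likewise \eqref{Lienard_degenerate_Liouville_IF_main3} for the systems \eqref{Lienard_degenerate_Liouville_f_g_Case2}, confirming in each case that $y-q(x)=0$ is an invariant algebraic curve and that the relevant $u(x)$ solves \eqref{Lienard_degenerate_Int2}; Theorem \ref{T:Liouville} then gives Liouvillian integrability. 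For necessity, assume the system is Liouvillian integrable, so by Theorem \ref{T:Liouville} it has a Darboux integrating factor $M$. Since $\delta=0$ forces both Fuchs indices to vanish, Theorem \ref{T:Lienard_degenerate} (with $p_1=p_2=0$) yields a \emph{unique} irreducible invariant algebraic curve $y-q(x)=0$, of cofactor $\lambda(x,y)=-f(x)-q_x(x)$. By Lemma \ref{L:Lienard_exp_degenerate} the only exponential invariants with non-polynomial argument are $E=\exp[u(x)/(y-q(x))]$, which exist exactly when \eqref{Lienard_degenerate_Int2} has a nonzero polynomial solution $u(x)$ (necessarily of degree $m+1$), with cofactor $\varrho=u_x(x)$. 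A polynomial-argument factor $\exp[h(x,y)]$ cannot enter $M$: by Lemma \ref{L:Lienard_exp_inv1} its cofactor is divisible by $y$, while $\lambda$, $u_x$, $f$ and $\text{div}\,\mathcal{X}=-f$ are $y$-independent, so the balance \eqref{JLM_gen_cond} forces that cofactor to vanish, making $h$ a polynomial first integral and hence producing infinitely many invariant curves, contradicting uniqueness. Therefore $M=[y-q]^{d}\exp[c\,u/(y-q)]$ with $d,c\in\mathbb{C}$.

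The core computation is to impose \eqref{JLM_gen_cond} with $\omega=0$, i.e. $d\lambda+c\,u_x=f$, equivalently $(d+1)f=-d\,q_x+c\,u_x$, and combine it with \eqref{Lienard_degenerate_Int2} rewritten as $f+q_x=-q\,u_x/u$. Eliminating $f$ gives the single relation
\begin{equation*}
q_x\,u+(d+1)\,q\,u_x+c\,u\,u_x=0,
\end{equation*}
which I would integrate after multiplying by $u^{d}$: since $\tfrac{d}{dx}\bigl(q\,u^{d+1}\bigr)=u^{d}\bigl(q_xu+(d+1)q\,u_x\bigr)=-c\,u^{d+1}u_x$, one obtains for $d\neq-2$ that $q\,u^{d+1}=-\tfrac{c}{d+2}u^{d+2}+\beta$, hence
\begin{equation*}
q=-\frac{c}{d+2}\,u+\beta\,u^{-(d+1)},\qquad \beta\in\mathbb{C}.
\end{equation*}

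From here the classification follows by case analysis. If $c=0$, then $q=\beta\,u^{-(d+1)}$ and matching degrees (both $q$ and $u$ have degree $m+1$) forces $-(d+1)=1$, i.e. $d=-2$ and $u\propto q$; recovering $f=-2q_x$ and $g$ from \eqref{Lienard_y_x} yields the Darboux system \eqref{Lienard_degenerate_Darboux_del0_f_g} with integrating factor \eqref{Lienard_degenerate_Liouville_IF_main2}, which is assertion (1) (the Darboux identification being exactly Theorem \ref{T:Lienard_degenerate_Darboux2}). If $c\neq0$, then $d=-2$ is impossible because it would give $q=-c\,u\ln u+\beta u$, not a polynomial, and $\beta=0$ again yields $q\propto u$ and hence $f=-2q_x$, i.e. the Darboux case; a genuinely new family thus needs $\beta\neq0$ and $-(d+1)=k/l$ a nonzero rational in lowest terms. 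Polynomiality of $u^{-(d+1)}=u^{k/l}$ then forces $u=v^{l}$ for a polynomial $v$ with $l\deg v=m+1$, whence $q=-\tfrac{l}{l-k}v^{l}+\beta v^{k}$ as displayed in \eqref{Lienard_degenerate_Liouville_IF_main3}; the constraint $\deg q=m+1=\deg u$ forces the strict inequality $k<l$, and reconstructing $f$ from the cofactor condition and $g$ from \eqref{Lienard_y_x} produces the system \eqref{Lienard_degenerate_Liouville_f_g_Case2} with integrating factor \eqref{Lienard_degenerate_Liouville_IF_main3}, which is assertion (2).

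I expect the main obstacle to be the bookkeeping in this last case: showing that the rational exponent $k/l$ together with $\deg u=m+1$ rigidly forces $u=v^{l}$ with $l\deg v=m+1$ and $k<l$, and then matching the explicit coefficients of $f$ and $g$ in \eqref{Lienard_degenerate_Liouville_f_g_Case2} against the closed form for $q$. The only other delicate point is the exclusion of polynomial-argument exponential factors, which rests entirely on Lemma \ref{L:Lienard_exp_inv1} and on the $y$-independence of every remaining cofactor.
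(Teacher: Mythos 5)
Your overall strategy coincides with the paper's: Theorem \ref{T:Liouville} produces a Darboux integrating factor, the structure results (Theorem \ref{T:Lienard_degenerate}, Lemmas \ref{L:Lienard_exp_inv1} and \ref{L:Lienard_exp_degenerate}) reduce it to $M=[y-q]^{d}\exp[c\,u/(y-q)]$, and the cofactor balance combined with \eqref{Lienard_degenerate_Int2} integrates to $q=-\tfrac{c}{d+2}\,u+\beta\,u^{-(d+1)}$, exactly as in the paper. But your case analysis has a genuine gap: you silently skip the exponent $-(d+1)=0$, i.e.\ $d=-1$. When you write that a genuinely new family "thus" needs $\beta\neq0$ and $-(d+1)=k/l$ a \emph{nonzero} rational, nothing you have argued excludes $d=-1$ with $\beta\neq0$: there $q=-c\,u+\beta$ is a perfectly good polynomial, so polynomiality of $q$ --- the only tool you invoke at this point --- cannot eliminate that case. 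The paper needs a separate argument for it: substituting $u=(\beta-q)/c$ into \eqref{Lienard_degenerate_Int2} forces
\begin{equation*}
f(x)=\frac{(2q(x)-\beta)\,q_x(x)}{\beta-q(x)},
\end{equation*}
and at any zero $x_0$ of $\beta-q$ (of multiplicity $r$) the numerator vanishes only to order $r-1$ because $2q(x_0)-\beta=\beta\neq0$, so $f$ acquires a pole and is not a polynomial. Without this step your classification could, as far as your argument shows, contain an extra family.

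A secondary, fixable defect is that your treatment of $c=0$ is circular. The relation $q_xu+(d+1)qu_x+c\,uu_x=0$ was obtained by eliminating $f$ \emph{using} \eqref{Lienard_degenerate_Int2}, i.e.\ assuming a nonzero polynomial solution $u$ exists; but when $c=0$ the integrating factor contains no exponential invariant and no such $u$ is available a priori (its existence in assertion (1) is part of what must be proved). The paper handles this branch directly from the cofactor balance $(d+1)f=-d\,q_x$: comparing coefficients of $x^{m}$, using $q=-f_0x^{m+1}/(2(m+1))+o(x^{m+1})$, gives $d=-2$ and $f=-2q_x$, hence the system \eqref{Lienard_degenerate_Darboux_del0_f_g}; only afterwards does \eqref{Lienard_degenerate_Int2} (which then reads $qu_x-q_xu=0$) show that $u=\alpha q$ exists, so the exponential invariant demanded by the statement is indeed present. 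Your remaining steps --- the integration via $u^{d}$, the exclusion of $d=-2$ by the logarithm, $\beta=0$ giving the Darboux family, the forcing of $u=v^{l}$ with $l\deg v=m+1$ and $k<l$, and the reconstruction of $f$ and $g$ --- match the paper and are sound.
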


\begin{proof}
It is straightforward to verify that expressions
\eqref{Lienard_degenerate_Liouville_IF_main2} and
\eqref{Lienard_degenerate_Liouville_IF_main3} are Darboux integrating factors
of Li\'{e}nard  differential  systems \eqref{Lienard_gen} satisfying
 the restrictions $\deg g=2\deg f+1$ and $\delta=0$ whenever all other
 conditions of the theorem are satisfied.

 Let us prove the converse statement. Suppose we consider a Liouvillian
 integrable Li\'{e}nard  differential  system \eqref{Lienard_gen} such that
 $\deg g=2\deg f+1$ and $\delta=0$. By Theorems \ref{T:Liouville}, \ref{T:Lienard_degenerate} and Lemmas
 \ref{L:Lienard_exp_inv1}, \ref{L:Lienard_exp_degenerate} the system has the
irreducible invariant algebraic curve $y-q(x)=0$ and a Darboux integrating
factor that can be represented in the form
\begin{equation}\label{Lienard_degenerate_Liouville_IF_main4}
M(x,y)=[y-q(x)]^{d}\exp\left[\frac{u(x)}{y-q(x)}\right],\quad d\in\mathbb{C},
\end{equation}
where we suppose that $u(x)\equiv0$
 whenever the exponential invariant $E(x,y)=\exp[u(x)/(y-q(x))]$ related to the invariant
 algebraic curve $y-q(x)=0$ either does not
 exist or is not involved into an explicit expression of the integrating factor.
 Condition \eqref{JLM_gen_cond} with $\omega=0$
 now takes the form
\begin{equation}\label{Lienard_degenerate_Liouville_cond 1}
d[f(x)+q_x(x)]-u_x(x)+f(x)=0.
\end{equation}
Further, we shall consider several distinct cases separately.

\textit{Case 1.} Let us begin with the case $u(x)\equiv0$. Substituting the
asymptotic relations $f(x)=f_0x^m+o(x^m)$ and
$q(x)=-f_0x^{m+1}/(2\{m+1\})+o(x^{m+1})$, $x\rightarrow\infty$ into condition
\eqref{Lienard_degenerate_Liouville_cond 1} and setting to zero the
coefficient of $x^m$, we obtain the equalities $d=-2$ and $f(x)=-2q_x(x)$.
Recalling the fact that $y=q(x)$ is the  polynomial solution of equation
\eqref{Lienard_y_x}, we find the polynomial $g(x)$ as given
in~\eqref{Lienard_degenerate_Darboux_del0_f_g}. Using
Theorem~\ref{T:Lienard_degenerate_Darboux2}, we conclude that the related
Li\'{e}nard differential  system possesses a Darboux first integral
\eqref{Lienard_degenerate_Darboux_FI1_del0} and the exponential invariants
$E(x,y)=\exp[\alpha q(x)/(y-q(x))]$, $ \alpha\in\mathbb{C}\setminus\{0\}$.
Note that integrating the differential form
\begin{equation}\label{Lienard_degenerate_Liouville_differential_form}
\frac{ydy+(qq_x-2q_xy)dx}{\{y-q(x)\}^2}.
\end{equation}
yields a first integral in the form $\ln I(x,y)$ with the function $I(x,y)$
given by expression~\eqref{Lienard_degenerate_Darboux_FI1_del0}.

\textit{Case 2.} Now let us suppose that the polynomial $u(x)$ is not
identically zero. We see that the system under consideration has the
exponential invariant $E(x,y)=\exp[u(x)/(y-q(x))]$ with the polynomial $u(x)$
satisfying equation~\eqref{Lienard_degenerate_Int2}. Expressing $f(x)$ from
the latter equation and substituting the result into condition
\eqref{Lienard_degenerate_Liouville_cond 1} yields the relation
\begin{equation}\label{Lienard_degenerate_Liouville_cond 2}
(d+1)qu_x+q_xu+uu_x=0.
\end{equation}
This relation viewed as an ordinary differential equation with respect to the
polynomial $q(x)$ can be integrated. Thus, we find the expressions
\begin{equation}\label{Lienard_degenerate_Liouville_cond 3}
\begin{gathered}
d\neq-2:\quad q(x)=-\frac1{d+2}u+\beta u^{-(d+1)};\quad
d=-2:\quad q(x)=(\beta - \ln u)u,\hfill
\end{gathered}
\end{equation}
where $\beta\in\mathbb{C}$ is a constant of integration. If $d=-2$, then
$q(x)$ is not a polynomial. Further, we set $d\neq-2$.  The case $\beta=0$
again leads to a Darboux integrable family of Li\'{e}nard  differential
systems given in Theorem \ref{T:Lienard_degenerate_Darboux2}. Thus, we
suppose that the constant $\beta$ is non-zero. Recalling the fact that $q(x)$
and $u(x)$ are polynomials with the dominant behavior
$q(x)=-f_0x^{m+1}/(2\{m+1\})+o(x^{m+1})$ and $u(x)=u_0x^{m+1}+o(x^{m+1})$,
$u_0\in\mathbb{C}\setminus\{0\}$ near the point $x=\infty$, we find two
possibilities
\begin{equation}\label{Lienard_degenerate_Liouville_cond 4}
\begin{gathered}
d=-1:\quad u(x)=\beta-q(x);\quad d=-\frac{l+k}{l}:\quad u(x)=v^l(x).\hfill
\end{gathered}
\end{equation}
In this expressions $l$ and $k$ are relatively prime natural numbers
satisfying the restriction $k<l$ and $v(x)$ is a non-constant polynomial.
Analyzing the possibility $d=-1$, we substitute the equality $u(x)=\beta-q(x)
$ into relation  \eqref{Lienard_degenerate_Int2} and find the polynomial
$f(x)$. The result is
\begin{equation}\label{Lienard_degenerate_Liouville_cond 5}
f(x)=\frac{(2q(x)-\beta)q_x(x)}{\beta-q(x)}.
\end{equation}
Let $x_0$ be a zero of the polynomial  $\beta-q(x) $. Considering the
behavior near $x_0$ of the rational function on the right-hand side of
expression \eqref{Lienard_degenerate_Liouville_cond 5}, we see that $f(x)$ is
not a polynomial whenever $\beta\neq0$.

Finally, we suppose that the following relations $d=-(l+k)/l$ and
$u(x)=v^l(x)$ are valid. We use expressions
\eqref{Lienard_degenerate_Liouville_cond 3},
\eqref{Lienard_degenerate_Liouville_cond 1}, and \eqref{Lienard_y_x} to find
the polynomials $f(x)$, $g(x)$, and $q(x)$ as given in relations
\eqref{Lienard_degenerate_Liouville_f_g_Case2} and
\eqref{Lienard_degenerate_Liouville_IF_main3}. In addition, we verify that
equation  \eqref{Lienard_degenerate_Int2} is identically satisfied. The proof
is completed.

\end{proof}

\textit{Corollary 1.} Li\'{e}nard  differential systems
\eqref{Lienard_degenerate_Darboux_del0_f_g_time} possessing a non-autonomous
Darboux first integral \eqref{Lienard_degenerate_Darboux_FI1_del0_time}  are
Liouvillian integrable with the Darboux integrating factor
\begin{equation}\label{Lienard_degenerate_Liouville_IF_main3_partial}
\begin{gathered}
M(x,y)=\exp\left[\frac{mf_0(x-x_0)^{m+1}}{(m+1)\{2(m+1)y+f_0(x-x_0)^{m+1}+2\omega(x-x_0)\}}\right]\\
\times\left[y+\frac{f_0(x-x_0)^{m+1}}{2(m+1)}+\frac{\omega(x-x_0)}{m+1}\right]^{-\frac{m+2}{m+1}}.
\end{gathered}
\end{equation}

\begin{proof} We prove the validity of the statement substituting relations
\begin{equation}\label{Lienard_degenerate_Liouville_IF_main3_parameters}
\begin{gathered}
k=1,\, l=m+1,\, v(x)=v_0(x-x_0),\, \beta=-\frac{\omega}{(m+1)v_0},\, v_0=\left\{\frac{mf_0}{2(m+1)^2}\right\}^{\frac1{m+1}}
\end{gathered}
\end{equation}
into expressions \eqref{Lienard_degenerate_Liouville_f_g_Case2} and
\eqref{Lienard_degenerate_Liouville_IF_main3}.
\end{proof}

\textit{Corollary 2.} If the following inequality $k>l$ holds, then systems
\eqref{Lienard_degenerate_Liouville_f_g_Case2} are also Liouvillian
integrable Li\'{e}nard differential systems from family ($B$). But these
systems are resonant near infinity. The related Darboux integrating factor
again is given by expression \eqref{Lienard_degenerate_Liouville_IF_main3}.

A Liouvillian first integral produced by integrating factor
\eqref{Lienard_degenerate_Liouville_IF_main3} reads as
\begin{equation}\label{Lienard_degenerate_Liouville_FI}
I(x,y)=v^{l-k}(x)\gamma\left(-\frac{l-k}{l},\frac{v^l(x)}{q(x)-y}\right)-
\frac{q(x)}{v^{k}(x)}\gamma\left(\frac{k}{l},\frac{v^l(x)}{q(x)-y}\right),
\end{equation}
where the polynomial $q(x)$ is given in expression
\eqref{Lienard_degenerate_Liouville_f_g_Case2} and  $\gamma(\delta,s)$ is the
lower incomplete Gamma function
\begin{equation}\label{lower_incomplete_Gamma}
\gamma(\delta,s)=\int_0^st^{\delta-1}\exp(-t)dt.
\end{equation}
Note that we need to consider the analytic continuation of this integral for
complex or real non-positive values of~$s$. If $k=1$ and $l=2$, then we obtain another representation of a Liouvillian
first integral
\begin{equation}\label{Lienard_degenerate_Liouville_FI_a}
\begin{gathered}
I(x,y)=2\sqrt{\beta v(x)-2v^2(x)-y}\exp\left[\frac{v^2(x)}{y+2v^2(x)-\beta v(x)}\right]\\
-\sqrt{\pi}\beta\erfc\left[\frac{v(x)}{\sqrt{\beta v(x)-2v^2(x)-y}}\right],
\end{gathered}
\end{equation}
where $\erfc(s)$ is the complementary error function
\begin{equation}\label{error_function}
\erfc(s)=\frac{2}{\sqrt{\pi}}\int_s^{\infty}\exp\left(-t^2\right)dt.
\end{equation}
The family of systems \eqref{Lienard_degenerate_Liouville_f_g_Case2} can be transformed to the following simple form
\begin{equation}
 \label{Lienard_Inegrability_Bpartial_1_Sundman}
s_{\tau}=z,\, z_{\tau}=-\left[\frac{2l^2}{l-k}s^{l-1}-(l+k)\beta s^{k-1}\right]z\\
   +   \frac{2l^2\beta}{l-k}s^{l+k-1}-\frac{l^3}{(l-k)^2}s^{2l-1}-l\beta^2 s^{2k-1}
\end{equation}
via the generalized Sundman transformation $s(\tau)=v(x)$, $z(\tau)=y$, $d\tau=v_x(x)dt$. Substituting $v(x)=s$, $y=z$ into \eqref{Lienard_degenerate_Liouville_FI}, we find a Liouvillian first integral for systems \eqref{Lienard_Inegrability_Bpartial_1_Sundman}.

It seems that Liouvillian integrable families of Li\'{e}nard  differential
systems given in Theorems
\ref{T:Lienard_degenerate_Liouville_polynomial}, \ref{T:Lienard_degenerate_Liouville1}, and
\ref{T:Lienard_degenerate_Liouville2} are new with the exception of systems
\eqref{Lienard_degenerate_Darboux_del0_f_g_time}. The latter are presented
by T. Stachowiak \cite{Stachowiak}.

Now let us investigate the existence of Jacobi last multipliers with a
time-dependent exponential factor.

\begin{lemma}\label{L:Lienard__degenerate Liouville_t1}

 A Li\'{e}nard  differential  system \eqref{Lienard_gen} satisfying
 the conditions $\deg g=2\deg f+1$ and $\delta/f_0\not\in\mathbb{Q}$ has a non-autonomous Darboux--Jacobi last
 multiplier of the form \eqref{JLM_gen} if and only if one the following assertions
 is valid.
\begin{enumerate}

\item The system under consideration possesses one irreducible invariant
    algebraic curve $y-q_k(x)=0$, where $k=1$ or $k=2$, such that the
    polynomials $f(x)$ and $q_k(x)$ identically satisfy the condition
\begin{equation}
 \label{Lienard_degenerate_Inegrability_time1}
 \begin{gathered}
k=1:\, (f_0-\delta)f(x)+2f_0q_{1,x}(x)+(f_0+\delta)\omega=0,\, \omega\in\mathbb{C}\setminus\{0\},\\
k=2:\, (f_0+\delta)f(x)+2f_0q_{2,x}(x)+(f_0-\delta)\omega=0,\, \omega\in\mathbb{C}\setminus\{0\}.
\end{gathered}
\end{equation}
A related Darboux--Jacobi last multiplier reads as
\begin{equation}
 \label{Lienard_degenerate_Inegrability_time2}
 \begin{gathered}
k=1:\quad M(x,y,t)=[y-q_1(x)]^{-\frac{2f_0}{\delta+f_0}}\exp(\omega t),\\
k=2:\quad M(x,y,t)=[y-q_2(x)]^{\frac{2f_0}{\delta-f_0}}\exp(\omega t).\hfill
\end{gathered}
\end{equation}

\item The system under consideration possesses one irreducible invariant
    algebraic curve $F(x,y)=0$ with $\displaystyle
    F(x,y)=\left\{\left[y-y^{(1)}_{\infty}(x)\right]\left[y-y^{(2)}_{\infty}(x)\right]\right\}_+$
    such that the polynomials $f(x)$, $\displaystyle
   q_1(x)=\left\{y^{(1)}_{\infty}(x)\right\}_+$, and $\displaystyle
   q_2(x)=\left\{y^{(2)}_{\infty}(x)\right\}_+$ identically satisfy the
    condition
\begin{equation}
 \label{Lienard_degenerate_Inegrability_time3}
 \begin{gathered}
f(x)+q_{1,x}(x)+q_{2,x}(x)+\omega=0,\, \omega\in\mathbb{C}\setminus\{0\}.
\end{gathered}
\end{equation}
A related Darboux--Jacobi last multiplier reads as
\begin{equation}
 \label{Lienard_degenerate_Inegrability_time4}
 \begin{gathered}
M(x,y,t)=\frac{\exp(\omega t)}{F(x,y)},\quad F(x,y)=\left\{\left[y-y^{(1)}_{\infty}(x)\right]\left[y-y^{(2)}_{\infty}(x)\right]\right\}_+.
\end{gathered}
\end{equation}

\item The system under consideration possesses two distinct irreducible
    invariant algebraic curves $y-q_1(x)=0$ and $y-q_2(x)=0$ such that
    the polynomials $f(x)$, $\displaystyle q_1(x)$, and $\displaystyle
   q_2(x)$ identically satisfy the condition
\begin{equation}
 \label{Lienard_degenerate_Inegrability_time5}
 \begin{gathered}
\,[(2d_2+1)\delta-f_0]f(x)+[(\delta-f_0)d_2-2f_0]q_{1,x}(x)+(\delta+f_0)d_2q_{2,x}(x)\\
-(\delta+f_0)\omega=0,\quad
 d_2\in\mathbb{C},\quad\omega\in\mathbb{C}\setminus\{0\}.
\end{gathered}
\end{equation}
A related Darboux--Jacobi last multiplier reads as
\begin{equation}
 \label{Lienard_degenerate_Inegrability_time6}
 \begin{gathered}
M(x,y,t)=[y-q_1(x)]^{\frac{(\delta-f_0)d_2-2f_0}{\delta+f_0}}[y-q_2(x)]^{d_2}\exp(\omega t).
\end{gathered}
\end{equation}

\end{enumerate}

\end{lemma}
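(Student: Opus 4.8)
The plan is to run the arguments of Theorems~\ref{T:Lienard_degenerate_Darboux1} and~\ref{T:Lienard_degenerate_Liouville1} through Theorem~\ref{T:L23_Non_aut_JLM}. Sufficiency of each of the three assertions is a direct verification: for the displayed cofactors one checks that condition~\eqref{JLM_gen_cond}, which for a Li\'{e}nard system reads $\sum_j d_j\lambda_j(x,y)+\varrho(x,y)+\omega=-\text{div}\,\mathcal{X}=f(x)$, holds with the stated exponents and the prescribed constant $\omega\neq0$, so that the function in~\eqref{JLM_gen} is a Jacobi last multiplier. The substance lies in the necessity direction.

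First I would eliminate exponential factors. Because $\delta/f_0\notin\mathbb{Q}$ forces $\delta\neq0$, Lemma~\ref{L:Lienard_exp_degenerate} rules out exponential invariants $\exp\{h/r\}$ with non-constant $r$, while Lemma~\ref{L:Lienard_exp_inv1} shows that an exponential invariant $\exp[h(x,y)]$ has a cofactor divisible by $y$; since $f(x)$ and every cofactor $\lambda_j$ of an invariant algebraic curve of~\eqref{Lienard_gen} are independent of $y$, such a factor cannot enter~\eqref{JLM_gen_cond}. Hence $\varrho\equiv0$ and the multiplier is assembled solely from generating polynomials of irreducible invariant algebraic curves, so $M=F_1^{d_1}F_2^{d_2}\exp(\omega t)$ with $\omega\neq0$. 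Theorem~\ref{T:Lienard_degenerate} then tells us that, under $\delta/f_0\notin\mathbb{Q}$, there are at most two such curves, each of degree one or two in $y$ and with a degree-two curve unique; this produces exactly the three configurations of the statement.

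In each configuration I would insert the cofactors furnished by Theorem~\ref{T:Lienard_degenerate} --- namely $\lambda_k=-f-q_{k,x}$ for a first-degree curve $y-q_k=0$ and $\lambda=-2f-q_{1,x}-q_{2,x}$ for the second-degree curve --- into the identity $\sum_j d_j\lambda_j+\omega=f$ and balance the coefficient of $x^m$ using the dominant behaviour~\eqref{Lienard_degenerate_Puiseux_series_dominant} recorded in~\eqref{Lienard_degenerate_Darboux_FI3_dom}. In Case~1 this pins down $d_1=-2f_0/(\delta+f_0)$ or $d_2=2f_0/(\delta-f_0)$ and leaves precisely condition~\eqref{Lienard_degenerate_Inegrability_time1}; in Case~2 it forces $d=-1$ and leaves~\eqref{Lienard_degenerate_Inegrability_time3}; in Case~3 the leading balance only couples $d_1$ and $d_2$, giving $d_1=[(\delta-f_0)d_2-2f_0]/(\delta+f_0)$ with $d_2$ free, and the surviving identity is~\eqref{Lienard_degenerate_Inegrability_time5}. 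Uniqueness within each case follows from the uniqueness of the polynomials $q_1,q_2$ and of the degree-two curve in Theorem~\ref{T:Lienard_degenerate}.

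The computations themselves are light, so the one point demanding care is the second-degree curve in Case~2. In Theorem~\ref{T:Lienard_degenerate_Liouville1} the analogous autonomous condition $f+q_{1,x}+q_{2,x}=0$ was shown to force $F=y^2+v(x)y+\beta v^2(x)$ and hence to contradict irreducibility; here the free constant $\omega$ shifts that condition to $f+q_{1,x}+q_{2,x}+\omega=0$, which no longer triggers the factorization, so the genuine quadratic curve survives and Case~2 is non-vacuous. I would therefore handle Case~2 by reading its cofactor directly from Theorem~\ref{T:Lienard_degenerate}, rather than re-deriving $F$ from the defining partial differential equation, which keeps the irreducibility issue from reappearing. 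The main obstacle is thus conceptual rather than computational: certifying that the curve enumeration is exhaustive and that each exponent pattern is compatible with $\omega\neq0$, both of which rest on Theorem~\ref{T:Lienard_degenerate} and the $x^m$-balance.
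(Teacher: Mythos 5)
Your proposal is correct and follows essentially the same route as the paper: the paper's own proof is just the remark that the lemma ``is proved similarly to Theorem~\ref{T:Lienard_degenerate_Liouville1}'', i.e.\ exclude exponential factors via Lemmas~\ref{L:Lienard_exp_inv1} and~\ref{L:Lienard_exp_degenerate} (using $\delta\neq0$), invoke Theorem~\ref{T:Lienard_degenerate} for the three curve configurations, and balance the $x^m$ coefficient in $\sum_j d_j\lambda_j+\omega=f$, which is exactly what you do. Your additional observation that the shift by $\omega\neq0$ is what prevents the autonomous Case~2 contradiction ($w=\beta v^2$, violating irreducibility) from recurring is a point the paper leaves implicit, and it correctly explains why item~2 of the lemma is non-vacuous.
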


This lemma is proved similarly to Theorem
\ref{T:Lienard_degenerate_Liouville1}.

\begin{lemma}\label{L:Lienard__degenerate Liouville_t2}

 A Li\'{e}nard  differential  system \eqref{Lienard_gen} satisfying
 the conditions $\deg g=2\deg f+1$ and $\delta=0$ has a non-autonomous Darboux--Jacobi last
 multiplier of the form \eqref{JLM_gen} if and only if one the following assertions
 is valid.
\begin{enumerate}

\item The system under consideration possesses the irreducible invariant
    algebraic curve $y-q(x)=0$ such that the polynomials $f(x)$ and
    $g(x)$ can be represented~as
\begin{equation}
 \label{Lienard_degenerate_Inegrability_time_del0_1}
 \begin{gathered}
f(x)=-2q_x(x)-\omega,\quad g(x)=q(x)(q_x(x)+\omega),\quad \omega\in\mathbb{C}\setminus\{0\}.
\end{gathered}
\end{equation}
A related Darboux--Jacobi last multiplier reads as
\begin{equation}
 \label{Lienard_degenerate_Inegrability_time_del0_2}
 \begin{gathered}
 M(x,y,t)=\frac{\exp(\omega t)}{[y-q(x)]^{2}}.
\end{gathered}
\end{equation}

\item The system under consideration possesses the irreducible invariant
    algebraic curve $y-q(x)=0$ and the exponential invariant
    $E(x,y)=\exp[u(x)/(y-q(x))]$ such that the polynomials $f(x)$,
    $q(x)$, and $u(x)$ identically satisfy the condition
\begin{equation}
 \label{Lienard_degenerate_Inegrability_time_del0_3}
 \begin{gathered}
(d+1)f(x)+dq_x(x)=u_x(x)+\omega,\quad d\in\mathbb{C},\quad \omega\in\mathbb{C}\setminus\{0\}.
\end{gathered}
\end{equation}
A related Darboux--Jacobi last multiplier reads as
\begin{equation}
 \label{Lienard_degenerate_Inegrability_time_del0_4}
 \begin{gathered}
 M(x,y,t)=[y-q(x)]^{d}\exp\left[\frac{u(x)}{y-q(x)}\right]\exp(\omega t).
\end{gathered}
\end{equation}

\end{enumerate}

\end{lemma}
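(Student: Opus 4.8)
The plan is to mirror the proof of Theorem~\ref{T:Lienard_degenerate_Liouville2}, carrying along the extra time-dependent factor $\exp(\omega t)$ and replacing the autonomous cofactor balance by the Jacobi-multiplier identity \eqref{JLM_gen_cond} with $-\text{div}\,\mathcal{X}=f(x)$ on the right-hand side, as dictated by Theorem~\ref{T:L23_Non_aut_JLM}. Sufficiency is a direct verification: for each assertion I substitute the stated multiplier into $M_t+\mathcal{X}M=-\text{div}\,\mathcal{X}\,M$, recall that $\text{div}\,\mathcal{X}=-f(x)$ for the vector field \eqref{Lienard_Inegrability_VF}, and check the identity using the forms \eqref{Lienard_degenerate_Inegrability_time_del0_1} for assertion~1 and condition \eqref{Lienard_degenerate_Inegrability_time_del0_3} for assertion~2; the first computation reduces to the elementary fact that $yq_x+f(x)y+g(x)=-(q_x+\omega)(y-q)$ once $f$ and $g$ are eliminated.

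For necessity I start from Theorem~\ref{T:L23_Non_aut_JLM}: a multiplier of the form \eqref{JLM_gen} forces the bases and denominators appearing in it to be invariant algebraic curves and $\exp\{S/R\}$ to be an exponential invariant, subject to the cofactor identity \eqref{JLM_gen_cond}. By Theorem~\ref{T:Lienard_degenerate} the condition $\delta=0$ leaves at most one irreducible invariant algebraic curve, namely $y-q(x)=0$, whose cofactor is $\lambda(x,y)=-f(x)-q_x(x)$. Lemma~\ref{L:Lienard_exp_inv1} then rules out any exponential factor with a polynomial argument, since its cofactor is divisible by $y$ whereas $\lambda$, $\omega$ and $-\text{div}\,\mathcal{X}=f(x)$ are all independent of $y$; and Lemma~\ref{L:Lienard_exp_degenerate} shows that the only admissible non-polynomial exponential invariant is $E(x,y)=\exp[u(x)/(y-q(x))]$ with a single pole, cofactor $\varrho(x,y)=u_x(x)$, and $u(x)$ a polynomial solution of \eqref{Lienard_degenerate_Int2}. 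The multiplier is therefore pinned to the shape $M=[y-q(x)]^{d}\exp[u(x)/(y-q(x))]\exp(\omega t)$.

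Substituting $\lambda$ and $\varrho$ into \eqref{JLM_gen_cond} collapses the identity to $(d+1)f(x)+dq_x(x)=u_x(x)+\omega$, which is exactly \eqref{Lienard_degenerate_Inegrability_time_del0_3}, and it remains only to separate the two assertions according to whether the exponential invariant is present. If $u\equiv0$, I balance the coefficient of $x^m$ in $(d+1)f(x)+dq_x(x)=\omega$ using $f(x)=f_0x^m+o(x^m)$ and $q(x)=-f_0x^{m+1}/(2(m+1))+o(x^{m+1})$ from \eqref{Lienard_degenerate_Puiseux_series_dominant}; since $\deg f\geq1$ throughout this section, $\omega$ does not contribute at order $x^m$, so one gets $f_0(d/2+1)=0$, hence $d=-2$ and $f(x)=-2q_x(x)-\omega$. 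Determining $g(x)$ from the fact that $y=q(x)$ solves \eqref{Lienard_y_x} yields $g(x)=q(x)(q_x(x)+\omega)$, which is assertion~1 with multiplier \eqref{Lienard_degenerate_Inegrability_time_del0_2}. If $u\not\equiv0$, the genuine exponential invariant is present and one lands in assertion~2, where $d$ remains free subject only to \eqref{Lienard_degenerate_Inegrability_time_del0_3} and the multiplier is \eqref{Lienard_degenerate_Inegrability_time_del0_4}.

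The decisive step, and the one that makes the classification clean rather than a case-ridden search over Darboux data, is the structural restriction on exponential invariants: one must know \emph{a priori} that no polynomial-argument exponential factor and no higher-order pole $u(x)/(y-q(x))^{k}$ with $k\geq2$ can occur, so that the multiplier cannot assume a richer Darboux form. This is precisely what Lemmas~\ref{L:Lienard_exp_inv1} and \ref{L:Lienard_exp_degenerate} supply, the latter through the local theory of invariants of Section~\ref{S:Local}. Once this is in hand, the rest is the same routine cofactor balance as in the $\omega=0$ analysis of Theorem~\ref{T:Lienard_degenerate_Liouville2}, and I expect no obstruction beyond the bookkeeping of leading coefficients.
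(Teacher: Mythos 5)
Your proposal is correct and takes essentially the same route the paper intends: the paper's entire proof of this lemma is the remark that it is analogous to the proof of Theorem~\ref{T:Lienard_degenerate_Liouville2}, and your argument is precisely that analogy carried out, using Theorem~\ref{T:L23_Non_aut_JLM}, the uniqueness of the curve $y-q(x)=0$ from Theorem~\ref{T:Lienard_degenerate}, and Lemmas~\ref{L:Lienard_exp_inv1} and \ref{L:Lienard_exp_degenerate} to pin the multiplier to the form $[y-q(x)]^{d}\exp[u(x)/(y-q(x))]\exp(\omega t)$, with the cofactor balance now reading $(d+1)f(x)+dq_x(x)=u_x(x)+\omega$ because $-\text{div}\,\mathcal{X}=f(x)$. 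Your leading-coefficient argument forcing $d=-2$ and the explicit forms of $f(x)$ and $g(x)$ when $u\equiv 0$ reproduces Case~1 of that theorem with the $\omega$ term correctly tracked, so there is nothing to correct.
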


The proof of this lemma is analogous to the proof of Theorem
\ref{T:Lienard_degenerate_Liouville2}.

Concluding this section let us note that autonomous and non-autonomous
Darboux first integrals of non-resonant Li\'{e}nard  differential  systems
\eqref{Lienard_gen}
 satisfying
 the conditions $\deg f=1$, $\deg g=3$ and $\deg f=2$, $\deg g=5$ are
classified in \cite{DS2019, DS2021}.

\section{Integrability of  Li\'{e}nard differential systems from family ($C$)}\label{S:Lienard_C}

We start investigating integrability properties of Li\'{e}nard differential
systems from family~($C$) by proving the absence of exponential invariants
related to invariant algebraic curves. We use designations of Theorem
\ref{T1:Lienard_2m+1}. In particular, by $h^{(1)}(x)$ and $h^{(2)}(x)$ we
denote the initial parts of the Puiseux series $y^{(1)}_{\infty}(x)$ and
$y^{(2)}_{\infty}(x)$, accordingly. These initial parts involve monomials
with exponents exceeding $-(n+1)/2$. Recall that we use the designations
$\deg g=n$ and $\deg f =m$. Thus, the following inequality $n>2m+1$ is valid.

\begin{lemma}\label{L:Lienard_exp_inv3}
Suppose a Li\'{e}nard  differential  system~\eqref{Lienard_gen} from family
($C$) is not integrable with a rational first integral. Then
 this system
 does not have exponential invariants of the form
$E(x,y)=\exp\left\{h(x,y)/r(x,y)\right\}$, where $h(x,y)\in\mathbb{C}[x,y]$
and $r(x,y)\in\mathbb{C}[x,y]\setminus\mathbb{C}$ are relatively prime
polynomials.
\end{lemma}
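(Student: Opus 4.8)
The plan is to argue by contradiction: assuming a Li\'enard system from family ($C$) admits an exponential invariant $E(x,y)=\exp\{h(x,y)/r(x,y)\}$ with $r$ non-constant, I will show that it must possess a rational first integral, contradicting the hypothesis. Since $r(x,y)=0$ is then an invariant algebraic curve (Lemma \ref{L:exp_factor_inv_curve}), Theorem \ref{T1:Lienard_2m+1} presents $r$ as a product of irreducible factors, and formula \eqref{Lienard2_F} shows each such factor is monic in $y$ (its leading $y$-coefficient is $\{1\}_+=1$); hence $r$ is monic in $y$. Dividing $h$ by $r$ in $y$ over $\mathbb{C}[x]$, which is exact because $r$ is monic, I would write $h=qr+s$ with $q,s\in\mathbb{C}[x,y]$ and $\deg_y s<\deg_y r$. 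Then $\mathcal{X}(s/r)=\mathcal{X}(h/r)-\mathcal{X}q$ is a polynomial, so $\exp\{s/r\}$ is again an exponential invariant; moreover $\gcd(s,r)=\gcd(h,r)=1$ and $s/r$ is non-constant. Thus it is without loss of generality to assume from the outset that $\deg_y h<\deg_y r$, i.e. $h/r$ is proper in $y$.

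Next I would run the local analysis of Section \ref{S:Local} near $x=\infty$. Because $h/r$ is proper and $r$ is monic, its partial-fraction expansion over $\mathbb{C}_{\infty}\{x\}$ has no polynomial-in-$y$ part, so $E$ factors into elementary local exponential invariants of the pole type only, $\exp[u_j(x)/(y-y_{j,\infty}(x))^{k_j}]$, where $y_{j,\infty}(x)$ ranges over the Puiseux roots of $r$ from Theorem \ref{T1:Lienard_2m+1}. Substituting such a factor into $\mathcal{X}E_j=\varrho_jE_j$ and using Theorem \ref{T:coff_local2}, I obtain $\mathcal{X}[u_j(y-y_{j,\infty})^{-k_j}]=u_{j,x}(y-y_{j,\infty})^{1-k_j}+[y_{j,\infty}u_{j,x}+k_j(f+(y_{j,\infty})_x)u_j](y-y_{j,\infty})^{-k_j}$. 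Requiring the right-hand side to be a Puiseux polynomial forces the coefficient of $(y-y_{j,\infty})^{-k_j}$ to vanish; if $k_j\ge2$ the surviving term is admissible only when $u_{j,x}\equiv0$, which then makes $f+(y_{j,\infty})_x\equiv0$ impossible, since the leading power of $(y_{j,\infty})_x$ at infinity is $x^{(n-1)/2}$ with $(n-1)/2>m$. Hence $k_j=1$ for every factor, each $u_j$ solves $y_{j,\infty}u_{j,x}+(f+(y_{j,\infty})_x)u_j=0$, and $\varrho_j=u_{j,x}$.

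The decisive step is an asymptotic estimate. From \eqref{Lienard_y_x} each root behaves as $y_{j,\infty}(x)=b_0x^{(n+1)/2}+o(x^{(n+1)/2})$ with $b_0^2=-2g_0/(n+1)$, and since $n>2m+1$ the term $(y_{j,\infty})_x\sim\frac{n+1}{2}b_0x^{(n-1)/2}$ dominates $f\sim f_0x^m$. Therefore $u_{j,x}/u_j=-(f+(y_{j,\infty})_x)/y_{j,\infty}=-\frac{n+1}{2}x^{-1}+(\text{terms of exponent}\le-3/2)$, a Puiseux series all of whose exponents are $\le-1$. Integrating, $u_j(x)$ is a Puiseux series whose exponents are all $\le-(n+1)/2<0$, so $u_{j,x}$, and with it the total cofactor $\varrho=\sum_j\varrho_j=\sum_ju_{j,x}$, is a $y$-independent Puiseux series in $x$ containing only negative powers of $x$. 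But $\varrho$ is a genuine polynomial, and a polynomial whose expansion at infinity has only negative powers must vanish; hence $\varrho\equiv0$.

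Finally, $\varrho\equiv0$ gives $\mathcal{X}(h/r)=0$, so the non-constant rational function $h/r$ is a first integral, contradicting the assumption that the system has no rational first integral, and the lemma follows. I expect the main obstacle to be exactly the point where the family-($A$) argument of Lemma \ref{L:Lienard_exp_inv2} ceases to apply: there the leading order of $\lambda(x,y_{j,\infty})/y_{j,\infty}$ exceeds $x^{-1}$, so Theorem \ref{T:Exp_fact} yields an immediate contradiction, whereas in family ($C$) one finds $\lambda/y\sim qx^{-1}$ with $q$ rational and Theorem \ref{T:Exp_fact} is inconclusive. Circumventing this requires the finer local computation above, and the role of the hypothesis forbidding rational first integrals is precisely to absorb the otherwise-admissible terminal case $\varrho\equiv0$.
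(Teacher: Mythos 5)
Your proof is correct and takes essentially the same route as the paper's: decompose $E$ into elementary local exponential invariants near $x=\infty$, rule out higher-order poles, derive the ODE $y_{j,\infty}u_{j,x}+\bigl(f+(y_{j,\infty})_x\bigr)u_j=0$ whose solutions behave like $\sigma_j x^{-(n+1)/2}$, conclude that the cofactor $\varrho=\sum_j u_{j,x}$ has only negative powers of $x$ and hence vanishes, so that $h/r$ is a rational first integral, contradicting the hypothesis. Your only departures are explicit justifications of steps the paper asserts without detail (the reduction to $\deg_y h<\deg_y r$ via monicity of $r$ in $y$, and the exclusion of the case $k_j\ge 2$), which are welcome but do not change the argument.
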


\begin{proof}
We shall use the local theory presented in Section~\ref{S:Local}. If a
Li\'{e}nard differential system~\eqref{Lienard_gen} has an exponential
invariant $E(x,y)=\exp\left\{h(x,y)/r(x,y)\right\}$ with
$r(x,y)\in\mathbb{C}[x,y]\setminus\mathbb{C}$, then
$r(x,y)\in\mathbb{C}[x,y]\setminus\mathbb{C}[x]$ and it is without loss of
generality to assume that the degree of the polynomial $h(x,y)$ with respect
to $y$ is less than the degree of the polynomial $r(x,y)$ with respect to
$y$. There exists a finite number of local elementary exponential invariants
\begin{equation} \label{Lienard_exp_inv1_loc1}
\begin{gathered}
E_j(x,y)=\exp\left[\frac{u_j(x)}{\{y-Y_{j,\infty}(x)\}^{n_j}}\right],\quad
 u_j(x),\, Y_{j,\infty}(x)\in \mathbb{C}_{\infty}\{x\},\\
  n_j\in\mathbb{N},\quad j=1,\ldots, K,\quad K\in\mathbb{N}
\end{gathered}
\end{equation}
such that the exponential invariant $E(x,y)$ equals the product
$E_1^{}(x,y)\times \ldots \times E_K^{}(x,y)$. It follows from Theorem
\ref{T:inv_curve_prim} and Lemma \ref{L:exp_factor_inv_curve} that each
series $Y_{j,\infty}(x)$ satisfies equation \eqref{Lienard_y_x}. In fact, the
series $Y_{j,\infty}(x)$  coincides with one of the series
$y^{(1,2)}_{j,\infty}(x)$
 presented in Theorem~\ref{T1:Lienard_2m+1}.  Let
us denote the cofactor of the local elementary exponential invariant
$E_j(x,y)$ by $\varrho_j(x,y)\in\mathbb{C}_{\infty}\{x\}[y]$. We see that the
following expression $\varrho_1(x,y)+\ldots+\varrho_K(x,y)$ equals the
cofactor $\varrho(x,y)$ of the invariant $E(x,y)$. Recall that the cofactor
$\varrho(x,y)$ is an element of the ring~$\mathbb{C}[x,y]$.

Substituting the explicit representation of $E_j(x,y)$ into the partial
differential equation $\mathcal{X}E_j(x,y)=\varrho_j(x,y)E_j(x,y)$, we get
\begin{equation} \label{Lienard_exp_inv1_2n}
\begin{gathered}
yu_{j,x}=n_j\lambda_j(x,y)u_j(x)+\varrho_j(x,y)\left\{y-Y_{j,\infty}(x)\right\}^{n_j}.
\end{gathered}
\end{equation}
In this expression $\lambda_j(x,y)\in\mathbb{C}_{\infty}\{x\}[y]$  is the
cofactor of the local elementary invariant $F_j(x,y)=y-Y_{j,\infty}(x)$.
Using Theorem \ref{T:coff_local2}, we find the cofactor $\lambda_j(x,y)$. The
result is
\begin{equation} \label{Lienard_exp_inv1_cof_local}
\begin{gathered}
\lambda_j(x,y)=-f(x)-\{Y_{j,\infty}(x)\}_x.
\end{gathered}
\end{equation}
 Analyzing expression
\eqref{Lienard_exp_inv1_2n}, we see that $n_j=1$,
$\varrho_j(x,y)=u_{j,x}(x)$, and the series $u_j(x)$ satisfies the following
ordinary differential equation
\begin{equation} \label{Lienard_exp_inv1_3n}
\begin{gathered}
Y_{j,\infty}(x)u_{j,x}(x)+(f(x)+[Y_{j,\infty}(x)]_x)u_j(x)=0.
\end{gathered}
\end{equation}
Using the dominant behavior of the series $Y_{j,\infty}(x)$ given by the
monomial $b_{0,j}x^{(n+1)/2}$, we find the dominant monomial of the series
$u_{j}(x)$. Thus, we get the relation
$u_{j}(x)=\sigma_{j}x^{-(n+1)/2}+o(x^{-(n+1)/2})$, where
$\sigma_j\in\mathbb{C}$ and $x\rightarrow \infty$. Hence the cofactor
$\varrho_j(x,y)=u_{j,x}(x)$ does not have monomials with non-negative
exponents. Consequently, the polynomial $\displaystyle
\varrho(x,y)=\varrho_1(x,y)+\ldots+\varrho_K(x,y)$ should be identically
zero. We conclude that the argument $h(x,y)/r(x,y)$ of the exponential
invariant $E(x,y)=\exp\left\{h(x,y)/r(x,y)\right\}$ is a rational first
integral of the differential system in question. It is a contradiction.

\end{proof}
\textit{Remark.} It will be shown below that Li\'{e}nard differential systems
\eqref{Lienard_gen} satisfying the condition $\deg g>2\deg f+1$ do not have
rational first integrals. Thus, this condition can be removed from the
statement of Lemma \ref{L:Lienard_exp_inv3}.

Now our aim is to prove  that Li\'{e}nard differential systems
\eqref{Lienard_gen} do not have Darboux first integrals provided that $\deg
g>2\deg f+1$.

\begin{theorem}\label{T:Lienard_Integrability1_C}
Li\'{e}nard differential systems  \eqref{Lienard_gen} from family ($C$) are
not Darboux integrable.
\end{theorem}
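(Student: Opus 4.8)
The plan is to follow the same route as the proof of Theorem~\ref{T:Lienard_Integrability1}: first show that a Darboux first integral of a family~($C$) system cannot contain exponential factors, so that it reduces to a finite product $\prod_j F_j^{d_j}$ of powers of irreducible invariant algebraic curves, and then prove that the cofactor cancellation condition \eqref{NDFI_gen_cond} with $\omega=0$ admits no nontrivial solution unless the system possesses a rational first integral. A prerequisite, which I would establish first (this is the fact announced in the Remark following Lemma~\ref{L:Lienard_exp_inv3}), is that no system in family~($C$) has a rational first integral; this is exactly the hypothesis needed to invoke Lemma~\ref{L:Lienard_exp_inv3}.

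To exclude exponential factors I would argue as follows. By Lemma~\ref{L:Lienard_exp_inv3} there are no exponential invariants $\exp\{h/r\}$ with $r\in\mathbb{C}[x,y]\setminus\mathbb{C}$, so the only candidates are $\exp[h(x,y)]$ with $h\in\mathbb{C}[x,y]$. By Lemma~\ref{L:Lienard_exp_inv1} the cofactor $\varrho(x,y)$ of such an invariant is divisible by $y$, whereas by Theorem~\ref{T1:Lienard_2m+1} every cofactor $\lambda_j$ in \eqref{Lienard2_cof} is independent of $y$. Hence in the cancellation identity $\sum_j d_j\lambda_j+\varrho\equiv0$ the $y$-divisible part $\varrho$ must vanish by itself; its argument would then furnish a rational first integral, contradicting the prerequisite. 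Consequently a Darboux first integral must be of the form $\prod_j F_j^{d_j}$ with $\sum_j d_j\lambda_j\equiv0$.

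The core of the argument is then the cofactor analysis, which I would split according to the parity of $n=\deg g$, exactly as Theorem~\ref{T1:Lienard_2m+1} dictates. When $n$ is even one has $N_1=N_2$ for every irreducible curve, so by \eqref{Lienard2_cof} all cofactors are integer multiples $\lambda_j=N_j\lambda_0$ of the single polynomial $\lambda_0=-2f-\{h^{(1)}_x+h^{(2)}_x\}_+$. If two curves with $\lambda_0\not\equiv0$ occur, then $\mathcal{X}(F_i^{1/N_i})=\lambda_0 F_i^{1/N_i}$ for each $i$, so $F_i^{N_j}/F_j^{N_i}$ has zero cofactor and is a rational first integral; if a single curve occurs with $\lambda_0\equiv0$ that curve is itself a polynomial first integral; either way one contradicts the absence of rational first integrals, while for one curve with $\lambda_0\not\equiv0$ the identity forces $d_1=0$. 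When $n$ is odd each irreducible curve is $F_j=y-p_j(x)$ with $p_j$ a polynomial solution of \eqref{Lienard_y_x}, whose cofactor is $\lambda_j=-f-p_{j,x}=g/p_j$; the identity $\sum_j d_j/p_j\equiv0$ with two curves forces $p_2=\mathrm{const}\cdot p_1$, and comparison of the opposite dominant terms $b_0^{(1)}=-b_0^{(2)}$ gives $p_2=-p_1$, which is impossible, since substituting $y=-p_1$ into \eqref{Lienard_y_x} and subtracting the relation for $y=p_1$ yields $2fp_1\equiv0$ with $f\not\equiv0$.

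The main obstacle I anticipate is that, in contrast to family~($A$) where Theorem~\ref{T:L_two_curves} caps the number of curves at two, family~($C$) carries no a priori bound on the number of irreducible invariant algebraic curves, so the cancellation \eqref{NDFI_gen_cond} must be controlled for arbitrarily many terms. The two structural facts supplied by Theorem~\ref{T1:Lienard_2m+1} — proportionality of all cofactors for even $n$ and the one-series-per-curve rigidity for odd $n$ — are precisely what make this tractable, reducing every nontrivial cancellation to the existence of a rational first integral. For this reason the decisive preliminary step is the independent proof that family~($C$) admits no rational first integral, which I would obtain by the same Puiseux-series cofactor comparison: two curves with equal cofactors lead, through the parity dichotomy above, to the same contradictions. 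Establishing this cleanly for an unbounded collection of curves is where the real work lies.
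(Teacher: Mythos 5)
Your reduction contains a genuine circularity that you yourself flag but do not resolve, and it is fatal in the case where $n=\deg g$ is even. Your whole scheme rests on the prerequisite that family~($C$) admits no rational first integral: you need it to invoke Lemma~\ref{L:Lienard_exp_inv3}, to discard exponential factors $\exp[h(x,y)]$, and to close the even branch of the parity dichotomy. For $n$ odd your cofactor computation does produce an honest contradiction ($p_2=-p_1$ forces $2f(x)p_1(x)\equiv 0$ with $f\not\equiv0$), so that branch is sound and even disposes of rational first integrals as a special case. But for $n$ even your dichotomy proves only the implication ``nontrivial cancellation $\Rightarrow$ rational first integral,'' and this implication is in fact an equivalence, not a contradiction: since by Theorem~\ref{T1:Lienard_2m+1} every cofactor equals $N_j\lambda_0$ with $N_j\in\mathbb{N}$ and $\lambda_0=-2f-\{h^{(1)}_x+h^{(2)}_x\}_+$, any two distinct irreducible invariant curves $F_1=0$, $F_2=0$ automatically yield the rational first integral $F_1^{N_2}/F_2^{N_1}$, whose cofactor $N_2N_1\lambda_0-N_1N_2\lambda_0$ vanishes identically. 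Hence for $n$ even, ``no rational first integral'' is equivalent to ``at most one irreducible invariant algebraic curve,'' and you supply an independent argument for neither statement; proposing to prove the prerequisite ``by the same cofactor comparison'' is circular, because the only contradiction that comparison ever produces is precisely the existence of a rational first integral. In the paper both statements appear as Corollary~1 to Theorem~\ref{T:Lienard_Integrability1_C}, i.e.\ as consequences of the theorem, never as inputs to it.

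The paper escapes this circle by arguing on the integrating-factor side rather than the first-integral side, which is the key idea your proposal misses. By Theorem~\ref{T:Darboux_rat}, Darboux integrability gives a \emph{rational} integrating factor, hence a product $\prod_j F_j^{d_j}$ with \emph{integer} exponents satisfying $\sum_j d_j\lambda_j(x,y)=-\text{div}\,\mathcal{X}=f(x)$; note that the right-hand side is $f(x)$, not $0$, which is what makes the constraint non-vacuous. The parity analysis (the paper's relation \eqref{Lienard_Inegrability_new3}) reduces this to
\begin{equation*}
B\left(2f(x)+\left\{h^{(1)}_x(x)+h^{(2)}_x(x)\right\}_+\right)=-f(x),\qquad B=\sum_j d_jN_{1,j}\in\mathbb{Z}\setminus\{0\},
\end{equation*}
and the $\varepsilon$-expansion of the Puiseux series in \eqref{Lienard_Inegrability_ODE_for_coeff}--\eqref{Lienard_Inegrability_coeff_expl} computes the leading coefficient of $\{h^{(1)}_x+h^{(2)}_x\}_+$ and pins $B=-\frac{m+1}{n+1}-\frac12$, which lies strictly between $-1$ and $-\frac12$ whenever $n>2m+1$ and so is never an integer. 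This single computation works uniformly in both parities and for arbitrarily many invariant curves, and it requires neither the exclusion of exponential factors nor any a priori bound on the number of curves --- exactly the two things your proposal identifies as ``where the real work lies'' but cannot supply.
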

\begin{proof}
Let us suppose that a Li\'{e}nard differential system~\eqref{Lienard_gen}
with $\deg g>2\deg f+1$ has a Darboux first integral. In view of Theorem
\ref{T:Darboux_rat} the system possesses a rational integrating factor.
Consequently, there exists $K\in\mathbb{N}$ pairwise distinct irreducible
invariant algebraic curves $F_1(x,y)=0$, $\ldots$, $F_K(x,y)=0$ and $K$
non-zero integer numbers $d_1$, $\ldots$, $d_K$ such that the following
condition
\begin{equation}
 \label{Lienard_Inegrability_new1}
\sum_{j=1}^{K}d_j\lambda_j(x,y)=f(x),\quad
 d_1,\ldots, d_{K}\in\mathbb{Z}
\end{equation}
is valid. In this expression $\lambda_j(x,y)$ is the cofactor of the
 invariant algebraic curve $F_j(x,y)=0$. Suppose the family of Puiseux series $y^{(l)}_{\infty}(x)$
 arises $N_{l,j}$ times in the factorization of the polynomial $F_j(x,y)$ in the ring
 $\mathbb{C}_{\infty}\{x\}[y]$.
 Here $l=1$, $2$ and we use the designations of  Theorem
\ref{T1:Lienard_2m+1}. The cofactor $\lambda_j(x,y)$ reads~as
\begin{equation}
 \label{Lienard_Inegrability_coff_n1}
 \begin{gathered}
\lambda_j(x,y)=-(N_{1,j}+N_{2,j})f(x)-\left\{N_{1,j}h^{(1)}_x(x)+N_{2,j}h^{(2)}_x(x)\right\}_+,\\
 N_{1,j},N_{2,j}\in\mathbb{N}_0,\quad N_{1,j}+N_{2,j}>0,
\end{gathered}
\end{equation}
where $h^{(l)}(x)$ is the initial part of Puiseux series
$y^{(l)}_{\infty}(x)$ introduced in Theorem~\ref{T1:Lienard_2m+1}.
Substituting expression \eqref{Lienard_Inegrability_coff_n1} into relation
\eqref{Lienard_Inegrability_new1} yields
\begin{equation}
 \label{Lienard_Inegrability_new2} \begin{gathered}
\sum_{j=1}^{K}d_jN_{1,j}\left(f(x)+\left\{h^{(1)}_x(x)\right\}_+\right)
+\sum_{j=1}^{K}d_jN_{2,j}\left(f(x)+\left\{h^{(2)}_x(x)\right\}_+\right)
=-f(x).
\end{gathered}
\end{equation}
 Let us suppose that $n$ is odd. The dominant behavior of the truncated series $h^{(1)}(x)$ and $h^{(2)}(x)$
is $b_0x^{(n+1)/2}$ and $-b_0x^{(n+1)/2}$, respectively.  Setting to zero the
coefficient of $x^{(n-1)/2}$ in expression \eqref{Lienard_Inegrability_new2},
we obtain
\begin{equation}
 \label{Lienard_Inegrability_new3} \begin{gathered}
\sum_{j=1}^{K}d_jN_{1,j}=\sum_{j=1}^{K}d_jN_{2,j}.
\end{gathered}
\end{equation}
If $n$ is even, then by Theorem \ref{T1:Lienard_2m+1} we get
$N_{1,j}=N_{2,j}$. Consequently, relation~\eqref{Lienard_Inegrability_new3}
is identically satisfied. As a result, condition
\eqref{Lienard_Inegrability_new2} takes the form
\begin{equation}
 \label{Lienard_Inegrability_new4} \begin{gathered}
B\left(2f(x)+\left\{h^{(1)}_x(x)+h^{(2)}_x(x)\right\}_+\right)
=-f(x),\quad B=\sum_{j=1}^{K}d_jN_{1,j}.
\end{gathered}
\end{equation}
Obviously $B$ is a non-zero integer number. Now let us consider the equation
\begin{equation}
 \label{Lienard_Inegrability_ODE_alt}
yy_x+\varepsilon f(x)y+g(x)=0.
\end{equation}
Puiseux series near the point $x=\infty$ that satisfy this equation coincide
with the Puiseux series $y_{j,\infty}^{(1,2)}(x)$ presented in Theorem
\ref{T1:Lienard_2m+1} provided that $\varepsilon=1$. Analogously to the case
$\varepsilon=1$, we find two families  of Puiseux series near the point
$x=\infty$ solving equation \eqref{Lienard_Inegrability_ODE_alt}. We denote
these families as $Y_{\infty}^{(1)}(x)$ and $Y_{\infty}^{(2)}(x)$. It is
straightforward to see that these series can be represented as
\begin{equation}
 \label{Lienard_Inegrability_PS_alt}
Y_{\infty}^{(1,2)}(x)=\sum_{k=0}^{\infty}\varepsilon^kv^{(1,2)}_k(x),
\end{equation}
where the coefficients $v^{(1,2)}_k(x)$ are Puiseux series near the point
$x=\infty$. Substituting expression \eqref{Lienard_Inegrability_PS_alt} into
equation \eqref{Lienard_Inegrability_ODE_alt} and setting to zero the
coefficients of different powers of $\varepsilon$, we find ordinary
differential equations for the series $v^{(1,2)}_k(x)$. Two first equations
take the form
\begin{equation}
 \label{Lienard_Inegrability_ODE_for_coeff}
v_0v_{0,x}+g(x)=0,\quad v_0v_{1,x}+v_{0,x}v_1+f(x)v_0=0,
\end{equation}
where the upper index is omitted. Thus, we get
\begin{equation}
 \label{Lienard_Inegrability_coeff_expl}
v^{(2)}_0(x)=-v^{(1)}_0(x),\quad v^{(1,2)}_1(x)=-\frac{2f_0}{2m+n+3}x^{m+1}+o(x^{m+1}),\, x\rightarrow\infty.
\end{equation}
Analyzing other ordinary differential equations for the series $v^{(1,2)}_k(x)$
with $k\geq 2$, we find $v^{(1,2)}_k(x)=o(x^{m+1})$,  $x\rightarrow\infty$,
$k\geq 2$. We recall that the Puiseux series $y_{j,\infty}^{(1,2)}(x)$ with
the same upper index have coinciding initial parts involving monomials with
exponents exceeding $-(n+1)/2$. These initial parts are given by $h^{(1)}(x)$
and $h^{(2)}(x)$. Substituting our results into condition
\eqref{Lienard_Inegrability_new4} and considering the coefficients of the
leading term $x^m$, we come to the equation
\begin{equation}
 \label{Lienard_Inegrability_new5} \begin{gathered}
B\left(2f_0-\frac{4(m+1)f_0}{2m+n+3}\right)=-f_0.
\end{gathered}
\end{equation}
Solving this equation, we find the expression
\begin{equation}
 \label{Lienard_Inegrability_new6} \begin{gathered}
B=-\frac{m+1}{n+1}-\frac12.
\end{gathered}
\end{equation}
Inequality $n>2m+1$ shows that $B$ is not an integer. It is a contradiction.
\end{proof}

\textit{Corollary 1.} A Li\'{e}nard differential system~\eqref{Lienard_gen}
from family ($C$) has at most one invariant algebraic curve $F(x,y)=0$ with
the property $N_1=N_2$.
\begin{proof} Recall that the variables $N_1$ and $N_2$ give the number of distinct Puiseux
series $y^{(1)}_{\infty}(x)$ and $y^{(2)}_{\infty}(x)$ from the field
$\mathbb{C}_{\infty}\{x\}$ arising in the factorization of the polynomial producing the
algebraic curve $F(x,y)=0$, respectively. For more details see
Theorem~\ref{T1:Lienard_2m+1}. Note that we do not require the polynomial
$F(x,y)$ to be irreducible. The cofactor of such an algebraic curve takes the
form
\begin{equation}
 \label{Lienard_Inegrability_coff_n1_add}
 \begin{gathered}
\lambda(x,y)=-2N_1f(x)-N_1\left\{h^{(1)}_x(x)+h^{(2)}_x(x)\right\}_+.
\end{gathered}
\end{equation}
If there exists another   invariant algebraic curve with the property
$N_1=N_2$, then the system under consideration possesses a rational first
integral. This fact contradicts Theorem \ref{T:Lienard_Integrability1_C}.
\end{proof}

\textit{Corollary 2.} A Li\'{e}nard differential system~\eqref{Lienard_gen}
from family ($C$) cannot have two distinct irreducible invariant algebraic
curves $F_1(x,y)=0$ and $F_2(x,y)=0$ with the property
$N_{1,1}N_{2,2}-N_{1,2}N_{2,1}=0$, where $N_{l,j}$ is the number of
    times the family of Puiseux series $y^{(l)}_{\infty}(x)$ enters the
    factorization of the polynomial $F_j(x,y)$ in the ring
    $\mathbb{C}_{\infty}\{x\}[y]$.

\begin{proof} The proof is by contradiction.
Suppose  a Li\'{e}nard differential system~\eqref{Lienard_gen} satisfying the
condition $\deg g>2\deg f+1$ possesses two distinct irreducible invariant
algebraic curves $F_1(x,y)=0$ and $F_2(x,y)=0$ with the property
$N_{1,1}N_{2,2}-N_{1,2}N_{2,1}=0$.

First of all, let us assume that one of the numbers $N_{l,j}$, where $l$, $j=
1$,~$2$, is zero. Without loss of generality, we choose $N_{2,1}=0$. This
gives $N_{1,1}N_{2,2}=0$. The polynomial $F_j(x,y)$, $j=1$, $2$ should have
at least one Puiseux series near the point $x=\infty$ in its factorization.
Thus, we get $N_{1,1}>0$, $N_{2,2}=0$, and $N_{1,2}>0$. By Theorem
\ref{T1:Lienard_2m+1} there exists at most one irreducible invariant
algebraic curve possessing only one family of Puiseux series near the point
$x=\infty$ in the factorization. This yields a contradiction.

Now suppose that all the numbers $N_{l,j}$, where $l$, $j= 1$,~$2$, are
non-zero. Introducing the variable
\begin{equation}
 \label{Lienard_Inegrability_coff_n1_number}
 \begin{gathered}
\varkappa=\frac{N_{1,1}}{N_{2,1}}=\frac{N_{1,2}}{N_{2,2}},
\end{gathered}
\end{equation}
we represent the cofactors $\lambda_1(x,y)$ and $\lambda_2(x,y)$ of the
invariant algebraic curves $F_1(x,y)=0$ and $F_2(x,y)=0$ in the form
\begin{equation}
 \label{Lienard_Inegrability_coff_n1_add_coff}
 \begin{gathered}
\lambda_j(x,y)=-N_{2,j}\left[(\varkappa+1) f(x)+\left\{\varkappa h^{(1)}_x(x)+
h^{(2)}_x(x)\right\}_+\right],\quad j=1,2.
\end{gathered}
\end{equation}
We conclude that the cofactors are dependent over the ring $\mathbb{Z}$.
Consequently, the Li\'{e}nard differential system under study has a rational
first integral. This is a contradiction.

\end{proof}

We see from Corollary $1$ and Theorem  \ref{T1:Lienard_2m+1} that if $\deg g
$ denoted as $n$ is an even number, then a Li\'{e}nard differential
system~\eqref{Lienard_gen} satisfying the condition $\deg g>2\deg f+1$ cannot
have more than one irreducible invariant algebraic curve simultaneously.

Next, let us investigate the existence of  non-autonomous Darboux first
integrals. The following lemma is valid.

\begin{lemma}\label{L:Lienard_Integrability_time2}
A Li\'{e}nard differential system \eqref{Lienard_gen} from family ($C$) has a
non-autonomous Darboux first integral  with a time-dependent
 exponential factor~\eqref{FI_t_gen} if and only if   $\deg f=0$
 and one of the following assertions is valid.

\begin{enumerate}

\item There exists an irreducible invariant algebraic curve $F(x,y)=0$
    such that the family of Puiseux series $y^{(1)}_{\infty}(x)$ arises
    in the factorization of the polynomial $F(x,y)$ in the ring
 $\mathbb{C}_{\infty}\{x\}[y]$ as many times as so does the family
 $y^{(2)}_{\infty}(x)$, i.e. $N_{1}=N_{2}$. A first integral takes the
 form
\begin{equation}
 \label{Lienard_Inegrability_FI_C1_deg_f_0}
 \begin{gathered}
I(x,y,t)=F(x,y)\exp\left[\frac{2(n+1)f_0N_1 t}{n+3}\right].
\end{gathered}
\end{equation}

\item There exist two distinct irreducible invariant algebraic curves
    $F_1(x,y)=0$ and $F_2(x,y)=0$ such that the following relation
$N_{1,j}\neq N_{2,j}$, $j=1$, $2$ is valid, where  $N_{l,j}$ is the
number of times the family of Puiseux series
    $y^{(l)}_{\infty}(x)$ enters the factorization of the polynomial
    $F_j(x,y)$ in the ring $\mathbb{C}_{\infty}\{x\}[y]$. A first
    integral reads~as
 \begin{equation}
 \label{Lienard_Inegrability_FI_C2_deg_f_0}
 \begin{gathered}
I(x,y,t)=\frac{\left\{F_1(x,y)\right\}^{N_{1,2}-N_{2,2}}}{\left\{F_2(x,y)\right\}^{N_{1,1}-N_{2,1}}}
\exp\left[\frac{2(n+1)f_0\Omega t}{n+3}\right]
\end{gathered}
\end{equation}
with the parameter  $\Omega$ given by the relation $\Omega=
N_{1,2}N_{2,1}-N_{1,1}N_{2,2}$.

\end{enumerate}

There are no other independent non-autonomous Darboux first integrals with a
time-dependent exponential factor \eqref{FI_t_gen}.

\end{lemma}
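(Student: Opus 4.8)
The plan is to carry out the non-autonomous counterpart of the argument proving Theorem~\ref{T:Lienard_Integrability1_C}. By Theorem~\ref{T:L23_Non_aut_FI} a first integral of the form \eqref{FI_t_gen} with $\omega\neq0$ exists precisely when its factors define invariant algebraic curves, $E=\exp\{S/R\}$ is an exponential invariant, and the cofactors obey \eqref{NDFI_gen_cond}. First I would eliminate any exponential factor. Lemma~\ref{L:Lienard_exp_inv3}, together with the absence of rational first integrals for family~($C$) (established in Theorem~\ref{T:Lienard_Integrability1_C} and its corollaries), excludes exponential invariants with a non-constant denominator, so only $E=\exp[h(x,y)]$ with $h\in\mathbb{C}[x,y]$ could occur. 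By Lemma~\ref{L:Lienard_exp_inv1} its cofactor $\varrho(x,y)$ is divisible by $y$; since the curve cofactors $\lambda_j$ and the constant $\omega$ do not depend on $y$, identity \eqref{NDFI_gen_cond} forces $\varrho\equiv0$, whence $h$ would be a polynomial first integral, contradicting the absence of rational first integrals. Thus no exponential factor occurs and \eqref{FI_t_gen} reduces to $\prod_jF_j^{d_j}\exp(\omega t)$ with $\sum_jd_j\lambda_j$ equal to the nonzero constant $-\omega$.

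Next I would substitute the cofactor \eqref{Lienard2_cof} and set $A=\sum_jd_jN_{1,j}$ and $C=\sum_jd_jN_{2,j}$, so that the condition reads $(A+C)f(x)+\{Ah^{(1)}_x+Ch^{(2)}_x\}_+=-\omega$. Exactly as in the passage leading to \eqref{Lienard_Inegrability_new3}, the opposite leading monomials $\pm b_0x^{(n+1)/2}$ of $h^{(1)}$ and $h^{(2)}$ give, on collecting the highest integer power, the relation $A=C$ (automatic when $n$ is even, where $N_{1,j}=N_{2,j}$); and $A=C\neq0$ since otherwise $\omega=0$. With $A=C$ I would move to the next order by means of the $\varepsilon$-expansion \eqref{Lienard_Inegrability_PS_alt}: relations \eqref{Lienard_Inegrability_coeff_expl} yield $h^{(1)}+h^{(2)}\sim-\tfrac{4f_0}{2m+n+3}x^{m+1}$, so the coefficient of $x^{m}$ in $2Af(x)+A\{(h^{(1)}+h^{(2)})_x\}_+$ equals $Af_0\,\tfrac{2(n+1)}{2m+n+3}$. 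If $\deg f=m\ge1$ this positive-degree coefficient must vanish, forcing $A=0$, a contradiction; hence $\deg f=0$, which is the first claim of the lemma.

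With $\deg f=0$ in force, I would read off the admissible configurations from Theorem~\ref{T1:Lienard_2m+1} and the two corollaries to Theorem~\ref{T:Lienard_Integrability1_C}. For $n$ even every irreducible curve has $N_1=N_2$ and at most one such curve can exist, so $\sum_jd_j\lambda_j$ reduces to a single (now constant) cofactor; choosing $d_1=1$ gives $\omega=2(n+1)f_0N_1/(n+3)$ and first integral \eqref{Lienard_Inegrability_FI_C1_deg_f_0}, i.e. assertion~(1). For $n$ odd each irreducible curve carries one family, $(N_{1,j},N_{2,j})\in\{(1,0),(0,1)\}$, so $N_{1,j}\neq N_{2,j}$; a single curve is impossible because then $f_0+\{h^{(k)}_x\}_+$ has degree $(n-1)/2\ge1$ and cannot be constant, while two curves of the same type are forbidden by the second corollary. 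Hence two curves of opposite type occur, $A=C$ forces $d_1=d_2$, and substituting into the cofactor condition produces the parameter $\Omega=N_{1,2}N_{2,1}-N_{1,1}N_{2,2}$, the value of $\omega$, and first integral \eqref{Lienard_Inegrability_FI_C2_deg_f_0}, i.e. assertion~(2). Sufficiency is the direct check that the displayed functions satisfy $I_t+\mathcal{X}I=0$, and uniqueness follows as in Lemma~\ref{L:Lienard_Integrability_time1}: a second independent first integral \eqref{FI_t_gen} would, after cancelling $\exp(\omega t)$, yield an autonomous Darboux first integral, which Theorem~\ref{T:Lienard_Integrability1_C} excludes.

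The main obstacle I expect is the two-term asymptotic bookkeeping forcing $\deg f=0$: one must control both the top integer power (to obtain $A=C$) and the coefficient of $x^{m}$ through the $\varepsilon$-expansion \eqref{Lienard_Inegrability_PS_alt}--\eqref{Lienard_Inegrability_coeff_expl}, and then confirm that after these cancellations the surviving combination of cofactors is a genuine nonzero constant rather than merely a polynomial of vanishing top degrees, so that an admissible $\omega\neq0$ indeed remains.
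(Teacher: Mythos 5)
Your proposal is correct and follows essentially the same route as the paper's proof: exclusion of exponential factors via Lemmas \ref{L:Lienard_exp_inv1} and \ref{L:Lienard_exp_inv3}, reduction to the cofactor identity from Theorem \ref{T:L23_Non_aut_FI}, the leading-order relation $\sum_j d_jN_{1,j}=\sum_j d_jN_{2,j}$ followed by the $\varepsilon$-expansion coefficient of $x^{m}$ forcing $\deg f=0$, and then the corollaries to Theorem \ref{T:Lienard_Integrability1_C} to settle the two admissible configurations and uniqueness. Your organization of the endgame by the parity of $n$ is equivalent (via Theorem \ref{T1:Lienard_2m+1}) to the paper's split by the number of curves $K$, and the only slip is a harmless sign in the cofactor condition: it should read $\sum_j d_j\lambda_j=-\omega$, i.e. $(A+C)f+\{Ah^{(1)}_x+Ch^{(2)}_x\}_+=+\omega$, which does not affect the structure of the argument.
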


\begin{proof}
By Lemmas \ref{L:Lienard_exp_inv1} and \ref{L:Lienard_exp_inv2}  exponential
factors cannot enter an explicit expression of  a non-autonomous Darboux
first integral \eqref{FI_t_gen}. It follows from Theorem
\ref{T:L23_Non_aut_FI} that a Li\'{e}nard differential system
\eqref{Lienard_gen} satisfying
 the condition $\,\,$ $\deg g>2\deg f+1$ has a first integral~\eqref{FI_t_gen} if and only if there exists $K\in\mathbb{N}$ pairwise distinct
irreducible invariant algebraic curves $F_1(x,y)=0$, $\ldots$, $F_K(x,y)=0$
and $K+1$ non-zero complex  numbers $d_1$, $\ldots$, $d_K$, $\omega$ such
that the following condition
\begin{equation}
 \label{Lienard_Inegrability_new1_time}
\sum_{j=1}^{K}d_j\lambda_j(x,y)+\omega=0,\quad
 d_1,\ldots, d_{K},\omega\in\mathbb{C}\setminus\{0\}
\end{equation}
is valid. In this expression $\lambda_j(x,y)$ is the cofactor of the
 invariant algebraic curve $F_j(x,y)=0$. The related first integral reads as
\begin{equation}
 \label{Lienard_Inegrability_new1_time_add}
I(x,y,t)=\prod_{j=1}^{K}F_j^{d_j}(x,y)\exp[\omega t].
\end{equation}
 Suppose the family of Puiseux series $y^{(l)}_{\infty}(x)$
 arises $N_{l,j}$ times in the factorization of the polynomial $F_j(x,y)$ in the ring
 $\mathbb{C}_{\infty}\{x\}[y]$.
 Here $l=1$, $2$ and again we use the designations of  Theorem
\ref{T1:Lienard_2m+1}. The cofactor $\lambda_j(x,y)$ is given by
relation~\eqref{Lienard_Inegrability_coff_n1}. The necessary and sufficient
condition for first integral \eqref{Lienard_Inegrability_new1_time_add} to
exist reads~as
\begin{equation}
 \label{Lienard_Inegrability_new2_time} \begin{gathered}
\sum_{j=1}^{K}d_jN_{1,j}\left(f(x)+\left\{h^{(1)}_x(x)\right\}_+\right)
+\sum_{j=1}^{K}d_jN_{2,j}\left(f(x)+\left\{h^{(2)}_x(x)\right\}_+\right)=\omega.
\end{gathered}
\end{equation}
This condition is not satisfied whenever relation
\eqref{Lienard_Inegrability_new3} is not valid.  Further, we rewrite
condition \eqref{Lienard_Inegrability_new2_time} in the form
\begin{equation}
 \label{Lienard_Inegrability_new4_time} \begin{gathered}
B\left(2f(x)+\left\{h^{(1)}_x(x)+h^{(2)}_x(x)\right\}_+\right)=\omega,\quad B=\sum_{j=1}^{K}d_jN_{1,j},
\end{gathered}
\end{equation}
where  unlike the case of Theorem \ref{T:Lienard_Integrability1_C} the
parameter $B$ may be complex-valued. If $m>0$, then we arrive at the
expression
\begin{equation}
 \label{Lienard_Inegrability_new5_time} \begin{gathered}
B\left(2f_0-\frac{4(m+1)f_0}{2m+n+3}\right)=0,
\end{gathered}
\end{equation}
which is not valid. Consequently, we should set $m=0$. Relations
\eqref{Lienard_Inegrability_new3} and \eqref{Lienard_Inegrability_new4_time}
now become
\begin{equation}
 \label{Lienard_Inegrability_new6_time} \begin{gathered}
\sum_{j=1}^{K}d_jN_{1,j}=\sum_{j=1}^{K}d_jN_{2,j},\quad
2f_0(n+1)\sum_{j=1}^{K}d_jN_{1,j}=(n+3)\omega.
\end{gathered}
\end{equation}
If the original Li\'{e}nard differential system has only one irreducible
invariant algebraic curve ($K=1$), then  algebraic system
\eqref{Lienard_Inegrability_new6_time} is satisfied if and only if the
following relation $N_{1,1}=N_{2,1}$ is valid. We note that the parameter
$d_1\neq0$ can be chosen arbitrarily. Thus, we set $d_1=1$. The second
equation in \eqref{Lienard_Inegrability_new6_time} produces the value of
$\omega$. As a result, we obtain non-autonomous Darboux first
integral~\eqref{Lienard_Inegrability_FI_C1_deg_f_0}, where the index $j=1$ is
omitted.

Now let us suppose that the Li\'{e}nard differential system under
consideration has at least two distinct irreducible invariant algebraic
curves $F_1(x,y)=0$ and $F_2(x,y)=0$ satisfying the restriction $N_{1,j}\neq
N_{2,j}$, $j=1$, $2$. We see that algebraic system
\eqref{Lienard_Inegrability_new6_time} is always satisfied. Indeed, setting
$K=2$ and recalling the fact that one of the exponents $d_1$ and $d_2$ can be
chosen arbitrary, we obtain a solution
\begin{equation}
 \label{Lienard_Inegrability_FI_C6_deg_f_0} \begin{gathered}
d_1=N_{1,2}-N_{2,2},\quad d_2=N_{2,1}-N_{1,1},\quad \omega=\frac{2(n+1)f_0\Omega }{n+3},
\end{gathered}
\end{equation}
where we use the designation $\Omega=
    N_{1,2}N_{2,1}-N_{1,1}N_{2,2}$. Hence we have found time-dependent Darboux first integral
\eqref{Lienard_Inegrability_FI_C2_deg_f_0}. By Corollary $2$ to
Theorem~\ref{T:Lienard_Integrability1_C}, the following relation
$\Omega\neq0$ is valid.

 Suppose a Li\'{e}nard differential system has two independent
non-autonomous Darboux first integrals  with a time-dependent
 exponential factor~\eqref{FI_t_gen}, then this system is Darboux integrable.
 This fact contradicts Theorem~\ref{T:Lienard_Integrability1_C}.

\end{proof}

\textit{Remark.} Li\'{e}nard differential systems \eqref{Lienard_gen}
satisfying the conditions of item~$2$ can only arise when $n=\deg g$ is an odd
number.

\smallskip

There exist Li\'{e}nard differential systems \eqref{Lienard_gen} with
non-autonomous Darboux first integrals
\eqref{Lienard_Inegrability_FI_C1_deg_f_0} and
\eqref{Lienard_Inegrability_FI_C2_deg_f_0}. An example was given in article
\cite{Demina16}. Finally, we turn to the Liouvillian integrability.

\begin{theorem}\label{T:Lienard_Integrability3}
A Li\'{e}nard differential system \eqref{Lienard_gen} from family ($C$) is
Liouvillian integrable if
 and only if the relation
\begin{equation}
 \label{Lienard_Inegrability_IF_C3_Cond}
 \begin{gathered}
4(m+1)f(x)+(2m+n+3)\left\{h^{(1)}_x(x)+h^{(2)}_x(x)\right\}_+=0
\end{gathered}
\end{equation}
is identically satisfied and one of the following assertions is valid.

\begin{enumerate}

\item There exists an irreducible invariant algebraic curve $F(x,y)=0$
    such that the family of Puiseux series $y^{(1)}_{\infty}(x)$ arises
    in the factorization of the polynomial $F(x,y)$ in the ring
 $\mathbb{C}_{\infty}\{x\}[y]$ as many times as so does the family
 $y^{(2)}_{\infty}(x)$, i.e. $N_{1}=N_{2}$. In this case the system has
 the unique Darboux integrating factor
\begin{equation}
 \label{Lienard_Inegrability_IF_C1_deg_f_0}
 \begin{gathered}
M(x,y)=\left\{F(x,y)\right\}^{-\frac{2m+n+3}{2(n+1)N_{1}}}.
\end{gathered}
\end{equation}

\item There exist two distinct irreducible invariant algebraic curves
    $F_1(x,y)=0$ and $F_2(x,y)=0$ such that the following relation
$N_{1,j}\neq N_{2,j}$, $j=1$, $2$ is valid, where  $N_{l,j}$ is the
number of times the family of Puiseux series
    $y^{(l)}_{\infty}(x)$ enters the factorization of the polynomial
    $F_j(x,y)$ in the ring $\mathbb{C}_{\infty}\{x\}[y]$. In this case
    the system has the unique Darboux
 integrating factor
 \begin{equation}
 \label{Lienard_Inegrability_IF_C2_deg_f_0}
 \begin{gathered}
M(x,y)=\frac{\left\{F_1(x,y)\right\}^{\frac{(2m+n+3)(N_{2,2}-N_{1,2})}{2(n+1)\Omega}}}
{\left\{F_2(x,y)\right\}^{\frac{(2m+n+3)(N_{2,1}-N_{1,1})}{2(n+1)\Omega}}}.
\end{gathered}
\end{equation}
 The following designation $\Omega= N_{1,2}N_{2,1}-N_{1,1}N_{2,2}$ is
    introduced in expression~\eqref{Lienard_Inegrability_IF_C2_deg_f_0}.

\end{enumerate}
\end{theorem}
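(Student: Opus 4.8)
The plan is to mirror the arguments of Theorems \ref{T:Lienard_Integrability2} and \ref{T:Lienard_degenerate_Liouville1}: invoke Theorem \ref{T:Liouville} to replace Liouvillian integrability by the existence of a Darboux integrating factor, show this factor is assembled solely from invariant algebraic curves, and finally solve a linear system for the exponents. Sufficiency is the easy direction: assuming \eqref{Lienard_Inegrability_IF_C3_Cond} together with one of the two curve configurations, I would verify directly from the cofactor formula \eqref{Lienard_Inegrability_coff_n1} that the balance $\sum_j d_j\lambda_j(x,y)=-\text{div}\,\mathcal{X}=f(x)$ holds for the exponents displayed in \eqref{Lienard_Inegrability_IF_C1_deg_f_0} and \eqref{Lienard_Inegrability_IF_C2_deg_f_0}, so that the stated $M(x,y)$ is indeed an integrating factor.

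For necessity, a Liouvillian integrable system possesses a Darboux integrating factor by Theorem \ref{T:Liouville}. First I would show it carries no exponential part. Since, by the remark after Lemma \ref{L:Lienard_exp_inv3} (itself a consequence of Theorem \ref{T:Lienard_Integrability1_C}), no system from family (C) admits a rational first integral, Lemma \ref{L:Lienard_exp_inv3} excludes exponential invariants $\exp\{h/r\}$ with $r\notin\mathbb{C}$, while Lemma \ref{L:Lienard_exp_inv1} forces the cofactor of any $\exp[h(x,y)]$ to be divisible by $y$ --- incompatible with $-\text{div}\,\mathcal{X}=f(x)$ being independent of $y$. Hence $M=\prod_{j=1}^K F_j^{d_j}$ ranges over irreducible invariant algebraic curves, and it is an integrating factor precisely when $\sum_j d_j\lambda_j(x,y)=f(x)$.

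Next I would substitute the cofactor structure \eqref{Lienard_Inegrability_coff_n1} and, using linearity of the projection $\{\cdot\}_+$, analyze leading coefficients near $x=\infty$ exactly as in the proof of Theorem \ref{T:Lienard_Integrability1_C}. Vanishing of the coefficient of $x^{(n-1)/2}$ (an exponent exceeding $m$ since $n>2m+1$) forces $\sum_j d_jN_{1,j}=\sum_j d_jN_{2,j}=:B$, reducing the condition to $B\left(2f+\{h^{(1)}_x+h^{(2)}_x\}_+\right)=-f$. Matching the coefficient of $x^m$ via the auxiliary series \eqref{Lienard_Inegrability_ODE_for_coeff}--\eqref{Lienard_Inegrability_coeff_expl} then pins down $B=-\tfrac{2m+n+3}{2(n+1)}$, and the surviving identity collapses to exactly \eqref{Lienard_Inegrability_IF_C3_Cond}. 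The decisive point is that, unlike in Theorem \ref{T:Lienard_Integrability1_C}, the number $B$ need not be an integer here, so a nonzero solution is admissible.

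It remains to classify which curve configurations realize $\sum_j d_jN_{1,j}=\sum_j d_jN_{2,j}=B\neq0$. By Corollary 1 to Theorem \ref{T:Lienard_Integrability1_C} at most one irreducible curve has $N_1=N_2$, and by Corollary 2 no two distinct irreducible curves carry proportional type-vectors $(N_{1,j},N_{2,j})$; combined with the parity dichotomy of Theorem \ref{T1:Lienard_2m+1} ($N_1=N_2$ for even $n$, and $(N_1,N_2)\in\{(1,0),(0,1)\}$ for odd $n$) this leaves precisely the two listed possibilities. In the single-curve case the constraint gives $d_1=B/N_1$, yielding \eqref{Lienard_Inegrability_IF_C1_deg_f_0}; in the two-curve case the pair $\sum_j d_jN_{1,j}=\sum_j d_jN_{2,j}=B$ is a linear system with determinant $\Omega=N_{1,2}N_{2,1}-N_{1,1}N_{2,2}\neq0$ (nonzero again by Corollary 2), whose unique solution reproduces \eqref{Lienard_Inegrability_IF_C2_deg_f_0}. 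I expect this last combinatorial step --- ruling out products over three or more curves and forcing the rigid dichotomy --- to be the main obstacle, since it rests essentially on Corollaries 1 and 2 together with the even/odd structure of family (C), whereas the determination of $B$ is inherited almost verbatim from Theorem \ref{T:Lienard_Integrability1_C}.
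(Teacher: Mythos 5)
Your proposal is correct and follows essentially the same route as the paper's own proof: Theorem \ref{T:Liouville} plus the exponential-invariant exclusions to reduce any Darboux integrating factor to a product $\prod_j F_j^{d_j}$ over invariant algebraic curves, the asymptotic determination of $B=-\tfrac{2m+n+3}{2(n+1)}$ via the auxiliary series from the proof of Theorem \ref{T:Lienard_Integrability1_C} (yielding condition \eqref{Lienard_Inegrability_IF_C3_Cond}), and then Corollaries 1 and 2 to Theorem \ref{T:Lienard_Integrability1_C} together with Theorem \ref{T1:Lienard_2m+1} to pin down the two curve configurations and solve the resulting nondegenerate linear system for the exponents. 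The only differences are cosmetic: you cite Lemma \ref{L:Lienard_exp_inv3} where the paper's text cites Lemma \ref{L:Lienard_exp_inv2} (evidently a slip, since that lemma concerns family ($A$)), and you make the exclusion of factors built from three or more curves slightly more explicit, whereas the paper disposes of it through the ratio-of-multipliers contradiction.
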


\begin{proof}
It follows from Theorem \ref{T:Liouville} that a Liouvillian integrable
 differential system \eqref{DS} has a Darboux integrating factor.
 By Lemmas \ref{L:Lienard_exp_inv1} and \ref{L:Lienard_exp_inv2}  exponential
invariants cannot enter an explicit expression of  a  Darboux integrating
factor. Thus, the Darboux integrating factor reads as
\begin{equation}
 \label{Lienard_Inegrability_IF_Darboux1} \begin{gathered}
M(x,y)=\prod_{j=1}^KF_j^{d_j}(x,y),\quad d_1,\ldots,d_k\in\mathbb{C},\quad K\in\mathbb{N},
\end{gathered}
\end{equation}
where the polynomials $F_1(x,y)$, $\ldots$, $F_K(x,y)$ give pairwise distinct
irreducible invariant algebraic curves $F_1(x,y)=0$, $\ldots$, $F_K(x,y)=0$
of a Li\'{e}nard differential system. Without loss of generality, we suppose
that the numbers $d_1$, $\ldots$, $d_K$ are all non-zero. The cofactor
$\lambda_j(x,y)$ of the invariant algebraic curve $F_j(x,y)=0$ is given by
relation \eqref{Lienard_Inegrability_coff_n1}. The necessary and sufficient
condition $d_1\lambda_1(x,y)+\ldots
d_K\lambda_K(x,y)=-\text{div}\,\mathcal{X}$ for Darboux integrating
factor~\eqref{Lienard_Inegrability_IF_Darboux1} to exist now takes the form
\begin{equation}
 \label{Lienard_Inegrability_IF_C3} \begin{gathered}
\sum_{j=1}^{K}d_jN_{1,j}\left(f(x)+\left\{h^{(1)}_x(x)\right\}_+\right)
+\sum_{j=1}^{K}d_jN_{2,j}\left(f(x)+\left\{h^{(2)}_x(x)\right\}_+\right)
=-f(x).
\end{gathered}
\end{equation}
This condition is not satisfied provided that relation
\eqref{Lienard_Inegrability_new3} is not valid. Using relation
\eqref{Lienard_Inegrability_new3}, we simplify condition
\eqref{Lienard_Inegrability_IF_C3}. Thus, we get
\begin{equation}
 \label{Lienard_Inegrability_IF_C4} \begin{gathered}
B\left(2f(x)+\left\{h^{(1)}_x(x)+h^{(2)}_x(x)\right\}_+\right)=-f(x),\quad B=\sum_{j=1}^{K}d_jN_{1,j},
\end{gathered}
\end{equation}
where  unlike the case of Theorem \ref{T:Lienard_Integrability1_C} the
parameter $B$ may be complex-valued. Doing the same as in the proof of
Theorem \ref{T:Lienard_Integrability1_C}, we find the following equality
\begin{equation}
 \label{Lienard_Inegrability_IF_C5} \begin{gathered}
B\left(2f_0-\frac{4(m+1)f_0}{2m+n+3}\right)=-f_0,
\end{gathered}
\end{equation}
which gives the value of $B$. The result is
\begin{equation}
 \label{Lienard_Inegrability_IF_C3_n1} \begin{gathered}
B=-\frac{2m+n+3}{2(n+1)}.
\end{gathered}
\end{equation}
Substituting this equality into condition \eqref{Lienard_Inegrability_IF_C4}
yields relation  \eqref{Lienard_Inegrability_IF_C3_Cond}. Finally, we are
left with the following algebraic system
\begin{equation}
 \label{Lienard_Inegrability_IF_C3_n2} \begin{gathered}
\sum_{j=1}^{K}d_jN_{1,j}=\sum_{j=1}^{K}d_jN_{2,j},\quad \sum_{j=1}^{K}d_jN_{1,j}=-\frac{2m+n+3}{2(n+1)}
\end{gathered}
\end{equation}
with respect to the unknowns $d_1$, $\ldots$, $d_K$.

Suppose the Li\'{e}nard differential system under study has only one
irreducible invariant algebraic curve ($K=1$). Algebraic system
\eqref{Lienard_Inegrability_IF_C3_n2} is satisfied if and only if
$N_{1,1}=N_{2,1}$. Omitting the index $j$, we find the value of $d$ and
Darboux integrating factor \eqref{Lienard_Inegrability_IF_C1_deg_f_0}.

Now we assume that  the Li\'{e}nard differential system in question possesses
at least two distinct irreducible invariant algebraic curves. Setting $K=2$,
we see that the determinant of algebraic system
\eqref{Lienard_Inegrability_IF_C3_n2} equals $\Omega=
N_{1,2}N_{2,1}-N_{1,1}N_{2,2}$. By Corollary $2$ to Theorem
\ref{T:Lienard_Integrability1_C} we get $\Omega\neq0$. Consequently,
algebraic system \eqref{Lienard_Inegrability_IF_C3_n2} has the unique
solution. As a result we obtain Darboux integrating
factor~\eqref{Lienard_Inegrability_IF_C2_deg_f_0}.

If there are no invariant algebraic curves or there exists the unique
irreducible invariant algebraic curve satisfying the condition $N_1\neq N_2$,
then system \eqref{Lienard_Inegrability_IF_C3_n2} is inconsistent.

Suppose  the Li\'{e}nard differential system has two distinct Darboux
integrating factors. Then their ratio is a Darboux first integral. This fact
contradicts Theorem~\ref{T:Lienard_Integrability1_C}.

\end{proof}

\textit{Corollary.}  A Liouvillian integrable Li\'{e}nard differential system
\eqref{Lienard_gen} from family ($C$) has at most two distinct irreducible
invariant algebraic curves simultaneously provided that $n=\deg g$ is an odd
number.

\begin{proof}
Assuming that a Liouvillian integrable Li\'{e}nard differential system  has
three or more pairwise distinct irreducible invariant algebraic curves, we
use Theorem \ref{T:Lienard_Integrability3} to find at least two distinct
Darboux integrating factors. Consequently, the system possesses a Darboux
first integral.  It is a contradiction.

\end{proof}

\textit{Remark 1.} Item~$2$ can only arise if the number
$n=\deg g$ is odd. Moreover, integrating factor
\eqref{Lienard_Inegrability_IF_C2_deg_f_0} transforms into integrating
factor~\eqref{Lienard_Inegrability_IF_C1_deg_f_0} whenever the following
condition $N_{1,1}+N_{1,2}=N_{2,1}+N_{2,2}$ holds. In this case the
polynomial $F(x,y)$ in~\eqref{Lienard_Inegrability_IF_C1_deg_f_0} is reducible: $ F(x,y)=F_1(x,y)F_2(x,y)$.

\textit{Remark 2.} Relation \eqref{Lienard_Inegrability_IF_C3_Cond} is
identically satisfied whenever $m=0$ ($\deg f=0$). This statement follows from relations \eqref{Lienard_Inegrability_PS_alt} and \eqref{Lienard_Inegrability_ODE_for_coeff}.

Let us obtain all Liouvillian integrable Li\'{e}nard differential systems
\eqref{Lienard_gen} from family ($C$) with a hyperelliptic invariant
algebraic curve $y^2+u(x)y+v(x)=0$, where $u(x)$, $v(x)\in\mathbb{C}[x]$.

\begin{theorem}\label{T:Lienard_IntegrabilityC_partial}
A Li\'{e}nard differential system  \eqref{Lienard_gen} from family ($C$) with
a hyperelliptic invariant algebraic curve $y^2+u(x)y+v(x)=0$, where $u(x)$,
$v(x)\in\mathbb{C}[x]$,
 is Liouvillian
integrable if and only the system is of the form
\begin{equation}
 \label{Lienard_Inegrability_Cpartial_1}
x_t=y,\quad y_t=-\frac{(k+2l)}{4}w^{l-1}w_xy-\frac{k}8\left(w^{2l-1}+4\beta w^{k-1}\right)w_x,
\end{equation}
where $\beta\in\mathbb{C}\setminus\{0\}$, $w(x)$ is a polynomial of degree
$(m+1)/l$, $k$ and $l$ are relatively prime natural numbers such that the
following relation $(m+1)k=(n+1)l$
 is valid. The associated  Li\'{e}nard differential system has the unique
Darboux integrating factor
   \begin{equation}
 \label{Lienard_Inegrability_Int_Fact_C_p1}
M(x,y)=\left\{y^2+w^ly+\frac14w^{2l}+\beta w^k\right\}^{-\left(\frac12+\frac{l}{k}\right)}
\end{equation}
and the hyperelliptic invariant algebraic curve reads as
$4y^2+4w^ly+w^{2l}+4\beta w^k=0$. A Liouvillian first integral is of the form
 \begin{equation}
 \begin{gathered}
 \label{Lienard_Inegrability_FI_C_p1}
I(x,y)=\frac{(2l-k)(2y+w^l)}{4kw^{\frac{k}2}\beta^{\frac12+\frac{l}{k}}}
{}_2F_1\left(\frac12,\frac12+\frac{l}{k};\frac32;-\frac{(2y+w^l)^2}{4\beta w^k}\right)
+\left\{y^2+w^ly+\frac14w^{2l}+\beta w^k\right\}^{\frac12-\frac{l}{k}},
\end{gathered}
\end{equation}
where ${}_2F_1(\alpha,\delta;\sigma;s)$ is the  hypergeometric function.
\end{theorem}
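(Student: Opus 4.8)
The plan is to split the argument into sufficiency and necessity, disposing of sufficiency by direct substitution and concentrating the real work on necessity. For sufficiency I would simply verify, by direct computation, that the system \eqref{Lienard_Inegrability_Cpartial_1} admits $4y^2+4w^ly+w^{2l}+4\beta w^k=0$ as an invariant algebraic curve, that \eqref{Lienard_Inegrability_Int_Fact_C_p1} satisfies $\mathcal{X}M=f(x)M$ (so it is a Darboux integrating factor), and that the exterior derivative of \eqref{Lienard_Inegrability_FI_C_p1} equals $M\,(y\,dy+(f y+g)\,dx)$. These checks are routine once the explicit forms are written down.

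For necessity I would first pin down the combinatorial type of the curve. An irreducible invariant curve $y^2+u(x)y+v(x)=0$ has degree two in $y$, so by Theorem \ref{T1:Lienard_2m+1} it must satisfy $N_1=N_2=1$, which also forces $n=\deg g$ to be even; by Corollary~1 to Theorem \ref{T:Lienard_Integrability1_C} it is then the unique irreducible invariant algebraic curve with $N_1=N_2$. Since the system is assumed Liouvillian integrable, Theorem \ref{T:Lienard_Integrability3}, item~1, applies: the identity \eqref{Lienard_Inegrability_IF_C3_Cond} holds and the integrating factor is $M=\{F(x,y)\}^{-(2m+n+3)/(2(n+1))}$. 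Writing $(n+1)/(m+1)=k/l$ with $\gcd(k,l)=1$, the relation $(m+1)k=(n+1)l$ of the statement is exactly what identifies the exponent $(2m+n+3)/(2(n+1))$ with $\tfrac12+l/k$.

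The core of the proof is then the explicit determination of $f,g,u,v$. I would expand $\mathcal{X}F=\lambda F$ with $F=y^2+uy+v$ and collect powers of $y$, obtaining $\lambda=u_x-2f$, $\;v_x-fu-2g=\lambda u$, and $gu=-\lambda v$. Comparing $\lambda=u_x-2f$ with the cofactor formula \eqref{Lienard2_cof} gives $u_x=-\{h^{(1)}_x+h^{(2)}_x\}_+$, so the integrability identity \eqref{Lienard_Inegrability_IF_C3_Cond} collapses to $f=\tfrac{2m+n+3}{4(m+1)}u_x$ and hence $\lambda=-\tfrac{n+1}{2(m+1)}u_x$. Eliminating $g$ between the $y^1$ and $y^0$ relations produces the first-order linear ODE $u\,v_x-\tfrac{n+1}{m+1}u_x v=\tfrac{2m-n+1}{4(m+1)}u_x u^2$, with integrating factor $u^{-(n+1)/(m+1)}$. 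Because family ($C$) obeys $n>2m+1$, the factor $2-(n+1)/(m+1)$ is strictly negative and in particular nonzero, so the quadrature runs cleanly and (after the constants collapse to $\tfrac14$) yields $v=\tfrac14 u^2+\beta\,u^{(n+1)/(m+1)}$ with $\beta$ the constant of integration.

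The decisive and most delicate step is the polynomiality requirement on $v$: with $(n+1)/(m+1)=k/l$ in lowest terms, the term $\beta\,u^{k/l}$ lies in $\mathbb{C}[x]$ only if $u$ is a perfect $l$-th power, which forces $u=w^l$ for some $w\in\mathbb{C}[x]$ and thus $v=\tfrac14 w^{2l}+\beta w^k$; the degree count $\deg u=m+1$ then fixes $\deg w=(m+1)/l$. Reading off $f=\tfrac{k+2l}{4}w^{l-1}w_x$ and $g=-\lambda v/u$ reproduces \eqref{Lienard_Inegrability_Cpartial_1}, and the requirement $\beta\neq0$ is what keeps the curve genuinely hyperelliptic rather than a pair of invariant lines. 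I expect this arithmetic forcing of $u=w^l$ to be the main obstacle, together with the final quadrature: a Liouvillian first integral arises from the line integral $\int M\,(y\,dy+(f y+g)\,dx)$, and after the generalized Sundman reduction $s=w(x)$, $z=y$ the integrand becomes a one-variable expression of the form $(z^2+\cdots)^{-(1/2+l/k)}$, whose standard integral representation is the Gauss hypergeometric function, giving \eqref{Lienard_Inegrability_FI_C_p1}.
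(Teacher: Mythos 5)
Your core computations are correct and essentially reproduce the paper's own: the three relations obtained from $\mathcal{X}F=\lambda F$, the reduction $f=\tfrac{2m+n+3}{4(m+1)}u_x$ via condition \eqref{Lienard_Inegrability_IF_C3_Cond}, the linear ODE $uv_x-\tfrac{n+1}{m+1}u_xv=\tfrac{2m-n+1}{4(m+1)}u^2u_x$, its quadrature $v=\tfrac14u^2+\beta u^{(n+1)/(m+1)}$, and the forcing of $u=w^l$ from $\gcd(k,l)=1$ all coincide with the published argument (the paper gets the same relations by substituting $M=F^{-(2m+n+3)/(2(n+1))}$ directly into the integrating-factor PDE, which is the same computation reorganized). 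However, your necessity argument has a genuine gap in its case analysis: you assume at the outset that $y^2+u(x)y+v(x)$ is irreducible, deduce from Theorem \ref{T1:Lienard_2m+1} that $N_1=N_2=1$ and that $n$ must be even, and then invoke item 1 of Theorem \ref{T:Lienard_Integrability3}. But the theorem does not require irreducibility (the paper states this explicitly in Remark 1 after its proof). When $n$ is odd, Theorem \ref{T1:Lienard_2m+1} indeed forbids irreducible curves of degree two in $y$, yet a reducible hyperelliptic curve $y^2+uy+v=(y-p_1(x))(y-p_2(x))$ consisting of two invariant lines can occur; this is precisely the situation on which the paper's next result, Theorem \ref{T:Lienard_IntegrabilityC_polynomial}, depends. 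As written, your proof establishes the ``only if'' direction only for even $n$ and says nothing about the odd-$n$ systems covered by the statement.

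The gap is fixable along the paper's lines: if $F$ factors, its two irreducible factors are invariant curves with $(N_{1,1},N_{2,1})=(1,0)$ and $(N_{1,2},N_{2,2})=(0,1)$, so item 2 of Theorem \ref{T:Lienard_Integrability3} applies with $\Omega=-1$; since $N_{1,1}+N_{1,2}=N_{2,1}+N_{2,2}$, Remark 1 to that theorem shows the two-curve integrating factor collapses to $\{F_1F_2\}^{-(2m+n+3)/(2(n+1))}$, i.e.\ the same $M=F^{-(\frac12+\frac{l}{k})}$ with reducible $F$, after which your computation goes through verbatim. A secondary weakness: your justification that $\beta\neq0$ (``keeps the curve genuinely hyperelliptic'') is not an argument, and it matters, because the perfect-$l$-th-power step is vacuous when $\beta=0$. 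The clean route is degree counting: by Theorem \ref{T1:Lienard_2m+1} the coefficient $v(x)$ has degree $n+1>2m+2=\deg u^2$ for a system in family ($C$), so the term $\beta u^{(n+1)/(m+1)}$ must be present; equivalently, $\beta=0$ would force $\deg g=2m+1$, contradicting $n>2m+1$.
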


\begin{proof}
It is straightforward to verify that system
\eqref{Lienard_Inegrability_Cpartial_1} is Liouvillian integrable with an
integrating factor and a first integral given by expressions
\eqref{Lienard_Inegrability_Int_Fact_C_p1} and
\eqref{Lienard_Inegrability_FI_C_p1}, respectively. Since $\beta\neq0$ and
$w(x)$ is a polynomial of degree $(m+1)/l$, we conclude that system
\eqref{Lienard_Inegrability_Cpartial_1} is from family ($C$).

Now our goal is to prove the converse statement. Let us suppose that a
Li\'{e}nard differential system  \eqref{Lienard_gen} from family ($C$) is
Liouvillian integrable and possesses a hyperelliptic invariant algebraic
curve $y^2+u(x)y+v(x)=0$.  By Theorem \ref{T:Lienard_Integrability3}
condition \eqref{Lienard_Inegrability_IF_C3_Cond} is identically satisfied
and the system has integrating factor given by
expression~\eqref{Lienard_Inegrability_IF_C1_deg_f_0}, where  the polynomial
$F(x,y)$ can be chosen in the form $F(x,y)=y^2+u(x)y+v(x)$. In addition, we
set $N_1=1$. Substituting integrating factor
\eqref{Lienard_Inegrability_IF_C1_deg_f_0} into the partial differential
equation $yM_x-[f(x)y+g(x)]M_y-f(x)M=0$ and equating to zero the coefficients
of different powers of $y$ yields the relations
\begin{equation}
 \label{Lienard_Inegrability_IF_C3_n3} \begin{gathered}
f(x)=\frac{2m+n+3}{4(m+1)}u_x,\quad g(x)=\frac12 v_x+\frac{n-2m-1}{8(m+1)}uu_x
\end{gathered}
\end{equation}
and the following equation
\begin{equation}
 \label{Lienard_Inegrability_IF_C3_n4} \begin{gathered}
uv_x-\frac{n+1}{m+1}u_xv+\frac{n-2m-1}{4(m+1)}u^2u_x=0.
\end{gathered}
\end{equation}
Let us note that condition \eqref{Lienard_Inegrability_IF_C3_Cond} produces
an explicit expression of the polynomial $f(x)$ similar to that given in
relations \eqref{Lienard_Inegrability_IF_C3_n3}. Integrating equation
\eqref{Lienard_Inegrability_IF_C3_n4} with respect to the function $v(x)$, we
obtain
\begin{equation}
 \label{Lienard_Inegrability_IF_C3_n5} \begin{gathered}
v(x)=\beta u^{\frac{n+1}{m+1}}+\frac14u^2,
\end{gathered}
\end{equation}
where $\beta\in\mathbb{C}$ is a constant of integration. Using Theorem
\ref{T1:Lienard_2m+1} and the arguments given in the proof of Theorem
\ref{T:Lienard_Integrability1_C}, we conclude that $u(x)$ is a polynomial of
degree $m+1$ and $v(x)$ is a polynomial of degree $n+1$. Thus, we see that
$\beta$ is non-zero. Further, we introduce relatively prime natural numbers
$k$ and $l$ satisfying the relation $(m+1)k=(n+1)l$. It follows from expression
\eqref{Lienard_Inegrability_IF_C3_n5} that there exists a polynomial $w(x)$ of degree $(m+1)/l$ such that the polynomial $u(x)$
 can be represented in the form $u(x)=w^l(x)$.
 Hence we obtain the equality
 $v(x)=\beta w^k(x)+w^{2l}(x)$. Substituting the explicit representations of
 the polynomials $u(x)$ and $v(x)$ into relations
 \eqref{Lienard_Inegrability_IF_C3_n3}, we find the polynomials $f(x)$ and $g(x)$
 as given in  \eqref{Lienard_Inegrability_Cpartial_1}. Expressing
 the number $n$ from the relation $(m+1)k=(n+1)l$, we find Darboux integrating
 factor \eqref{Lienard_Inegrability_Int_Fact_C_p1} giving Liouvillian first
 integral \eqref{Lienard_Inegrability_FI_C_p1}.
\end{proof}

\textit{Remark 1.} We do not require that the polynomial $y^2+u(x)y+v(x)$ is
irreducible. See also Remark 1 to Theorem \ref{T:Lienard_Integrability3}.

\textit{Remark 2.}  The family of systems \eqref{Lienard_Inegrability_Cpartial_1} can be transformed to the following simple form
\begin{equation}
 \label{Lienard_Inegrability_Cpartial_1_Sundman}
s_{\tau}=z,\quad z_{\tau}=-\frac{(k+2l)}{4}s^{l-1}z-\frac{k}8\left(s^{2l-1}+4\beta s^{k-1}\right)
\end{equation}
via the generalized Sundman transformation $s(\tau)=w(x)$, $z(\tau)=y$, $d\tau=w_x(x)dt$. Substituting $w(x)=s$, $y=z$ into \eqref{Lienard_Inegrability_FI_C_p1}, we find a Liouvillian first integral for systems \eqref{Lienard_Inegrability_Cpartial_1_Sundman}.

\smallskip

It follows from Theorem \ref{T1:Lienard_2m+1} that equation
\eqref{Lienard_y_x} related to a Li\'{e}nard differential
system~\eqref{Lienard_gen} from family ($C$) may have a polynomial solution
only if $n=\deg g(x)$ is an odd number. Such a polynomial solution gives rise
to an invariant algebraic curve with the   generating polynomial of the first degree with respect to $y$. Let us study the Liouvillian integrability of Li\'{e}nard
differential systems \eqref{Lienard_gen} from family ($C$) possessing
invariant algebraic curves with generating polynomials of the first degree with
respect to $y$. Since arbitrary coefficients arise in the non-polynomial part
of the series $y_{\infty}^{(l)}(x)$, $l=1$, $2$, we conclude that equation
\eqref{Lienard_y_x} has at most two distinct polynomial solutions
simultaneously provided that the  inequality $\deg g> 2\deg f +1$ holds. In
what follows we denote these polynomial solutions as $y=p_1(x)$ and
$y=p_2(x)$. Note that the following relations $p_l(x)=\{h^{(l)}(x)\}_+$,
$l=1$, $2$ are valid, where $h^{(l)}(x)$ is the initial part of the series
$y_{\infty}^{(l)}(x)$.

\begin{theorem}\label{T:Lienard_IntegrabilityC_polynomial}
A Li\'{e}nard differential system  \eqref{Lienard_gen} from family ($C$) with
two distinct invariant algebraic curves given by first-degree polynomials
with respect to $y$  is Liouvillian integrable if and only if $n=\deg g(x)$
is an odd number, the system is of the
form~\eqref{Lienard_Inegrability_Cpartial_1} and other conditions of Theorem
\ref{T:Lienard_IntegrabilityC_partial} are satisfied with the additional
restriction: either $k$ is an even number or otherwise $(m+1)/l$ is an even
number and the polynomial $w(x)$ has only double roots. The polynomials
$p_1(x) $ and $p_2(x)$ producing the invariant algebraic curves  $y-p_1(x)=0$
and $y-p_2(x)=0$ can be represented in the form
\begin{equation}
 \label{Lienard_Inegrability_Cpartial_polynomials1}
p_1(x)=\sqrt{\beta} w^{\frac{k}{2}}+\frac{1}{2}w^l(x),\quad
p_2(x)=-\sqrt{\beta} w^{\frac{k}{2}}+\frac{1}{2}w^l(x),\quad \beta\in\mathbb{C}\setminus\{0\}.
\end{equation}
\end{theorem}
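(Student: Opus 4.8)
The plan is to reduce the statement to the already-established classification of Liouvillian integrable family (C) systems carrying a hyperelliptic invariant curve (Theorem \ref{T:Lienard_IntegrabilityC_partial}), the genuinely new content being the splitting of that quadratic curve into two honest first-degree factors.

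First I would settle necessity. Assume the system has two distinct first-degree invariant algebraic curves $y-p_1(x)=0$ and $y-p_2(x)=0$ and is Liouvillian integrable. Any $y-p(x)$ with $p\in\mathbb{C}[x]$ is irreducible in $\mathbb{C}[x,y]$, so the system has two distinct irreducible invariant algebraic curves; by Corollary 1 to Theorem \ref{T:Lienard_Integrability1_C} together with Theorem \ref{T1:Lienard_2m+1}, this is impossible when $n=\deg g$ is even, which forces $n$ odd. Since $y=p_1(x)$ and $y=p_2(x)$ are polynomial solutions of equation \eqref{Lienard_y_x}, the product $F(x,y)=(y-p_1)(y-p_2)=y^2+u(x)y+v(x)$ is a (reducible) hyperelliptic invariant algebraic curve with $N_1=N_2=1$. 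Invoking Remark 1 to Theorem \ref{T:Lienard_IntegrabilityC_partial}, for which irreducibility of $y^2+uy+v$ is not required, I can apply Theorem \ref{T:Lienard_IntegrabilityC_partial}: the system must take the form \eqref{Lienard_Inegrability_Cpartial_1} with integrating factor \eqref{Lienard_Inegrability_Int_Fact_C_p1}, and the hyperelliptic curve becomes $4y^2+4w^ly+w^{2l}+4\beta w^k=0$.

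I would then extract $p_1,p_2$ as the two roots in $y$ of this quadratic. Solving $(2y+w^l)^2=-4\beta w^k$ yields $p_{1,2}=\tfrac12 w^l\pm\sqrt{\beta}\,w^{k/2}$ up to the sign conventions chosen for $\beta$ and $w$, i.e. expression \eqref{Lienard_Inegrability_Cpartial_polynomials1}, and distinctness is automatic since $\beta\neq0$ and $w$ is non-constant. The decisive step is the requirement $p_1,p_2\in\mathbb{C}[x]$, which is equivalent to $w^{k/2}\in\mathbb{C}[x]$. Writing $w$ as a product over its roots and using $\gcd(k,l)=1$ and $\deg w=(m+1)/l$, I would analyse this condition by the parity of $k$: for $k$ even the power $w^{k/2}$ is automatically a polynomial, whereas for $k$ odd each root of $w$ must occur with even multiplicity, so that $w$ is a perfect square, $(m+1)/l$ is even, and the canonical situation is that $w$ has only double roots. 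This reproduces precisely the additional restriction in the statement.

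Finally, sufficiency is a verification in the spirit of the earlier theorems: given \eqref{Lienard_Inegrability_Cpartial_1} together with the additional restriction, the two functions in \eqref{Lienard_Inegrability_Cpartial_polynomials1} are genuine elements of $\mathbb{C}[x]$, they factor the hyperelliptic curve and hence solve \eqref{Lienard_y_x}, so $y-p_1=0$ and $y-p_2=0$ are two distinct first-degree invariant algebraic curves, while Liouvillian integrability is inherited directly from Theorem \ref{T:Lienard_IntegrabilityC_partial}. I expect the main obstacle to be the clean translation of ``$w^{k/2}$ is a polynomial'' into the arithmetic dichotomy on $k$ and on the multiplicities of the roots of $w$, together with the sign bookkeeping needed to match the roots of the quadratic \eqref{Lienard_Inegrability_Int_Fact_C_p1} with the stated form \eqref{Lienard_Inegrability_Cpartial_polynomials1}.
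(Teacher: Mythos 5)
Your proposal is correct and follows essentially the same route as the paper: both reduce the statement to Theorem \ref{T:Lienard_IntegrabilityC_partial} (the paper by invoking item 2 of Theorem \ref{T:Lienard_Integrability3}, you by forming the reducible product curve and using Remark 1 --- two viewpoints the paper's own Remark 1 to Theorem \ref{T:Lienard_Integrability3} explicitly identifies) and then characterize when the hyperelliptic curve $(y+w^l/2)^2+\beta w^k=0$ splits into two first-degree factors with polynomial coefficients via the parity dichotomy on $k$ and the root multiplicities of $w$. Your remark that, for $k$ odd, every root of $w$ must have even multiplicity (so that $w$ is a perfect square) is in fact slightly more precise than the paper's phrase ``only double roots,'' but this does not change the substance of the argument.
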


\begin{proof}

We use  item $2$ of Theorem
\ref{T:Lienard_Integrability3} and the arguments given in the proof of Theorem
\ref{T:Lienard_IntegrabilityC_partial}. Let us note that the hyperelliptic invariant
algebraic curve  of Theorem \ref{T:Lienard_IntegrabilityC_partial} with the
generating polynomial $y^2+w^ly+\frac14w^{2l}+\beta w^k=(y+w^l/2)^2+\beta
w^k$ splits into two distinct invariant algebraic curves $y-p_1(x)=0$ and
$y-p_2(x)=0$ if and only if $n$ is an odd number and either $k$ is an even
number or otherwise $(m+1)/l$ is an even number and $w(x)$ is a polynomial
with double roots. In addition, recall that the degree of the polynomial
$w(x)$ equals $(m+1)/l$.

\end{proof}

\textit{Remark.} This theorem can  also be proved directly without using
Theorem \ref{T:Lienard_IntegrabilityC_partial}. As an example, see Theorem
\ref{T:Lienard_IntegrabilityA_partial}.

Further, our goal is to demonstrate that there exist Liouvillian integrable
Li\'{e}nard differential systems \eqref{Lienard_gen} from family ($C$) for
any choice of the numbers $m=\deg f(x)$ and $n=\deg g(x)$. Setting
$u(x)=x^{m+1}$ in expression \eqref{Lienard_Inegrability_IF_C3_n5}, we find
the following Liouvillian integrable Li\'{e}nard differential systems from
family ($C$)
\begin{equation}
 \label{Lienard_Inegrability_IF_C3_n6} \begin{gathered}
x_t=y,\quad y_t=-\frac{2m+n+3}{4}x^my-\frac{n+1}{8}\left(4\beta x^n+x^{2m+1}\right).
\end{gathered}
\end{equation}
The related Darboux integrating factor reads as
\begin{equation}
 \label{Lienard_Inegrability_IF_C3_n7} \begin{gathered}
M(x,y)=\left(y^2+x^{m+1}y+\beta x^{n+1}+\frac14x^{2(m+1)}\right)^{-\frac{2m+n+3}{2(n+1)}}.
\end{gathered}
\end{equation}
The numbers $n=\deg g$ and $m=\deg f$ can be chosen arbitrarily. In addition,
if the following conditions $n=l(m+1)-1$,  $l\in\mathbb{N}$, and $l>2$ hold,
then we obtain another particular family of Liouvillian integrable
 Li\'{e}nard differential systems
\begin{equation}
 \label{Lienard_Inegrability_IF_C3_n8} \begin{gathered}
x_t=y,\quad y_t=-\frac{l+2}{4}u_xy-\frac{l}{8}\left(4\beta u^{l-1}+u\right)u_x,
\end{gathered}
\end{equation}
where $u(x)$ is an arbitrary polynomial of degree $m+1$. The associated
Darboux integrating factor can be represented in the form
\begin{equation}
 \label{Lienard_Inegrability_IF_C3_n9} \begin{gathered}
M(x,y)=\left(y^2+uy+\beta u^{l}+\frac14u^2\right)^{-\frac{l+2}{2l}}.
\end{gathered}
\end{equation}

Now let us study the the existence of non-autonomous Darboux--Jacobi last
 multipliers. The case $\deg f=0$ is simple. There are families of distinct Jacobi last
 multipliers arising as products of integrating factors  \eqref{Lienard_Inegrability_IF_C1_deg_f_0},
 \eqref{Lienard_Inegrability_IF_C2_deg_f_0} and non-autonomous
 first integrals $I^{\varkappa}(x,y,t)$, where $\varkappa\in\mathbb{C}$ and the function $I(x,y,t)$ is given by relations \eqref{Lienard_Inegrability_FI_C1_deg_f_0}
 and \eqref{Lienard_Inegrability_FI_C2_deg_f_0}.

\begin{lemma}\label{L:Lienard_Integrability_t3}

A Li\'{e}nard differential system \eqref{Lienard_gen} satisfying
 the conditions $\deg g>2\deg f+1$ and $\deg f>0$ has a non-autonomous Darboux--Jacobi last
 multiplier of the form \eqref{JLM_gen} if
 and only if there exists a non-zero complex number $\omega$ such that the relation
\begin{equation}
 \label{Lienard_Inegrability_IF_C3_Cond_time}
 \begin{gathered}
4(m+1)f(x)+(2m+n+3)\left\{h^{(1)}_x(x)+h^{(2)}_x(x)\right\}_++2(n+1)\omega=0
\end{gathered}
\end{equation}
is identically satisfied and one of the following assertions is valid.

\begin{enumerate}

\item There exists an irreducible invariant algebraic curve $F(x,y)=0$
    such that the family of Puiseux series $y^{(1)}_{\infty}(x)$ arises
    in the factorization of the polynomial $F(x,y)$ in the ring
 $\mathbb{C}_{\infty}\{x\}[y]$ as many times as so does the family
 $y^{(2)}_{\infty}(x)$, i.e. $N_{1}=N_{2}$. In this case the system has
 the unique Darboux--Jacobi last multiplier
\begin{equation}
 \label{Lienard_Inegrability_IF_C1_deg_f_0_time}
 \begin{gathered}
M(x,y,t)=\left\{F(x,y)\right\}^{-\frac{2m+n+3}{2(n+1)N_{1}}}\exp[\omega t].
\end{gathered}
\end{equation}

\item There exist two distinct irreducible invariant algebraic curves
    $F_1(x,y)=0$ and $F_2(x,y)=0$ such that the following relation
$N_{1,j}\neq N_{2,j}$, $j=1$, $2$ is valid, where  $N_{l,j}$ is the
number of times the family of Puiseux series $y^{(l)}_{\infty}(x)$ enters
    the factorization of the polynomial $F_j(x,y)$ in the ring
    $\mathbb{C}_{\infty}\{x\}[y]$. In this case the system has the unique
    Darboux--Jacobi last multiplier
 \begin{equation}
 \label{Lienard_Inegrability_IF_C2_deg_f_0_time}
 \begin{gathered}
M(x,y,t)=\frac{\left\{F_1(x,y)\right\}^{\frac{(2m+n+3)(N_{2,2}-N_{1,2})}{2(n+1)\Omega}}}{
\left\{F_2(x,y)\right\}^{\frac{(2m+n+3)(N_{2,1}-N_{1,1})}{2(n+1)\Omega}}}\exp[\omega t],
\end{gathered}
\end{equation}
where the parameter $\Omega$ is given by the relation $\Omega=
    N_{1,2}N_{2,1}-N_{1,1}N_{2,2}$.

\end{enumerate}

\end{lemma}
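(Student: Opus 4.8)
The plan is to follow the structure of the proof of Theorem~\ref{T:Lienard_Integrability3}, modifying it to accommodate the time-dependent factor $\exp(\omega t)$ and the divergence term in condition~\eqref{JLM_gen_cond}. The sufficiency is a direct verification: assuming \eqref{Lienard_Inegrability_IF_C3_Cond_time} together with the stated invariant algebraic curves, one checks that \eqref{Lienard_Inegrability_IF_C1_deg_f_0_time} or \eqref{Lienard_Inegrability_IF_C2_deg_f_0_time} satisfies $M_t+\mathcal{X}M=-\mathrm{div}\,\mathcal{X}\,M$. For the necessity I would first invoke Lemmas~\ref{L:Lienard_exp_inv1} and~\ref{L:Lienard_exp_inv3} to rule out exponential invariants: since $\deg g>2\deg f+1$ and, by the remark following Lemma~\ref{L:Lienard_exp_inv3} together with Theorem~\ref{T:Lienard_Integrability1_C}, the system has no rational first integral, no exponential factor can enter a Darboux--Jacobi last multiplier. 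Hence by Theorem~\ref{T:L23_Non_aut_JLM} the multiplier takes the form $M=\prod_{j=1}^K F_j^{d_j}(x,y)\exp(\omega t)$, and condition~\eqref{JLM_gen_cond} reduces, using $\mathrm{div}\,\mathcal{X}=-f(x)$ and $\varrho\equiv0$, to the cofactor balance $\sum_{j=1}^K d_j\lambda_j(x,y)+\omega=f(x)$. The only difference from Theorem~\ref{T:Lienard_Integrability3} is the extra constant $\omega$ on the left.

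Next I would substitute the cofactor expression~\eqref{Lienard_Inegrability_coff_n1} and rearrange, exactly as in the passage from \eqref{Lienard_Inegrability_IF_C3} to \eqref{Lienard_Inegrability_IF_C4}, into the identity
\[
B\left(2f(x)+\left\{h^{(1)}_x(x)+h^{(2)}_x(x)\right\}_+\right)=\omega-f(x),\quad B=\sum_{j=1}^K d_j N_{1,j},
\]
where the leading-coefficient argument (for odd $n$) or Theorem~\ref{T1:Lienard_2m+1} (for even $n$) gives $\sum_j d_j N_{1,j}=\sum_j d_j N_{2,j}=B$. The value of $B$ is then pinned down by the $\varepsilon$-expansion \eqref{Lienard_Inegrability_PS_alt}--\eqref{Lienard_Inegrability_coeff_expl}, which yields $\{h^{(1)}_x+h^{(2)}_x\}_+=-\tfrac{4(m+1)f_0}{2m+n+3}x^m+o(x^m)$; balancing the coefficient of $x^m$ gives $B=-\tfrac{2m+n+3}{2(n+1)}$, as in \eqref{Lienard_Inegrability_IF_C3_n1}. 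Substituting this $B$ back into the displayed identity and clearing denominators produces precisely condition~\eqref{Lienard_Inegrability_IF_C3_Cond_time}. I expect the main subtlety to lie exactly here: the determination of $B$ relies on $\deg f>0$, so that the constant $\omega$ sits strictly below the degree-$x^m$ term and does not interfere with the balance; when $\deg f=0$ this separation fails, which is why that case is excluded from this lemma and handled instead in Lemma~\ref{L:Lienard_Integrability_time2}.

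Finally I would solve the linear system $\sum_j d_j N_{1,j}=\sum_j d_j N_{2,j}=B$ for the exponents $d_j$. A nonzero homogeneous solution would produce $\prod_j F_j^{d_j}$ with identically vanishing cofactor, i.e.\ an autonomous Darboux first integral, contradicting Theorem~\ref{T:Lienard_Integrability1_C}; hence the coefficient matrix has rank equal to the number of curves, forcing $K\le2$. For $K=1$ the two equations are compatible only if $N_{1,1}=N_{2,1}$, giving assertion~(1) with $d=-\tfrac{2m+n+3}{2(n+1)N_1}$ and multiplier~\eqref{Lienard_Inegrability_IF_C1_deg_f_0_time}. For $K=2$, Corollary~2 to Theorem~\ref{T:Lienard_Integrability1_C} guarantees $\Omega=N_{1,2}N_{2,1}-N_{1,1}N_{2,2}\ne0$, so Cramer's rule gives the unique exponents appearing in~\eqref{Lienard_Inegrability_IF_C2_deg_f_0_time}; the requirement $d_1,d_2\ne0$ forces $N_{1,j}\ne N_{2,j}$ for both $j$, while the balanced subcase $N_{1,j}=N_{2,j}$ is excluded by Corollary~1. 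Uniqueness of the multiplier follows from the standard ratio argument: two independent Darboux--Jacobi last multipliers would have as quotient a Darboux or non-autonomous Darboux first integral, which is impossible by Theorem~\ref{T:Lienard_Integrability1_C} and Lemma~\ref{L:Lienard_Integrability_time2} since $\deg f>0$.
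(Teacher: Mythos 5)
Your proposal is correct and follows essentially the same route as the paper: the paper's own proof simply says to repeat the proof of Theorem~\ref{T:Lienard_Integrability3} with condition \eqref{Lienard_Inegrability_IF_C3} replaced by the $\omega$-modified balance, which is exactly your reduction, and your computation of $B=-\tfrac{2m+n+3}{2(n+1)}$ and back-substitution yielding \eqref{Lienard_Inegrability_IF_C3_Cond_time} matches the intended argument. Your additional remarks — the role of $\deg f>0$ in keeping $\omega$ out of the $x^m$ balance, the rank argument forcing $K\le 2$, and the uniqueness via Theorem~\ref{T:Lienard_Integrability1_C} and Lemma~\ref{L:Lienard_Integrability_time2} — are correct elaborations of steps the paper leaves implicit.
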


\begin{proof}
We repeat the proof of Theorem \ref{T:Lienard_Integrability3}. The only
difference is in condition~\eqref{Lienard_Inegrability_IF_C3}. In the
non-autonomous case this condition takes the form
\begin{equation}
 \label{Lienard_Inegrability_IF_C3_time} \begin{gathered}
\sum_{j=1}^{K}d_jN_{1,j}\left(f(x)+\left\{h^{(1)}_x(x)\right\}_+\right)
+\sum_{j=1}^{K}d_jN_{2,j}\left(f(x)+\left\{h^{(2)}_x(x)\right\}_+\right)
=\omega-f(x),
\end{gathered}
\end{equation}
where $\omega$ is a non-zero complex constant.
\end{proof}

We have established that Li\'{e}nard differential systems \eqref{Lienard_gen}
from family ($C$) have neither rational nor Darboux first integrals. Let us
note that the famous Duffing oscillators belong to family ($C$). These
oscillators are studied in articles \cite{Demina07, Demina16} in details.

\section{Quartic Li\'{e}nard differential equations with a quadratic damping function}\label{S:Example_L24}

The aim of the present section is to demonstrate that the necessary and
sufficient conditions of Liouvillian integrability presented in the previous sections
 can be used to find all Liouvillian integrable
subfamilies of Li\'{e}nard differential systems without performing the
classification of irreducible invariant algebraic curves. As an example, we
 consider Li\'{e}nard differential systems with the restrictions $\deg f= 2$
and $\deg g =4$:
\begin{equation}
\begin{gathered}
 \label{Lienard1_DS24_main}
x_t=y,\quad y_t=-(\zeta x^2+\beta x+\alpha)y-(\varepsilon x^4+\xi x^3+ex^2+\sigma x+\delta),\quad  \zeta\varepsilon\neq0.
\end{gathered}
\end{equation}
Introducing suitable rescalings and shifts, it is without loss of generality
to set $\zeta=3$, $\varepsilon=-3$, and $\beta=0$. In what follows we  work
with the following systems
\begin{equation}
\begin{gathered}
 \label{Lienard1_DS24}
x_t=y,\quad y_t=-(3 x^2+\alpha)y+3 x^4-\xi x^3-ex^2-\sigma x-\delta,\quad  \zeta\varepsilon\neq0.
\end{gathered}
\end{equation}
Let us solve the
integrability problem for systems  \eqref{Lienard1_DS24}.

\begin{theorem}\label{T:L24_untegrability}
Quartic Li\'{e}nard differential systems with a quadratic damping function
\eqref{Lienard1_DS24} are Liouvillian integrable if and only the tuple of
the parameters ($\alpha, \xi, \delta, \sigma, e $) equals
\begin{equation}
 \label{L24_Integrability_main}
\begin{gathered}
I:\quad(\alpha, \xi, \delta, \sigma, e )=\left(-\frac{25}{12}, -7, \frac{125}{432}, -\frac{25}{36},-5\right);\\
II:\quad(\alpha, \xi, \delta, \sigma, e )=\left( -\frac{61}{12}, -7, \frac{3905}{432},  \frac{623}{36},4\right).\hfill
\end{gathered}
\end{equation}
 The
related Darboux integrating factors can be represented as
\begin{equation}
 \label{L24_Integrability_Integrating_factors}
\begin{gathered}
I:\quad M(x,y)= \frac{1}{\left( y+{x}^{3}+\frac32{x}^{2}+\frac {5}{12}x-{\frac{25}{216}}
 \right)^{\frac23} \left( y-{x}^{2}-\frac53x-{\frac{25}{36}} \right)};\\
II:\quad M(x,y)=\frac{\left(y+x^3 + \frac32x^2 - \frac{31}{12}x  - \frac{781}{216}\right)^{\frac13}}{{y}^{2}
+{\frac { \left( 6x+5 \right)  \left( 6x-13 \right)
 \left( 6x+11 \right) y}{216}}-{\frac { \left( 6x-13 \right)
 \left( 6x+11 \right) ^{2} \left( 6x+5 \right) ^{2}}{7776}}
}.
\end{gathered}
\end{equation}
\end{theorem}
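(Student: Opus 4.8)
The plan is to recognise that systems \eqref{Lienard1_DS24} belong to family ($A$): here $\deg f=2$ and $\deg g=4$, so $\deg f<\deg g<2\deg f+1=5$, with $f_0=3$ and $g_0=-3$. Consequently the whole apparatus of Section~\ref{S:Lienard_A} applies, and the governing tool is Theorem~\ref{T:Lienard_Integrability2}: a family-($A$) system is Liouvillian integrable if and only if its polynomials $f(x)$ and $g(x)$ take the form \eqref{Lienard_Inegrability_C1_n8} (equivalently, condition \eqref{Lienard_Inegrability_C1_cond} holds) and the system possesses the invariant algebraic curve $F_2(x,y)=0$ of the shape \eqref{Lienard1_F} with $N\in\mathbb{N}$, $k=1$; the associated integrating factor is then \eqref{Lienard_Inegrability_Int_fact}, whose exponent on $y-q-z_0$ equals $(3N-5)/3$. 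Sufficiency is dispatched at once: for the two tuples \eqref{L24_Integrability_main} one verifies directly that the functions \eqref{L24_Integrability_Integrating_factors} satisfy $\mathcal{X}M=f(x)M$, the exponents $-2/3$ and $1/3$ corresponding to $N=1$ and $N=2$ in \eqref{Lienard_Inegrability_Int_fact}.

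For necessity I would first extract the admissible form of $f$ and $g$. With $m=2$, $n=4$, $f_0=3$, $g_0=-3$, expression \eqref{Lienard1_F_series_q} gives $q(x)=-x^3+q_2x^2+q_1x$ and $p(x)=x^2+p_1x$. Substituting these into \eqref{Lienard_Inegrability_C1_n8}, namely $f=-q_x-\tfrac32 p_x$ and $g=\tfrac32(q+z_0)p_x$, and matching against $f(x)=3x^2+\alpha$ (whose $x$-coefficient vanishes by the normalisation) forces $q_2=-\tfrac32$ and $\alpha=-q_1-\tfrac32 p_1$, while matching $g(x)=-3x^4+\xi x^3+ex^2+\sigma x+\delta$ produces
\begin{equation*}
\xi=-\tfrac32(p_1+3),\quad e=3q_1-\tfrac94 p_1,\quad \sigma=\tfrac32(q_1p_1+2z_0),\quad \delta=\tfrac32 z_0p_1.
\end{equation*}
Thus the requirement \eqref{Lienard_Inegrability_C1_n8} alone already reduces the five-parameter family to a three-parameter family governed by $(q_1,p_1,z_0)$.

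The decisive and most laborious step is to impose the existence of $F_2(x,y)=0$. Here I would invoke Theorem~\ref{T:Darboux_pols_computation_Lienard}: a curve of degree $N$ in $y$, built from $N-1$ series of type ($I$) and one series of type ($II$) as in \eqref{Lienard1_F}, exists precisely when the non-polynomial part \eqref{Existance_F_conditions_Lienard} of the corresponding sum of Puiseux series vanishes. I would compute the branches $y^{(1)}_{j,\infty}(x)$ and $y^{(2)}_{\infty}(x)$ of \eqref{Lienard_y_x} as series in $x^{-1}$, with coefficients recursively expressed through $(q_1,p_1,z_0)$, up to the finitely many orders at which the cancellation condition becomes nontrivial. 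Since by \eqref{Lienard_Inegrability_C1_n6} the exponent forces $d_1=N-\tfrac53\neq0$, the curve $F_1=y-q-z_0$ is always present, so only $F_2$ remains to be pinned down. For $N=1$ the condition says the type-($II$) branch terminates, i.e. \eqref{Lienard_y_x} acquires a second polynomial solution $y=p+z_1$; this places us in the setting of Theorem~\ref{T:Lienard_IntegrabilityA_partial} with $(m+1)l=(n-m)k$, whence $k=3$, $l=2$ and $v$ of degree one, and the resulting elimination in the normalised coordinates yields exactly tuple $I$. For $N=2$ the condition forces the leading non-polynomial parts of $y^{(1)}_{1,\infty}$ and $y^{(2)}_{2,\infty}$ to cancel, which after eliminating $(q_1,p_1,z_0)$ isolates tuple $II$.

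The main obstacle is entirely computational and two-fold: carrying the Puiseux recursions far enough that the compatibility conditions are genuinely nontrivial, and then solving the resulting overdetermined polynomial system in $(q_1,p_1,z_0)$ while proving that no $N\geq3$ produces a consistent solution, so that the list \eqref{L24_Integrability_main} is complete. Theorem~\ref{T:L_two_curves} is what makes the enumeration terminate, since it caps the number of simultaneous irreducible invariant algebraic curves at two and fixes their shapes, thereby bounding the admissible values of $N$ and guaranteeing that the two cases above exhaust all possibilities.
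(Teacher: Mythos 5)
Your overall route is the same as the paper's: recognize \eqref{Lienard1_DS24} as a family-($A$) system with $m=2$, $n=4$, take Theorem \ref{T:Lienard_Integrability2} as the integrability criterion, encode the existence of $F_1(x,y)=y-q(x)-z_0$ together with condition \eqref{Lienard_Inegrability_C1_cond} (your parametrization by $(q_1,p_1,z_0)$ via \eqref{Lienard_Inegrability_C1_n8} is equivalent to the paper's step-by-step derivation of $e=3(27+6\xi-4\alpha)/4$, the expression \eqref{Lienard24_Invc_2con} for $\delta$, and $\xi=-7$), and then pin down the second curve $F_2(x,y)=0$ through Theorem \ref{T:Darboux_pols_computation_Lienard}. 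Your sufficiency check, the identification of the exponents $-2/3$ and $1/3$ with $N=1$ and $N=2$, and the reduction of the $N=1$ case to Theorem \ref{T:Lienard_IntegrabilityA_partial} with $k=3$, $l=2$, $\beta=-1$, $v(x)=x+\tfrac56$ (tuple $I$) are all correct and consistent with the paper.

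The gap is in your termination argument. You assert that Theorem \ref{T:L_two_curves} ``caps the number of simultaneous irreducible invariant algebraic curves at two and fixes their shapes, thereby bounding the admissible values of $N$.'' It does no such thing: that theorem limits the \emph{number} of distinct irreducible invariant algebraic curves and says the one accompanying $y-q(x)-z_0=0$ has $k=1$ in representation \eqref{Lienard1_F}, but its degree $N$ with respect to $y$ is left completely free. Indeed, the paper stresses exactly this point at the end of Section \ref{S:Lienard_A}: for family ($A$) the degree of $F_2(x,y)$ in $y$ can be an \emph{arbitrary} natural number, so no a priori bound on $N$ exists. Consequently the completeness of the list \eqref{L24_Integrability_main} --- that no $N\geq 3$ admits a consistent solution --- cannot be delegated to Theorem \ref{T:L_two_curves}; it must be extracted from the cancellation conditions \eqref{Existance_F_conditions_Lienard} themselves, applied (as the paper does) to $N-1$ series of type ($I$) and one series of type ($II$) for arbitrary $N\in\mathbb{N}$ under the constraint $\xi=-7$, showing that solutions exist only for $N=1$ and $N=2$. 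You correctly listed this as part of the computational burden in one sentence, but then dismissed it with an invalid appeal in the next; as written, the ``only if'' direction of your proof is therefore incomplete at precisely its decisive step.
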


\begin{proof}
Our proof is based on the  results of Theorems \ref{T1:Lienard_gen} and
\ref{T:Lienard_Integrability2}. The Puiseux series given in relation
\eqref{Lienard1_F_series} are now the following
\begin{equation}
\begin{gathered}
 \label{Lienard24_PS1}
y^{(1)}_{\infty}(x)=-x^3-\frac{3}{2}x^2+\left(\xi-\alpha+\frac{9}{2}\right)x+b_3+\sum_{l=1}^{\infty}b_{l+3}x^{-l};\hfill\\
y^{(2)}_{\infty}(x)=x^2-\frac{1}{3}(\xi+2)x+\frac13\left(\xi+2-\alpha-e\right)+\sum_{l=1}^{\infty}a_{l+2}x^{-l}.
\end{gathered}
\end{equation}
The Puiseux series $y^{(1)}_{\infty}(x)$ has an arbitrary coefficient $b_3$
and exists whenever the restriction $e=3(27+6\xi-4\alpha)/4$ holds. The
Puiseux series $y^{(2)}_{\infty}(x)$ possesses uniquely determined
coefficients. Note that we use novel designations for the coefficients of the
Puiseux series $y^{(2)}_{\infty}(x)$. The series $y^{(1)}_{\infty}(x)$ terminates
at  the zero term under the condition
\begin{equation}
\begin{gathered}
 \label{Lienard24_Invc_2con}
\delta=\frac{1}{24}(2\xi+9)(18\alpha+4\xi\alpha-4\sigma-4\xi^2-36\xi-81).
\end{gathered}
\end{equation}
Thus, we see that systems \eqref{Lienard1_DS24} possess the invariant algebraic
curve $F_1(x,y)=0$ of Theorem~\ref{T:Lienard_Integrability2} whenever
$e=3(27+6\xi-4\alpha)/4$ and $\delta$ is of the form
\eqref{Lienard24_Invc_2con}. The related polynomial and the cofactor can be
represented as
\begin{equation}
\begin{gathered}
 \label{Lienard24_Invc_2}
F_1(x,y)=y+x^3+\frac{3}{2}x^2-\left(\xi-\alpha+\frac{9}{2}\right)x-\frac13(\sigma+\xi^2-\xi\alpha)\\
 -\frac14(27+12\xi-6\alpha),\quad \lambda_1(x,y)=3x-\xi-\frac92.
\end{gathered}
\end{equation}
Condition \eqref{Lienard_Inegrability_C1_cond} gives the following
restriction: $\xi=-7$. Finally, we use
Theorem~\ref{T:Darboux_pols_computation_Lienard} to find an irreducible
 invariant algebraic curve that exists simultaneously with $F_1(x,y)=0$ and is given by expression \eqref{Lienard1_F} where
$k=1$ and $N\in\mathbb{N}$. As a result, we obtain the values of the
parameters as presented in relation \eqref{L24_Integrability_main}. The
related invariant algebraic curves are given by the polynomials
\begin{equation}
\begin{gathered}
 \label{Lienard24_Invc_3}
(I):\quad F_2(x,y)=y-{x}^{2}-\frac53x-{\frac{25}{36}},\quad \lambda_2(x,y)=-3x^2-2x+\frac{5}{12};\hfill\\
(II):\quad F_2(x,y)={y}^{2}
+{\frac { \left( 6x+5 \right)  \left( 6x-13 \right)
 \left( 6x+11 \right) y}{216}}-{\frac { \left( 6x-13 \right)}{7776}}\hfill\\
  \times\left( 6x+11 \right)^{2} \left( 6x+5 \right)^{2},\quad\lambda_2(x,y)=-3x^2 + x + \frac{71}{12}.
\end{gathered}
\end{equation}
We calculate explicit expressions of Darboux integrating factors with the
help of expression~\eqref{Lienard_Inegrability_Int_fact}.
\end{proof}
In  case ($I$) a Liouvillian first integral is given by expression
\eqref{Lienard_Inegrability_FI_A_p1}, where one sets
\begin{equation}
\begin{gathered}
 \label{Lienard24_LFI_v}
l=2,\quad k=3,\quad \beta=-1,\quad v(x)=x+\frac56.
\end{gathered}
\end{equation}
In case ($II$) a Liouvillian first integral reads
as
\begin{equation}
\begin{gathered}
 \label{Lienard24_LFI_v2}
I(x,y)=\frac{\sqrt{w(x)}}{w(x)}\sum_{l=0}^2\left(\left\{\sqrt{w(x)}-v(x)\right\}
U(x)\ln\left(z^{\frac13}-U(x)\exp\left(\frac{2\pi l i}{3}\right)\right)\right.\\
\left.+
\left\{\sqrt{w(x)}+v(x)\right\}
V(x)\ln\left(z^{\frac13}-V(x)\exp\left(\frac{2\pi l i}{3}\right)\right)\right)\exp\left(\frac{2\pi l i}{3}\right)+6z^{\frac13},
\end{gathered}
\end{equation}
where we have introduced the notation
\begin{equation}
\begin{gathered}
 \label{Lienard24_LFI_U_V}
U(x)=
\left\{u(x)-v(x)+\sqrt{w(x)}\right\}^{\frac13}, V(x)=
\left\{u(x)-v(x)-\sqrt{w(x)}\right\}^{\frac13},z=y+u(x).
\end{gathered}
\end{equation}
The polynomials $u(x)$, $v(x)$, $w(x)$
take the form
\begin{equation}
\begin{gathered}
 \label{Lienard24_LFI_u_v_w}
u(x)=\frac{\left(6 x +11\right) \left(36 x^{2}-12 x -71\right)}{216},\quad v(x)=\frac{\left(6 x +5\right) \left(6 x -13\right) \left(6 x +11\right)}{432},\\
w(x)=\frac{\left(6 x -13\right) \left(6 x +11\right)^{3} \left(6 x +5\right)^{2}}{186624}.
\end{gathered}
\end{equation}


Concluding this section we note that  the method of Puiseux
series and the explicit expression \eqref{General_Fl_cof} of the cofactor of an invariant algebraic curve greatly facilitate   the classification of integrable multi-parameter
planar differential systems.

\section{Conclusion}\label{S:Conclusion}

This work completely solves  the Liouvillian integrability problem for
polynomial Li\'{e}nard differential systems \eqref{Lienard_gen} satisfying
the condition $\deg g\neq2\deg f+1$. In the case $\deg g=2\deg f+1$ our
results are complete for the non-resonant systems. We say that a Li\'{e}nard
differential system with the restriction $\deg g=2\deg f+1$ is resonant near
infinity if equation~\eqref{eq:DP2_5_2} possesses a positive rational
solution. The resonance condition introduces a restriction on the
highest-degree coefficients $f_0$ and $g_0$ of the polynomials $f(x)$ and
$g(x)$.

We have established that a generic nonlinear polynomial Li\'{e}nard
differential system \eqref{Lienard_gen} with fixed degrees of the polynomials
$f(x)$ and $g(x)$ is not Liouvillian integrable provided that the following
restriction $\deg g>\deg f$ is valid. However, as we have demonstrated,
 Liouvillian integrable subfamilies  exist for any degrees of the polynomials $f(x)$ and $g(x)$ whenever $\deg g>\deg f$. Besides that, we have
classified polynomial Li\'{e}nard differential systems possessing
non-autonomous Darboux first integrals and non-autonomous Jacobi last
multipliers with a time-dependent exponential factor. Some of our results describing
explicit families of Liouvillian integrable Li\'{e}nard differential systems
are gathered in Table~\ref{Tb:Lienard_IC}.

\bigskip

\textit{Remarks to Table \ref{Tb:Lienard_IC}}

\begin{enumerate}

\item Natural numbers $l$ and $k$ are both non-unit.

\item Symbols $D$, $E$, and $L$ mean Darboux, elementary, and Liouvillian,
    respectively. Symbols $\overline{D}$ and $\overline{E}$ mean
    non-Darboux and non-elementary.

\item Family $(A)_1$ gives all Liouvillian integrable families of
    Li\'{e}nard differential systems \eqref{Lienard_gen} such that
    the related equation \eqref{Lienard_y_x} possesses two distinct
    polynomial solutions and the following inequalities $\deg f<\deg
    g<2\deg f+1$ are valid.

\item Families $(B)_1$ and $(B)_2$ produce all non-resonant Darboux
    integrable Li\'{e}nard differential systems \eqref{Lienard_gen}
    satisfying the restriction $\deg g=2\deg f+1$.

\item Families $(B)_1$, $(B)_2$, $(B)_3$, and $(B)_4$ include all
    non-resonant Liouvillian integrable Li\'{e}nard differential systems
    \eqref{Lienard_gen} satisfying the restriction $\deg g=2\deg f+1$.
    Note that families $(B)_3$ and $(B)_4$ also involve integrable resonant
    systems. Consequently, additional  restrictions should be
   imposed if one is interested only in the non-resonant case. These
   restrictions are described in Theorems
   \ref{T:Lienard_degenerate_Liouville1} and
   \ref{T:Lienard_degenerate_Liouville2}.

\item Family $(C)_1$ gives all Liouvillian integrable families of
    Li\'{e}nard differential systems~\eqref{Lienard_gen} with $\deg
    g>2\deg f+1$ possessing either a hyperelliptic invariant algebraic curve or
    two distinct invariant algebraic curves with generating polynomials
    of the first degree with respect to $y$.

\end{enumerate}

\begin{table}[h!]
        \center
        \footnotesize
       \begin{tabular}[pos]{|c|c|c|c| }
        \hline
        Family   &  $f(x)$, $g(x)$ & $M(x,y)$ & $I(x,y)$, type of first integral \\
        \hline
       $(A)_1$ &  $f(x)=-\left[k\beta v^{k-1}+(k+l)v^{l-1}\right]v_x $ &
         $\frac{z^{-\frac{l}{k}}}{y-v^l} $ &
         $\displaystyle \frac{k\beta^{\frac{l}{k}}}{k-l}z^{\frac{k-l}{k}}+
\sum_{j=0}^{m}\exp\left[-\frac{\pi l(2j+1)i}{k}\right]$  \\
Th \ref{T:Lienard_IntegrabilityA_partial} & $g(x)=k\left[\beta v^k+v^l\right]v^{l-1}v_x $ & & $ \times\ln\left\{z^{\frac{1}{m+1}}
-\exp\left[\frac{\pi(2j+1)i}{m+1}\right][\beta v^k]^{\frac{1}{m+1}}\right\}$ \\
& $z=y-\beta v^k-v^l$,
$\frac{m+1}{n-m}=\frac{k}{l}$, $(l,k)=1$ &  & $\overline{D}EL$ \\
& $v(x)\in\mathbb{C}[x]$, $\deg v=\frac{n-m}{l}$, $\beta\in\mathbb{C}\setminus\{0\}$ &  & \\
         \hline
         $(B)_1$ &  $f(x)=-\frac{2f_0}{f_0-\delta}q_{1,\,x} $,
         $g(x)=\frac{f_0+\delta}{f_0-\delta}q_{1,\,x}q_1$ &
         $\frac{1}{y-\frac{(f_0+\delta)}{(f_0-\delta)}q_1} $ &
         $ \left[y-q_1\right]^{\delta-f_0}\left[y-\frac{(f_0+\delta)}{(f_0-\delta)}q_1\right]^{\delta+f_0}$  \\
         Th \ref{T:Lienard_degenerate_Darboux1} & $q_1(x)\in\mathbb{C}[x]$, $\delta$, $f_0\in\mathbb{C}\setminus\{0\}$
          & $\times\frac{1}{y-q_1}$   & $DEL$ \\
          & $q_1(x)=\frac{\delta-f_0}{2\{m+1\}}x^{m+1}+o(x^{m+1})$ & & \\
         \hline
         $(B)_2$ &  $f(x)=-2q_x $,
         $g(x)=qq_x$ &
         $\frac{1}{\left\{y-q\right\}^2} $ &
         $ \left[y-q(x)\right]\exp\left[-\frac{q(x)}{y-q(x)}\right]$  \\
         Th \ref{T:Lienard_degenerate_Darboux2} & $q(x)\in\mathbb{C}[x]$, $\deg q=m+1$
          &   & $DEL$ \\
          \hline
         $(B)_3$ &  $f(x)=-\left[\frac{\{(2d_1+1)l+k\}l}{k-l}u^{l-1}\right. $ &
         $\frac{[y-p_1]^{d_1}}{[y-p_2]^{d_1+1+\frac{k}{l}}}$ &
         $ \frac{p_2B\left(\frac{y-p_1}{p_2-p_1};1+d_1,
         -d_1-\frac{k}{l}\right)}{\{p_2-p_1\}^{\frac{k}{l}}}$  \\
          Th \ref{T:Lienard_degenerate_Liouville_polynomial} & $ \left.+\beta(l+k)u^{k-1}\right]u_x$
           &
          & $-
\frac{B\left(\frac{y-p_1}{p_2-p_1};1+d_1,1-d_1-\frac{k}{l}\right)}{\{p_2-p_1\}^{\frac{k}{l}-1}}$  \\
          & $g(x)=\left[l\beta^2u^{2k-1}+
          \frac{\{(2d_1+1)l+k\}l\beta}{k-l}\right.$ & & $\overline{D} \overline{E}L$ ($d_1\not\in\mathbb{Q}$) \\
         & $\times u^{k+l-1}\left.+\frac{(ld_1+k)(d_1+1)l^2}{(k-l)^2}u^{2l-1}\right]u_x$ & & \\
         & $p_1(x)=\beta u^k(x)+\frac{(d_1+1)l}{k-l}u^l(x)$ & & \\
         & $p_2(x)=\beta u^k(x)+\frac{(ld_1+k)}{k-l}u^l(x)$ & & \\
         & $(l,k)=1$, $d_1$, $\beta\in\mathbb{C}\setminus\{0\}$ & & \\
         & $u(x)\in\mathbb{C}[x]$, $\deg u=\frac{m+1}{\max\{k,l\}}$ & & \\
         \hline
         $(B)_4$ &  $f(x)=\left[\frac{2l^2}{l-k}v^{l-1}-(l+k)\beta v^{k-1}\right]v_x $ &
         $\frac{\exp\left[\frac{v^l}{y-q}\right]}{[y-q]^{\frac{l+k}{l}}}$ &
         $ v^{l-k}\gamma\left(-\frac{l-k}{l},\frac{v^l}{q-y}\right)-\frac{q}{v^{k}}\gamma\left(\frac{k}{l},\frac{v^l}{q-y}\right)$  \\
         Th \ref{T:Lienard_degenerate_Liouville2} &
         $g(x)= \left[\frac{l^3}{(l-k)^2}v^{2l-1}+l\beta^2 v^{2k-1}\right. $ &
          &          $\overline{D} \overline{E}L$  \\
           &             $\left.-\frac{2l^2\beta}{l-k}v^{l+k-1}\right]v_x $ & &          $ $  \\
          &               $q(x)=-\frac{l}{l-k}v^l+\beta v^k $, $\beta\in\mathbb{C}\setminus\{0\}$ & &        \\
          &  $v(x)\in\mathbb{C}[x]$, $\deg v=\frac{m+1}{\max\{k,l\}}$,     $(l,k)=1$ $ $ & & \\
                    \hline
                    $(C)_1$ &  $f(x)=\frac{(k+2l)}{4}w^{l-1}w_x $ &
         $z^{-\left(\frac12+\frac{l}{k}\right)}$ &
         $
{}_2F_1\left(\frac12,\frac12+\frac{l}{k};\frac32;-\frac{(2y+w^l)^2}{4\beta w^k}\right)$\\
         Th \ref{T:Lienard_IntegrabilityC_partial} &
         $g(x)= \frac{k}8\left(w^{2l-1}+4\beta w^{k-1}\right)w_x $ &
            &          $\times\frac{(2l-k)(2y+w^l)}{4kw^{\frac{k}2}\beta^{\frac12+\frac{l}{k}}}+z^{\frac12-\frac{l}{k}}$  \\
          Th \ref{T:Lienard_IntegrabilityC_polynomial} &  $z=\left[y+\frac{w^l}{2}\right]^2+\beta w^k$, $\frac{n+1}{m+1}=\frac{k}{l}$ & & $\overline{D} \overline{E}L$ \\
           &  $w(x)\in\mathbb{C}[x]$, $\deg w=\frac{m+1}{l}$,     $(l,k)=1$ $ $ & & \\
         \hline
             \end{tabular}
    \caption{The explicit Liouvillian integrable families of Li\'{e}nard differential systems.} \label{Tb:Lienard_IC}
\end{table}

Let us note that families  $(B)_1$ and $(B)_2$ are
    those given by the Chiellini integrability condition
    $\{f(x)/g(x)\}_x=\alpha f(x)$,
    see~\cite{Chiellini01}. Remarkably, Chiellini integrable Li\'{e}nard
    differential systems can be linearized via generalized Sundman
    transformations~\cite{Berkovich01}.  Other integrable families from Table \ref{Tb:Lienard_IC} can also be transformed to a more simple form via generalized Sundman
    transformations, see remarks and comments to Theorems \ref{T:Lienard_IntegrabilityA_partial}, \ref{T:Lienard_degenerate_Liouville_polynomial}, \ref{T:Lienard_degenerate_Liouville2}, and \ref{T:Lienard_IntegrabilityC_partial}. These systems with the exception of a number of partial cases  that
    appear in  \cite{Lakshmanan01, Lakshmanan02, Lakshmanan03, Polyanin, Stachowiak}
    seem to be new.

Let us enumerate some unsolved problems related to the integrability and
solvability of Li\'{e}nard differential systems. Despite the fact that the
subset of resonant Li\'{e}nard differential systems is of Lebesgue measure
zero in the set of all polynomial Li\'{e}nard differential systems satisfying the
condition $\deg g=2\deg f+1$, it is an interesting open problem to perform a
classification of invariant algebraic curves and integrable subfamilies of
particular resonant polynomial Li\'{e}nard differential systems provided that
only a resonant condition is imposed on the parameters of the systems.  The
method of Puiseux series~\cite{Demina11, Demina18} can deal with each family of
resonant Li\'{e}nard differential systems characterized by a fixed positive
rational Fuchs index separately. Note that several novel families of
Liouvillian integrable resonant systems are presented in
Theorems~\ref{T:Lienard_degenerate_Liouville_polynomial} and
\ref{T:Lienard_degenerate_Liouville2}, see also Corollary $2$ after the
latter theorem. In addition, a number of integrable resonant families with
$\deg f=2$ and $\deg g =5$ are found via $\lambda$ symmetries in~\cite{Ruiz01}. These families possess an exciting
property: they simultaneously have autonomous and non-autonomous Darboux
first integrals given by expression \eqref{FI_t_gen} with $\omega=0$ and
$\omega\neq0$, respectively. This fact allowed A.~Ruiz and C.~Muriel to
obtain nice expressions of the general solutions. Systems \eqref{Lienard_degenerate_Darboux_system_time} with $\delta=\pm mf_0/(m+2)$  have the same property.

Along with this, it is a difficult open problem to perform a classification
of integrable polynomial Li\'{e}nard differential systems with
non-Liouvillian first integrals. At the moment only particular examples that
can be transformed to linear equations are available, for more details see
articles \cite{Lienard-Riccati, Morales-Ruiz01, Gine_Airy, Demina_Gine_Valls}.

Another important problem is to study  rational Li\'{e}nard differential
systems. If the functions $f(x)$ and $g(x)$ in expression \eqref{Lienard_gen}
are rational, then systems~\eqref{Lienard_gen} give rise to the following
polynomial differential systems in the plane
\begin{equation}
 \label{Lienard_gen_rational}
 x_t=h(x)y,\quad  y_t=-\tilde{f}(x)y-\tilde{g}(x),\quad h(x), \tilde{f}(x), \tilde{g}(x)\in\mathbb{C}[x].
\end{equation}
Investigating the analytic and qualitative properties of these systems with
respect to the degrees of polynomials $h(x)$, $\tilde{f}(x)$, and
$\tilde{g}(x)$ is a future challenge.

\section{Acknowledgments}

This research was  supported by Russian Science Foundation grant
19--71--10003.


\nocite{*}
\bibliographystyle{apa}
\bibliography{D_IL}%

\clearpage



\end{document}